\documentclass[fullpage,letterpage,11pt]{article}

\usepackage{amsmath,amsthm,amssymb}
\usepackage{mathtools}
\usepackage{algorithm}
\usepackage{algpseudocode}
\usepackage[T1]{fontenc}
\usepackage[capitalize]{cleveref}
\usepackage{thm-restate}
\usepackage{multirow}
\usepackage[margin=1in]{geometry}
\usepackage{todonotes}

\theoremstyle{definition}
\newtheorem{definition}{Definition}[section]
\theoremstyle{plain}
\newtheorem{theorem}[definition]{Theorem}
\newtheorem{lemma}[definition]{Lemma}
\newtheorem{proposition}[definition]{Proposition}

\newtheorem{observation}[definition]{Observation}

\mathchardef\standardl=\mathcode`l
\mathcode`l=\ell
\newcommand{\deactivatel}{\mathcode`l=\standardl}
\makeatletter
\edef\operator@font{\operator@font\noexpand\deactivatel}
\makeatother

\newcommand{\N}{\mathbb{N}}
\newcommand{\eps}{\varepsilon}

\bibliographystyle{plainurl}

\begin{document}

\title{Counting and Sampling Labeled Chordal Graphs in Polynomial Time
\thanks{\'Ur.~H. and Da.~L. were supported by NSF grant CCF-2008838. Er.~V. was supported by NSF grant CCF-2147094.}
}

\author{
\'Ursula H\'ebert-Johnson\thanks{University of California, Santa Barbara, USA, \texttt{ursula@ucsb.edu}.}
\and
Daniel Lokshtanov\thanks{University of California, Santa Barbara, USA, \texttt{daniello@ucsb.edu}.}
\and
Eric Vigoda\thanks{University of California, Santa Barbara, USA, \texttt{vigoda@ucsb.edu}.}
}

\date{}

\begin{titlepage}
\def\thepage{}
\thispagestyle{empty}
\maketitle

\begin{abstract}
We present the first polynomial-time algorithm to exactly compute the number of labeled chordal graphs on $n$ vertices. Our algorithm solves a more general problem: given $n$ and $\omega$ as input, it computes the number of $\omega$-colorable labeled chordal graphs on $n$ vertices, using $O(n^7)$ arithmetic operations. A standard sampling-to-counting reduction then yields a polynomial-time exact sampler that generates an $\omega$-colorable labeled chordal graph on $n$ vertices uniformly at random. Our counting algorithm improves upon the previous best result by Wormald (1985), which computes the number of labeled chordal graphs on $n$ vertices in time exponential in $n$. An implementation of the polynomial-time counting algorithm gives the number of labeled chordal graphs on up to $30$ vertices in less than three minutes on a standard desktop computer. Previously, the number of labeled chordal graphs was only known for graphs on up to $15$ vertices.

In addition, we design two approximation algorithms: (1) an approximate counting algorithm that computes a $(1\pm\eps)$-approximation of the number of $n$-vertex labeled chordal graphs, and (2) an approximate sampling algorithm that generates a random labeled chordal graph according to a distribution whose total variation distance from the uniform distribution is at most $\eps$. The approximate counting algorithm runs in $O(n^3\log{n}\log^7(1/\eps))$ time, and the approximate sampling algorithm runs in $O(n^3\log{n}\log^7(1/\eps))$ expected time.
\end{abstract}
\end{titlepage}

\section{Introduction}
\label{sec:intro}

Generating random graphs from a prescribed graph family is a fundamental task for running simulations and testing conjectures. Although generating a random labeled graph on $n$ vertices is easy (just flip an unbiased coin for each potential edge), the first polynomial-time algorithm for generating an \emph{unlabeled} graph uniformly at random was only given in 1987, by Wormald~\cite{wormald1987generating}. The algorithm of Wormald runs in polynomial time in expectation, and to the best of our knowledge, the existence of a worst-case polynomial-time sampler of random unlabeled graphs remains open.

Naturally, when we wish to sample from a specified graph family, there are many interesting families of graphs for which this problem is nontrivial, even when the graphs are labeled. For the class of labeled trees, a sampling algorithm using Pr{\"u}fer sequences~\cite{prufer1918neuer} was discovered in 1918. More recently, a fast (exact) uniform sampler was presented by Gao and Wormald for $d$-regular graphs with $d=o(\sqrt{n})$ in 2017~\cite{GW-regular}, and then for power-law graphs in 2018~\cite{GW-power}. A more general problem is the following: given an arbitrary degree sequence, generate a random graph with those specified degrees --- this has been resolved for bipartite graphs~\cite{JSV,BBV} as well as for general graphs when the maximum degree is not too large~\cite{GS,AGW}. See Greenhill~\cite{Greenhill} for a survey of random generation of graphs with degree constraints. For planar graphs, Bodirsky, Gr\"{o}pl, and Kang presented a polynomial-time algorithm~\cite{BGK}, which uses dynamic programming to exactly compute the number of labeled planar graphs on $n$ vertices and generate a planar graph uniformly at random in time $\widetilde{O}(n^7)$. This was improved to $O(n^2)$ expected time by Fusy~\cite{Fusy}, using a Boltzmann sampler.

Our results fall naturally within this line of work. We consider the problem of generating a labeled {\em chordal} graph on $n$ vertices uniformly at random. A graph is chordal if it has no induced cycles of length at least 4. Despite being one of the most fundamental and well-studied graph classes, prior to our work, the fastest uniform sampling algorithm for labeled chordal graphs was the exponential-time algorithm of Wormald from 1985~\cite{Wormald}. (To be precise, optimizing the running time of an algorithm for counting chordal graphs was not the main focus of Wormald; rather, the main goal of the paper was to determine the asymptotic number of chordal graphs with given connectivity, and the exponential-time algorithm is a corollary of these results.) Since then, various algorithmic approaches have been proposed for generating chordal graphs (e.g., \cite{MVA,SHET,sun2020sampling,ESS,MR}), but these algorithms do not come with any formal guarantees about their output distribution. In particular,~\cite{SHET} specifically asks for the existence of a polynomial-time algorithm to sample chordal graphs uniformly at random as an open problem. In a recent abstract, Sun and Bez\'akov\'a~\cite{sun2020sampling} proposed a Markov chain for sampling chordal graphs, but this Markov chain comes with few mixing time guarantees.

We obtain the first polynomial (in $n$) time algorithm to exactly count the number of labeled chordal graphs on $n$ vertices, as well as the first polynomial-time uniform sampling algorithm for the class of labeled chordal graphs. Our algorithm also easily extends to counting and sampling \mbox{$\omega$-{\em colorable}} labeled chordal graphs. A graph $G$ is $\omega$-colorable if there exists a function $c\colon V(G)\to\{1,\ldots,\omega\}$ such that every edge $uv\in E(G)$ satisfies $c(u)\neq c(v)$.

\begin{theorem}
\label{thm:main}
There is a deterministic algorithm that given positive integers $n$ and $\omega \leq n$, computes the number of \mbox{$\omega$-colorable} labeled chordal graphs on $n$ vertices using $O(n^7)$ arithmetic operations. Moreover, there is a randomized algorithm that given the same input, generates a graph uniformly at random from the set of all $\omega$-colorable labeled chordal graphs on $n$ vertices using $O(n^7)$ arithmetic operations.
\end{theorem}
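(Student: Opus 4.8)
The plan is to reduce the problem, in stages, to a dynamic program over a canonical tree-like decomposition of connected chordal graphs, and then to obtain the sampler by self-reducibility.

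\emph{Stage 1 (colorability $\to$ clique number).} For chordal graphs the chromatic number equals the clique number (a perfect elimination ordering yields a greedy coloring with $\omega(G)$ colors), so a chordal graph $G$ is $\omega$-colorable if and only if $\omega(G)\le\omega$, equivalently its treewidth is at most $\omega-1$. Hence it suffices to count, for each $t\le n$, the labeled chordal graphs on $n$ vertices with clique number at most $t$; taking $t=\omega$ gives the theorem (and $t=n$ recovers the count of all chordal graphs). \emph{Stage 2 (general $\to$ connected).} Let $C_t(k)$ be the number of connected chordal graphs on $[k]$ with clique number $\le t$ and $A_t(n)$ the number of all such graphs on $[n]$. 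Splitting off the component of the largest-labeled vertex gives
\[
A_t(n)\;=\;\sum_{k=1}^{n}\binom{n-1}{k-1}\,C_t(k)\,A_t(n-k),\qquad A_t(0)=1,
\]
an $O(n)$-term recurrence, so all $A_t(n)$ follow from the $C_t(k)$ in $O(n^2)$ arithmetic operations. It remains to count connected chordal graphs with a bounded clique number.

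\emph{Stage 3 (connected chordal, via clique trees).} Every connected chordal graph admits a clique tree, whose nodes are its maximal cliques and whose edges correspond to minimal vertex separators (cliques). Fix a rule, broken by vertex labels, that selects a canonical root maximal clique $R$ (with $|R|\le t$); deleting $R$ leaves components $C_1,\dots,C_m$, where $S_i:=N(C_i)\cap R$ is a sub-clique of $R$ and each $G[C_i\cup S_i]$ is a connected chordal graph in which $S_i$ hangs below a strictly larger maximal clique. Recursing on the $G[C_i\cup S_i]$ yields a DP whose state records the number of internal (not-yet-placed) vertices remaining, the size of the current separator clique, and the size of the maximal clique immediately above it (the budget $t$ is a global parameter, not an index). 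To avoid summing over set partitions of the internal vertices into branches, branches are peeled one at a time, keyed by the smallest unused label, exactly as in Stage 2; each transition then sums over a constant number of $O(n)$-sized choices (the size of the peeled branch, and the size and maximal-clique extension of its root separator). Bounding the number of DP states and the work per state — roughly $O(n^{4})$ states filled in $O(n^{3})$ time each, up to the precise split — yields the $O(n^7)$ bound.

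\emph{Stage 4 (sampling).} The DP is self-reducible. At the top level, pick the size $k$ of the component of vertex $n$ with probability proportional to $\binom{n-1}{k-1}C_\omega(k)A_\omega(n-k)$, then its vertex set uniformly, then recurse on that component and on the remaining $n-k$ vertices; inside a connected piece, pick the root clique, then the branch of the smallest unused internal label, then its attachment and recursive structure, each with probability proportional to the corresponding tabulated count. Every choice uses numbers already computed, so sampling is dominated by building the tables and runs within $O(n^7)$ arithmetic operations, outputting each $\omega$-colorable labeled chordal graph on $[n]$ with probability exactly $1/A_\omega(n)$.

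\emph{Main obstacle.} The crux is Stage 3: isolating a decomposition of connected chordal graphs that is simultaneously (a) canonical with respect to the vertex labels, so each graph is produced exactly once — delicate because maximal cliques and clique trees are not unique; (b) recursive in a form the DP can exploit; and (c) compatible with the clique-number bound. Point (c) forces the extra index: a branch $C_i$ attached to $S_i\subseteq R$ can form cliques of size up to $t$ using $S_i$, so the remaining clique budget inside $G[C_i\cup S_i]$ depends on $|S_i|$ and on how $S_i$ extends to a maximal clique there, and threading this through every level of the recursion without over- or under-counting — which is also where most of the $O(n^7)$ running time is spent — is the technical heart of the argument.
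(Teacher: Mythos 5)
Your Stages 1, 2, and 4 match the paper: chromatic number equals clique number for chordal graphs, the reduction from general to connected graphs is the same recurrence the paper proves (\cref{lemma:disconn}), and the sampler is the standard self-reducibility argument. But Stage 3, which you yourself flag as the crux, is not actually carried out, and this is where the entire technical content of the result lies. You write ``Fix a rule, broken by vertex labels, that selects a canonical root maximal clique $R$'' and then assume the recursion on the pieces $G[C_i\cup S_i]$ can be keyed by a few integer parameters (sizes of the branch, of $S_i$, of the clique above it). There are two concrete problems. First, a canonical rule that depends on vertex labels is in tension with the very relabeling trick you rely on: your DP multiplies counts by binomial coefficients, which presupposes that the number of admissible (i.e., canonically decomposed) subgraphs on a given number of vertices is independent of which label set those vertices carry. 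If the root choice is label-dependent, then after relabeling a branch to a standard interval the canonical root of that branch may change, so ``number of branches of size $k$ with separator size $s$'' is not a well-defined quantity and the count over- or under-counts. The paper resolves exactly this by choosing a canonical decomposition that is label-\emph{independent}: the evaporation sequence (repeatedly delete all simplicial vertices), whose layers do not depend on the labeling, so sizes plus binomial factors suffice.

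Second, even granting a canonical root, your state (branch size, $|S_i|$, size of the maximal clique above) is not enough to guarantee that when the pieces are glued back together the chosen root is again the canonical one for the recombined graph, i.e., that each graph is generated by exactly one run of the recursion. In the paper this is where most of the work goes: the exception set $X$ is introduced so that restricting to a subgraph does not distort which vertices are deleted last; the counter functions must additionally record whether components evaporate at time exactly $t$ or at most $t$, whether they see all of the root, and the parameter $z$ enforcing connectivity of $G\setminus X$; and the correctness proofs (\cref{lemma:commute_f_1,lemma:commute_f_2,lemma:commute_f_tilde_2,lemma:set_cover}, etc.) establish precisely the conditions under which the last layer stays last after gluing — e.g., either at least two attached components see all of $X\cup L$ or the proper neighborhoods of the attached components must cover $L$. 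Your sketch has no analogue of these constraints, and without them the DP transitions would count decompositions that do not correspond to the canonical one (or miss graphs whose canonical root differs from the guessed $R$). So the proposal describes a plausible program, essentially the one the paper's authors also had in mind (they note the clique-tree intuition), but the gap between ``fix a canonical clique-tree root'' and a size-indexed recurrence that provably counts every labeled chordal graph exactly once is the theorem, and it is not bridged here.
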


By the known equivalence between chromatic number, maximum clique size, and treewidth of chordal graphs~\cite{blair1993introduction}, \cref{thm:main} can be reinterpreted as counting and sampling labeled chordal graphs of clique size at most $\omega$, or treewidth at most $\omega-1$. The running time bound of \cref{thm:main} is stated in terms of the number of arithmetic operations. Since there are at most $2^{n^2}$ labeled graphs on $n$ vertices, the arithmetic operations need to deal with $n^2$-bit integers. Therefore, using the $O(n\log{n})$-time algorithm for multiplying two $n$-bit integers~\cite{harvey2021integer} yields an $O(n^9 \log n)$-time upper bound for our algorithm in the RAM model.

A straightforward implementation of our counting algorithm gives the number of labeled chordal graphs on up to $n = 30$ vertices in less than three minutes on a standard desktop computer. Previously, the number of labeled chordal graphs was only known for graphs on up to $15$ vertices \cite{oeis}. In addition, we use our implementation to compute the number of $\omega$-colorable labeled chordal graphs for $n \leq 12$ and $\omega \leq 12$. We chose to stop at $n = 12$ to keep the table at a reasonable size, not because of the computation time. We present these computational results in \cref{sec:implementation}.

In addition, we design two approximation algorithms: (1) an approximate counting algorithm that computes a $(1\pm\eps)$-approximation of the number of $n$-vertex labeled chordal graphs, and (2) an approximate sampling algorithm that generates a random labeled chordal graph according to a distribution whose total variation distance from the uniform distribution is at most $\eps$. The approximate counting algorithm runs in $O(n^3\log{n}\log^7(1/\eps))$ time, and the approximate sampling algorithm runs in $O(n^3\log{n}\log^7(1/\eps))$ expected time.

\subsection{A Brief Survey on Chordal Graphs}

The literature on chordal graphs is so vast that it would be impossible to fully do it justice. Discussions of chordal graphs in the literature go as far back as 1958~\cite{hajnal1958auflosung}. What follows is a summary of some of the most notable problems and milestones.

Many NP-hard optimization problems (such as {\em coloring}~\cite{Gavril} and {\em maximum independent set}~\cite{Farber}),
as well as \#P-hard counting problems (such as {\em independent sets}~\cite{OUU,BS}), and many others~\cite{GCWebpage}, are polynomial-time solvable on chordal graphs. Chordal graphs have a wide variety of applications, including phylogeny in evolutionary biology~\cite{Gusfield,PDSD} and Bayesian networks in machine learning~\cite{WBL}. When doing Gaussian elimination on a symmetric matrix, the set of matrix entries that are nonzero for at least one time-point of the elimination process corresponds to the edge set of a chordal graph. Thus the problem of finding an ordering in which to do Gaussian elimination that minimizes the number of nonzero matrix entries can be reduced to finding a chordal supergraph of a given graph with the minimum number of edges~\cite{rose1972graph}. This problem, known as {\em minimum fill-in}, was shown to be NP-complete by Yannakakis~\cite{yannakakis1981computing}. Chordal graphs play a central role in graph theory~\cite{brandstadt1999graph}, both through their connection to treewidth~\cite{heggernes2006minimal} and through their connection to perfect graphs~\cite{golumbic2004algorithmic}. From an algorithms perspective, chordal graphs can be recognized in linear time~\cite{rose1976algorithmic}.

An interesting and relevant result by Bender et al.~\cite{bender1985almost} is that a random $n$-vertex labeled chordal graph is a split graph with probability $1-o(1)$, i.e., the fraction of labeled chordal graphs that are not split is $o(1)$. This suggests the possibility of a simple {\em approximately uniform} sampler for labeled chordal graphs: simply sampling a random labeled split graph leads to an output distribution with total variation distance $o(1)$ from the uniform distribution on labeled chordal graphs. This is the main inspiration behind our approximate sampling algorithm. In addition, this idea suggests two things: The first is that it might be possible to find a simple and efficient \emph{uniform} random sampler for labeled chordal graphs. The second is that the type of chordal graphs that one usually envisions when thinking of a chordal graph (namely those with relatively small treewidth) are different from those most likely to be generated by a uniform random sampler (namely split graphs). Therefore, to generate the type of chordal graphs that one usually envisions, one should be sampling not from the set of all chordal graphs but rather from a subset, e.g., the set of all $\omega$-colorable chordal graphs. Fortunately, \cref{thm:main} provides this functionality.

\subsection{Methods}

Our exact counting algorithm is based on dynamic programming. While clique trees and tree decompositions are never explicitly mentioned in the description of the algorithm, the intuition behind the algorithm is based on these notions. Essentially, we generate a rooted clique tree where the dynamic-programming table encodes certain properties of the graph, including how it relates to the root node of the clique tree. A {\em clique tree} of a graph $G$ is a tree $T$ together with a function $f$ that maps each vertex of $G$ to a connected vertex subset of $T$, such that for every pair $u$,$v$ of vertices in $G$, $uv$ is an edge in $G$ if and only if $f(u) \cap f(v) \neq \emptyset$. It is well known that a graph $G$ has a clique tree if and only if it is chordal~\cite{blair1993introduction}.

The main difficulty with this approach is that different chordal graphs have different numbers of clique trees, so if we count the total number of clique trees, this will not give us an accurate count of the number of $n$-vertex chordal graphs. Therefore, the key idea behind our algorithm is to assign to each labeled chordal graph a unique ``canonical'' clique tree and to only count these canonical clique trees. The information stored in the dynamic-programming table is sufficient to ensure that every clique tree that we do count is the canonical tree of some chordal graph, and that the canonical clique tree of every chordal graph is counted. As it turns out, the best way to phrase our algorithm is not in terms of clique trees at all, but rather in terms of an (essentially) equivalent notion that we call an ``evaporation sequence.'' An evaporation sequence is closely related to the standard notion of a perfect elimination ordering (PEO) of a chordal graph. A {\em simplicial} vertex is a vertex whose neighborhood is a clique, and a PEO is an ordering of the vertices such that each is simplicial in the current induced subgraph, if we delete the vertices in that order (see \cref{sec:evaporation} for more details). An evaporation sequence is a type of ``canonical'' PEO: at each step, we remove {\em all} simplicial vertices from the current subgraph, rather than making an arbitrary choice of a single simplicial vertex. We say that all of the simplicial vertices evaporate at time $1$. Next, all of the vertices that become simplicial once the first set of simplicial vertices has been removed are said to evaporate at time $2$, and so on. It is easy to see that every labeled chordal graph has a unique evaporation sequence, and that this sequence does not depend on the labeling of the vertices.

Therefore, the number of chordal graphs on $n$ vertices is the sum over all evaporation sequences of the number of labeled chordal graphs with that evaporation sequence. While different evaporation sequences correspond to different numbers of chordal graphs, because this number is independent of the labels, we can simply guess the {\em number} $x$ of vertices that evaporate at any given time, and then without loss of generality assign the labels $1, \ldots, x$ to those vertices.

In our dynamic-programming algorithm, the recursive subproblems deal with counting {\em rooted} clique trees. In this context, the root of a clique tree is the set of vertices that evaporate last. Just as we would like to ``force'' a set of nodes to be in the root of the tree, we will sometimes want to force a set of nodes to evaporate last. This is done using what we call an {\em exception set}, i.e., a set of vertices that do not evaporate even if they are simplicial.

The random sampling algorithm in \cref{thm:main} follows from our counting algorithm using the standard sampling-to-counting reduction of~\cite{JVV}.

Our approximate counting algorithm is based on the result of Bender et al.~\cite{bender1985almost} mentioned above, which states that almost every labeled chordal graph is a split graph. When $n$ is large enough (compared to a function of $\frac{1}{\eps}$), simply counting (even approximately) the number of labeled split graphs leads to a $(1\pm\eps)$-approximation of the number of labeled chordal graphs. Similarly, sampling a random labeled split graph gives an approximately uniform sampler of labeled chordal graphs.

\subsection{Overview of the Paper}

Our dynamic-programming algorithm, including the associated recurrences, is presented in \cref{sec:counting}. The proof of correctness of the counting algorithm can be found in \cref{sec:main_proof} (here we reduce to counting connected chordal graphs) and \cref{sec:conn_counting_proof} (here we establish the recurrences). In \cref{sec:implementation}, one can find the tables of numbers produced by our implementation of the counting algorithm. In \cref{sec:sampling}, we describe the details of how the sampling algorithm follows from the counting algorithm. Lastly, in \cref{sec:approx}, we describe our approximate counting and approximate sampling algorithms.


\section{Preliminaries}
\label{sec:preliminaries}

All of our algorithms deal with vertex-labeled chordal graphs. For simplicity of notation, we assume the vertex set of each graph is a subset of $\N = \{1,2,3,\ldots\}$, which allows the labels to also serve as the names of the vertices. For example, we will speak of the vertex $5\in V(G)$ rather than a vertex $v$ with label~5.

\begin{definition}
A \emph{labeled graph} is a pair $G = (V,E)$, where the vertex set $V$ is a finite subset of $\N$ and the edge set $E$ is a set of two-element subsets of $V$.
\end{definition}

Henceforth, we implicitly assume \emph{all graphs that we consider are labeled graphs}. For nonnegative integers $n$, we use the notation $[n]\coloneqq\{1,2,\ldots,n\}$. Intervals of integers will often appear in our algorithm as the vertex set of a graph or subgraph, so we also define $$[a,b]\coloneqq\{a,a+1,\ldots,b\}$$ for nonnegative integers $a,b$. If $b<a$, then $[a,b] = \emptyset$.

\begin{definition}
Let $A = \{a_1,\ldots,a_r\}$ and $B = \{b_1,\ldots,b_r\}$ be finite subsets of $\N$ such that $|A| = |B|$, where the elements $a_i$ and $b_i$ are listed in increasing order. We define $\phi(A,B)\colon A\to B$ as the bijection that maps $a_i$ to $b_i$ for all $i\in[r]$.
\end{definition}

\begin{definition}
Let $G_1$, $G_2$ be two graphs, and suppose $C\coloneqq V(G_1)\cap V(G_2)$ is a clique in both $G_1$ and $G_2$. When we say we \emph{glue $G_1$ and $G_2$ together at $C$} to obtain $G$, this indicates that $G$ is the union of $G_1$ and $G_2$: the vertex set is $V(G) = V(G_1)\cup V(G_2)$, and the edge set is $E(G) = E(G_1)\cup E(G_2)$.
\end{definition}

For a graph $G$ and a vertex subset $S\subseteq V(G)$, $G[S]$ denotes the induced subgraph on the vertices of $S$. For a vertex $v\in V(G)$, we denote the neighborhood of $v$ in $G$ by $N_G(v)$, or $N(v)$ if the graph is clear from the context. For $S\subseteq V(G)$, the open neighborhood of $S$ is denoted by $$N_G(S)
\coloneqq\{v\in V(G)\setminus S:uv\in E(G)\text{ for some $u\in S$}\}$$ and the closed neighborhood of $S$ is denoted by $N_G[S]
\coloneqq S\cup N_G(S)$, or simply $N(S)$ and $N[S]$, respectively. For $S,T\subseteq V(G)$, we say $S$ \emph{sees all of} $T$ if $T\subseteq N(S)$.

\section{Counting labeled chordal graphs}
\label{sec:counting}

\subsection{The evaporation sequence}
\label{sec:evaporation}

\begin{definition}
A vertex $v$ in a graph $G$ is \emph{simplicial} if $N(v)$ is a clique.
\end{definition}

A \emph{perfect elimination ordering} of a graph $G$ is an ordering $v_1,\ldots,v_n$ of the vertices of $G$ such that for all $i\in[n]$, $v_i$ is simplicial in the subgraph induced by the vertices $v_i,\ldots,v_n$. It is well known that a graph is chordal if and only if it has a perfect elimination ordering~\cite{blair1993introduction}. For our counting algorithm, we define the notion of the evaporation sequence of a chordal graph, which can be viewed as a canonical version of the perfect elimination ordering. In the evaporation sequence, rather than making an arbitrary choice for which of the simplicial vertices in $G$ will go first in the ordering, we place the set of all simplicial vertices as the first item in the sequence. As an example, if $G$ is a tree, then the set of all simplicial vertices would be exactly the leaves of $G$.

To build the evaporation sequence, we use the fact that every chordal graph contains a simplicial vertex \cite{blair1993introduction}. This means that given a chordal graph $G$, if we repeatedly remove all simplicial vertices, then eventually no vertices remain. By observing at which time step each vertex is deleted, we obtain a partition of $V(G)$, which allows us to classify and understand the structure of $G$. In our algorithm, it will also be useful to set aside a set of exceptional vertices which are never deleted, even if they are simplical.

To formalize this, suppose we are given a chordal graph $G$ and a clique $X\subseteq V(G)$. We define the \emph{evaporation sequence} of $G$ with \emph{exception set} $X$ as follows: If $X = V(G)$, then the evaporation sequence of $G$ is the empty sequence. If $X\subsetneq V(G)$, then let $\widetilde L_1$ be the set of all simplicial vertices in $G$, and let $L_1 = \widetilde L_1\setminus X$. Suppose $L_2,\ldots,L_t$ is the evaporation sequence of $G\setminus L_1$ (with exception set $X$). Then $L_1,L_2,\ldots,L_t$ is the evaporation sequence of $G$.

To see that this is well-defined, we need to check that $\widetilde L_1\setminus X$ is nonempty whenever $X\subsetneq V(G)$, to ensure that we eventually reach the base case $X = V(G)$. This follows from the fact that every chordal graph that is not a complete graph contains two non-adjacent simplicial vertices \cite{blair1993introduction}. Suppose $\widetilde L_1\setminus X = \emptyset$. This means $\widetilde L_1$ is a clique since $X$ is a clique, so $G$ does not contain any non-adjacent simplicial vertices. Thus $G$ must be a complete graph, in which case $\widetilde L_1\subseteq X$ implies $X = V(G)$. Therefore, we always reach the base case $X = V(G)$.

If the evaporation sequence $L_1,L_2,\ldots,L_t$ of $G$ has length $t$, then we say $G$ \emph{evaporates} at time $t$ with \emph{exception set} $X$, and $t$ is called the \emph{evaporation time}. We define $L_G(X)\coloneqq L_t$ to be the last set in the evaporation sequence of $G$, and we let $L_G(X) = \emptyset$ if the sequence is empty. Similarly, we define the evaporation time of a vertex subset. Suppose $G$ has evaporation sequence $L_1,L_2,\ldots,L_t$ with exception set $X$, and suppose $S\subseteq V(G)\setminus X$ is a nonempty vertex subset. Let $t_S$ be the largest index $i$ such that $L_i\cap S\ne\emptyset$. We say $S$ \emph{evaporates} at time $t_S$ in $G$ with exception set $X$.

\subsection{Setup for the counting algorithm}

Given positive integers $n$ and $\omega$, we wish to count the number of $\omega$-colorable chordal graphs on $n$ vertices. This can easily be reduced to the problem of counting $\emph{connected}$ $\omega$-colorable chordal graphs on at most $n$ vertices (see \cref{lemma:disconn} in \cref{sec:main_proof}). Therefore, our main focus is to describe the following algorithm:

\begin{theorem}
\label{thm:conn_counting}
There is an algorithm that given $n\in\N$, computes the number of $\omega$-colorable labeled connected chordal graphs with vertex set $[n]$ using $O(n^7)$ arithmetic operations.
\end{theorem}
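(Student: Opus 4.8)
The plan is to reduce \cref{thm:conn_counting} to a system of dynamic-programming recurrences whose subproblems count connected $\omega$-colorable chordal graphs together with enough structural data to describe how each graph sits inside a canonical rooted clique tree. Concretely, I would introduce a family of counts indexed by a vertex budget $m\le n$, the size $s$ of an exception set $X$ (which, by relabeling, we may always take to be a fixed block such as $[m-s+1,m]$ of the current vertex set), the colorability bound $\omega$, and a small amount of auxiliary information such as the evaporation time and the size of the clique separator through which a ``branch'' attaches to the root. The crucial simplification is that the evaporation sequence of a chordal graph does not depend on its labeling, so every one of these counts is a function of sizes only; throughout I would work with canonical label sets like $[m]$ and $[x]$, and reintroduce the missing label information at the end via binomial factors together with the bijections $\phi(A,B)$ from the preliminaries.

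The recurrences come in two flavors. The first peels off one evaporation level: if a connected chordal graph $G$ evaporates at time $t$ with exception set $X$, then $L_G(X)$ consists of vertices that are simplicial in the residual graph $G\setminus(L_1\cup\cdots\cup L_{t-1})$, and removing this last level — equivalently, enlarging the exception set to absorb it — produces a graph that evaporates at time $t-1$; read backwards, this lets us assemble a depth-$t$ graph by choosing which vertices occupy the last level, how they attach to the surviving ``root'' portion, and which simpliciality/adjacency constraints they must obey. The second flavor glues: a connected chordal graph whose root is a clique $R$ decomposes, along its canonical clique tree, into $R$ together with a collection of smaller chordal graphs, each meeting $R$ in a clique separator $S\subseteq R$ — this is exactly the \emph{gluing} operation of the preliminaries — and since the clique number (hence the needed number of colors) is the maximum over the pieces, the $\omega$-colorability constraint is inherited piecewise and can be tracked through the recursion. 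Forcing a prescribed clique to play the role of the root is precisely what an exception set accomplishes, so both flavors live inside the same indexed family of counts.

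For correctness I would argue by induction on the vertex budget (and, where needed, on the evaporation time) that each table entry equals the number of graphs of the stated type. Surjectivity is the easy direction: every connected $\omega$-colorable chordal graph has a unique evaporation sequence, hence a well-defined root and a well-defined canonical clique tree, and therefore is produced by exactly one branch of the recursion. The real work is injectivity — showing that the recurrence never counts the same labeled graph twice and that every clique tree it assembles is the \emph{canonical} one of the graph it builds. This reduces to checking that the data recorded in the table (membership in the exception set, the attachment pattern of each branch, and the forced simpliciality structure at each level) pins down both the partition of $V(G)$ into evaporation levels and the rooted clique-tree shape uniquely; the delicate point is that a branch attached at a separator $S$ must become ``new'' at exactly the right time, i.e. its internal vertices must not prematurely turn simplicial in the whole graph, and it is the exception-set bookkeeping that guarantees this. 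I expect this canonical-clique-tree / no-double-counting argument to be the main obstacle of the proof, and it is where the bulk of \cref{sec:conn_counting_proof} will be spent.

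Finally, for the running time there are only $O(\mathrm{poly}(n))$ table entries — a vertex budget, an exception-set size, an evaporation time, and a separator size, each ranging over $\{0,\ldots,n\}$ — and each entry is computed by summing over $O(\mathrm{poly}(n))$ ways to split off the last level or to peel off the next branch; a careful accounting of these parameters yields $O(n^7)$ arithmetic operations, and summing the appropriate boundary entries extracts the count of connected $\omega$-colorable chordal graphs on $[n]$, giving \cref{thm:conn_counting}.
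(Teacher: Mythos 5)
Your outline does follow the paper's strategy in spirit (evaporation sequences as a canonical PEO, exception sets to freeze the root, label-independent counts reassembled via binomial factors and the bijections $\phi$, and ``peel a level'' plus ``glue at a clique'' recurrences), but there is a genuine gap at exactly the point you flag as the main obstacle. You attribute the no-premature-evaporation guarantee to ``exception-set bookkeeping.'' Exception sets only protect the root when you \emph{restrict} to a subgraph; they do nothing in the reverse, gluing direction, where you must certify that in the reassembled graph the last level $L$ really survives until time $t$ with exception set $X$ alone. The paper needs a separate structural mechanism for this: either at least two components of $G\setminus(X\cup L)$ see all of $X\cup L$, or the proper-subset neighborhoods of the remaining components form a set cover of $L$ (\cref{lemma:set_cover}), and the recurrence for $\tilde f$ is split into the zero/one/at-least-two all-seeing-component cases (with the companion counters $\tilde g_1$, $\tilde g_{\ge 2}$, $\tilde g_p$, $\tilde f_p$) precisely so that every glued graph satisfies this invariant. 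Without that case analysis, the assembly direction of your intended bijection fails: one can glue pieces of the kind you describe and obtain a graph in which $L$ evaporates before time $t$, so the table entries no longer count what they claim to count. Relatedly, your proposed state (vertex budget, exception-set size, evaporation time, separator size) omits the bookkeeping needed to keep $G\setminus X$ connected across the recursion --- the paper's extra argument $z$, which forbids a component's neighborhood from lying inside $[z]$ --- and without it the decomposition is not injective/surjective onto the intended class either.

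The running-time claim is also not yet earned. With the state the argument actually requires, the bottleneck table $\tilde f_p(t,x,l,k,z)$ has five arguments and its natural recurrence is a triple summation, i.e.\ $O(n^8)$ arithmetic operations; obtaining $O(n^7)$ requires observing that the factor $\tilde g_1(t-1,x'+l',k')$ depends on $x'$ and $l'$ only through $r=x'+l'$ and reorganizing the sum through a helper function (\cref{sec:wrap_up_proof}). Your four-parameter accounting either undercounts the state (in which case correctness breaks as above) or, once repaired, yields $O(n^8)$ without this additional idea. So the two concrete missing ingredients are: the exact-evaporation-time mechanism in the gluing direction (set cover plus the all-seeing-component case split), and the summation reorganization that brings the cost down to $O(n^7)$.
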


We first give an overview of this algorithm and describe the various dynamic-programming tables (\cref{defn:counters}). Next, we describe the recurrences in detail in \cref{sec:recurrences}. In \cref{sec:main_proof}, we show that the counting portion of \cref{thm:main} (counting chordal graphs) follows from \cref{thm:conn_counting} (counting connected chordal graphs). In \cref{sec:conn_counting_proof}, we prove correctness of the recurrences and complete the proof of \cref{thm:conn_counting}.

\medskip
\textbf{Algorithm overview.} To count $\omega$-colorable connected chordal graphs $G$, we classify these graphs based on the behavior of their evaporation sequence. We make use of several \emph{counter functions} (these are our dynamic-programming tables), each of which keeps track of the number of chordal graphs in a particular subclass. The arguments of the counter functions tell us the number of vertices in the graph, the evaporation time, the size of the exception set $X$, the size of the last set of simplicial vertices $L_G(X)$, etc. Initially, we consider all possibilities for the evaporation time of $G$ with exception set $X = \emptyset$. Then, using several of our recursive formulas, we reduce the number of vertices by dividing up the graph into smaller subgraphs and counting the number of possibilities for each subgraph. As we do so, the exception set $X$ increases in size. When we consider these various subgraphs, we also make sure that in each subgraph, the maximum clique size is at most $\omega$. In the end, the algorithm understands the possibilities for the entire graph, including the cliques that make up the very first set in its evaporation sequence.

The purpose of the exception set is to allow us to restrict to smaller subgraphs without distorting the evaporation behavior of the graph. For example, suppose we wish to count the number of connected chordal graphs on $n$ vertices that evaporate at time $t$, such that the vertices $1,2,\ldots,l$ make up the last set to evaporate, i.e., $L_G(\emptyset) = [l]$. Let $L = [l]$. One subproblem of interest would be to count the number of possibilities for the first connected component of $G\setminus L$. Formally, we count the number of possibilities for $G'\coloneqq G[L\cup C]$, where $C$ is the connected component of $G\setminus L$ that contains the vertex $l+1$. For each possible number of vertices in $G'$, we make a recursive call to count the number of possible subgraphs $G'$ of that size. However, if we were to restrict to $G'$ with a still-empty exception set, then the evaporation time of $G'$ alone could be much less than the evaporation time of $V(G')$ in $G$. Indeed, there may be vertices in $G\setminus G'$ adjacent to $L$ that prevent $L$ from evaporating before time $t$, so when we restrict to the subgraph $G'$, $L$ would now evaporate too soon. This would cause a cascading effect, causing vertices near $L$ to evaporate as well, and changing the entire evaporation sequence of $G'$. To resolve this, we add the vertices of $L$ to the exception set to preserve the evaporation behavior of $G'$.

The list of counter functions is given in \cref{defn:counters}; see \cref{fig:counter_functions} for an illustration. As shown below, the number of $\omega$-colorable connected chordal graphs on $n$ vertices is the sum of various calls to the fourth function $\tilde g_1$, since $\tilde g_1(t,0,n)$ is the number of $\omega$-colorable connected chordal graphs on $n$ vertices that evaporate at time $t$ with empty exception set.

To remember the names of these counter functions, one can think of them as follows. The $g$-functions keep track of the size of the exception set $X$, but these do not have information about the size of $L_G(X)$. The $f$-functions have an additional argument $l$, which is the size of $L_G(X)$. As a mnemonic, one can say that $g$ stands for ``glued'' and $f$ stands for ``free.'' In the $g$-functions, $X$ is the ``root,'' and all of the vertices of $X$ are glued, in the sense that they cannot evaporate. In the $f$-functions, $X\cup L_G(X)$ is the ``root,'' and some of the vertices in the root are free, since the vertices in $L_G(X)$ are allowed to evaporate. For functions with a tilde, the connected components of $G\setminus X$ (for $g$-functions) or $G\setminus(X\cup L(G,X))$ (for $f$-functions) all evaporate at the same time. Lastly, a subscript $p$ means that the neighborhoods of the connected components of $G\setminus X$ (or $G\setminus(X\cup L(G,X))$), when intersected with $X$ (or $X\cup L(G,X)$), are \emph{proper} subsets of $X$ (or $X\cup L(G,X)$). For all of these functions, we only consider graphs that are $\omega$-colorable.

\begin{definition}
\label{defn:counters}
The following functions count various subclasses of chordal graphs. The arguments $t,x,l,k,z$ are nonnegative integers. They take on values up to at most $n$ and satisfy the domain requirements listed below.
\begin{enumerate}
    \item $g(t,x,k,z)$ is the number of $\omega$-colorable connected chordal graphs $G$ with vertex set $[x+k]$ that evaporate in time at most $t$ with exception set $X\coloneqq[x]$, where $X$ is a clique, with the following property: every connected component of $G\setminus X$ (if any) has at least one neighbor in $X\setminus[z]$. \textbf{Domain:} $t\ge 0$, $x\ge 1$, $z<x$.
    \item $\tilde g(t,x,k,z)$ is the same as $g(t,x,k,z)$, except every connected component of $G\setminus X$ (if any) evaporates at time \emph{exactly} $t$ in $G$. \emph{Note: A graph with $V(G) = X$ would be counted because in that case, $\tilde g$ is the same as $g$.}
    \textbf{Domain:} $t\ge 1$, $x\ge 1$, $z<x$.
    \item $\tilde g_p(t,x,k,z)$ is the same as $\tilde g(t,x,k,z)$, except no connected component of $G\setminus X$ sees all of~$X$. \textbf{Domain:} $t\ge 1$, $x\ge 1$, $z<x$.
    \item $\tilde g_1(t,x,k)$ and $\tilde g_{\ge 2}(t,x,k)$ are the same as $\tilde g(t,x,k,z)$, except every connected component of $G\setminus X$ sees all of $X$ (hence we no longer require every component of $G\setminus X$ to have a neighbor in $X\setminus[z]$), and furthermore, for $\tilde g_1$ we require that $G\setminus X$ has exactly one connected component, and for $\tilde g_{\ge 2}$ we require that $G\setminus X$ has at least two components. \textbf{Domain for $\tilde g_1$:} $t\ge 1$, $x\ge 0$. \textbf{Domain for $\tilde g_{\ge 2}$:} $t\ge 1$, $x\ge 1$.
    \item $f(t,x,l,k)$ is the number of $\omega$-colorable connected chordal graphs $G$ with vertex set $[x+l+k]$ that evaporate at time \emph{exactly} $t$ with exception set $X\coloneqq[x]$, such that $G\setminus X$ is connected, $L_G(X) = [x+1,x+l]$, and $X\cup L_G(X)$ is a clique. \textbf{Domain:} $t\ge 1$, $x\ge 0$, $l\ge 1$.
    \item $\tilde f(t,x,l,k)$ is the same as $f(t,x,l,k)$, except every connected component of $G\setminus(X\cup L_G(X))$ evaporates at time \emph{exactly} $t-1$ in $G$, and there exists at least one such component, i.e., $X\cup L_G(X)\subsetneq V(G)$.
    \textbf{Domain:} $t\ge 2$, $x\ge 0$, $l\ge 1$.
    \item $\tilde f_p(t,x,l,k)$ is the same as $\tilde f(t,x,l,k)$, except no connected component of $G\setminus(X\cup L_G(X))$ sees all of $X\cup L_G(X)$. \textbf{Domain:} $t\ge 2$, $x\ge 0$, $l\ge 1$.
    \item $\tilde f_p(t,x,l,k,z)$ is the same as $\tilde f_p(t,x,l,k)$, except rather than requiring that $G\setminus X$ is connected, we require that $G\setminus[z]$ is connected. \textbf{Domain:} $t\ge 2$, $x\ge 0$, $l\ge 1$, $z\le x$.
\end{enumerate}
\end{definition}

\begin{figure}
    \centering
    \includegraphics[width=0.7\textwidth]{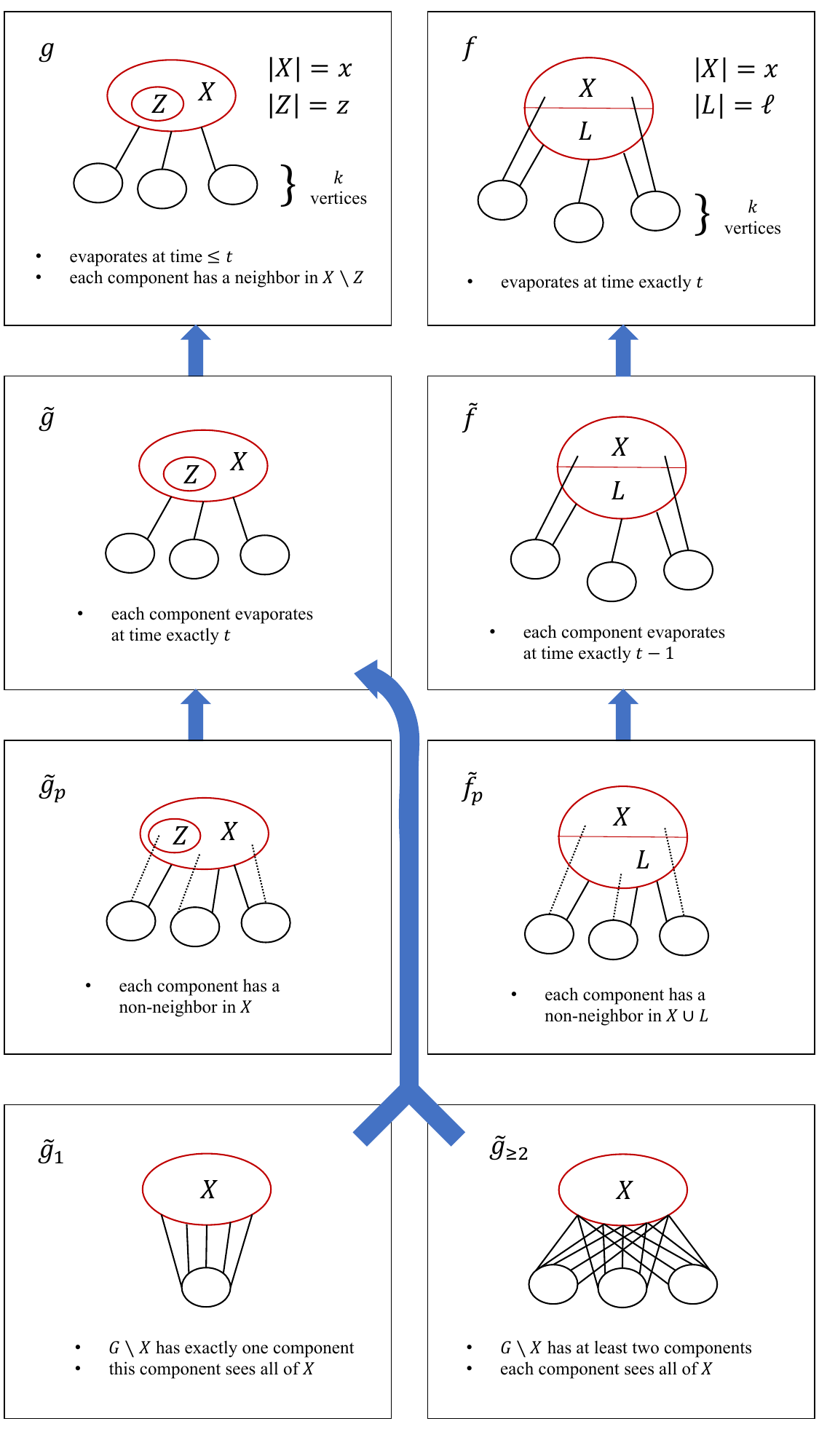}
    \caption{The counter functions. Here $L = L_G(X)$ and $Z=[z]$. An arrow from one function to another, say from $\tilde g$ to $g$, indicates that the definition of $\tilde g$ is the same as that of $g$, except where indicated otherwise. The drawing of $\tilde f_p$ represents $\tilde f_p(t,x,l,k)$. The function $\tilde f_p(t,x,l,k,z)$ is similar but also keeps track of the argument $z$.}
    \label{fig:counter_functions}
\end{figure}

\subsection{Recurrences for the counting algorithm}
\label{sec:recurrences}


We implicitly assume \emph{all graphs in this section are connected and $\omega$-colorable}. For $k\in\N$, let $c(k)$ denote the number of $\omega$-colorable connected chordal graphs with vertex set $[k]$. To compute $c(n)$, we first consider all possibilities for the evaporation time. Initially, the exception set is empty. We observe that $\tilde g_1(t,0,n)$ is the number of (connected, \mbox{$\omega$-colorable}) chordal graphs with vertex set $[n]$ that evaporate at time exactly $t$ with empty exception set. Therefore, $$c(n) = \sum_{t=1}^n\tilde g_1(t,0,n).$$ We compute the necessary values of $\tilde g_1$ by evaluating the following recurrences top-down using memoization. We take this approach rather than bottom-up dynamic programming to simplify the description slightly, since by memoizing we do not need to specify in what order the entries of the various dynamic-programming tables are computed. We simply compute each value of the counter functions as needed. For the following recurrences, let $X = [x]$ according to the current value of the argument $x$, and let $L = L_G(X)$.

To compute $\tilde g_1(t,0,n)$, the number of chordal graphs that evaporate at time exactly $t$, we consider all possibilities for the size $l$ of $L$. Once $l$ is given, there are $\binom{n}{l}$ possibilities for the label set of $L$. Recall that $f$ counts the number of chordal graphs where $L$ is fixed and evaporates at time $t$, and $G\setminus X$ is connected. Formally, we have the following recurrence --- the first time this is used, the arguments are $t$, $x = 0$, and $k = n$.

\begin{restatable}{lemma}{gTildeOne}
\label{lemma:g1_tilde}
For $\tilde g_1$, we have $$\tilde g_1(t,x,k) = \sum_{l=1}^k\binom{k}{l}f(t,x,l,k-l).$$
\end{restatable}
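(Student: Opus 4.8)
The plan is to prove this by directly partitioning the set of graphs counted by $\tilde{g}_1(t,x,k)$ according to the size $l$ of the last set $L = L_G(X)$ to evaporate, and then showing that for each fixed $l$, the count of graphs with $|L| = l$ is exactly $\binom{k}{l} f(t,x,l,k-l)$. Recall that $\tilde{g}_1(t,x,k)$ counts $\omega$-colorable connected chordal graphs $G$ on vertex set $[x+k]$ that evaporate at time exactly $t$ with exception set $X = [x]$, where $G \setminus X$ has exactly one connected component and that component sees all of $X$. Since $G \setminus X$ is connected and nonempty (as $t \geq 1$), the quantity $L_G(X)$ is a well-defined nonempty subset of $[x+1, x+k]$, so $l := |L_G(X)|$ ranges over $\{1, \ldots, k\}$. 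This gives the outer sum.

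First I would fix $l \in [k]$ and consider the set $S_l$ of graphs counted by $\tilde{g}_1(t,x,k)$ with $|L_G(X)| = l$. The key observation is that $L_G(X) \cup X$ must be a clique: $L_G(X)$ is, by definition of the evaporation sequence, the set of simplicial vertices (minus $X$) at the final stage, which is a clique; and since the final graph $G \setminus (\text{earlier sets})$ is a complete graph on $X \cup L_G(X)$ by the evaporation-sequence base case, all of $X \cup L_G(X)$ is a clique. (I should double-check this against the precise base-case analysis in \cref{sec:evaporation}, where the sequence ends when the remaining graph is complete.) Next, the label set of $L_G(X)$ is some $l$-element subset $A$ of $[x+1, x+k]$, and there are $\binom{k}{l}$ choices for $A$. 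The remaining step is to set up a bijection between graphs in $S_l$ whose last-evaporating set has label set $A$, and graphs counted by $f(t,x,l,k-l)$: namely, relabel via the bijection $\phi$ that sends $X$ to itself, $A$ to $[x+1, x+l]$ (order-preservingly), and $[x+1,x+k] \setminus A$ to $[x+l+1, x+k]$ (order-preservingly). This relabeling is a graph isomorphism, preserves chordality, connectivity, $\omega$-colorability, the exception set $X$, the evaporation time $t$, the connectivity of $G \setminus X$, and turns $L_G(X)$ into $[x+1, x+l]$ — exactly the defining conditions of $f(t,x,l,k-l)$. Conversely, every graph counted by $f(t,x,l,k-l)$ pulls back to a graph in $S_l$ with $L$-label-set $A$. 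Summing over the $\binom{k}{l}$ choices of $A$ and then over $l$ gives the recurrence.

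The main obstacle I anticipate is the bookkeeping around the definition of $f$ versus $\tilde{g}_1$: I must verify that the only difference between "a graph counted by $\tilde{g}_1(t,x,k)$ with $|L_G(X)| = l$ and $L_G(X) = [x+1,x+l]$" and "a graph counted by $f(t,x,l,k-l)$" is the labeling freedom of $L_G(X)$, and in particular that the conditions in $\tilde{g}_1$ (the single component of $G \setminus X$ sees all of $X$) are consistent with, and imply nothing beyond, the conditions in the definition of $f$ (evaporates at exactly time $t$, $G \setminus X$ connected, $L_G(X) = [x+1, x+l]$, $X \cup L_G(X)$ a clique). Concretely, I need that $f$ does not secretly impose or forbid the "sees all of $X$" condition — since $G\setminus X$ is a single connected component and $G$ is connected, it automatically sees all of $X$, so there is no discrepancy. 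I would also need to confirm that $f$ allows the exception-set size $x$ to be anything $\geq 0$ matching the domain, and that the $\omega$-colorability constraint is invariant under relabeling, which is immediate since colorability depends only on the isomorphism type. Once these definitional checks are in place, the counting identity is a routine application of the $\binom{k}{l}$-way partition described above.
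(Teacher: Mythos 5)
Your combinatorial skeleton is the same as the paper's: partition the graphs counted by $\tilde g_1(t,x,k)$ by $l=|L_G(X)|$ and by the $\binom{k}{l}$ possible label sets of $L_G(X)$, and relabel via $\phi$ to get a bijection with the graphs counted by $f(t,x,l,k-l)$. However, the two verifications that actually carry the content of this lemma are both justified incorrectly. First, to land in the class counted by $f$ you must check that $X\cup L_G(X)$ is a clique, and you argue this ``by the evaporation-sequence base case,'' claiming the final-stage graph is complete on $X\cup L_G(X)$. The base case only says the process stops when the remaining vertex set equals $X$; it does not make the penultimate graph complete, and the claim is false without the ``sees all of $X$'' hypothesis. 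For example, with $X=\{1,2\}$ a clique and a single extra vertex $a$ adjacent only to $1$, the graph evaporates at time $1$ with $L_G(X)=\{a\}$, yet $X\cup L_G(X)$ is not a clique. The correct derivation (the one the paper uses) needs the defining property of $\widetilde G_1$ that the unique component $C$ of $G\setminus X$ satisfies $N(C)=X$, combined with \cref{lemma:neighborhood} (which gives $N(L_G(X))\cap X = N(C) = X$) and \cref{obs:L_union_X}; your argument never invokes the see-all condition at this point, so it would ``prove'' a false general statement.

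Second, in the reverse direction you assert that the single component of $G\setminus X$ ``automatically sees all of $X$'' because $G$ is connected. That inference is also wrong: in the same example above, $G$ is connected and $G\setminus X$ is one component, but it sees only $\{1\}$. The reason the reverse map lands in $\widetilde G_1(t,x,k)$ is that the definition of $f$ explicitly requires $X\cup L_G(X)$ to be a clique, so every vertex of $X$ is adjacent to every vertex of $L_G(X)\subseteq C$, which is exactly how the paper argues. So while your decomposition and relabeling bijection are right, the proof as written has a genuine gap at precisely the two steps where the structural lemmas about evaporation sequences must be brought in.
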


The proof of \cref{lemma:g1_tilde}, along with the proofs of all of the following recurrences, can be found in \cref{sec:recurrence_proofs}. We proceed in this order, first presenting the recurrences and later presenting their proofs, to first give the reader the context of the entire algorithm. For now, to see the intuition behind \cref{lemma:g1_tilde}, recall that in the definition of $f(t,x,l,k)$ we require a specific label set for $L_G(X)$, namely $[x+1,x+l]$. If we were to replace that requirement with $L_G(X) = L'$ for any other subset $L'$ of $[x+1,x+l+k]$ of size $l$, this would not change the value of $f(t,x,l,k)$. Therefore, in the recurrence for $\tilde g_1$, it is sufficient to compute $f(t,x,l,k-l)$ and multiply by $\binom{k}{l}$, rather than computing $\binom{k}{l}$ distinct counter functions.

For $f$, to count chordal graphs where $L$ is fixed and evaporates at time $t$, and $G\setminus X$ is connected, we consider all possibilities for the set of labels that appear in connected components of $G\setminus(X\cup L)$ that evaporate at time exactly $t-1$. Recall that $\tilde f$ is the same as $f$, except all components of $G\setminus(X\cup L)$ evaporate at time exactly $t-1$. For each possible $k'$ (the size of the chosen label set), $\tilde f$ allows us to count the number of possibilities for the subgraph $G_1$ consisting of $X\cup L$ and all components of $G\setminus(X\cup L)$ that evaporate at time exactly $t-1$. The function $g$ allows us to count the number of possibilities for the subgraph $G_2$ consisting of $X\cup L$ and all other components.

\begin{restatable}{lemma}{f}
For $f$, we have $$f(t,x,l,k) = \sum_{k'=1}^k\binom{k}{k'}\tilde f(t,x,l,k')g(t-2,x+l,k-k',x).$$
\end{restatable}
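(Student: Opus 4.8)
The plan is to prove the recurrence for $f(t,x,l,k)$ by exhibiting a bijection between the set of graphs counted by $f(t,x,l,k)$ and a disjoint union (over the choice of $k'$ and the label set $S$ of size $k'$) of pairs of graphs counted by $\tilde f$ and $g$ respectively. First I would fix a graph $G$ counted by $f(t,x,l,k)$, so $V(G) = [x+l+k]$, $X = [x]$ is a clique exception set, $G\setminus X$ is connected, $L_G(X) = [x+1,x+l]$, $R \coloneqq X\cup L_G(X) = [x+l]$ is a clique, and $G$ evaporates at time exactly $t$. The decomposition is: let $S \subseteq [x+l+1, x+l+k]$ be the set of vertices lying in connected components of $G\setminus R$ that evaporate at time exactly $t-1$ in $G$, set $k' = |S|$, let $G_1 = G[R\cup S]$ and $G_2 = G[R\cup (V(G)\setminus R\setminus S)]$. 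Both $G_1$ and $G_2$ contain $R$ as a clique, and $G$ is obtained by gluing $G_1$ and $G_2$ together at $R$ (every edge of $G$ is inside $R$, or inside one component of $G\setminus R$, hence inside $G_1$ or $G_2$; there are no edges between distinct components of $G\setminus R$).

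The key steps, in order, are as follows. Step 1: argue that $k'\ge 1$, i.e.\ $G\setminus R\ne\emptyset$ and at least one component of $G\setminus R$ evaporates at time exactly $t-1$; this is where the definition of $\tilde f$'s domain ($t\ge 2$) and the fact that $G$ evaporates at time exactly $t$ come in — removing $R$ from the evaporation sequence of $G$ should drop the evaporation time by exactly one, so the maximum evaporation time among components of $G\setminus R$ is $t-1$. Step 2: show $G_1$ is counted by $\tilde f(t,x,l,k')$ after relabeling its non-$R$ vertices to $[x+l+1, x+l+k']$ via $\phi$: it is connected (since $R$ is a clique that every component of $G_1\setminus R$ attaches to, or $R$ itself is connected), $\omega$-colorable, chordal, $L_{G_1}(X) = L_G(X)$ restricted appropriately, every component of $G_1\setminus R$ evaporates at exactly $t-1$, and $G_1$ evaporates at exactly $t$. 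Step 3: show $G_2$ is counted by $g(t-2,x+l,k-k',x)$ after relabeling: here the exception set is $R = [x+l]$ of size $x+l$, the parameter $z = x$ records that every component of $G_2\setminus R$ must have a neighbor in $R\setminus[x] = L_G(X)$ (because in $G$, the vertices of $L_G(X)$ only evaporate at time $t$, so any component not adjacent to $L$ would behave differently — actually the relevant point is that $L$ must not evaporate too early, forcing each remaining component to ``hold down'' $L$ via a neighbor in $L$), and the evaporation-time bound is $t-2$ since those components evaporate strictly before $t-1$. Step 4: verify the map $(k', S, G_1, G_2)\mapsto G$ is a bijection — injectivity because $S$, $G_1$, $G_2$ are recoverable from $G$, and surjectivity because gluing any valid $G_1, G_2$ at $R$ produces a graph with the required properties (chordality of a graph glued at a clique, correct evaporation time, $L_G(X)$ as prescribed). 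The factor $\binom{k}{k'}$ accounts for the choice of $S$ among the $k$ non-root labels, using the label-independence of $\tilde f$ and $g$ as already noted in the text.

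The main obstacle I expect is Step 3, specifically pinning down exactly why the right evaporation bound is $t-2$ (rather than $t-1$) and why the exception-set/$z$ bookkeeping comes out as $g(t-2, x+l, k-k', x)$: one must carefully track how the evaporation sequence of $G$ restricts to $G_2$ once we declare all of $R = X\cup L$ exceptional, and show that the components of $G_2\setminus R$ evaporating at times $\le t-2$ in $G$ still evaporate at times $\le t-2$ in $G_2$ with exception set $R$ — this requires the exception set to correctly ``freeze'' $L$ so that the restriction does not accelerate evaporation, which is the whole design rationale for exception sets explained earlier in the paper. A secondary subtlety is confirming in Step 1 that the gluing is clean, i.e.\ that $R$ really is a clique in both $G_1$ and $G_2$ (immediate from $R$ being a clique in $G$) and that $V(G_1)\cap V(G_2) = R$ exactly, so that $G = G_1\cup G_2$ reconstitutes $G$ with no spurious or missing edges. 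The remaining verifications — chordality is preserved under gluing at a clique, $\omega$-colorability combines since the two colorings agree on the clique $R$ up to permutation of colors, and connectivity — are routine and I would not belabor them.
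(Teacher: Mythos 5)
Your proposal is correct and follows essentially the same route as the paper: split $G\setminus(X\cup L)$ into the components that evaporate at time exactly $t-1$ (yielding, after relabeling, a graph counted by $\tilde f(t,x,l,k')$) and all remaining components (counted by $g(t-2,x+l,k-k',x)$), glue back at the clique $X\cup L$, and account for the choice of label set with $\binom{k}{k'}$, with the paper's ``commuting'' lemmas supplying exactly the evaporation-preservation facts you flag as the main obstacle. One small correction to your Step 3: the reason every component of $G_2\setminus(X\cup L)$ has a neighbor in $L$ (the role of $z=x$) is simply that $G\setminus X$ is connected, and conversely this condition is what guarantees $G\setminus X$ is connected after gluing; it is the components evaporating at exactly $t-1$ (those placed in $G_1$), not those in $G_2$, that prevent $L$ from evaporating before time $t$.
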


When $f$ is called for the first time in the initial steps of the algorithm, this is the first moment when $X$ becomes nonempty (when we call the function $g$), since at this point we are restricting to a subgraph with fewer than $n$ vertices. When we restrict to the subgraph $G_2$, we want to ensure that its vertices have the same evaporation behavior as they did in $G$. In particular, we need to ensure that the vertices of $L$ do not evaporate too soon, since their presence may be essential for preventing other vertices from evaporating. For this reason, we let $X\cup L$ be the exception set for $G_2$. For $G_1$, the exception set is simply $X$ because the components that evaporate at time exactly $t-1$ are still present in $G_1$, preventing the vertices of $L$ from evaporating before time $t$.

For the subgraph $G_2$, now that all of $L$ has been pushed into the exception set, we no longer have information about the argument $l$ since the last set of simplicial vertices in $G_2$ evaporates further back in time. This is why we call $g$ rather than $f$ to count the possibilities for $G_2$. In fact, $G_2\setminus[x+l]$ might not even be connected, which is required by $f$. Finally, the fourth argument of $g$ indicates that every connected component of $G_2\setminus(X\cup L)$ has at least one neighbor in $L$. This ensures that $G\setminus X$ is connected.

For $g$, to count chordal graphs that evaporate in time at most $t$, we consider all possibilities for the set of labels that appear in connected components of $G\setminus X$ that evaporate at time exactly $t$. Recall that $\tilde g$ counts the number of chordal graphs where all connected components of $G\setminus X$ evaporate at time exactly $t$.

\begin{restatable}{lemma}{g}
\label{lemma:g}
For $g$, we have $$g(t,x,k,z) = \sum_{k'=0}^k\binom{k}{k'}\tilde g(t,x,k',z)g(t-1,x,k-k',z).$$
\end{restatable}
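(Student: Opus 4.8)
The plan is to set up a bijection between the set of graphs counted by $g(t,x,k,z)$ and the disjoint union, over $k'\in\{0,1,\dots,k\}$, of: (a) a choice of which $k'$ of the $k$ non-exception vertices belong to ``late'' components, realized as a subset $S\subseteq[x+1,x+k]$ with $|S|=k'$; (b) a graph counted by $\tilde g(t,x,k',z)$; and (c) a graph counted by $g(t-1,x,k-k',z)$. The $\binom{k}{k'}$ factor accounts for (a), after which we may relabel so that the ``late'' vertices receive labels $[x+1,x+k']$ and the ``early'' vertices receive labels $[x+k'+1,x+k]$; as in the discussion following Lemma~\ref{lemma:g1_tilde}, the counter functions are invariant under such relabelings, so no correction factor is needed.

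First I would describe the forward map. Given $G$ counted by $g(t,x,k,z)$, let $X=[x]$, and partition the connected components of $G\setminus X$ according to their evaporation time in $G$ with exception set $X$: let $\mathcal C_{\mathrm{late}}$ be those that evaporate at time exactly $t$, and $\mathcal C_{\mathrm{early}}$ the rest (which, since $G$ evaporates in time at most $t$, evaporate at time $\le t-1$). Let $G_1 = G[X\cup V(\mathcal C_{\mathrm{late}})]$ and $G_2 = G[X\cup V(\mathcal C_{\mathrm{early}})]$. The key structural point is that the evaporation time of any component $C$ of $G\setminus X$, computed inside $G$ with exception set $X$, depends only on $G[X\cup C]$: this holds because $X$ is a clique contained in the exception set and is never deleted, and because distinct components of $G\setminus X$ do not interact (there are no edges between them, and each deletion step removes simplicial vertices, a property determined by a vertex's closed neighborhood, which for $v\in C$ lies in $X\cup C$). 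Consequently $G_1$ evaporates at time exactly $t$ with exception set $X$ if $\mathcal C_{\mathrm{late}}\neq\emptyset$ (and $V(G_1)=X$ otherwise, which is the degenerate case still counted by $\tilde g$), all of whose components evaporate at time exactly $t$, giving a graph counted by $\tilde g(t,x,k',z)$ with $k'=|V(\mathcal C_{\mathrm{late}})|$; and $G_2$ evaporates in time at most $t-1$, giving a graph counted by $g(t-1,x,k-k',z)$. One must also check the side conditions transfer: $G_1$ and $G_2$ are connected (each component of $G\setminus X$ has a neighbor in $X$, since in $G$ every component has a neighbor in $X\setminus[z]\subseteq X$, and $X$ is a clique hence connected, so $X$ together with any such component is connected), $\omega$-colorability is inherited by induced subgraphs, and the ``neighbor in $X\setminus[z]$'' condition on each component is a property of the component alone, so it passes to both $G_1$ and $G_2$.

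Then I would describe the inverse map: given $k'$, a graph $G_1$ counted by $\tilde g(t,x,k',z)$ on vertex set $[x+k']$, and a graph $G_2$ counted by $g(t-1,x,k-k',z)$ on vertex set $[x+(k-k')]$, relabel $G_2$'s non-exception vertices by $\phi([x+1,x+k-k'],[x+k'+1,x+k])$, and glue $G_1$ and $G_2$ together at the clique $X=[x]$ to form $G$ on $[x+k]$. Since $V(G_1)\cap V(G_2)=X$ is a clique in both, the gluing is well-defined and adds no edges between the two vertex sets, so the components of $G\setminus X$ are exactly those of $G_1\setminus X$ together with those of $G_2\setminus X$; by the same locality argument as above, each retains its evaporation time, so the components from $G_1$ evaporate at time exactly $t$, those from $G_2$ at time $\le t-1$, $G$ evaporates in time at most $t$, and $G$ is counted by $g(t,x,k,z)$. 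The two maps are mutually inverse by construction. The main obstacle is making the ``locality of evaporation time'' claim fully rigorous — that is, proving by induction on the evaporation sequence that for a component $C$ of $G\setminus X$, the sequence of deletions restricted to $C$ is identical whether performed inside $G$ or inside $G[X\cup C]$, both with exception set $X$. This requires observing that a vertex $v\in C$ is simplicial in the current subgraph iff its neighbors there form a clique, and its neighborhood never includes vertices of other components, so the two processes stay in lockstep; I expect this to be the one place where some care with the definitions in \cref{sec:evaporation} is needed, and it is likely factored out as a separate lemma in \cref{sec:conn_counting_proof}.
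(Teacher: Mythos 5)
Your proposal is correct and follows essentially the same route as the paper: partition the components of $G\setminus X$ by whether they evaporate at time exactly $t$, restrict to $G_1=G[X\cup A]$ and $G_2=G[X\cup B]$ (respectively glue two such graphs back together at the clique $X$), with the $\binom{k}{k'}$ factor accounting for the choice of label set and the ``locality of evaporation'' claim being exactly the paper's \cref{lemma:commute_g_1,lemma:commute_g}, as you anticipated. The only detail you leave implicit is that the glued graph is again chordal, which the paper handles via \cref{lemma:gluing}.
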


For $\tilde g$, to count chordal graphs where all connected components of $G\setminus X$ evaporate at time exactly $t$, we consider all possibilities for the label set of the component $C$ of $G\setminus X$ that contains the lowest label not in $X$ (namely $x+1$). We also consider all possibilities for $N(C)$. In this recurrence, $k'$ stands for $|C|$, and $x'$ stands for $|N(C)|$.

\begin{restatable}{lemma}{gTilde}
\label{lemma:g_tilde}
For $\tilde g$, we have $$\tilde g(t,x,k,z) = \sum_{k'=1}^k\sum_{x'=1}^x\left(\binom{x}{x'}-\binom{z}{x'}\right)\binom{k-1}{k'-1}\tilde g_1(t,x',k')\tilde g(t,x,k-k',z).$$
\end{restatable}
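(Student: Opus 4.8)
The plan is to prove \cref{lemma:g_tilde} by a decomposition‑and‑gluing argument that interprets the right‑hand side as a count of the ways to assemble a graph in the class counted by $\tilde g(t,x,k,z)$, following the informal description preceding the lemma. Fix $X=[x]$, and recall that $\tilde g(t,x,k,z)$ counts $\omega$‑colorable connected chordal graphs $G$ on $[x+k]$ in which $X$ is a clique, $G$ evaporates in time $\le t$ with exception set $X$, every component of $G\setminus X$ evaporates at time exactly $t$, and every component of $G\setminus X$ has a neighbor in $X\setminus[z]$. I may assume $k\ge 1$ (the case $k=0$ is a base case), so $G\setminus X$ is nonempty.

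The first step is to isolate the one structural fact that drives the recursion: a \emph{locality} property of evaporation. I would prove that if $H$ is a chordal graph, $Y\subseteq V(H)$ is a clique, and $D$ is the vertex set of a connected component of $H\setminus Y$, then $D$ evaporates at the same time in $H$ with exception set $Y$ as in the induced subgraph $H[D\cup N_H(D)]$ with exception set $N_H(D)$; in particular, distinct components of $H\setminus Y$ evaporate independently of one another. The proof is an induction on the number of evaporation rounds: since no vertex of $Y$ is ever deleted and every neighbor of a vertex of $D$ lies in $D\cup N_H(D)$, after any number of rounds each surviving $v\in D$ has the same neighborhood in the current graph obtained from $H$ as in the current graph obtained from $H[D\cup N_H(D)]$, and that neighborhood induces the same subgraph in both; hence $v$ becomes simplicial in the same round in both processes, so the two evaporation sequences agree on $D$.

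Next I would set up the bijection. Given $G$ counted by $\tilde g(t,x,k,z)$, let $C\subseteq V(G)\setminus X$ be the vertex set of the component of $G\setminus X$ containing $x+1$ (the smallest label outside $X$), and set $k'=|C|$, $N(C)=N_G(C)\subseteq X$, $x'=|N(C)|$, $G_1=G[N(C)\cup C]$, and $G_2=G\setminus C$. Because $X$ is a clique, $N(C)$ is a clique; because $C$ has a neighbor in $X\setminus[z]$, $N(C)$ is a size‑$x'$ subset of $[x]$ not contained in $[z]$ (there are $\binom{x}{x'}-\binom{z}{x'}$ of these), and the label set $C$ is a size‑$k'$ subset of $[x+1,x+k]$ containing $x+1$ (there are $\binom{k-1}{k'-1}$ of these). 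By the locality property $C$ evaporates at time exactly $t$ in $G_1$ with exception set $N(C)$; since $G_1$ is an induced subgraph of $G$ it is connected, chordal, and $\omega$‑colorable, and $C=G_1\setminus N(C)$ is its only component and sees all of $N(C)$, so under the order‑preserving relabeling of $N(C)\cup C$ onto $[x'+k']$ (which sends $N(C)$ to $[x']$) the graph $G_1$ is counted by $\tilde g_1(t,x',k')$. Similarly, the components of $G_2\setminus X$ are exactly those of $G\setminus X$ other than $C$, each still evaporating at exactly time $t$; $G_2$ is connected (each such component has a neighbor in the clique $X$), chordal, $\omega$‑colorable, with $X$ still a clique and every component of $G_2\setminus X$ still having a neighbor in $X\setminus[z]$, so under the order‑preserving relabeling of $V(G)\setminus C$ onto $[x+(k-k')]$ (which fixes $X=[x]$) the graph $G_2$ is counted by $\tilde g(t,x,k-k',z)$. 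Conversely, from a choice of $k'$, $x'$, an admissible $N(C)$, an admissible label set $C$, a graph $G_1$ realizing $\tilde g_1(t,x',k')$, and a graph $G_2$ realizing $\tilde g(t,x,k-k',z)$, I glue $G_1$ and $G_2$ along the common clique $N(C)=V(G_1)\cap V(G_2)$ to form $G$, and verify that $G$ lands in the class counted by $\tilde g(t,x,k,z)$ with distinguished component $C$ and $N_G(C)=N(C)$: connectivity holds as $N(C)\ne\emptyset$; chordality is preserved by gluing chordal graphs at a clique; $\omega$‑colorability holds because no edge of $E(G_1)\cup E(G_2)$ joins $C$ to $V(G)\setminus V(G_1)$, so every clique of $G$ lies in $V(G_1)$ or in $V(G_2)$ and the maximum clique size does not exceed $\omega$; $X$ remains a clique; and the locality property again shows that $C$ and each component of $G_2\setminus X$ evaporates at exactly time $t$, so $G$ evaporates in time exactly $t\le t$. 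The two constructions are mutually inverse since from $G$ one recovers $C$, hence $N(C)$, $G_1$, and $G_2$ uniquely, and summing over $k'$, $x'$, and the two binomial choices gives exactly the claimed formula.

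I expect the main obstacle to be the locality property and its careful use in both directions — in particular, checking that pushing only $N(C)$ (rather than all of $X$) into the exception set of $G_1$ does not change when the vertices of $C$ become simplicial, and symmetrically that deleting $C$ does not disturb the evaporation of the other components of $G\setminus X$. The remainder is bookkeeping with subsets, together with the standard facts that induced subgraphs of chordal (resp.\ $\omega$‑colorable) graphs are chordal (resp.\ $\omega$‑colorable) and that gluing two chordal graphs at a common clique yields a chordal graph; a minor but necessary point is verifying that the relabelings are order‑preserving, so that the exception sets of $G_1$ and $G_2$ become the required initial segments $[x']$ and $[x]$.
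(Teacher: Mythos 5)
Your proposal is correct and follows essentially the same route as the paper: split off the component $C$ of $G\setminus X$ containing the lowest label $x+1$, count its neighborhood via $\binom{x}{x'}-\binom{z}{x'}$ and its label set via $\binom{k-1}{k'-1}$, map $G_1=G[N(C)\cup C]$ and $G_2=G\setminus C$ (after order-preserving relabeling) to the classes counted by $\tilde g_1(t,x',k')$ and $\tilde g(t,x,k-k',z)$, and invert by gluing at the clique $N(C)$. Your ``locality'' lemma is the paper's \cref{lemma:commute_g_1} stated directly with exception set $N(C)$ (the paper handles this variant by remarking that the same argument works when the exception set is $X'$ rather than $X$), so the two proofs match in substance.
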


The constraint $x'\ge 1$ ensures that $G$ is connected. We subtract all ways of selecting $x'$ elements from $[z]$ to ensure that $N(C)$ is not contained in $[z]$. We also subtract 1 in the binomial coefficient $\binom{k-1}{k'-1}$ because the label set for $C$ always contains $x+1$, along with $k'-1$ other labels.

For $\tilde f$, we need to count chordal graphs where $L$ is fixed and evaporates at time $t$, $G\setminus X$ is connected, and all connected components of $G\setminus(X\cup L)$ evaporate at time exactly $t-1$. The number of such graphs in which no component of $G\setminus(X\cup L)$ sees all of $X\cup L$ is $\tilde f_p(t,x,l,k)$, by the definition of $\tilde f_p(t,x,l,k)$. Now if there is at least one all-seeing component, then we break this down into two further cases: either exactly one component sees all of $X\cup L$, or at least two components see all of $X\cup L$. Recall that $\tilde g_1$ (resp.\ $\tilde g_{\ge 2}$) counts the number of chordal graphs where all components of $G\setminus X$ evaporate at time exactly $t$, every component sees all of $X$, and there is exactly one such component (resp.\ at least two such components). In the first (resp.\ second) case, $\tilde g_1$ (resp.\ $\tilde g_{\ge 2}$) corresponds to the all-seeing component(s), and $\tilde f_p$ (resp.\ $\tilde g_p$) corresponds to the remaining components.

\begin{restatable}{lemma}{fTilde}
For $\tilde f$, we have
\begin{align*}
\tilde f(t,x,l,k) &= \tilde f_p(t,x,l,k)+\sum_{k'=1}^k\binom{k}{k'}\tilde g_1(t-1,x+l,k')\tilde f_p(t,x,l,k-k') \\
&\quad+\sum_{k'=1}^k\binom{k}{k'}\tilde g_{\geq 2}(t-1,x+l,k')\tilde g_p(t-1,x+l,k-k',x).
\end{align*}
\end{restatable}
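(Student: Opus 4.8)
The plan is to classify the graphs $G$ counted by $\tilde f(t,x,l,k)$ by the number $m$ of connected components of $G\setminus(X\cup L)$ that see all of $X\cup L$, where $X=[x]$, $L=L_G(X)=[x+1,x+l]$, and I abbreviate $M:=X\cup L$; the cases $m=0$, $m=1$, $m\ge 2$ will produce the three summands. The case $m=0$ is immediate from the definitions: a graph counted by $\tilde f(t,x,l,k)$ in which no component of $G\setminus M$ sees all of $M$ is exactly a graph counted by $\tilde f_p(t,x,l,k)$.

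For $m=1$, write $C$ for the unique all-seeing component and $k':=|C|$. I would exhibit a bijection between the graphs $G$ counted by $\tilde f(t,x,l,k)$ that have exactly one all-seeing component and triples $(S,G_1,G_2)$, where $S\subseteq[x+l+1,x+l+k]$ is the size-$k'$ label set of $C$ (giving the factor $\binom{k}{k'}$ and reducing to a canonical label set exactly as in the remark after \cref{lemma:g1_tilde}), $G_1$ is $G[M\cup C]$ relabelled via $\phi$, and $G_2$ is $G[M\cup(V(G)\setminus(M\cup C))]$ relabelled via $\phi$; the inverse glues $G_1$ and $G_2$ back together at the clique $M$. One must check that $G_1$ is counted by $\tilde g_1(t-1,x+l,k')$ and $G_2$ by $\tilde f_p(t,x,l,k-k')$, and conversely that such a glued graph lies in the right subfamily of $\tilde f(t,x,l,k)$. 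The case $m\ge 2$ is the same, except $C$ is replaced by the union of all the all-seeing components; with $k'$ their total size, $G_1$ (the union of $M$ with the all-seeing components) is counted by $\tilde g_{\ge 2}(t-1,x+l,k')$ and $G_2$ (the union of $M$ with the remaining components) by $\tilde g_p(t-1,x+l,k-k',x)$. Here the fourth argument $z=x$ of $\tilde g_p$ is precisely the requirement that every remaining component has a neighbour in $M\setminus[x]=L$, which is exactly what is needed for $G\setminus X$ to be connected after regluing.

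Both directions of these bijections rest on a structural lemma about evaporation under gluing along a clique, which I would prove first (and which also underlies \cref{lemma:g1_tilde,lemma:g,lemma:g_tilde}): if $X\subseteq M$ are cliques, $H_1,\dots,H_r$ are graphs that each contain $M$ as a clique and have all pairwise intersections equal to $M$, and $G$ is their clique-sum at $M$, then (a) $G$ is chordal and $\omega$-colorable if and only if every $H_i$ is; (b) the components of $G\setminus M$ are exactly those of the graphs $H_i\setminus M$, and each evaporates in $G$ with exception set $X$ at the time it evaporates in the containing $H_i$ with exception set $M$ (since a vertex's neighbourhood is contained in $M$ together with its own component, deleting the other $H_j$'s never changes its simplicialization history, and adding them only postpones it); and (c) if $L:=M\setminus X$ is nonempty and every component of $G\setminus M$ has a neighbour in $L$, then $G\setminus X$ is connected and the evaporation time of $G$ with exception set $X$ equals $1+\max_i(\text{evaporation time of }H_i\text{ with exception set }M)$. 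Using (a)--(c) in the splitting direction (recalling that in a graph counted by $\tilde f$ no vertex of $M$ is deleted before step $t$, since $L_G(X)=L$) shows that $G_1$ and $G_2$ land in the claimed counter classes; using them in the gluing direction reconstructs a graph in $\tilde f(t,x,l,k)$ with the prescribed number of all-seeing components. The reason $\tilde f_p$ (exception set $X$, time $t$) appears for the $m=1$ remainder while $\tilde g_p$ (exception set $M$, time $t-1$) appears for $m\ge 2$ is that with a single all-seeing component the remainder still has $L$ as its own last evaporating set --- it is an honest $\tilde f$-type graph --- whereas when two or more all-seeing components are split off, the all-seeing and non-all-seeing parts are treated symmetrically as pieces glued onto the ``frozen'' root $M$.

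I expect the delicate point to be verifying, in the gluing (reconstruction) direction, that the reconstructed graph really has $L$ as its last evaporating set at time exactly $t$ --- equivalently, that the splitting did not ``hide'' a reason some vertex of $L$ evaporates early. For $m=1$ this follows cleanly from monotonicity in part (b): the remainder $G_2$ already has all of $L$ evaporating simultaneously at time $t$ with exception set $X$, and gluing on $G_1$ (which evaporates by time $t-1$ with exception set $M$) can only postpone the evaporation of each vertex of $L$, so $L$ still evaporates at time exactly $t$ and is still the last set. For $m\ge 2$ one instead uses that at the penultimate step some vertex of every component still survives, together with the fact that a vertex of $L$ with surviving neighbours in two different components cannot be simplicial --- combined with the hypotheses ``$\ge 2$ all-seeing components'' and ``each non-all-seeing component has a neighbour in $L$'' --- to rule out early evaporation of $L$. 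Finally, I would dispose of the degenerate terms $k'=k$ (empty remainder): in the $m=1$ sum the factor is $\tilde f_p(t,x,l,0)=0$ (the definition of $\tilde f$ requires $X\cup L_G(X)\subsetneq V(G)$, so a single all-seeing component with no other component is simply not counted by $\tilde f$), while in the $m\ge 2$ sum the factor is $\tilde g_p(t-1,x+l,0,x)=1$, consistent with the convention --- already recorded in the note after the definition of $\tilde g$ --- that a counter with no vertices outside its exception set evaluates to $1$.
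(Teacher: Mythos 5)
Your overall decomposition is exactly the paper's: classify graphs in $\widetilde F(t,x,l,k)$ by whether zero, one, or at least two components of $G\setminus(X\cup L)$ see all of $X\cup L$, split/glue at the clique $X\cup L$, and your identification of which counter class each piece lands in (including the treatment of the degenerate $k'=k$ terms and the role of $z=x$ in $\tilde g_p$) matches the paper. Your sketches for the gluing direction also track the paper's arguments: monotonicity plus ``the all-seeing piece is gone before time $t$'' for the one-component case (as in \cref{lemma:commute_f_2}), and the two-non-adjacent-witnesses argument for the $\ge 2$ case (as in \cref{lemma:commute_f_tilde_3}).

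However, there is a genuine gap in the \emph{splitting} direction for the exactly-one-all-seeing-component case, and your proposal explicitly (and incorrectly) locates the only delicate point in the gluing direction. When you delete the unique all-seeing component $C$ to form $G_2$, you must show $G_2\in\widetilde F_p(t,x,l,k-k')$, i.e.\ that with exception set $X$ the set $L$ still survives to time exactly $t$ in $G_2$ and the remaining components still evaporate at exactly $t-1$. Your justification rests on part (b) of your structural lemma, which asserts unconditionally that component evaporation times with exception set $X$ in the glued graph equal those with exception set $\hat X = X\cup L$ in the pieces; but this is false in general, because shrinking the exception set from $\hat X$ to $X$ can let vertices of $L$ evaporate early, which in turn can speed up the evaporation of components adjacent to them (your parenthetical ``adding them only postpones it'' gives an inequality in the wrong direction for this application, since here you are \emph{removing} $C$ from $G$). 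The fact you invoke --- that no vertex of $X\cup L$ dies before time $t$ --- is a fact about $G$, not about its induced subgraph $G_2$; a priori the non-adjacent neighbor pairs keeping $L$ alive in $G$ could all involve $C$. The paper closes exactly this hole with \cref{lemma:set_cover} and \cref{lemma:commute_f_tilde_2}: since at most one component is all-seeing, the \emph{proper} neighborhoods $\hat N(C')$ of the components of $G[L_{t-1}]$ already cover $L$, these components all survive into $G_2$, and hence every $u\in L$ keeps a non-adjacent pair of neighbors (one in such a component, one in $(X\cup L)\setminus\hat N(C')$) until time $t-1$, so $L$ cannot evaporate early in $G_2$. (Your claim (c) is also false as stated --- e.g.\ a single component seeing all of $X\cup L$ can let $L$ evaporate at time $1$ --- though it is not needed once the set-cover argument is in place.) The $\ge 2$ splitting direction and both $G_1$ memberships are fine, since there the exception set is $X\cup L$ and the easy commute lemma (\cref{lemma:commute_f_tilde_1}) suffices.
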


The above cases are relevant because if at least two components of $G\setminus(X\cup L)$ see all of $X\cup L$, then this prevents the vertices of $L$ from evaporating before time $t$. Indeed, each vertex $u\in L$ has a neighbor in each of those two components, meaning $u$ has two non-adjacent neighbors. Otherwise, if there is at most one such component, then the neighborhoods of the remaining components of $G\setminus(X\cup L)$ must together cover $L$ to ensure that $L$ does not evaporate until time $t$. In that case, for each vertex $u\in L$ that is covered by a proper-subset neighborhood $N(C)$ of a component $C$, $u$~has a neighbor $v\in C$ as well as a neighbor $w\in(X\cup L)\setminus N(C)$, and $v$ and $w$ are non-adjacent.

The reason we require $G\setminus X$ to be connected in the definition of $f$ (rather than just requiring $G$ to be connected) can be seen from the recurrence for $\tilde f$. Since in the first sum over $k'$ we only wish to consider graphs with exactly one all-seeing component, in the definition of $\tilde g_1$ we require $G\setminus X$ to be connected. The recurrence for $\tilde g_1$ depends on $f$, so this carries over into requiring $G\setminus X$ to be connected in the definition of $f$. This explains the need for the argument $z$ (for example, in $g$): as was mentioned above when we discussed the recurrence for $f$, keeping track of $z$ lets us ensure that $G\setminus X$ is connected in all graphs counted by $f$.

For $\tilde g_{\ge 2}$, to count chordal graphs where all connected components of $G\setminus X$ evaporate at time exactly $t$, every component sees all of $X$, and there are at least two such components, we consider all possibilities for the label set of the component that contains the lowest label not in $X$. For the remaining components, there is either exactly one of them or at least two.

\begin{restatable}{lemma}{gTildeTwo}
For $\tilde g_{\ge 2}$, we have $$\tilde g_{\ge 2}(t,x,k) = \sum_{k'=1}^{k-1}\binom{k-1}{k'-1}\tilde g_1(t,x,k')\Big(\tilde g_1(t,x,k-k')+\tilde g_{\geq 2}(t,x,k-k')\Big).$$
\end{restatable}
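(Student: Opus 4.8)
The plan is to partition the graphs $G$ counted by $\tilde g_{\ge 2}(t,x,k)$ according to the connected component $C$ of $G \setminus X$ that contains the vertex $x+1$ (the lowest label not in $X$), exactly as in the recurrences for $\tilde g$ and $\tilde g_{\ge 2}$'s siblings. Fix the size $k' = |C|$; since $x+1 \in C$ always, the remaining $k'-1$ labels of $C$ are chosen among the other $k-1$ labels not in $X$, giving the factor $\binom{k-1}{k'-1}$. First I would argue that $G$ decomposes uniquely as the union of two graphs glued along $X$: the graph $G_1 := G[X \cup C]$ and the graph $G_2 := G[X \cup (V(G) \setminus (X \cup C))]$ consisting of $X$ together with all the \emph{other} components of $G\setminus X$. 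Because $X$ is a clique that separates $C$ from the rest, and because (by the hypothesis of $\tilde g_{\ge 2}$) every component of $G\setminus X$ sees all of $X$, the graph $G_1$ is counted by $\tilde g_1(t,x,k')$: it is connected after deleting $X$ (it is the single component $C$), its unique component evaporates at time exactly $t$, and that component sees all of $X$. The relabeling subtlety is the same one invoked in \cref{lemma:g1_tilde}: the value $\tilde g_1(t,x,k')$ does not depend on \emph{which} $k'$ labels are used for $C$, only on $x$ and $k'$, so it is legitimate to multiply by $\binom{k-1}{k'-1}$ rather than to sum over label sets; I would state this relabeling invariance explicitly and cite $\phi(A,B)$ and the gluing definition.

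Next I would handle $G_2$. Since $\tilde g_{\ge 2}$ requires at least two components of $G\setminus X$ in total, after removing $C$ there is at least one component left, so $G_2 \setminus X$ is nonempty; and since every remaining component still evaporates at time exactly $t$ and sees all of $X$, $G_2$ falls into exactly one of two cases: either $G_2 \setminus X$ has exactly one component, counted by $\tilde g_1(t,x,k-k')$, or it has at least two, counted by $\tilde g_{\ge 2}(t,x,k-k')$. This is precisely the factor $\big(\tilde g_1(t,x,k-k') + \tilde g_{\ge 2}(t,x,k-k')\big)$. One must check that the evaporation times are unaffected by the split: deleting the component $C$ from $G$ does not change the evaporation sequence of any other component, because distinct components of $G\setminus X$ evaporate independently (a vertex outside $X$ is simplicial in the current subgraph iff it is simplicial within its own component-plus-$X$, as $X$ is a clique). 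The same observation, run in reverse, shows that gluing any valid $G_1$ and $G_2$ along $X$ produces a graph counted by $\tilde g_{\ge 2}(t,x,k)$: it is connected (both pieces contain the clique $X$ and $x \ge 1$), $\omega$-colorable (gluing two $\omega$-colorable chordal graphs along a common clique yields an $\omega$-colorable chordal graph — a standard fact I would note, since a proper coloring of each piece can be permuted to agree on $X$), and chordal, with the right evaporation behavior. Finally I would verify that $k'$ ranges over $1,\dots,k-1$: $k' \ge 1$ since $C$ is nonempty, and $k' \le k-1$ since at least one other component must remain.

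The one genuine thing to be careful about — and what I expect to be the main obstacle in writing this cleanly — is the bijection bookkeeping: I must exhibit an explicit bijection between graphs $G$ counted by $\tilde g_{\ge 2}(t,x,k)$ with $|C| = k'$ and triples consisting of a $k'$-subset $S \ni x+1$ of the non-$X$ labels, a graph counted by $\tilde g_1(t,x,k')$ (to be relabeled onto $X \cup S$ via $\phi$), and a graph counted by $\tilde g_1(t,x,k-k')$ or $\tilde g_{\ge 2}(t,x,k-k')$ (relabeled onto $X$ and the complementary labels). The forward map is "split at $C$"; the inverse is "glue along $X$ after relabeling." Injectivity and surjectivity are routine once one checks that the relabelings are consistent on the overlap $X$ (they fix $X$ pointwise since $X = [x]$ is an initial segment in every instance) and that the component structure, chordality, $\omega$-colorability, and all evaporation times are preserved in both directions. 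I would present this as a single lemma-style bijection argument, deferring to the analogous already-proved recurrences (\cref{lemma:g_tilde}, and the $\tilde f$ recurrence) for any step that is verbatim the same, so as not to repeat the component-independence and gluing arguments more than once.
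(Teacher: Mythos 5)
Your proposal is correct and follows essentially the same route as the paper's proof: split $G$ at the component $C$ of $G\setminus X$ containing the lowest label $x+1$, count $G[X\cup C]$ via $\tilde g_1(t,x,k')$ and the remainder via $\tilde g_1$ or $\tilde g_{\ge 2}$ according to whether one or more components are left, and invoke the relabeling invariance, the component-independence of evaporation (the paper's \cref{lemma:commute_g}), and the clique-gluing lemma (\cref{lemma:gluing}) for the two directions of the bijection. The paper likewise proves injectivity in full only once (in \cref{lemma:g}) and defers to it here, just as you propose.
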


For $\tilde g_p$, to count chordal graphs where all connected components of $G\setminus X$ evaporate at time exactly $t$ and no component sees all of $X$, we proceed as we did for $\tilde g$, except we require $x'<x$ rather than $x'\le x$.

\begin{restatable}{lemma}{gTildeP}
For $\tilde g_p$, we have $$\tilde g_p(t,x,k,z) = \sum_{k'=1}^k\sum_{x'=1}^{x-1}\left(\binom{x}{x'}-\binom{z}{x'}\right)\binom{k-1}{k'-1}\tilde g_1(t,x',k')\tilde g_p(t,x,k-k',z).$$
\end{restatable}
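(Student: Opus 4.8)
The plan is to prove this by exactly the argument used for \cref{lemma:g_tilde} (the recurrence for $\tilde g$): the definition of $\tilde g_p$ differs from that of $\tilde g$ only in additionally forbidding any component of $G\setminus X$ from seeing all of $X$, so the whole proof transfers, and the single change is that the neighborhood size $x'$ of the distinguished component is now constrained to $1\le x'\le x-1$ rather than $1\le x'\le x$. Concretely, I would exhibit a bijection between the graphs counted by $\tilde g_p(t,x,k,z)$ and tuples $(k',C_0,x',X_0,H_1,H_2)$ where $1\le k'\le k$, $C_0$ is one of the $\binom{k-1}{k'-1}$ subsets of $[x+1,x+k]$ of size $k'$ containing $x+1$ (the label set of the distinguished component), $1\le x'\le x-1$, $X_0$ is one of the $\binom{x}{x'}-\binom{z}{x'}$ size-$x'$ subsets of $[x]$ not contained in $[z]$, $H_1$ is counted by $\tilde g_1(t,x',k')$, and $H_2$ is counted by $\tilde g_p(t,x,k-k',z)$. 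Summing $\binom{k-1}{k'-1}\left(\binom{x}{x'}-\binom{z}{x'}\right)\tilde g_1(t,x',k')\tilde g_p(t,x,k-k',z)$ over $k'$ and $x'$ then yields the stated formula; the base case $k=0$ (where $\tilde g_p(t,x,0,z)$ just counts the complete graph on $[x]$ when $x\le\omega$) is handled separately, as for the other counters.

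For the forward map, given $G$ counted by $\tilde g_p(t,x,k,z)$ with $X=[x]$, let $C$ be the connected component of $G\setminus X$ containing $x+1$, put $X_0:=N_G(C)\subseteq X$, $k':=|C|$, $x':=|X_0|$. Since every component of $G\setminus X$ has a neighbor in $X\setminus[z]$ and none sees all of $X$, we get $X_0\not\subseteq[z]$ and $1\le x'\le x-1$. I split $G$ into $G_1:=G[X_0\cup C]$ and $G_2:=G[V(G)\setminus C]$; these are chordal and $\omega$-colorable, share exactly the clique $X_0$, and satisfy $G=G_1\cup G_2$ glued at $X_0$. After relabeling via the canonical bijections $\phi$ ($G_1$: $X_0\to[x']$, $C\to[x'+1,x'+k']$; $G_2$: identity on $[x]$, shifting $[x+1,x+k]\setminus C$ down to $[x+1,x+k-k']$), one checks that $G_1$ is counted by $\tilde g_1(t,x',k')$ and $G_2$ by $\tilde g_p(t,x,k-k',z)$. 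The routine points: $G_1$ is connected with $G_1\setminus X_0=C$ connected and seeing all of $X_0$; $G_2$ is connected because each remaining component of $G_2\setminus X$ still attaches to the clique $X$; and the $\tilde g_p$-conditions for $G_2$ are inherited, since its components form a subset of the components of $G\setminus X$. The one substantive point is evaporation-preservation: for $v\in C$ we have $N_G(v)\subseteq C\cup X_0$ with $X_0\subseteq X$ never evaporating, so running the evaporation process on $G$ (exception set $X$) induces on $C$ exactly the evaporation process on $G_1$ (exception set $X_0$); hence $C$ evaporates at time exactly $t$ in $G_1$, and the same observation applied to each component of $G_2\setminus X$ shows each evaporates at time exactly $t$ in $G_2$, so $G_2$ evaporates at time exactly $t$.

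For the inverse map, given $(k',C_0,x',X_0,H_1,H_2)$, relabel $H_1$ so its exception set becomes $X_0$ and the rest of $H_1$ becomes $C_0$, relabel $H_2$ so its exception set becomes $[x]$ and the rest occupies $[x+1,x+k]\setminus C_0$, and glue the two at the clique $X_0$ to obtain $G$. One verifies $G$ is counted by $\tilde g_p(t,x,k,z)$: chordality is the standard fact that a clique-sum of chordal graphs is chordal; $\omega$-colorability holds because no vertex of $C_0$ is adjacent to anything outside $C_0\cup X_0$, so every clique of $G$ lies in a single part and $\omega(G)=\max(\omega(G_1),\omega(G_2))\le\omega$; connectivity is immediate; $C_0$ is the component of $G\setminus X$ containing $x+1$ with $N_G(C_0)=X_0$, hence it sees $X\setminus[z]$ but not all of $X$, and the remaining components of $G\setminus X$ are those of $H_2$ minus its exception set, which carry the required properties; and the evaporation-preservation observation again gives that every component of $G\setminus X$ evaporates at time exactly $t$. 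Since the two maps are mutually inverse, the recurrence follows. I do not expect a serious obstacle: the only delicate ingredient is the evaporation-preservation claim used in both directions, which is precisely the argument already carried out for \cref{lemma:g_tilde}, so the present lemma essentially reduces to confirming that restricting the inner sum to $x'\le x-1$ faithfully encodes the extra ``no component sees all of $X$'' requirement.
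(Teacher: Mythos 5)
Your proposal is correct and follows essentially the same route as the paper: the paper proves this recurrence by repeating the argument for $\tilde g$ (split off the component $C$ of $G\setminus X$ containing $x+1$ together with its neighborhood $X'=N(C)$, glue back via \cref{lemma:gluing}, and preserve evaporation times via the argument of \cref{lemma:commute_g}), with the sole change that $x'\le x-1$ encodes the ``no component sees all of $X$'' condition. Your bijection, relabeling maps, and evaporation-preservation argument match the paper's treatment, so no gap.
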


For $\tilde f_p$, we need to count chordal graphs where $L$ is fixed and evaporates at time $t$, all connected components of $G\setminus(X\cup L)$ evaporate at time exactly $t-1$, and no component sees all of $X\cup L$. We first observe that when $z = x$, requiring $G\setminus[z]$ to be connected is the same as requiring $G\setminus X$ to be connected.

\begin{restatable}{lemma}{fTildeP}
We have $\tilde f_p(t,x,l,k) = \tilde f_p(t,x,l,k,x)$.
\end{restatable}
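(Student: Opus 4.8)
The plan is to prove the identity purely by unwinding the definitions and observing that the two counter functions enumerate exactly the same family of graphs. Recall that $\tilde f_p(t,x,l,k,z)$ is, by definition, the same as $\tilde f_p(t,x,l,k)$ except that the requirement ``$G\setminus X$ is connected'' --- which $\tilde f_p(t,x,l,k)$ inherits via $\tilde f$ and then $f$ --- is replaced by the requirement ``$G\setminus[z]$ is connected.'' Every other condition appearing in the two definitions is verbatim identical.

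First I would set $z = x$ and note that this choice lies in the domain of $\tilde f_p(t,x,l,k,z)$, since that domain only asks $z\le x$ (together with $t\ge 2$, $x\ge 0$, $l\ge 1$, which are shared with the domain of $\tilde f_p(t,x,l,k)$). With $z = x$ we have $[z] = [x] = X$, so the condition ``$G\setminus[z]$ is connected'' is literally the condition ``$G\setminus X$ is connected.'' Hence, on the value $z = x$, the sole point of difference between the two definitions disappears.

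Next I would list the remaining defining conditions and check that they are common to both functions: $G$ is an $\omega$-colorable connected chordal graph with vertex set $[x+l+k]$; $G$ evaporates at time exactly $t$ with exception set $X\coloneqq[x]$; $L_G(X) = [x+1,x+l]$; $X\cup L_G(X)$ is a clique; every connected component of $G\setminus(X\cup L_G(X))$ evaporates at time exactly $t-1$; $X\cup L_G(X)\subsetneq V(G)$; and no connected component of $G\setminus(X\cup L_G(X))$ sees all of $X\cup L_G(X)$. Since each of these is shared, and the connectivity requirements now coincide, the set of graphs counted by $\tilde f_p(t,x,l,k)$ equals the set counted by $\tilde f_p(t,x,l,k,x)$, and therefore the counts are equal.

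There is no genuine obstacle here; the only care needed is in correctly tracing the chain of ``same as $\dots$ except $\dots$'' definitions --- from $\tilde f_p(t,x,l,k,z)$ back through $\tilde f_p(t,x,l,k)$, $\tilde f$, and $f$ --- to confirm that the connectivity clause is the unique place where the two definitions differ, and hence that setting $z = x$ makes them identical. This lemma functions as a small bridge: it lets the subsequent recurrences developed for the five-argument function $\tilde f_p(t,x,l,k,z)$ also serve to compute $\tilde f_p(t,x,l,k)$.
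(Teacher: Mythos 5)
Your proposal is correct and matches the paper's own proof: both arguments simply observe that when $z=x$ the clause ``$G\setminus[z]$ is connected'' coincides with ``$G\setminus X$ is connected,'' and all other defining conditions of the two counter functions are identical, so they count the same set of graphs. Your version just spells out the chain of definitions in more detail than the paper does.
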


The next recurrence for $\tilde f_p$ counts the number of such graphs in which $G\setminus[z]$ is connected. On the first reading, one can skip the two ``otherwise'' cases in \cref{lemma:f_tilde_p}. In this lemma, we consider all possibilities for the label set of the component $C$ of $G\setminus(X\cup L)$ that contains the lowest label not in $X\cup L$. In the outer sum, $k'$ stands for $|C|$. Additionally, we consider all possibilities for the size $x'$ of $N(C)\cap X$ and the size $l'$ of $N(C)\cap L$, and we consider all possibilities for their respective label sets. If $l'>0$, then $N(C)$ is automatically not contained in $[z]$ since $z\le x$, so there are $\binom{x}{x'}$ possible label sets for $N(C)\cap X$.

The intuition behind the two ``otherwise'' cases is as follows. If $l' = 0$, then we must subtract $\binom{z}{x'}$ from the number of possible label sets for $N(C)\cap X$ to ensure that $N(C)\not\subseteq[z]$. If $l' = l$, then all of the vertices of $L$ have now been pushed into the exception set, so the evaporation time of the subgraph formed from the remaining components is $t-1$. In this case, we call $\tilde g_p$ since we no longer know the size of the last set of simplicial vertices.

The dot in front of each of the curly braces denotes multiplication. For example, if $l'>0$, then we multiply by $\binom{x}{x'}$.



\begin{restatable}{lemma}{fTildePWithZ}
\label{lemma:f_tilde_p}
For $\tilde f_p(t,x,l,k,z)$, we have
\begin{alignat*}{3}
& && \tilde f_p(t,x,l,k,z) = && \\
& && \sum_{k'=1}^k\sum_{\substack{0\le x'\le x \\ 0\le l'\le l \\ 0<x'+l'<x+l}}\binom{k-1}{k'-1}\binom{l}{l'}\tilde g_1(t-1,x'+l',k') && \cdot
\begin{cases}
\binom{x}{x'} & \text{ if $l'>0$}  \\
\binom{x}{x'}-\binom{z}{x'} & \text{ otherwise}
\end{cases} \\
& && && \cdot
\begin{cases}
\tilde f_p(t,x+l',l-l',k-k',z) & \text{if $l'<l$} \\
\tilde g_p(t-1,x+l,k-k',z) & \text{otherwise}.
\end{cases}
\end{alignat*}
\end{restatable}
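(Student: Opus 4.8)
The plan is to prove \cref{lemma:f_tilde_p} by a bijection realizing the claimed sum decomposition. Fix $t\ge 2$, $x\ge 0$, $l\ge 1$, $k\ge 1$ (when $k=0$ both sides vanish: the $\tilde f_p$-class requires $X\cup L\subsetneq V(G)$, and the sum is empty), and $z\le x$; write $X=[x]$, $L=[x+1,x+l]$, and let $\mathcal{G}$ denote the set of graphs counted by $\tilde f_p(t,x,l,k,z)$. Every $G\in\mathcal{G}$ has $X\cup L\subsetneq V(G)$, so $G\setminus(X\cup L)$ is nonempty; let $C=C(G)$ be the component of $G\setminus(X\cup L)$ containing $x+l+1$, the smallest vertex outside $X\cup L$, and set $k'=|C|$, $x'=|N(C)\cap X|$, $l'=|N(C)\cap L|$. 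I will show that $G$ is reconstructible from the data: the label set of $C$ (namely $x+l+1$ together with a $(k'-1)$-subset of $[x+l+2,x+l+k]$, hence $\binom{k-1}{k'-1}$ choices); the label set of $N(C)\cap L$ ($\binom{l}{l'}$ choices); the label set of $N(C)\cap X$ ($\binom{x}{x'}$ choices if $l'>0$, but only $\binom{x}{x'}-\binom{z}{x'}$ if $l'=0$, so that $N(C)\not\subseteq[z]$); the graph $G_1\coloneqq G[N(C)\cup C]$; and the graph $G_2\coloneqq G-C$, each of the latter two relabeled so that its exception set, its last evaporation set, and the prefix $[z]$ occupy the standard positions. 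The sum's condition $0<x'+l'<x+l$ records that $C$ has a neighbor in $X\cup L$ ($G$ being connected) and does not see all of $X\cup L$ ($G\in\mathcal{G}$). I claim $G_1$ ranges exactly over the graphs counted by $\tilde g_1(t-1,x'+l',k')$ (with exception set $N(C)$), and $G_2$ ranges exactly over the graphs counted by $\tilde f_p(t,x+l',l-l',k-k',z)$ if $l'<l$ and by $\tilde g_p(t-1,x+l,k-k',z)$ if $l'=l$; multiplying and summing yields the identity.

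The technical core is a locality property of evaporation, which I would isolate as a separate lemma (presumably already used for the earlier recurrences): since $X\cup L$ is a clique and no vertex of $X\cup L$ evaporates before time $t$ in $G$, the evaporation of any component $D$ of $G\setminus(X\cup L)$ in $G$ (with exception set $X$) -- the step at which each vertex of $D$ becomes simplicial -- coincides with its evaporation in $G[N(D)\cup D]$ with exception set $N(D)$, and does not depend on the other components. I also need a companion fact: if $H$ is a connected chordal graph, $X_0\cup L_0$ is a clique with $X_0\cap L_0=\emptyset$ and $L_0\neq\emptyset$, every component of $H\setminus(X_0\cup L_0)$ evaporates at time exactly $t-1$ in $H$ with exception set $X_0$, and no such component sees all of $X_0\cup L_0$, then $L_0=L_H(X_0)$, and $H$ evaporates at time exactly $t$ if and only if every vertex of $L_0$ has a neighbor outside $X_0\cup L_0$ -- for once the components are gone the clique on $X_0\cup L_0$ makes all of $L_0$ simplicial simultaneously, while before that a vertex $u\in L_0$ with a neighbor $v$ in a still-present component $D$ also has a neighbor $w\in(X_0\cup L_0)\setminus N(D)$, and $v\not\sim w$. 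Together these give: (i) a component $D$ of $G\setminus(X\cup L)$ evaporates at exactly time $t-1$ in $G$ iff $G[N(D)\cup D]$ (with exception set $N(D)$) does, and this -- with the automatic facts that $D$ sees all of $N(D)$ and $G[N(D)\cup D]$ minus $N(D)$ is the single connected graph $D$ -- is exactly the condition behind $\tilde g_1(t-1,|N(D)|,|D|)$; and (ii) deleting $C$ alters neither the evaporation time of any other component nor the fact that no component sees all of $X\cup L$.

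Granting these, the forward map is well-defined. Relabeled $G_1$ is counted by $\tilde g_1(t-1,x'+l',k')$ by (i). The graph $G_2=G-C$ is connected (deleting a component of $G\setminus(X\cup L)$ leaves the clique $X\cup L$ with every other component still attached), inherits chordality and $\omega$-colorability, and $G_2\setminus[z]$ is connected because $C\subseteq[x+l+1,x+l+k]$ misses $[z]$ and $C$ has no neighbors outside $X\cup L$. If $l'<l$, take $X\cup(N(C)\cap L)$ as the exception set of $G_2$; its remaining components still evaporate at exactly $t-1$ and none sees all of $X\cup L$ by (ii), and every vertex of $L\setminus N(C)$ has a neighbor outside $X\cup L$ in $G_2$ (such a neighbor exists in $G$ by the companion fact, is not in $C$ since the vertex lies outside $N(C)$, and so survives in $G_2$), whence the companion fact yields $L_{G_2}(X\cup(N(C)\cap L))=L\setminus N(C)$ and evaporation time exactly $t$; thus relabeled $G_2$ is counted by $\tilde f_p(t,x+l',l-l',k-k',z)$. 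If $l'=l$, then $N(C)\supseteq L$, so moving all of $X\cup L$ into the exception set of $G_2$ gives a graph counted by $\tilde g_p(t-1,x+l,k-k',z)$, again by (ii). The inverse map glues a $\tilde g_1$-graph to a $\tilde f_p$- or $\tilde g_p$-graph along the clique $X\cup L$ at the prescribed labels; gluing chordal graphs along a clique is chordal, the maximum clique size of the result is the larger of the two (any clique of a union glued at a clique lies entirely on one side), so $\omega$-colorability is preserved; the locality lemma and companion fact then reproduce exactly the evaporation structure defining $\mathcal{G}$ -- in particular $N(C)\not\subseteq[z]$ is precisely what keeps the complement of $[z]$ connected -- and $C$ is the component of the result containing $x+l+1$. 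The two maps are plainly mutual inverses.

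I expect two main obstacles. First, proving the locality lemma rigorously needs a simultaneous induction (on $t$) establishing at once that $L$ survives through step $t$ and that each component evaporates on schedule, since these facts depend on one another; this is the heart of the matter, and it is where I would reuse or adapt the machinery from the proofs of the preceding recurrences. Second, one must verify that the binomial coefficients are the correct multiplicities -- equivalently, that the class counted by $\tilde f_p$ (and likewise $\tilde g_1$, $\tilde g_p$) depends only on the sizes of the exception set, of the last evaporation set, and of the prefix $[z]$, and not on which particular label sets serve as $N(C)\cap X$ and $N(C)\cap L$. This holds because any size-preserving bijection of the vertex set that fixes $[z]$ pointwise and respects the partition into exception set / last evaporation set / remainder induces a bijection between the corresponding graph classes, all of whose defining properties are invariant under such relabelings; this is what justifies relabeling $G_1$ and $G_2$ to standard position and multiplying by $\binom{k-1}{k'-1}\binom{l}{l'}$ times the appropriate $\binom{x}{x'}$ or $\binom{x}{x'}-\binom{z}{x'}$ instead of summing over all label sets.
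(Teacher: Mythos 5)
Your proposal follows essentially the same route as the paper's own proof: the same decomposition via the component $C$ containing the smallest label outside $X\cup L$, the same binomial bookkeeping ($\binom{k-1}{k'-1}$, $\binom{l}{l'}$, and $\binom{x}{x'}-\binom{z}{x'}$ when $l'=0$), the same case split $l'<l$ versus $l'=l$, and the same glue-back inverse, with your ``locality'' and ``companion'' facts corresponding precisely to \cref{lemma:commute_f_tilde_1,lemma:commute_f_p_tilde_1,lemma:commute_f_p_tilde_2} and the set-cover argument of \cref{lemma:set_cover}. The one point where your sketch is thinner than the paper is the companion fact: to guarantee that a vertex of $L$ retains a neighbor in a component for as long as that component is present (rather than only at time $1$), one needs that a component's neighborhood inside $X\cup L$ equals the neighborhood of its last-evaporating part (\cref{lemma:neighborhood_2}), which is exactly what the simultaneous induction you anticipate must supply.
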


The base cases are as follows. We reach the base case for $g$ when $t = 0$:
\[ g(0,x,k,z) = \begin{cases}
      1 & \text{ if $k = 0$}  \\
      0 & \text{ if $k>0$}.
   \end{cases}
\]
For $\tilde g$ and $\tilde g_p$, we have $\tilde g(t,x,0,z) = 1$ and $\tilde g_p(t,x,0,z) = 1$ when $k = 0$. For, $\tilde g_1$ we observe that $\tilde g_1(t,x,k) = 0$ if $t = 0$ or $k = 0$. Similarly, for $\tilde g_{\ge 2}$ we have $\tilde g_{\ge 2}(t,x,k) = 0$ if $t = 0$ or $k = 0$. We reach the base case for $f$ when $x+l>\omega$, $t = 1$, or $k = 0$. If $x+l>\omega$, then $f(t,x,l,k) = 0$. Remarkably, this is the only place where $\omega$ appears in the algorithm. If $x+l\le\omega$, then we have
\[ f(1,x,l,k) = \begin{cases}
      1 & \text{ if $k = 0$}  \\
      0 & \text{ otherwise}.
   \end{cases}
\]
If $x+l\le\omega$ and $t\ge 2$, then $f(t,x,l,0) = 0$. For $\tilde f$, we have $\tilde f(t,x,l,k) = 0$ if $t = 1$ or $k = 0$. Similarly, for $\tilde f_p$ we have $\tilde f_p(t,x,l,k,z) = 0$ if $t = 1$ or $k = 0$. For the version of $\tilde f_p$ without the fifth argument $z$, we do not need a base case since we always immediately call $\tilde f_p$ with $z$.

The control flow formed by these recurrences is shown in \cref{fig:control_flow}. The algorithm terminates because either the value of $t$ or the number of vertices in the graph (i.e., $x+k$ or $x+l+k$) decreases each time we return to the same function. For the running time, note that we first compute all of the necessary binomial coefficients. In particular, we compute all values of $\binom{a}{b}$ such that $0\le a\le n$ and $0\le b\le a$, which can be done using $O(n^2)$ arithmetic operations. Next, when computing the counter functions, the running time is dominated by the arithmetic operations needed to compute $\tilde f_p$. The recurrence for $\tilde f_p$ involves a triple summation, and there are five arguments, so a naive implementation uses $O(n^8)$ arithmetic operations. However, in \cref{sec:wrap_up_proof}, we show that the running time can in fact be improved to $O(n^7)$ arithmetic operations.

\begin{figure}
    \centering
    \includegraphics[width=0.6\textwidth]{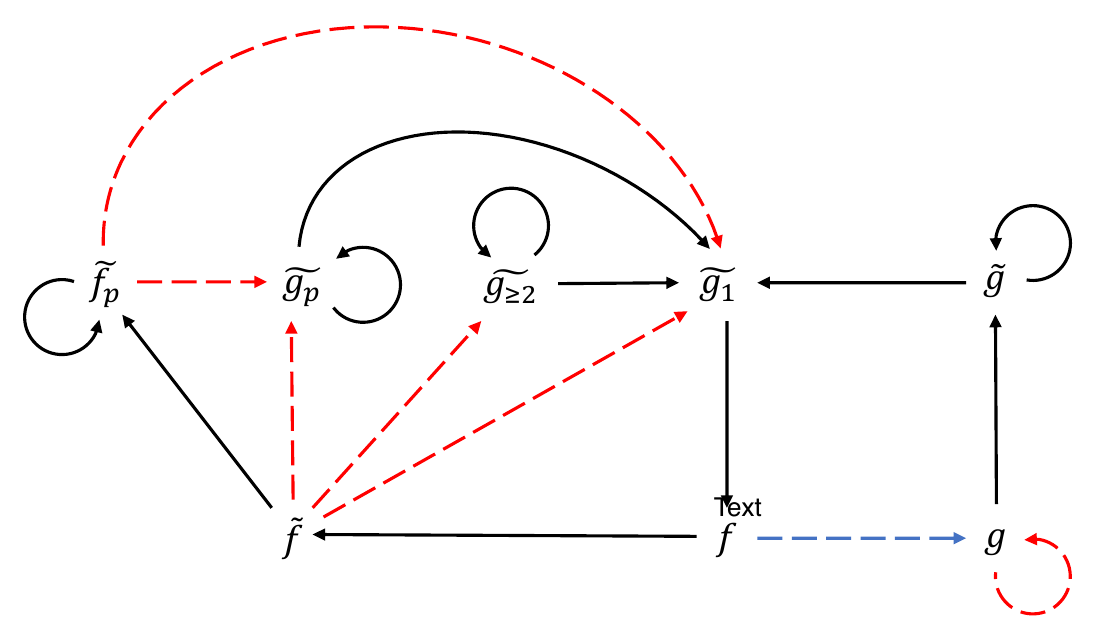}
    \caption{The control flow of the counting algorithm. An arrow from $f$ to $g$ indicates that the recursive formula for $f$ depends on $g$. The arrow is red (and dashed) if $t$ decreases by 1, blue (and dashed) if $t$ decreases by 2, and black if $t$ does not change. The black self-loops do not cause an infinite loop because in those recursive calls, the number of vertices decreases.}
    \label{fig:control_flow}
\end{figure}

\subsection{Proof of \cref{thm:main} (counting)}
\label{sec:main_proof}

In this section, we prove the counting portion of \cref{thm:main} using \cref{thm:conn_counting}. (See \cref{sec:sampling} for the proof of the sampling portion of \cref{thm:main}.) In other words, we describe an algorithm to count chordal graphs, assuming we have an algorithm to count connected chordal graphs. \cref{thm:conn_counting} --- counting connected chordal graphs --- is proved in \cref{sec:conn_counting_proof}.

For $k\in\N$, let $a(k)$ denote the number of $\omega$-colorable chordal graphs with vertex set $[k]$. Recall that $c(k)$ is the number of $\omega$-colorable connected chordal graphs with vertex set $[k]$.

\begin{lemma}
\label{lemma:disconn}
The number of $\omega$-colorable chordal graphs with vertex set $[n]$ is given by $$a(n) = \sum_{k=1}^n\binom{n-1}{k-1}c(k)a(n-k)$$ for all $n\in\N$.
\end{lemma}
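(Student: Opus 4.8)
The plan is to prove the recurrence $a(n) = \sum_{k=1}^n \binom{n-1}{k-1} c(k) a(n-k)$ by the standard bijective decomposition that counts a graph according to the connected component containing a distinguished vertex. Concretely, I would fix the vertex $1 \in [n]$ and, for each labeled $\omega$-colorable chordal graph $G$ on $[n]$, let $C$ be the connected component of $G$ that contains the vertex $1$. I would classify all such graphs $G$ by the vertex set $S \coloneqq V(C)$, noting $1 \in S \subseteq [n]$ and $1 \le |S| \le n$.

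The key steps, in order, are as follows. First I would observe that specifying $G$ is equivalent to specifying (i) the set $S$ containing $1$, (ii) the graph $G[S]$, which must be a connected $\omega$-colorable chordal graph on $S$ with the extra property that no vertex of $[n] \setminus S$ is adjacent to $S$ — but since $S$ is exactly the vertex set of a connected component, that adjacency condition is automatic, so $G[S]$ ranges over all connected $\omega$-colorable chordal graphs on $S$ — and (iii) the graph $G[[n]\setminus S]$, which must be an $\omega$-colorable chordal graph on $[n]\setminus S$ with no further constraint (disjoint unions of chordal graphs are chordal, and $\omega$-colorability is preserved under disjoint union). Conversely, any such triple $(S, H_1, H_2)$ with $H_1$ connected on $S$ and $H_2$ arbitrary on $[n]\setminus S$ yields a valid $G = H_1 \cup H_2$ whose component containing $1$ is exactly $H_1$ (here I use that $H_1$ is connected and that there are no edges between $S$ and its complement). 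This establishes a bijection.

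Next I would handle the counting: for a fixed size $k = |S|$, the number of choices of $S$ is $\binom{n-1}{k-1}$, since $1 \in S$ is forced and the remaining $k-1$ elements of $S$ are chosen from $[n]\setminus\{1\}$. Then I would invoke relabeling invariance: the number of connected $\omega$-colorable chordal graphs on any fixed $k$-element vertex set equals the number on $[k]$, namely $c(k)$ (the properties "connected", "chordal", and "$\omega$-colorable" depend only on the isomorphism type, and relabeling via the unique order-preserving bijection $\phi(S,[k])$ is a bijection between the two families). Likewise the number of $\omega$-colorable chordal graphs on the $(n-k)$-element set $[n]\setminus S$ equals $a(n-k)$, with the convention $a(0) = 1$ (the empty graph). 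Summing over $k$ from $1$ to $n$ gives the claimed identity.

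I do not expect a serious obstacle here; the only thing to be careful about is the base/edge case bookkeeping — in particular that when $k = n$ the term is $c(n) \cdot a(0) = c(n)$, which requires the convention $a(0) = 1$, and that the vertex sets $[n]\setminus S$ appearing are not literally intervals $[m]$, so the relabeling-invariance remark must be stated explicitly rather than taken for granted. A secondary point worth a sentence is the verification that $\omega$-colorability of $G$ is equivalent to $\omega$-colorability of each component separately, which is immediate since a proper $\omega$-coloring of a disjoint union restricts to one on each component and conversely colorings of the components can be combined. With these conventions in place the proof is a routine application of the component-decomposition principle.
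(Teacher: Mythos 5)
Your proposal is correct and follows essentially the same route as the paper: decompose $G$ by the connected component containing the vertex $1$, count the $\binom{n-1}{k-1}$ choices of its label set, and use order-preserving relabeling ($\phi$) to reduce to $c(k)$ and $a(n-k)$, with $a(0)=1$. The paper merely phrases the bijection as a pair of injective maps (one per inequality direction), but the underlying decomposition and counting are identical to yours.
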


\begin{proof}
Suppose $G$ is an $\omega$-colorable chordal graph with vertex set $[n]$. Let $G_1$ be the graph formed by the connected component of $G$ that contains the label 1, and let $G_2$ be the graph formed by all other connected components of $G$ (which can potentially be empty). Let $C$ be the set of labels that appear in $G_1$, let $k = |C|$, and let $D = [n]\setminus C$, i.e., $D$ is the set of labels that appear in $G_2$. Now relabel $G_1$ by applying $\phi(C,[k])$ to the labels in $C$, and relabel $G_2$ by applying $\phi(D,[n-k])$ to the labels in $D$. (Recall that $\phi(A,B)$ is defined in \cref{sec:preliminaries}.) We can see that $G_1$ is now a connected $\omega$-colorable chordal graph with vertex set $[k]$, and $G_2$ is an $\omega$-colorable chordal graph with vertex set $[n-k]$. The map that takes any chordal graph $G$ to the resulting pair $(G_1,G_2)$ is injective since $\phi(C,[k])$ and $\phi(D,[n-k])$ are both bijections. Therefore, $a(n)$ is at most the number of possible triples $(G_1,G_2,C)$, which is given by the summation above.

To see that $a(n)$ is bounded below by the same summation, suppose we are given $1\le k\le n$, a connected $\omega$-colorable chordal graph $G_1$ with vertex set $[k]$, an $\omega$-colorable chordal graph $G_2$ with vertex set $[n-k]$, and a subset $C\subseteq[n]$ of size $k$ that contains 1. Let $D = [n]\setminus C$. We construct an $\omega$-colorable chordal graph $G$ with vertex set $[n]$ as follows: Relabel $G_1$ by applying $\phi([k],C)$ to its label set, and relabel $G_2$ by applying $\phi([n-k],D)$ to its label set. Now let $G$ be the union of $G_1$ and $G_2$ (by taking the union of the vertex sets and the edge sets). Clearly $G$ is an $\omega$-colorable chordal graph with vertex set $[n]$. The map that takes the triple $(G_1,G_2,C)$ to the resulting graph $G$ is injective, so $a(n)$ is at least the summation above, as desired.
\end{proof}

\cref{lemma:disconn} directly gives a dynamic-programming algorithm to compute the number of $\omega$-colorable chordal graphs with vertex set $[n]$, given $n$ as input. First, we compute all of the necessary binomial coefficients, i.e., all values of $\binom{a}{b}$ such that $0\le a\le n$ and $0\le b\le a$, which can be done using $O(n^2)$ arithmetic operations. Next, by \cref{thm:conn_counting}, we can compute $c(k)$ for all $k\in[n]$ at a cost of $O(n^7)$ arithmetic operations. (Technically, the algorithm of \cref{thm:conn_counting} just computes $c(n)$, but by reusing the same dynamic-programming tables from that algorithm, we can also compute $c(k)$ for all $k\in[n]$.) To compute $a(n)$, we use the recurrence in \cref{lemma:disconn} at a cost of $O(n^2)$ arithmetic operations. For the base case, we observe that $a(0) = 1$. Therefore, we have an algorithm to count $\omega$-colorable chordal graphs on $n$ vertices using $O(n^7)$ arithmetic operations.

\section{Implementation of the counting algorithm}
\label{sec:implementation}

An implementation of the counting algorithm in C{}\verb!++! can be successfully run for inputs as large as $n = 30$ in about 2.5 minutes on a standard desktop computer.\footnote{Our implementation is available on GitHub at https://github.com/uhebertj/chordal.} Previously, the number of labeled chordal graphs was only known up to $n = 15$. \cref{table:conn_chordal} shows the number of connected chordal graphs on $n$ vertices for $n\le 30$, with the chromatic number unrestricted. \cref{table:n_and_omega} shows the number of $\omega$-colorable connected chordal graphs on $n$ vertices for various values of $n$ and $\omega$.\footnote{These tables were made using the version of the algorithm that uses $O(n^8)$ arithmetic operations, without the optimization using the function $h$, since this version ran fastest in practice. This is why there is no mention of $h$ in the implementation on GitHub.}

\begin{table}[h!]
\caption{Numbers of labeled connected chordal graphs on $n$ vertices}
\label{table:conn_chordal}
\centering
\medskip
\begin{tabular}{r r}
$c(n)$ & $n$ \\ [0.5ex]
1 & 1 \\
1 & 2 \\
4 & 3 \\
35 & 4 \\
541 & 5 \\
13302 & 6 \\
489287 & 7 \\
25864897 & 8 \\
1910753782 & 9 \\
193328835393 & 10 \\
26404671468121 & 11 \\
4818917841228328 & 12 \\
1167442027829857677 & 13 \\
374059462390709800421 & 14 \\
158311620026439080777076 & 15 \\
88561607724193506845709239 & 16 \\
65629642803250494352023169033 & 17 \\
64646285130595946195244365518454 & 18 \\
84997214469704246545711429635276299 & 19 \\
149881423568752945444616261913109046421 & 20 \\
356260551239284266908724943672911100488558 & 21 \\
1147374494946449194450825817605340123679150461 & 22 \\
5032486852040265322461550844695939678052967384053 & 23 \\
30210545039307528599583618386687349227933725131035504 & 24 \\
249400383130659050580193267861459579254489822650065685961 & 25 \\
2844134548699568981561554629043146070324332400944867482340313 & 26 \\
44993294034522185332489548856700572371349354518671249097245374660 & 27 \\
991277251392360301443460288397009109066708275778086061470009877027739 & 28 \\
30526157144572224953157514915475479605501638476250575941226904780179348933 & 29 \\
1318363800739595427128835554231270770209426196402736248743162258824492158995254 & 30 \\ [1ex]
\end{tabular}
\end{table}

\begin{table}[h!]
\caption{Numbers of $\omega$-colorable labeled connected chordal graphs on $n$ vertices. \\
When $\omega = 2$, the algorithm counts labeled trees.}
\label{table:n_and_omega}
\centering
\medskip
\begin{tabular}{r r r r r r r r r r r}
\multicolumn{10}{c}{$n$} \\ [.2cm]
2 & 3 & 4 & 5 & 6 & 7 & 8 & 9 & & & \\ [.2cm]
1 & 3 & 16 & 125 & 1296 & 16807 & 262144 & 4782969 & & 2 & {\multirow{8}{*}{$\omega$}} \\
& 4 & 34 & 480 & 9831 & 268093 & 9185436 & 379623492 & & 3 \\
& & 35 & 540 & 13136 & 466683 & 22732032 & 1437072780 & & 4 \\
& & & 541 & 13301 & 488873 & 25736782 & 1873146621 & & 5 \\
& & & & 13302 & 489286 & 25863916 & 1910084529 & & 6 \\
& & & & & 489287 & 25864896 & 1910751531 & & 7 \\
& & & & & & 25864897 & 1910753781 & & 8 \\
& & & & & & & 1910753782 & & 9 \\
\end{tabular}
\begin{tabular}{r r r r r r}
\multicolumn{5}{c}{$n$} \\ [.2cm]
10 & 11 & 12 & & & \\ [.2cm]
100000000 & 2357947691 & 61917364224 & & 2 & {\multirow{11}{*}{$\omega$}} \\
18376225525 & 1019282908941 & 63707908718994 & & 3 \\
112588153700 & 10535042533301 & 1144261607209084 & & 4 \\
181962472490 & 22726623077466 & 3513611793935959 & & 5 \\
192919501307 & 26158547399061 & 4666697716137194 & & 6 \\
193325509217 & 26400465973728 & 4813890013657154 & & 7 \\
193328830337 & 26404655450778 & 4818876084111431 & & 8 \\
193328835392 & 26404671456933 & 4818917765689886 & & 9 \\
193328835393 & 26404671468120 & 4818917841203841 & & 10 \\
& 26404671468121 & 4818917841228327 & & 11 \\
& & 4818917841228328 & & 12 \\
\end{tabular}
\end{table}

\section{Proof of \cref{thm:conn_counting} (counting connected chordal graphs)}
\label{sec:conn_counting_proof}

In this section, we prove correctness of the recurrences presented in \cref{sec:recurrences}, in order to prove \cref{thm:conn_counting}. \cref{sec:chordal_properties,sec:evap_properties} contain a number of lemmas on chordal graphs and their evaporation sequences. In \cref{sec:recurrence_proofs}, we use these lemmas to prove the recurrences. We then discuss the running time in \cref{sec:wrap_up_proof}.

\subsection{Chordal graph properties}
\label{sec:chordal_properties}

We make use of the following standard facts about chordal graphs. The proof of \cref{lemma:simplicial} can be found in \cite{blair1993introduction}.

\begin{lemma}
\label{lemma:simplicial}
Every chordal graph contains a simplicial vertex.
\end{lemma}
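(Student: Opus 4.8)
The statement to prove is Lemma~\ref{lemma:simplicial}: every chordal graph contains a simplicial vertex. The paper itself says the proof can be found in~\cite{blair1993introduction}, so this is a classical fact, but a self-contained proof proposal is still expected.

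My plan:

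\textbf{Plan.} The plan is to prove the statement by induction on the number of vertices $n$, using the following stronger claim, which is the standard route: \emph{every chordal graph $G$ on at least two vertices has two non-adjacent simplicial vertices, and every chordal graph on one vertex has a simplicial vertex.} The single-vertex (and empty) case is trivial. For the inductive step, pick any vertex $v$; if $N[v] = V(G)$ (that is, $v$ is universal), then by induction $G - v$ has a simplicial vertex $u$ (or two non-adjacent ones), and since $v$ was universal, adding $v$ back does not change the neighborhood structure among the remaining vertices except by adding $v$ itself as a neighbor, so a vertex that was simplicial in $G - v$ need not stay simplicial in $G$ — this naive approach fails, so instead I would separate $V(G)$ using a minimal separator.

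\textbf{Main argument.} First I would handle the case where $G$ is a complete graph: then every vertex is simplicial, done. Otherwise there exist two non-adjacent vertices $a, b$; let $S$ be a minimal $a$--$b$ separator, and let $A$ and $B$ be the connected components of $G - S$ containing $a$ and $b$ respectively. The key structural fact I would establish (using chordality) is that $S$ is a clique: for any two vertices $s, s' \in S$, by minimality of the separator each of $s, s'$ has a neighbor in $A$ and a neighbor in $B$, so one can find an induced path from $s$ to $s'$ through $A$ and another through $B$, and if $s s'$ were a non-edge these two paths would form an induced cycle of length $\geq 4$, contradicting chordality. Next, I would apply the induction hypothesis to $G[A \cup S]$, which is chordal and smaller than $G$ (since $B \neq \emptyset$). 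It has a simplicial vertex, and in fact two non-adjacent ones if $|A \cup S| \geq 2$; since $S$ is a clique, not both of these simplicial vertices can lie in $S$, so at least one simplicial vertex $u$ of $G[A \cup S]$ lies in $A$. Crucially, every neighbor of $u$ in $G$ lies in $A \cup S$ (because $A$ is a component of $G - S$, so vertices of $A$ have no neighbors in $B$ or in any other component), hence $N_G(u) = N_{G[A \cup S]}(u)$, which is a clique. So $u$ is simplicial in $G$. Symmetrically, $G[B \cup S]$ yields a simplicial vertex $u' \in B$ of $G$, and $u, u'$ are non-adjacent since they lie in different components of $G - S$. This proves the strengthened statement and hence the lemma.

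\textbf{Expected obstacle.} The only delicate point is the claim that the minimal separator $S$ is a clique and, relatedly, that each vertex of $S$ has a neighbor in both $A$ and $B$ — this uses minimality of $S$ carefully (if some $s \in S$ had no neighbor in $A$, then $S \setminus \{s\}$ would still separate $a$ from $b$). Constructing the two induced paths between $s$ and $s'$ and arguing that their union contains an induced cycle of length at least $4$ requires choosing the paths to be shortest (hence induced) within $G[A \cup \{s,s'\}]$ and $G[B \cup \{s,s'\}]$ respectively, and noting that the only possible chord, namely $ss'$, is assumed absent; everything else is routine. Since the paper explicitly cites~\cite{blair1993introduction} for this lemma, an acceptable alternative is simply to invoke that reference, but the inductive proof above is short and self-contained.
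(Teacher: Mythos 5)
The paper does not prove this lemma at all: it simply cites \cite{blair1993introduction}, so your self-contained argument is the classical Dirac-style proof that the cited reference contains (complete case; otherwise take a minimal $a$--$b$ separator $S$, show $S$ is a clique via two induced paths through the components $A$ and $B$, recurse on $G[A\cup S]$ and $G[B\cup S]$, and use that vertices of $A$ have all their $G$-neighbors in $A\cup S$). That core argument is correct, including the delicate points you flag: minimality gives every $s\in S$ a neighbor in both $A$ and $B$, the only possible chord of the resulting cycle is $ss'$, and a simplicial vertex of $G[A\cup S]$ lying in $A$ remains simplicial in $G$.

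One slip needs fixing: the strengthened induction hypothesis as you state it --- ``every chordal graph on at least two vertices has two non-adjacent simplicial vertices'' --- is false, since a complete graph on $n\ge 2$ vertices has no two non-adjacent vertices at all. The correct strengthening (and the one your argument actually supports) is: every chordal graph has a simplicial vertex, and every chordal graph that is \emph{not complete} has two non-adjacent simplicial vertices. This matters at the point where you invoke the hypothesis on $G[A\cup S]$ and claim ``two non-adjacent ones if $|A\cup S|\ge 2$'': $G[A\cup S]$ can be complete even when $G$ is not (e.g.\ $A$ a single vertex adjacent to all of the clique $S$), in which case there are no two non-adjacent simplicial vertices --- but then every vertex of $G[A\cup S]$ is simplicial, and since $A\ne\emptyset$ you still get a simplicial vertex in $A$. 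So the conclusion you need survives; you just have to split the application of the induction hypothesis into the complete and non-complete cases, exactly as you already do for $G$ itself at the top level. With that restatement the induction closes and the proof is complete.
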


\begin{lemma}
\label{lemma:gluing}
Let $G_1$, $G_2$ be two chordal graphs, and suppose $Y\coloneqq V(G_1)\cap V(G_2)$ is a clique in both $G_1$ and $G_2$. If we glue $G_1$ and $G_2$ together at $Y$, then the resulting graph $G$ is chordal.
\end{lemma}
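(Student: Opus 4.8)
The plan is to prove \cref{lemma:gluing} by contradiction using the characterization of chordal graphs as those with no induced cycle of length at least $4$. So suppose, for contradiction, that $G$ (the result of gluing $G_1$ and $G_2$ at the common clique $Y$) contains an induced cycle $D = v_1 v_2 \cdots v_m v_1$ with $m \geq 4$. Since $G_1$ and $G_2$ are each chordal, $D$ cannot lie entirely within $V(G_1)$ or entirely within $V(G_2)$; hence $D$ must use at least one vertex from $V(G_1) \setminus V(G_2)$ and at least one vertex from $V(G_2) \setminus V(G_1)$.

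Next I would exploit the fact that every edge of $G$ is an edge of $G_1$ or an edge of $G_2$, and that $G_1$ and $G_2$ share no edges outside $Y$ (indeed $V(G_1) \cap V(G_2) = Y$). Therefore, as we walk around the cycle $D$, every time we pass from the ``$G_1$-side'' (vertices in $V(G_1) \setminus Y$) to the ``$G_2$-side'' (vertices in $V(G_2) \setminus Y$), the transition must happen through a vertex of $Y$: consecutive vertices $v_i, v_{i+1}$ with $v_i \in V(G_1)\setminus Y$ and $v_{i+1} \in V(G_2)\setminus Y$ would form an edge belonging to neither $G_1$ nor $G_2$, a contradiction. Since $D$ visits both sides and is a closed walk, it must contain at least two distinct vertices of $Y$ — say $v_a, v_b \in Y$ with $a \neq b$. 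But $Y$ is a clique, so $v_a v_b \in E(G)$. Because $m \geq 4$, we can choose such a pair $v_a, v_b$ that are non-consecutive on the cycle (the two ``crossing points'' of $D$ between the sides cannot be adjacent on $D$, since there is at least one genuine $G_1$-side vertex and one genuine $G_2$-side vertex between them). Then $v_a v_b$ is a chord of $D$, contradicting that $D$ is induced.

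The one point requiring a little care is the claim that there are two non-consecutive vertices of $Y$ on $D$; I would make this precise by considering the maximal arcs of $D$ lying in one side. Concretely, delete from the cyclic sequence all vertices of $Y$; since $D$ uses vertices from both $V(G_1)\setminus Y$ and $V(G_2)\setminus Y$ and these cannot be adjacent, what remains is a nonempty collection of at least two arcs, with consecutive arcs separated by at least one $Y$-vertex, and the arcs alternating (as we go around) between being contained in $V(G_1)\setminus Y$ and in $V(G_2)\setminus Y$. Hence there are at least two ``separating'' $Y$-vertices; picking one from each of two different gaps gives a non-adjacent-on-$D$ pair in $Y$ (they have at least one full arc between them on each side), and the clique edge between them is the desired chord. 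This contradiction completes the proof.

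I expect the main (minor) obstacle to be phrasing the arc/gap argument cleanly — i.e., correctly handling the cyclic structure and the degenerate-looking cases — but there is no real difficulty: the essential content is just ``any $G_1$–$G_2$ transition on the cycle passes through the separating clique $Y$, and two such transitions yield a chord.''
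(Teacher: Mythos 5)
Your proposal is correct and follows essentially the same argument as the paper's proof: assume an induced cycle of length at least $4$, note it must use vertices from both $V(G_1)\setminus Y$ and $V(G_2)\setminus Y$, observe that there are no edges between these two sides so the cycle must pass through $Y$ twice at non-consecutive positions, and obtain a chord from the clique $Y$. The paper states this more tersely, while you spell out the arc/gap argument, but the underlying idea is identical.
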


\begin{proof}
Suppose $G$ contains an induced cycle $C$ of length at least 4. Since $G_1$ and $G_2$ are chordal, $C$ contains some vertex $u\in V(G_1)\setminus Y$ and some $v\in V(G_2)\setminus Y$. There are no edges connecting $G_1\setminus Y$ to $G_2\setminus Y$, so $C$ contains two non-adjacent vertices that belong to the clique $Y$, which is a contradiction.
\end{proof}

\subsection{Properties of the evaporation sequence}
\label{sec:evap_properties}

\subsubsection{Basic properties}
\label{sec:basic_prop}

In this section (\cref{sec:basic_prop}), suppose $G$ is a chordal graph that evaporates at time $t$ with exception set $X$, where $X\subseteq V(G)$ is a clique, and let $L_1,\ldots,L_t$ be the evaporation sequence. For $i\in[t]$, let $G_i$ be the state of the graph $G$ just before time step $i$. In other words, $G_1 = G$ and $G_i = G\setminus(L_1\cup\cdots\cup L_{i-1})$ for all $i\in[t]$.

\begin{observation}
\label{obs:cliques}
For all $i\in[t]$, every connected component of $G[L_i]$ is a clique.
\end{observation}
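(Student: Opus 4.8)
The plan is to prove the contrapositive-style statement directly: fix $i\in[t]$, let $C$ be a connected component of $G[L_i]$, and show any two vertices $u,v\in C$ are adjacent in $G$. Recall $G_i$ is the state of $G$ just before time step $i$, and that $L_i\subseteq\widetilde L_i$, the set of \emph{all} simplicial vertices of $G_i$; in particular every vertex of $L_i$ is simplicial in $G_i$. Also note $G[L_i]=G_i[L_i]$ since $G_i$ is an induced subgraph of $G$, so all adjacencies below may be read in either $G$ or $G_i$.

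The key step is a shortest-path shortcutting argument. Since $u,v$ lie in the same component $C$ of $G[L_i]$, there is a path in $G[L_i]$ from $u$ to $v$; choose one of minimum length, say $u=w_0,w_1,\dots,w_m=v$ with all $w_j\in L_i$ and consecutive vertices adjacent. Suppose for contradiction that $m\ge 2$. Then $w_1\in L_i$ is simplicial in $G_i$, so $N_{G_i}(w_1)$ is a clique; since $w_0,w_2\in N_{G_i}(w_1)$, we get $w_0w_2\in E(G)$. As $w_0,w_2\in L_i$, the edge $w_0w_2$ belongs to $G[L_i]$, so $w_0,w_2,w_3,\dots,w_m$ is a strictly shorter $u$–$v$ path in $G[L_i]$ (here $w_0\ne w_2$ because the original is a path), contradicting minimality. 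Hence $m\le 1$, i.e.\ $uv\in E(G)$. Since $u,v\in C$ were arbitrary, $C$ is a clique, which proves the observation.

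I do not anticipate a real obstacle here: the statement follows from the single structural fact that each $w_j$ in the interior of the path is simplicial in $G_i$, which is exactly how $L_i$ was defined. The only points requiring a word of care are (i) justifying that adjacency inside $N_{G_i}(w_1)$ transfers to $G[L_i]$ (immediate, since $w_0,w_2\in L_i$ and $G_i$ is induced in $G$), and (ii) noting that the shortcut genuinely shortens the path, which uses that a path has no repeated vertices. Everything else is a routine induction/minimality argument.
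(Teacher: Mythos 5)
Your proof is correct and uses essentially the same idea as the paper: exploit that every interior vertex of a path inside $L_i$ is simplicial in $G_i$ to shortcut the path, forcing any two vertices of the same component of $G[L_i]$ to be adjacent. The paper phrases this as an induction along an arbitrary connecting path (showing $u$ is adjacent to each successive $w_j$), while you phrase it as a minimality/contradiction argument on a shortest path; the two are interchangeable.
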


\begin{proof}
Let $i\in[t]$, and suppose $u$ and $v$ are two vertices that belong to the same connected component in $G[L_i]$. We have some path $u,w_1,w_2,\ldots,w_k,v$ from $u$ to $v$ in that connected component. Since $w_1$ is simplicial in $G_i$ and thus in $G[L_i]$, $u$ is adjacent to $w_2$. Similarly, by induction, $u$ is adjacent to all of the vertices $w_1,\ldots,w_k,v$. Hence $u$ and $v$ are adjacent, as desired.
\end{proof}

\begin{observation}
\label{obs:clique}
If $G\setminus X$ is connected, then $L_t$ is a clique.
\end{observation}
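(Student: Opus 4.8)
\textbf{Proof plan for \cref{obs:clique}.}
The plan is to exploit \cref{obs:cliques} together with the fact that $L_t$ is the \emph{last} set to evaporate, so that the graph $G_t = G[X \cup L_t]$ is essentially the ``root'' of the evaporation process, and use the connectivity of $G \setminus X$ to rule out multiple components of $G[L_t]$. By \cref{obs:cliques}, every connected component of $G[L_t]$ is a clique, so it suffices to show that $G[L_t]$ has at most one connected component. Suppose for contradiction that $G[L_t]$ has two distinct components $A$ and $B$.

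First I would observe that in $G_t = G\setminus(L_1\cup\cdots\cup L_{t-1})$, every vertex of $L_t$ is simplicial (this is exactly the definition of the evaporation sequence: after removing $L_1,\ldots,L_{t-1}$, all remaining non-exception vertices are simplicial, and then they all evaporate at step $t$). Since the vertices of $X$ form a clique and $L_t = V(G_t)\setminus X$, the graph $G_t$ consists of the clique $X$ together with simplicial vertices $L_t$ hanging off it. Each component of $G[L_t]$ is a clique by \cref{obs:cliques}, and for any vertex $v\in L_t$, $N_{G_t}(v)$ is a clique consisting of some subset of $X$ together with the other vertices in $v$'s component of $G[L_t]$; in particular there are no edges between distinct components of $G[L_t]$, and every vertex of a given component has the same neighborhood within $X$ (call it $N_A\subseteq X$ for component $A$). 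The key point is now connectivity: since $G\setminus X$ is connected and $L_t\subseteq V(G\setminus X)$, the components $A$ and $B$ of $G[L_t]$ must be joined by a path in $G\setminus X$. But any such path, if it uses a vertex of $L_i$ for $i<t$, must eventually leave and re-enter; in fact I would argue more directly that $G[L_t]$ being the last evaporating set forces $L_t$ to already be connected in $G\setminus X$ on its own, because every vertex of $L_1\cup\cdots\cup L_{t-1}$ has been removed by step $t$ and hence is simplicial at the time of its removal, which prevents it from ``bridging'' two parts of the still-remaining graph.

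The cleanest way to make the bridging argument precise is by reverse induction: I claim that for each $i$ from $t$ down to $1$, the set $L_i \cup L_{i+1} \cup \cdots \cup L_t$ induces a connected subgraph of $G\setminus X$ — no wait, that is false in general (think of a star, where the leaves $L_1$ are disconnected). Instead, the right statement is that $G\setminus X$ connected implies $G_t\setminus X = G[L_t]$ is connected: take any two vertices $u,v\in L_t$ and a shortest path $P$ between them in $G\setminus X$; if $P$ used a vertex $w\in L_i$ with $i<t$ minimal along $P$, then $w$ is simplicial in $G_i$, so its two neighbors on $P$ (which lie in $L_{\ge i}$, since $i$ is minimal) are adjacent, giving a shortcut and contradicting that $P$ is shortest. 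Hence $P\subseteq L_t$, so $G[L_t]$ is connected, and combined with \cref{obs:cliques} this shows $L_t$ is a clique.

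The main obstacle is getting the shortest-path/simplicial shortcut argument exactly right — in particular being careful that when we take the minimal index $i<t$ appearing on $P$, both neighbors of $w$ along $P$ indeed lie in $G_i = G\setminus(L_1\cup\cdots\cup L_{i-1})$ (which holds because all earlier $L_j$, $j<i$, do not appear on $P$ by minimality, and $X$ does not appear on $P$ since $P\subseteq G\setminus X$), so that simpliciality of $w$ in $G_i$ genuinely applies and yields the edge between those two neighbors. Everything else is bookkeeping with \cref{obs:cliques}.
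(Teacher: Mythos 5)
Your argument is correct and is essentially the paper's own proof: take a shortest $u$--$v$ path $P$ in $G\setminus X$, note that any vertex of $P$ evaporating before time $t$ would be simplicial in the relevant $G_i$ and yield a shortcut, conclude $P\subseteq L_t$, and finish with \cref{obs:cliques}. Your extra care with the minimal index $i$ along $P$ just spells out the same shortcut step in more detail.
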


\begin{proof}
Let $u,v\in L_t$, and let $P$ be a shortest path from $u$ to $v$ in $G\setminus X$. None of the vertices of $P$ are simplicial in $G_i$ for $i\in[t-1]$, since otherwise there would be a shorter path than $P$ from $u$ to $v$. Hence $P\subseteq L_t$, so $u$ and $v$ belong to the same connected component of $G[L_t]$, which is a clique by \cref{obs:cliques}.
\end{proof}

\begin{observation}
\label{obs:neighborhood}
Let $i\in[t]$, let $C$ be a connected component of $G[L_i]$, and let $u,v\in C$. Then $N_{G_i}(u)\setminus\{v\} = N_{G_i}(v)\setminus\{u\}$.
\end{observation}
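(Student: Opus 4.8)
The plan is to reduce this to the simpliciality of $u$ and $v$ in $G_i$ together with \cref{obs:cliques}. First I would record that since $u,v\in C\subseteq L_i\subseteq\widetilde L_i$, both $u$ and $v$ are simplicial in $G_i$, so $N_{G_i}(u)$ and $N_{G_i}(v)$ are cliques. Next, since $u$ and $v$ lie in the same connected component $C$ of $G[L_i]$, \cref{obs:cliques} tells us that $C$ is a clique, and in particular $u$ and $v$ are adjacent in $G_i$ (note $u\ne v$ since they are distinct vertices of the clique $C$, or we simply observe the statement is trivial if $u=v$).

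The core of the argument is a short symmetric inclusion. Take any $w\in N_{G_i}(u)\setminus\{v\}$. Then $w\ne u$ (no self-loops), and both $w$ and $v$ belong to $N_{G_i}(u)$ — the former by assumption, the latter because $u\sim v$. Since $N_{G_i}(u)$ is a clique and $w\ne v$, the vertices $w$ and $v$ are adjacent, i.e. $w\in N_{G_i}(v)$, so $w\in N_{G_i}(v)\setminus\{u\}$. This proves $N_{G_i}(u)\setminus\{v\}\subseteq N_{G_i}(v)\setminus\{u\}$. Swapping the roles of $u$ and $v$ (using that $v$ is simplicial in $G_i$) gives the reverse inclusion, and hence equality.

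I do not expect any real obstacle here: the only point requiring a little care is keeping track of the removed vertices $\{u\}$ and $\{v\}$ on the two sides, which is handled by noting $w\ne u$ automatically and using $u\sim v$ to place $v\in N_{G_i}(u)$. The two nontrivial inputs are both already available in the excerpt — the definition of the evaporation sequence (giving simpliciality of the vertices of $L_i$ in $G_i$) and \cref{obs:cliques} (giving $u\sim v$).
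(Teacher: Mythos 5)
Your proposal is correct and matches the paper's argument: the paper likewise handles $u=v$ trivially, uses \cref{obs:cliques} to get $u\sim v$, and invokes simpliciality of $u$ (and symmetrically $v$) in $G_i$ to move any $w\in N_{G_i}(u)\setminus\{v\}$ into $N_{G_i}(v)\setminus\{u\}$. Your write-up just spells out the clique-neighborhood step in slightly more detail.
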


\begin{proof}
If $u = v$ then we are done, so suppose $u\ne v$. Let $w\in N_{G_i}(u)\setminus\{v\}$. We know $u$ and $v$ are adjacent by \cref{obs:cliques}, and $u$ is simplicial in $G_i$. Hence $w\in N_{G_i}(v)\setminus\{u\}$.
\end{proof}

\begin{observation}
\label{obs:L_union_X}
If $G\setminus X$ is connected, $X$ is a clique, and $X\subseteq N(L_t)$, then $X\cup L_t$ is a clique.
\end{observation}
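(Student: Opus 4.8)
The plan is to upgrade the hypothesis $X\subseteq N(L_t)$ — which only says that each vertex of $X$ has \emph{some} neighbor in $L_t$ — into the stronger statement that each vertex of $X$ is adjacent to \emph{all} of $L_t$, by routing the argument through the simpliciality of $L_t$-vertices in the last graph $G_t$ of the evaporation process. We already have two of the three ingredients: $L_t$ is a clique by \cref{obs:clique} (which applies precisely because $G\setminus X$ is connected), and $X$ is a clique by hypothesis. Since the evaporation sequence has length exactly $t$, deleting $L_1\cup\cdots\cup L_t$ from $G$ leaves exactly $X$; hence $G_t = G[X\cup L_t]$, and every vertex of $L_t$ is simplicial in $G_t$ by the definition of the evaporation sequence.

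With this in hand, fix arbitrary $x\in X$ and $v\in L_t$; it suffices to show $xv\in E(G)$. Using $x\in N(L_t)$, choose $u\in L_t$ with $xu\in E(G)$. If $u=v$ we are done, so assume $u\ne v$. Because $L_t$ is a clique, $uv\in E(G)$; since $G_t$ is the induced subgraph on $X\cup L_t$, both $xu$ and $uv$ are edges of $G_t$, and $x\ne u$ (as $X$ and $L_t$ are disjoint) while $v\ne u$, so $x,v\in N_{G_t}(u)$. Now $u$ is simplicial in $G_t$, so $N_{G_t}(u)$ is a clique, whence $xv\in E(G_t)\subseteq E(G)$.

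Combining the three facts — $X$ is a clique, $L_t$ is a clique, and every pair $x\in X$, $v\in L_t$ is adjacent — yields that $X\cup L_t$ is a clique. I do not expect any real obstacle here; the only care needed is in the bookkeeping for $G_t$ (confirming that $X\cup L_t$ is exactly its vertex set and that the vertices of $L_t$ are simplicial in it), both of which are immediate from the definition of the evaporation sequence given in \cref{sec:evaporation}.
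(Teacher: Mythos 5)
Your proof is correct and follows essentially the same route as the paper: the paper cites \cref{obs:clique} for $L_t$ being a clique and \cref{obs:neighborhood} (with $i=t$) to transfer an edge $xu$ to all of $L_t$, and the content of \cref{obs:neighborhood} is exactly the simpliciality-of-$u$-in-$G_t$ argument you spell out inline. So the only difference is that you unfold that observation rather than invoking it.
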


\begin{proof}
By \cref{obs:neighborhood} with $i = t$, we know $N(u)\cap X = N(v)\cap X$ for all $u,v\in L_t$. This implies that $x$ and $v$ are adjacent for all $x\in X$, $v\in L_t$. Since $L_t$ is a clique by \cref{obs:clique}, we conclude that $X\cup L_t$ is a clique.
\end{proof}

\begin{observation}
\label{obs:conn}
If $G\setminus X$ is connected, then $G_i\setminus X$ is connected for all $i\in[t]$.
\end{observation}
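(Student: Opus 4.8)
The plan is to prove the statement by induction on $i$, where the real content is a single \emph{shortcutting} step: deleting a layer of simplicial vertices from a connected graph leaves it connected. Recall that $G_i = G\setminus(L_1\cup\cdots\cup L_{i-1})$, so $G_i\setminus X = G\setminus(X\cup L_1\cup\cdots\cup L_{i-1})$, and by the recursive definition of the evaporation sequence, $L_i$ is exactly the set of vertices that are simplicial in $G_i$ but do not lie in $X$.

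For the base case $i = 1$ we have $G_1\setminus X = G\setminus X$, which is connected by hypothesis. For the inductive step, assume $G_i\setminus X$ is connected; we must show that $G_{i+1}\setminus X = (G_i\setminus X)\setminus L_i$ is connected. Fix two vertices $u,v\in V(G_{i+1})\setminus X$ and, using the inductive hypothesis, let $P$ be a shortest path from $u$ to $v$ in $G_i\setminus X$. I claim $P$ avoids $L_i$. Suppose instead that some $w\in V(P)\cap L_i$; since $u,v\notin L_i$, the vertex $w$ is internal to $P$, so it has two distinct neighbors $a$ and $b$ on $P$, both lying in $V(G_i)\setminus X$. Because $w\in L_i$ is simplicial in $G_i$, the set $N_{G_i}(w)$ is a clique, so $a$ and $b$ are adjacent in $G_i$, hence also in $G_i\setminus X$. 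Replacing the subpath $a,w,b$ of $P$ by the single edge $ab$ yields a strictly shorter path from $u$ to $v$ in $G_i\setminus X$, contradicting the minimality of $P$. Hence $P$ lies entirely within $(G_i\setminus X)\setminus L_i = G_{i+1}\setminus X$, so $u$ and $v$ are connected there.

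I do not anticipate a genuine obstacle. The only points that require care are the identification of $L_i$ with the set of non-exception simplicial vertices of $G_i$ (so that each $w\in L_i$ really is simplicial in $G_i$, which is what makes $N_{G_i}(w)$ a clique in the shortcut), and the fact that a shortest path is simple, so that removing an internal vertex and joining its two path-neighbors by an edge genuinely produces a shorter path. This is essentially the same rerouting argument already used in the proofs of \cref{obs:cliques} and \cref{obs:clique}, so it slots naturally into this run of basic observations.
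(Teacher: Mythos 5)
Your proof is correct and follows essentially the same route as the paper: induct on $i$, take a shortest path in $G_i\setminus X$ between two surviving vertices, and argue that no internal vertex can be simplicial in $G_i$ (hence none lies in $L_i$) because shortcutting through the clique neighborhood would give a shorter path. You merely spell out the shortcut step more explicitly than the paper does, which is fine.
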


\begin{proof}
We proceed by induction by showing that if $G_i\setminus X$ is connected, then $G_{i+1}\setminus X$ is connected, for all $i\in [t-1]$. Suppose $G_i\setminus X$ is connected for some $i$, and let $u,v$ be vertices in $G_{i+1}\setminus X$. Let $P$ be a shortest path from $u$ to $v$ in $G_i\setminus X$. No intermediate vertices on $P$ are simplicial in $G_i$, since otherwise there would be a shorter path from $u$ to $v$ in $G_i\setminus X$. Therefore, the same path $P$ still exists in $G_{i+1}\setminus X$, and hence $G_{i+1}\setminus X$ is connected.
\end{proof}

\begin{lemma}
\label{lemma:neighborhood}
Let $C = V(G)\setminus X$. If $G\setminus X$ is connected, then $N(C) = N(L_t)\cap X$.
\end{lemma}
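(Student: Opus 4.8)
The plan is to prove the two inclusions $N(C) \supseteq N(L_t) \cap X$ and $N(C) \subseteq N(L_t) \cap X$ separately, where $C = V(G) \setminus X$. The first inclusion is immediate: since $L_t \subseteq V(G) \setminus X = C$, any vertex of $X$ adjacent to some vertex of $L_t$ is adjacent to a vertex of $C$, so $N(L_t) \cap X \subseteq N(C)$. (There is nothing subtle here; one just unwinds the definition of $N(\cdot)$, noting $X$ and $C$ are disjoint.)

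For the reverse inclusion, I would take a vertex $x \in N(C) \cap X$ — equivalently any $x \in X$ with a neighbor in $C$ — and show $x \in N(L_t)$, i.e., $x$ has a neighbor in $L_t$. The idea is to ``walk'' a neighbor of $x$ forward through the evaporation process until it lands in $L_t$. Concretely, let $v \in C$ be a neighbor of $x$, and let $i$ be the time at which $v$ evaporates, so $v \in L_i$. If $i = t$ we are done. If $i < t$, then $v$ is simplicial in $G_i$ and $x \in N_{G_i}(v)$ (note $x \in X$ is never removed, so $x$ still belongs to $G_i$; and here I use that $G\setminus X$ connected keeps all the relevant structure intact via \cref{obs:conn}). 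Since $G_i \setminus X$ is connected by \cref{obs:conn} and $i < t$, the component of $v$ in $G[L_i]$ (a clique, by \cref{obs:cliques}) is not all of $G_i \setminus X$, so $v$ has a neighbor $v'$ in $G_i$ that survives to $G_{i+1}$; more carefully, one argues that $N_{G_i}(v) \setminus X$ is nonempty and contains a vertex of $G_{i+1}\setminus X$ — here one needs that $v$ is not the last vertex of $G\setminus X$, which holds since $i<t$. Because $v$ is simplicial in $G_i$, every neighbor of $v$ in $G_i$ is adjacent to every other, so $x$ (a neighbor of $v$) is adjacent to this surviving vertex $v'$. Thus $x$ has a neighbor $v' \in C$ that evaporates strictly later than $v$. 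Iterating this argument — each step strictly increases the evaporation time of the witness neighbor, and the times are bounded by $t$ — we eventually obtain a neighbor of $x$ in $L_t$, so $x \in N(L_t) \cap X$.

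The main obstacle I anticipate is the inductive step that produces the ``later-evaporating neighbor'' $v'$: one must argue carefully that when $i < t$, a simplicial vertex $v \in L_i$ with a neighbor in $X$ actually has a neighbor in $G_i \setminus X$ that persists into $G_{i+1}$. This uses connectivity of $G_i \setminus X$ (\cref{obs:conn}) together with the fact that $G_{i+1}\setminus X$ is nonempty (since $i < t$), plus \cref{obs:cliques} to transfer adjacency from $v$ to $x$. Once this step is set up cleanly, the rest is a straightforward induction on $t - i$. An alternative, perhaps cleaner, formulation: show by reverse induction on $i$ (from $t$ down) that $N(L_i \cup L_{i+1} \cup \cdots \cup L_t) \cap X = N(L_t) \cap X$, using \cref{obs:neighborhood,obs:cliques} at each step to collapse the neighborhood of $L_i$ into that of $L_{i+1} \cup \cdots \cup L_t$; then take $i = 1$ to conclude $N(C) \cap X = N(L_t) \cap X$, and observe $N(C) \subseteq X$ is automatic since $C = V(G)\setminus X$ has no vertices outside itself to be neighbors with.
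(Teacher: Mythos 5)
Your proposal is correct and takes essentially the same route as the paper's proof: both push the adjacency of $x$ forward through the evaporation sequence using the connectivity of $G_i\setminus X$ (\cref{obs:conn}) and simpliciality of vertices in $L_i$, the paper by proving $N(L_i)\cap X\subseteq N(L_{i+1}\cup\cdots\cup L_t)\cap X$ and walking adjacency along a path inside $L_i$, you by iterating a single witness neighbor whose evaporation time strictly increases. The one step you flag as delicate (that $v\in L_i$ with $i<t$ has a neighbor in $G_i$ surviving past time $i$) is indeed justified by \cref{obs:cliques,obs:neighborhood} as you indicate, so the argument goes through.
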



\begin{proof}
It is clear that $N(L_t)\cap X\subseteq N(C)$. To see that $N(C)\subseteq N(L_t)\cap X$, it is sufficient to show $N(L_i)\cap X\subseteq N(L_{i+1}\cup\ldots\cup L_t)\cap X$ for all $i\in [t-1]$. Suppose $u\in N(L_i)\cap X$, so there is some vertex $v\in L_i$ that is adjacent to $u\in X$. We know $G_i\setminus X$ is connected by \cref{obs:conn}, so there is a path $v,w_1,w_2,\ldots,w_k,w$ in $G_i\setminus X$ from $v$ to some vertex $w\in L_{i+1}\cup\ldots\cup L_t$. We can assume $v,w_1,w_2,\ldots,w_k\in L_i$ by shortening the tail of the path if need be. Since $v$ is simplicial in $G_i$, $w_1$ is adjacent to $u$. Similarly, by induction, all of the vertices $w_1,\ldots,w_k,w$ are adjacent to $u$. Hence $u\in N(L_{i+1}\cup\ldots\cup L_t)\cap X$.
\end{proof}

\begin{lemma}
\label{lemma:neighborhood_2}
If $C$ is a connected component of $G\setminus X$, then $N(C) = N(L_t\cap C)\cap X$.
\end{lemma}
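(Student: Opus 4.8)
The plan is to reduce \cref{lemma:neighborhood_2} to the already-proved \cref{lemma:neighborhood} by restricting attention to a single connected component $C$ of $G \setminus X$. First I would observe that $G' \coloneqq G[X \cup C]$ is a chordal graph (as an induced subgraph of $G$), that $X$ is still a clique in $G'$, and that $G' \setminus X = G[C]$ is connected by the choice of $C$. So $G'$ satisfies the hypotheses of \cref{lemma:neighborhood} with exception set $X$.

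The key step is to verify that the evaporation sequence of $G'$ with exception set $X$ interacts with that of $G$ in the way we need — specifically, that if $L_1, \ldots, L_t$ is the evaporation sequence of $G$ with exception set $X$ and $L_1', \ldots, L_{t'}'$ is that of $G'$, then the last set $L_{t'}'$ of $G'$ equals $L_t \cap C$. One direction of intuition: when we delete all simplicial vertices of $G$ at each step, a vertex $v \in C$ is simplicial in $G_i$ if and only if it is simplicial in $G_i' \coloneqq G'[\,(X \cup C) \setminus (L_1 \cup \cdots \cup L_{i-1})\,]$, because all neighbors of $v$ lie in $X \cup C$ (there are no edges from $C$ to other components of $G \setminus X$, and $X \subseteq V(G')$). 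Hence the vertices of $C$ evaporate at exactly the same times in $G$ and in $G'$, which gives $L_i \cap C = L_i' \cap C$ for all $i$; since $L_i' \subseteq X \cup C \setminus X = C$ we get $L_i' = L_i \cap C$, and in particular the last nonempty one is $L_t \cap C$ (using that $t' = t_C$, the evaporation time of $C$ in $G$, and that for $i > t_C$ we have $L_i \cap C = \emptyset$). I should double-check the edge case where $C$ evaporates strictly before time $t$ in $G$, i.e. $t' < t$; then $L_{t'}' = L_{t'} \cap C = L_t \cap C$ still holds provided $L_t \cap C = \emptyset$ in that case as well — but if $C$ evaporates at time $t' < t$ then by definition $L_i \cap C = \emptyset$ for $i > t'$, so $L_t \cap C = \emptyset = L_{t'+1} \cap C = \cdots$, and the claim $N(C) = N(L_t \cap C) \cap X = \emptyset \cap X = \emptyset$ would need $N(C) = \emptyset$; but $C$ is a connected component of $G \setminus X$ with no neighbor in $X$ would just mean $C$ is its own component of $G$ — this is consistent with $G$ being connected only if $V(G) = X \cup C$, handled separately, so I should be careful to phrase the statement so this degenerate possibility is either excluded by context or handled. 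Actually, rereading: \cref{lemma:neighborhood_2} is stated for general $C$, so I'll handle $N(C) = \emptyset$ cleanly by noting both sides are then empty.

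Once the evaporation-sequence correspondence is established, I finish by applying \cref{lemma:neighborhood} to $G'$: its last evaporation set is $L_{t'}' = L_t \cap C$, and $V(G') \setminus X = C$, so $N_{G'}(C) = N_{G'}(L_t \cap C) \cap X$. Finally I translate back from $G'$ to $G$: since every neighbor of a vertex of $C$ that lies in $X$ is an edge of $G'$ (all of $C$ and all of $X$ are in $G'$), we have $N_G(C) = N_{G'}(C)$ and likewise $N_G(L_t \cap C) \cap X = N_{G'}(L_t \cap C) \cap X$. Combining gives $N(C) = N(L_t \cap C) \cap X$, as desired.

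The main obstacle I anticipate is the bookkeeping in the evaporation-sequence correspondence: carefully arguing that simpliciality of a vertex of $C$ is preserved between $G_i$ and $G_i'$ at every time step, and handling the indices when $C$'s evaporation time $t_C$ is smaller than the global evaporation time $t$. The key fact making this work is that $C$ is a full connected component of $G \setminus X$, so no vertex outside $X \cup C$ is ever adjacent to a vertex of $C$; this must be invoked at each induction step. Everything else is routine.
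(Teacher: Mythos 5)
Your argument is correct in substance but takes a genuinely different route from the paper. The paper proves the identity directly inside $G$: it observes (as in \cref{obs:conn}) that what remains of $C$ in each $G_i$ stays connected, and then reruns the path-plus-simpliciality argument from the proof of \cref{lemma:neighborhood} within the component to obtain the chain $N(L_i\cap C)\cap X\subseteq N((L_{i+1}\cup\cdots\cup L_t)\cap C)\cap X$. You instead pass to the induced subgraph $G'=G[X\cup C]$, prove that restricting to $G'$ does not change the evaporation of $C$ (i.e.\ $L_i'=L_i\cap C$, because every $v\in C$ has $N_G(v)\subseteq X\cup C$ and the exception set $X$ never evaporates), and then invoke \cref{lemma:neighborhood} for $G'$ as a black box, translating neighborhoods back via $N_G(C)=N_{G'}(C)$ and $N_G(L_i\cap C)\cap X=N_{G'}(L_i\cap C)\cap X$. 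The restriction step you prove is essentially \cref{lemma:commute_g_1}, which the paper only establishes later (its proof does not rely on \cref{lemma:neighborhood_2}, so there is no circularity). Your route buys modularity, isolating the ``restriction preserves evaporation'' fact and reusing an earlier lemma verbatim; the paper's route avoids re-deriving that fact by simply repeating the earlier proof on the component.

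The edge case you flagged, where $C$ evaporates at some time $t_C<t$, is a real issue, but your resolution of it is not sound: you suggest that then $N(C)=\emptyset$, which does not follow ($G$ is not assumed connected in this subsection, and even in a connected $G$ a pendant vertex attached to $X$ evaporates at time $1$ while other components persist, yet its neighborhood in $X$ is nonempty). In that situation the literal equality $N(C)=N(L_t\cap C)\cap X$ simply fails, and the paper's own proof has the same blind spot: its inclusion chain needs a vertex of $C$ surviving past time $i$, which is unavailable once $i\ge t_C$. What your reduction actually establishes is the identity $N(C)=N(L_{t_C}\cap C)\cap X$, with $t_C$ the evaporation time of $C$ in $G$, and this is exactly what the paper uses downstream (in \cref{lemma:commute_f_tilde_2}, and through it \cref{lemma:commute_f_p_tilde_1}, every component of $G\setminus(X\cup L)$ evaporates at exactly the final time, so $L_{t-1}\cap C\ne\emptyset$ there). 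So rather than trying to rescue the degenerate case via $N(C)=\emptyset$, state the conclusion with $t_C$ (equivalently, add the hypothesis $L_t\cap C\ne\emptyset$); with that adjustment your proof is complete and matches the intended content of the lemma.
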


\begin{proof}
By an argument similar to \cref{obs:conn}, since $C$ is connected, what remains of $C$ in $G_i$ is connected for all $i\in[t]$. Therefore, we can see that $N(L_i\cap C)\cap X\subseteq N((L_{i+1}\cup\ldots\cup L_t)\cap C)\cap X$ for all $i\in [t-1]$ by an argument similar to the proof of \cref{lemma:neighborhood}.
\end{proof}

\subsubsection{Evaporation time is preserved after gluing or taking induced subgraphs}

The rest of the evaporation sequence lemmas that we need all have a similar flavor. For almost every recurrence, we need to show that the evaporation time of a chordal graph is preserved after gluing or taking induced subgraphs, as long as we choose the exception sets correctly and certain properties hold. For example, the simplest of these lemmas, \cref{lemma:commute_g}, will be used to prove the recurrence for $g(t,x,k,z)$ (\cref{lemma:g}).

\begin{lemma}
\label{lemma:commute_g_1}
Suppose $G$ is a connected chordal graph that contains a clique $X$, and suppose $C$ is a connected component of $G\setminus X$. Then the evaporation time of $C$ in $G$ is equal to the evaporation time of $G[X\cup C]$, assuming both have exception set $X$.
\end{lemma}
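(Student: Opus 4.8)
The plan is to prove both inequalities by induction on the evaporation time, showing that the two evaporation processes — the one on $G$ restricted to its effect on $C$, and the one on $G[X \cup C]$ with exception set $X$ — proceed in lockstep. Concretely, let $L_1, L_2, \ldots$ be the evaporation sequence of $G$ with exception set $X$ (so each $L_i$ is a set of simplicial vertices of $G \setminus (L_1 \cup \cdots \cup L_{i-1})$, none in $X$), and let $M_1, M_2, \ldots$ be the evaporation sequence of $G[X \cup C]$ with exception set $X$. I claim that for every $i$, $M_i = L_i \cap C$. Granting this claim, the evaporation time of $C$ in $G$ (the largest $i$ with $L_i \cap C \ne \emptyset$) equals the evaporation time of $G[X \cup C]$ (the largest $i$ with $M_i \ne \emptyset$), which is exactly what we want.

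To prove the claim by induction on $i$, the key structural fact to establish is: \emph{for a vertex $v \in C$ that survives into the $i$-th round of both processes, $v$ is simplicial in $G \setminus (L_1 \cup \cdots \cup L_{i-1})$ if and only if it is simplicial in $G[X \cup C] \setminus (M_1 \cup \cdots \cup M_{i-1})$.} The point is that the neighborhood of $v$ in these two graphs is the same: since $v \in C$ and $C$ is a connected component of $G \setminus X$, all neighbors of $v$ in $G$ lie in $X \cup C$, so $N_G(v) \subseteq X \cup C$; and by the inductive hypothesis $M_1 \cup \cdots \cup M_{i-1} = (L_1 \cup \cdots \cup L_{i-1}) \cap C$, so the same vertices have been deleted from the neighborhood of $v$ in each process. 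Hence $v$ has identical closed neighborhoods in the two current graphs, and being simplicial (neighborhood is a clique) is a property of that neighborhood together with the induced adjacencies, which also agree. One subtlety: I must check that the vertices of $C$ deleted up to round $i-1$ really are exactly $(L_1 \cup \cdots \cup L_{i-1}) \cap C$ in the $G$-process — this is immediate from the definition since $L_j \subseteq V(G) \setminus X$ and we are just intersecting with $C$ — and that no vertex of $X$ is ever deleted in either process, which holds because $X$ is the exception set in both.

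Given this equivalence of "simpliciality status" round by round, the induction goes through cleanly: in round $i$, the $G$-process removes $L_i$, the set of all non-exceptional simplicial vertices of the current graph; the vertices of $L_i$ that lie in $C$ are, by the equivalence, precisely the simplicial vertices of the current graph $G[X\cup C] \setminus (M_1 \cup \cdots \cup M_{i-1})$ that lie in $C$ — and since every vertex of that graph not in $X$ lies in $C$, and no vertex of $X$ is exceptional-free, $M_i$ is exactly those vertices, i.e. $M_i = L_i \cap C$. I also need the easy observation that $X \cup C = V(G[X\cup C])$ and that $X$ really is a clique in $G[X\cup C]$ (inherited from $G$), so that the evaporation sequence of $G[X\cup C]$ with exception set $X$ is well-defined in the sense of \cref{sec:evaporation}.

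The main obstacle I anticipate is being careful about the base-case and degenerate situations: if $C$ is empty the statement is vacuous/trivial; if $G[X \cup C] = X$ (i.e. $C = \emptyset$) the evaporation sequence is empty; and one must make sure the well-definedness argument from \cref{sec:evaporation} (that we always reach the base case $X = V$) applies to $G[X\cup C]$, which it does since $G[X\cup C]$ is chordal as an induced subgraph of a chordal graph. Beyond that bookkeeping, the argument is a routine double induction, and the real content is the single observation that a vertex of $C$ sees the same neighborhood in $G$ and in $G[X \cup C]$ because $C$ is a whole connected component of $G \setminus X$.
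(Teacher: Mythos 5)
Your proposal is correct and follows essentially the same argument as the paper: both hinge on the observation that every $v\in C$ satisfies $N_G(v)\subseteq X\cup C$, so $N_{G[X\cup C]}(v)=N_G(v)$, and then an induction showing the two evaporation processes delete exactly the same vertices of $C$ at each step (the paper phrases this as $L_i'=L_i\cap C$, your $M_i=L_i\cap C$). The extra bookkeeping you flag (degenerate cases, well-definedness for the induced subgraph) is harmless but not needed beyond what the paper already assumes.
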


\begin{proof}
Let $G' = G[X\cup C]$. Let $L_1,\ldots,L_t$ be the evaporation sequence of $G$ with exception set $X$, and let $G_i = G\setminus(L_1\cup\cdots\cup L_{i-1})$ for $i\in[t]$. Similarly, let $L_1',\ldots,L_s'$ be the evaporation sequence of $G'$ with exception set $X$, and let $G_i' = G'\setminus(L_1'\cup\cdots\cup L_{i-1}')$ for $i\in[s]$. For every $v\in C$, we know $N_{G'}(v) = N_G(v)$ since $N_G(v)\subseteq X\cup C$, so $v$ is simplicial in $G$ if and only if $v$ is simplicial in $G'$. Hence $L_1' = L_1\cap C$. Continuing by induction, we can see that at each time step, the same set of vertices evaporates from $C$ in both $G_i$ and $G_i'$, and the vertices of $X$ never evaporate. More precisely, for all $1\le i\le s$, and for all $v\in C$ that have not yet evaporated, we have $N_{G_i'}(v) = N_{G_i}(v)$, and thus $L_i' = L_i\cap C$. Therefore, $C$ has the same evaporation time in both $G$ and $G'$.
\end{proof}

\begin{lemma}
\label{lemma:commute_g}
Suppose $G$ is a connected chordal graph that contains a clique $X$, and suppose $C_1,\ldots,C_r$ are connected components of $G\setminus X$. Then for every $j\in[r]$, the evaporation time of $C_j$ in $G$ is equal to the evaporation time of $C_j$ in $G[X\cup C_1\cup\cdots\cup C_r]$, assuming both have exception set $X$.
\end{lemma}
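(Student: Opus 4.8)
The plan is to reduce \cref{lemma:commute_g} to the single-component case, which is exactly \cref{lemma:commute_g_1}. Fix $j \in [r]$, and set $G' = G[X \cup C_1 \cup \cdots \cup C_r]$. Since $C_j$ is a connected component of $G \setminus X$, and $X$ is a clique in both $G$ and $G'$, I want to argue that $C_j$ is also a connected component of $G' \setminus X$. This is immediate: $G' \setminus X = C_1 \cup \cdots \cup C_r$ as an induced subgraph (a disjoint union of those components, with no edges between distinct $C_i$'s since each is a component of $G \setminus X$), so $C_j$ is one of its connected components.

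Next I would apply \cref{lemma:commute_g_1} twice. First, applied to the graph $G$ with clique $X$ and component $C_j$ of $G \setminus X$: the evaporation time of $C_j$ in $G$ (with exception set $X$) equals the evaporation time of $G[X \cup C_j]$ (with exception set $X$). Second, applied to the graph $G'$ with clique $X$ and component $C_j$ of $G' \setminus X$: the evaporation time of $C_j$ in $G'$ (with exception set $X$) equals the evaporation time of $G'[X \cup C_j]$ (with exception set $X$). But $G'[X \cup C_j] = G[X \cup C_j]$, since $X \cup C_j \subseteq X \cup C_1 \cup \cdots \cup C_r = V(G')$ and taking induced subgraphs is transitive. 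Hence the two evaporation times agree, which is what we wanted.

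One small point to verify before invoking \cref{lemma:commute_g_1} on $G'$ is that $G'$ is connected, since that lemma assumes a connected chordal graph. If $G'$ is not connected this can be patched: \cref{lemma:commute_g_1}'s proof only uses connectivity of $G$ to know the structure around $C_j$, and in fact the key identity $N_{G_i'}(v) = N_{G_i}(v)$ for $v \in C_j$ holds regardless, because $C_j$ is a component of both $G \setminus X$ and $G' \setminus X$ and $X$ is a clique in both. So either we note that $G'$ is connected whenever it needs to be in the application, or we observe that the argument of \cref{lemma:commute_g_1} goes through verbatim with ``connected'' dropped, as long as $C$ is a component of $G \setminus X$. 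I expect this connectivity bookkeeping to be the only mild obstacle; the rest is a direct ``take induced subgraphs is transitive'' observation combined with the already-proved single-component lemma. No heavy machinery is needed — the statement is essentially a corollary of \cref{lemma:commute_g_1}.
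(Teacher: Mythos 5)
Your proof is correct and matches the paper's own argument exactly: the paper also derives \cref{lemma:commute_g} by applying \cref{lemma:commute_g_1} twice, once to $G$ and once to $G[X\cup C_1\cup\cdots\cup C_r]$, with the same component $C_j$. Your extra remark on connectivity of $G'$ is a reasonable bit of bookkeeping (and indeed harmless, since each $C_i$ has a neighbor in the clique $X$ whenever $X\neq\emptyset$, making $G'$ connected), but otherwise the route is the same as the paper's.
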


\begin{proof}
For each $j\in[r]$, the statement for $C_j$ follows immediately from two applications of \cref{lemma:commute_g_1}: first with the graph $G$ and the component $C_j$, and then with the graph $G[X\cup C_1\cup\cdots\cup C_r]$ and the component $C_j$.
\end{proof}

\begin{lemma}
\label{lemma:commute_f_1}
Suppose $G$ is a connected chordal graph that evaporates at time $t\ge 2$ with exception set $X$. Let $L = L_G(X)$, and suppose $X\cup L$ is a clique. Let $A$ be the set of vertices in connected components $C$ of $G\setminus(X\cup L)$ such that $C$ evaporates at time exactly $t-1$ in $G$ with exception set $X$, and let $B$ be the set of vertices of all other components of $G\setminus(X\cup L)$. Let $G_1 = G[X\cup L\cup A]$ and $G_2 = G[X\cup L\cup B]$. Then the following statements hold:
\begin{enumerate}
    \item $G_1$ evaporates at time $t$, every component of $G_1\setminus (X\cup L)$ evaporates at time $t-1$ in $G_1$, and we have $L_{G_1}(X) = L$, when $X$ is the exception set. Furthermore, $A$ is nonempty.
    \item $G_2$ evaporates in time at most $t-2$ with exception set $X\cup L$.
\end{enumerate}
\end{lemma}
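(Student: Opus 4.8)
The plan is to track the evaporation process of $G_1$ and of $G_2$ step by step, comparing it against the evaporation of $G$ itself. Let $L_1,\dots,L_t$ be the evaporation sequence of $G$ with exception set $X$, so that $L_t=L$, each $L_i$ is nonempty, and $X,L_1,\dots,L_t$ partition $V(G)$; for $i\in[t]$ set $G_i=G\setminus(L_1\cup\dots\cup L_{i-1})$, so $V(G_i)\setminus(X\cup L)=L_i\cup\dots\cup L_{t-1}$. Two elementary facts will do most of the work. First, since $A$ and $B$ are unions of connected components of $G\setminus(X\cup L)$, there is no edge of $G$ with one endpoint in $A$ and the other in $B$; in particular $N_G(v)\subseteq X\cup L\cup A$ for every $v\in A$, and $N_G(v)\subseteq X\cup L\cup B$ for every $v\in B$. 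Second, every vertex of $L_{t-1}$ lies in some component of $G\setminus(X\cup L)$ whose evaporation time with exception set $X$ is at least $t-1$, hence exactly $t-1$ (it cannot be larger, as such a component avoids $X\cup L_t$); therefore $L_{t-1}\subseteq A$, and since $t\ge 2$ gives $L_{t-1}\ne\emptyset$, the set $A$ is nonempty.

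For statement (1), I would prove by induction on $i$ that when the evaporation process is run on $G_1=G[X\cup L\cup A]$ with exception set $X$, the state just before step $i$ is $G[X\cup L\cup A_i]$, where $A_i\coloneqq A\cap(L_i\cup\dots\cup L_{t-1})$, and step $i$ removes exactly $L_i\cap A$ for $1\le i\le t-1$. For the inductive step, take any vertex $v$ of $G[X\cup L\cup A_i]$ lying outside $X$, i.e.\ $v\in L\cup A_i$. If $v\in A_i\subseteq A$, then by the first fact $v$ has no neighbor in $B$, so its neighborhood in $G[X\cup L\cup A_i]$ equals its neighborhood in $G_i$; hence $v$ is simplicial in $G[X\cup L\cup A_i]$ if and only if it is simplicial in $G_i$, i.e.\ if and only if $v\in L_i$. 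If $v\in L=L_t$, then $v$ is not simplicial in $G_{t-1}$ (it evaporates only at time $t$), so $N_{G_{t-1}}(v)$ is not a clique; but $N_{G_{t-1}}(v)\subseteq X\cup L_{t-1}\cup L\subseteq X\cup A_i\cup L$ since $L_{t-1}\subseteq A_i$, so every vertex of $N_{G_{t-1}}(v)$ survives as a neighbor of $v$ in $G[X\cup L\cup A_i]$; as a superset of a non-clique is a non-clique, $N_{G[X\cup L\cup A_i]}(v)$ is not a clique and $v$ is not simplicial there. Thus step $i$ removes exactly $L_i\cap A$, leaving $G[X\cup L\cup A_{i+1}]$. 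Since $\bigcup_{i=1}^{t-1}(L_i\cap A)=A$, after step $t-1$ the state is $G[X\cup L]$, which is a clique by hypothesis, so step $t$ removes $L$ (nonempty) and then only $X$ remains. Hence $G_1$ evaporates at time exactly $t$ with exception set $X$ and $L_{G_1}(X)=L$; reading off the sequence $L_1\cap A,\dots,L_{t-1}\cap A,L$, each connected component $C$ of $G_1\setminus(X\cup L)=G[A]$ has evaporation time in $G_1$ equal to the largest $i$ with $L_i\cap C\ne\emptyset$ (necessarily $i\le t-1$), which is its evaporation time in $G$, namely $t-1$ by the definition of $A$; and $A$ is nonempty by the second fact above.

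For statement (2), the same bookkeeping applied to $G_2=G[X\cup L\cup B]$ with exception set $X\cup L$ shows that step $i$ of its evaporation removes exactly $L_i\cap B$ (for each $v\in B$ the first fact again makes $v$'s neighborhood in the current state equal its neighborhood in $G_i$, and now the vertices of $L$ sit inside the exception set, so they never interfere). By the definition of $B$, every component of $G\setminus(X\cup L)$ forming $B$ has evaporation time at most $t-2$ with exception set $X$, so $L_i\cap B=\emptyset$ for all $i\ge t-1$, and the process terminates within $t-2$ steps; hence $G_2$ evaporates in time at most $t-2$ with exception set $X\cup L$. (Alternatively, statement (2) follows from \cref{lemma:commute_g} applied to $G$ with the clique $X\cup L$ and the components forming $B$, once one observes that the evaporation sequence of $G$ with exception set $X\cup L$ is $L_1,\dots,L_{t-1}$, so moving those vertices into the exception set does not change any component's evaporation time.)

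The step I expect to be the main obstacle is the case $v\in L$ in the induction for (1): a priori, deleting the components that form $B$ could strip away enough of $v$'s neighborhood to make $v$ simplicial too early, which would shorten the evaporation sequence of $G_1$ and destroy the equality $L_{G_1}(X)=L$. The point that makes it work is that at step $t-1$ of $G$ itself, all of $v$'s ``obstruction'' already lives inside $X\cup L_{t-1}\cup L$, and $L_{t-1}$ belongs to $A$, not $B$; so deleting $B$ never touches the witnesses to $v$'s non-simpliciality, and $v$ stays non-simplicial through step $t-1$ in $G_1$ as well.
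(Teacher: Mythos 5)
Your proof is correct and follows essentially the same route as the paper's: you track the evaporation of $G_1$ (resp.\ $G_2$) step by step against that of $G$, with the crux being that the witnesses to the non-simpliciality of vertices of $L$ lie in $X\cup L_{t-1}\cup L$ and $L_{t-1}\subseteq A$, so deleting $B$ never removes them, while for (2) the exception set $X\cup L$ freezes exactly the vertices whose neighborhoods change. The only point left implicit --- that each $L_i\cap A$ for $i\le t-1$ is nonempty, so that the evaporation sequence of $G_1$ really has length $t$ rather than collapsing --- follows immediately from the well-definedness of evaporation sequences together with your identification of the simplicial vertices at each step, which matches the paper's own level of detail.
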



\begin{proof}
To prove (1), we proceed using an argument similar to \cref{lemma:commute_g_1}. In that lemma, the induction step relied on the fact that the vertices of $X$ never evaporate. For a similar argument to hold here, we need to show that at time steps $1,\ldots,t-1$ in $G_1$, none of the vertices of $L$ evaporate. Let $L_1,\ldots,L_t$ be the evaporation sequence of $G$ with exception set $X$, and let $G_i = G\setminus(L_1\cup\cdots\cup L_{i-1})$ for $i\in[t]$. Similarly, let $L_1^{(1)},\ldots,L_s^{(1)}$ be the evaporation sequence of $G_1$ with exception set $X$, and let $G_i^{(1)} = G_1\setminus(L_1^{(1)}\cup\cdots\cup L_{i-1}^{(1)})$ for $i\in[s]$. None of the vertices of $L$ are simplicial in $G_{t-1}$. This means that every vertex $u\in L$ has two distinct non-adjacent neighbors $v_1,v_2\in A$ that belong to $G_{t-1}$, since the other components of $G\setminus (X\cup L)$ besides those in $A$ have already evaporated before we reach $G_{t-1}$. The existence of $v_1$ and $v_2$ (since $L$ is nonempty) implies that $A$ is nonempty.

By an argument similar to \cref{lemma:commute_g_1}, for every $u\in L$, its neighbors $v_1$ and $v_2$ do not evaporate in the first $t-2$ steps of the evaporation sequence of $G_1$, i.e., they still exist in $G_{t-1}^{(1)}$. Furthermore, the vertices that have evaporated from $A$ in each time step to create $G_t$ are the same as those for $G_t^{(1)}$. Therefore, every component of $G_1\setminus (X\cup L)$ evaporates at time $t-1$ in $G_1$. In the end, $G_t^{(1)}$ is the clique $X\cup L$, so all of $L$ is now simplicial, which yields $L_{G_1}(X) = L$.

To prove (2), we again use an argument similar to \cref{lemma:commute_g_1}. In both $G$ and $G_2$ (with the specified exception sets), none of the vertices of $X\cup L$ evaporate before time step $t$. Therefore, in the first $t-2$ time steps, the evaporation behavior of $G_2$ is exactly the same as it was for the corresponding vertices in $G$, which evaporate in time at most $t-2$.
\end{proof}

\begin{lemma}
\label{lemma:commute_f_2}
Let $G_1$ and $G_2$ be connected chordal graphs such that $V(G_1)\cap V(G_2) = X\cup L$, where $L = L_{G_1}(X)$ and $X\cup L$ is a clique. Suppose $G_1$ evaporates at time $t\ge 2$ with exception set $X$, and suppose every connected component of $G_1\setminus(X\cup L)$ evaporates at time exactly $t-1$ in $G_1$ with exception set $X$. Also, suppose every connected component of $G_2\setminus(X\cup L)$ evaporates in time at most $t-2$ in $G_2$ with exception set $X\cup L$. If we glue $G_1$ and $G_2$ together at $X\cup L$, then the resulting graph $G$ evaporates at time $t$ with exception set $X$, and $L_G(X) = L$.
\end{lemma}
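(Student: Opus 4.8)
The plan is to mirror the structure of \cref{lemma:commute_f_1}, but in the reverse direction: instead of splitting $G$ into $G_1$ and $G_2$, we start from $G_1$ and $G_2$ and track the evaporation sequence of their gluing $G$. First I would note that $G$ is chordal by \cref{lemma:gluing}, since $X \cup L$ is a clique in both $G_1$ and $G_2$. The central claim to establish is that the first $t-1$ steps of the evaporation sequence of $G$ with exception set $X$ restrict, on the $G_1$-side, to exactly the first $t-1$ steps of the evaporation sequence of $G_1$, and on the $G_2$-side, they completely evaporate $B := V(G_2) \setminus (X \cup L)$; and that at step $t$, the set $L$ (which is now simplicial) evaporates, so $L_G(X) = L$ and the evaporation time is exactly $t$.

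The key structural fact I would use repeatedly is that for $v \in V(G_1) \setminus (X\cup L)$, we have $N_G(v) = N_{G_1}(v) \subseteq V(G_1)$, and symmetrically for $v \in B$, since the only shared vertices $X \cup L$ form a clique and there are no edges between $V(G_1)\setminus(X\cup L)$ and $B$. Hence such a vertex is simplicial in the current induced subgraph of $G$ if and only if it is simplicial in the corresponding induced subgraph of $G_1$ (resp.\ $G_2$), provided none of the vertices of $X \cup L$ have evaporated yet. So I would proceed by induction on the time step $i \in [t-1]$: assuming that through step $i-1$ exactly the vertices of $L_1^{(1)} \cup \cdots \cup L_{i-1}^{(1)}$ have left the $G_1$-side and the corresponding vertices have left the $G_2$-side, and that all of $X \cup L$ remains, I argue that the simplicial vertices of $G$ at step $i$ are precisely (simplicial vertices of $G_1$ at step $i$) $\cup$ (simplicial vertices of $G_2$ at step $i$), minus $X$. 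The one point that needs care is showing $L$ does not evaporate before step $t$: this is where I use that every component of $G_1 \setminus (X\cup L)$ evaporates at time exactly $t-1$ in $G_1$, so at the start of step $t-1$ each $u \in L$ still has two non-adjacent neighbors in the $G_1$-side (exactly as in the proof of \cref{lemma:commute_f_1}), hence $u$ is not simplicial in $G$ at step $t-1$ or earlier. Conversely, after step $t-1$, the $G_1$-side has been reduced to $X\cup L$ and (by the hypothesis on $G_2$, which evaporates in time at most $t-2$) the $B$-side is entirely gone, so $G$ has been reduced to the clique $X \cup L$; thus every vertex of $L$ is simplicial and evaporates at step $t$, giving $L_G(X) = L$ and evaporation time exactly $t$.

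The main obstacle I anticipate is the bookkeeping in the induction step: one must verify that a vertex $v$ on the $G_1$-side that is simplicial in $G_1$ at step $i$ is still simplicial in $G$ at step $i$ — i.e., that adding the $B$-side vertices does not create a non-adjacent pair in $N_G(v)$. This follows because $N_G(v) = N_{G_1}(v)$ exactly (no $B$-vertex is adjacent to $v$), so no new neighbors are introduced; the subtlety is purely in phrasing the invariant precisely enough (``all of $X\cup L$ still present, and the evaporated sets on each side match the respective evaporation sequences'') that this observation closes the induction cleanly. A secondary point is confirming that no vertex of $X\cup L$ becomes simplicial and evaporates prematurely: vertices of $X$ are in the exception set so they are immune by definition, and vertices of $L$ are handled by the two-non-adjacent-neighbors argument above. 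Once the invariant is set up, each step is a one-line application of the ``$N_G(v) = N_{G_i}(v)$'' observation, exactly as in \cref{lemma:commute_g_1} and \cref{lemma:commute_f_1}.
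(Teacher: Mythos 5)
Your proposal is correct and follows essentially the same route as the paper: show that all of $X\cup L$ survives until time $t$, deduce that the evaporation behavior of each side of the gluing is unchanged from that of $G_1$ and $G_2$, and conclude that $L$ evaporates exactly at step $t$ once $G$ has been reduced to the clique $X\cup L$. The only (harmless) difference is that the paper justifies the survival of $L$ by observing that adding vertices can only increase evaporation times, whereas you re-derive this directly from the fact that each $u\in L$ retains two non-adjacent neighbors on the $G_1$-side through step $t-1$ --- the same argument the paper uses in \cref{lemma:commute_f_1}.
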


\begin{proof}
Adding more vertices to a graph can only increase the evaporation time of a given vertex. Therefore, when we attach $G_2$ to $G_1$ to obtain $G$, the resulting evaporation time of each vertex in $L$ is at least $t$ (with exception set $X$). This implies that the evaporation behavior of the vertices of $G_2\setminus(X\cup L)$ is the same in $G$ as it was in $G_2$ by an argument similar to \cref{lemma:commute_g_1}, since all of $X\cup L$ certainly stays alive until time $t$ in both $G_2$ (since it is the exception set) and $G$ (by the argument above).

Now we just need to show that the evaporation behavior of the vertices of $G_1$ is the same in $G$ as it was in $G_1$. When we attach $G_2$ to $G_1$, the only vertices of $G_1$ whose neighborhoods can change are those in $X\cup L$. The only way this could affect the evaporation behavior of $G_1$ would be by increasing the evaporation time of some vertices in $L$. However, we just showed that all of the vertices of $G_2\setminus(X\cup L)$ evaporate in time at most $t-2$. Therefore, just before time step $t$ (with exception set $X$), $G_1$ and $G$ are identical since by then, all of $G_2\setminus G_1$ has evaporated. Therefore, all of the vertices of $L$ evaporate at time $t$ in $G$, just as they did in $G_1$.
\end{proof}

\begin{lemma}
\label{lemma:commute_f_tilde_1}
Suppose $G$ is a connected chordal graph such that $X\cup L$ is a clique, where $L = L_G(X)$, and suppose every connected component of $G\setminus(X\cup L)$ evaporates at time exactly $t-1$ in $G$ with exception set $X$, where $t\ge 2$. Suppose $C_1,\ldots,C_r$ are connected components of $G\setminus(X\cup L)$. Then for every $j\in[r]$, $C_j$ evaporates at time exactly $t-1$ in $G[X\cup L\cup C_1\cup\cdots\cup C_r]$ with exception set $X\cup L$.
\end{lemma}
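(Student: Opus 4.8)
The plan is to reduce this claim to two applications of the already-established "$g$-flavored" gluing/restriction lemmas, exactly in the spirit of how \cref{lemma:commute_g} follows from \cref{lemma:commute_g_1}. First I would observe that the hypothesis "$X\cup L$ is a clique, $L = L_G(X)$, and every component of $G\setminus(X\cup L)$ evaporates at time exactly $t-1$ with exception set $X$" is precisely the setting in which the vertices of $X\cup L$ behave like an inert root: none of them evaporates before time $t$ (the vertices of $L$ are kept alive by components in $A$, and $X$ is the exception set). So from the point of view of tracking the evaporation of any component $C_j$ of $G\setminus(X\cup L)$, the set $X\cup L$ plays the role that $X$ played in \cref{lemma:commute_g_1}.

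Concretely, the key steps are: (i) Apply (the proof idea of) \cref{lemma:commute_g_1} with the graph $G$, the clique $X\cup L$ in place of $X$, and the component $C_j$: since for every $v\in C_j$ we have $N_G(v)\subseteq (X\cup L)\cup C_j$, the vertex $v$ is simplicial in $G$ iff it is simplicial in $G[(X\cup L)\cup C_j]$, and inductively the same vertices of $C_j$ evaporate at each time step in $G$ and in $G[(X\cup L)\cup C_j]$, with $X\cup L$ never evaporating. Hence $C_j$ evaporates at time exactly $t-1$ in $G[(X\cup L)\cup C_j]$ with exception set $X\cup L$. (ii) Apply the same argument again with the graph $G[X\cup L\cup C_1\cup\cdots\cup C_r]$ in place of $G$ and the clique $X\cup L$: again $N(v)$ for $v\in C_j$ is contained in $(X\cup L)\cup C_j$, so the evaporation of $C_j$ in $G[X\cup L\cup C_1\cup\cdots\cup C_r]$ equals its evaporation in $G[(X\cup L)\cup C_j]$, both with exception set $X\cup L$. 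Chaining (i) and (ii) gives that $C_j$ evaporates at time exactly $t-1$ in $G[X\cup L\cup C_1\cup\cdots\cup C_r]$ with exception set $X\cup L$, as desired.

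One subtlety I would make sure to address is that \cref{lemma:commute_g_1} as stated assumes the ambient graph is connected and that $C$ is a connected component of "the whole thing minus the clique." Here $G$ itself need not literally satisfy every hypothesis of \cref{lemma:commute_g_1} verbatim with the clique $X\cup L$ — e.g. $G\setminus(X\cup L)$ may have many components and $G$ is connected but we are deleting $X\cup L$, not $X$. However, the only properties the proof of \cref{lemma:commute_g_1} actually uses are (a) $C_j$ is a union of connected components of $G\setminus(X\cup L)$ so that $N_G(v)\subseteq (X\cup L)\cup C_j$ for $v\in C_j$, and (b) the clique $X\cup L$ is designated as the (inert) exception set so its vertices never evaporate. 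Both hold here, so the induction goes through unchanged. The cleanest write-up is therefore to restate the one-line induction (simpliciality of $v\in C_j$ is the same in $G$ and in the induced subgraph because its neighborhood is unchanged, and $X\cup L$ is frozen) rather than to invoke \cref{lemma:commute_g_1} as a black box.

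I do not expect any real obstacle here; this is a routine "restriction preserves evaporation" lemma. The only thing requiring a moment of care — the main (minor) obstacle — is bookkeeping the exception set correctly: it must be $X\cup L$ (not $X$) throughout, and one must note that the hypothesis that all components of $G\setminus(X\cup L)$ evaporate at time exactly $t-1$ with exception set $X$ is equivalent, for this purpose, to their evaporating at time exactly $t-1$ with exception set $X\cup L$, because adding the already-inert set $L$ to the exception set does not change anything once we have restricted to $X\cup L\cup C_j$ (where $L$ is simplicial-supported only through $A$-components, which, if $C_j\subseteq A$, are included, and if $C_j\not\subseteq A$ is irrelevant to whether $C_j$ itself evaporates at time $t-1$). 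So I would phrase the induction directly in terms of exception set $X\cup L$ from the start, which sidesteps this entirely.
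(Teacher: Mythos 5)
Your proposal is correct and follows essentially the same route as the paper: the paper simply invokes \cref{lemma:commute_g} with the clique $X\cup L$ in the role of the root (which is exactly your two applications of \cref{lemma:commute_g_1}, since that is how \cref{lemma:commute_g} is proved), and then notes, as you do, that the evaporation time of $C_j$ in $G$ is unchanged when the exception set is enlarged from $X$ to $X\cup L$ because $C_j$ evaporates before time $t$ while $L$ survives until time $t$ in any case.
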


\begin{proof}
Let $G' = G[X\cup L\cup C_1\cup\cdots\cup C_r]$. By \cref{lemma:commute_g}, the evaporation time of $C_j$ in $G$ with exception set $X\cup L$ is equal to the evaporation time of $C_j$ in $G'$ with the same exception set. Furthermore, the evaporation time of $C_j$ in $G$ is the same with either exception set ($X$ or $X\cup L$) since the evaporation time of $C_j$ in $G$ is less than $t$ when $X$ is the exception set.
\end{proof}

Suppose $G$ is a chordal graph with evaporation sequence $L_1,\ldots,L_t$ and exception set $X$, where $t\ge 2$. For a connected component $C$ of $G[L_{t-1}]$, since by \cref{obs:neighborhood} all vertices of $C$ have the same neighborhood outside of $C$ in $G_{t-1}$, we denote the neighborhood of $C$ in $X\cup L_t$ by $\hat N(C)\coloneqq N_{G_{t-1}}(v)\cap(X\cup L_t) = N_G(v)\cap(X\cup L_t)$ for all $v\in C$.

Given a universe $U$ and a subset $L\subseteq U$, we say a collection $\mathcal{S}$ of subsets of $U$ is a \emph{set cover} for $L$ if $\bigcup_{S\in\mathcal{S}} S\supseteq L$. For a chordal graph $G$ as described above, let $$\mathcal{S}_G = \{\hat N(C):\text{$C$ is a connected component of $G[L_{t-1}]$}\}.$$ Each set in $\mathcal{S}_G$ is a subset of the universe $X\cup L_t$.

\begin{lemma}
\label{lemma:set_cover}
Suppose $G$ is a chordal graph with evaporation sequence $L_1,\ldots,L_t$ and exception set $X$, where $t\ge 2$, and suppose $X\cup L_t$ is a clique. Then one of the following holds:
\begin{enumerate}
    \item There is at most one connected component $C$ of $G[L_{t-1}]$ such that $\hat N(C) = X\cup L_t$, and $\mathcal{S}_G\setminus\{X\cup L_t\}$ is a set cover for $L_t$.
    \item There are at least two distinct connected components $C_1,C_2$ of $G[L_{t-1}]$ such that $\hat N(C_i) = X\cup L_t$ for $i\in\{1,2\}$.
\end{enumerate}
\end{lemma}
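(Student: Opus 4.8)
The plan is to fix a chordal graph $G$ as in the hypothesis and reason about the vertices of $L_t$ and which components of $G[L_{t-1}]$ see them. Since $t \ge 2$, we know $L_{t-1}$ is the set of vertices that become simplicial once $L_1, \ldots, L_{t-2}$ have evaporated, and $L_t = L_G(X)$ is what remains after $L_{t-1}$ evaporates (so $V(G) = X \cup L_{t-1} \cup \cdots$ down through the layers, and $G_{t-1} = G[X \cup L_{t-1} \cup L_t]$, with $G_t = G[X \cup L_t]$). The dichotomy to establish is: either at least two components of $G[L_{t-1}]$ have $\hat N(C) = X \cup L_t$ (case 2), or at most one does and the remaining ones (those with $\hat N(C) \subsetneq X \cup L_t$) already cover $L_t$ (case 1). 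So the real content is: \emph{assuming at most one component is ``all-seeing,'' the proper-subset components cover $L_t$.}

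First I would unpack why no vertex $u \in L_t$ is simplicial in $G_{t-1}$: by definition of the evaporation sequence, $u$ does not evaporate at time $t-1$, so $N_{G_{t-1}}(u)$ is not a clique. Hence $u$ has two non-adjacent neighbors $a, b$ in $G_{t-1}$. Now $X \cup L_t$ is a clique by hypothesis, so $a$ and $b$ cannot both lie in $X \cup L_t$; thus at least one of them, say $a$, lies in $L_{t-1}$. Let $C$ be the component of $G[L_{t-1}]$ containing $a$; then $u \in \hat N(C)$. This shows every $u \in L_t$ lies in $\hat N(C)$ for \emph{some} component $C$ of $G[L_{t-1}]$ --- i.e., $\mathcal{S}_G$ is a set cover for $L_t$. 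The remaining task is to push this cover down to $\mathcal{S}_G \setminus \{X \cup L_t\}$ when there is at most one all-seeing component.

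The key step: suppose case 2 fails, so there is at most one component $C^*$ with $\hat N(C^*) = X \cup L_t$. I want to show $\mathcal{S}_G \setminus \{X \cup L_t\}$ still covers $L_t$, i.e., for each $u \in L_t$ there is a component $C$ with $\hat N(C) \ne X \cup L_t$ and $u \in \hat N(C)$. Fix $u \in L_t$ and take the two non-adjacent neighbors $a, b \in G_{t-1}$ as above. I'll argue that $u$ witnesses membership in the neighborhood of a \emph{proper-subset} component. The clean way: consider the neighbor, say $a \in L_{t-1}$ (established above), with component $C_a$. If $\hat N(C_a) \subsetneq X \cup L_t$ we are done. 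Otherwise $\hat N(C_a) = X \cup L_t$, so $C_a = C^*$ (the unique all-seeing component). But then I need another component seeing $u$. Here I use $b$: since $a \in C^* $ and $\hat N(C^*) = X \cup L_t$, in particular $a$ is adjacent to every vertex of $X \cup L_t$; since $a$ and $b$ are non-adjacent, $b \notin X \cup L_t$, so $b \in L_{t-1}$ as well, with component $C_b$. Moreover $C_b \ne C_a$: if $C_b = C_a = C^*$, then by \cref{obs:neighborhood} applied in $G_{t-1}$ the vertices $a$ and $b$ of the same component of $G[L_{t-1}]$ satisfy $N_{G_{t-1}}(a) \setminus \{b\} = N_{G_{t-1}}(b) \setminus \{a\}$, which combined with $a, b$ non-adjacent would force $a \notin N_{G_{t-1}}(b)$ yet... actually the cleanest contradiction: \cref{obs:cliques} says every connected component of $G[L_{t-1}]$ is a clique, so $a$ and $b$ in the same component would be adjacent --- contradiction. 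Hence $C_b \ne C_a = C^*$, so $\hat N(C_b) \ne X \cup L_t$ (uniqueness of the all-seeing component), and $u \in \hat N(C_b)$ since $b \in C_b$ is adjacent to $u$. This gives the desired proper-subset component covering $u$, completing case 1.

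The main obstacle I anticipate is bookkeeping the exception set correctly when invoking \cref{obs:neighborhood} and \cref{obs:cliques}: those observations are stated for a chordal graph evaporating with a given exception set, and here I am applying them to $G$ with exception set $X$ at layer $i = t-1$, where $G_{t-1}$ plays the role of the current graph --- I should double-check that $G[L_{t-1}]$ components being cliques and the common-neighborhood property both hold in this setting (they do, since \cref{obs:cliques} and \cref{obs:neighborhood} are stated exactly for $i \in [t]$). A secondary subtlety is making sure the argument ``$b \in L_{t-1}$'' is airtight: $b$ is a neighbor of $u$ in $G_{t-1} = G[X \cup L_{t-1} \cup L_t]$, and $b \notin X \cup L_t$ because $X \cup L_t$ is a clique containing $a$ (all of whose members are mutually adjacent, yet $b \not\sim a$), so indeed $b \in L_{t-1}$. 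Everything else is routine.
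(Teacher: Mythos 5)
Your proposal is correct and takes essentially the same route as the paper: both arguments hinge on the fact that a vertex of $L_t$ is not simplicial in $G_{t-1}$, together with \cref{obs:cliques} and the uniform component neighborhoods behind the definition of $\hat N(C)$. The paper phrases it as a contradiction (an uncovered vertex of $L_t$ would have its neighborhood inside the clique $X\cup L_t\cup C$, hence be simplicial), while you run the contrapositive directly via two non-adjacent neighbors, which is the same idea.
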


\begin{proof}
We show that if there is at most one connected component of $G[L_{t-1}]$, say $C$, such that $\hat N(C) = X\cup L_t$, then $\mathcal{S}_G\setminus\{X\cup L_t\}$ is a set cover for $L_t$. If $C$ does not exist, then we let $C = \emptyset$ to simplify notation. Suppose to the contrary that there is some vertex $v\in L_t$ that is not covered by any neighborhood $S\subsetneq X\cup L_t$ in $\mathcal{S}_G$. If there is a component $C'$ of $G[L_{t-1}]$ such that $v\in\hat N(C')$, then we must have $\hat N(C') = X\cup L_t$. Therefore, there is at most one such component $C'$, and if one exists we have $C' = C$. Hence $N_{G_{t-1}}(v) = (X\cup L_t\cup C)\setminus\{v\}$. Since $C$ is a clique by \cref{obs:cliques}, $X\cup L_t\cup C$ is a clique. This means $v$ is simplicial in $G_{t-1}$, which contradicts $v\in L_t$.
\end{proof}

\begin{lemma}
\label{lemma:commute_f_tilde_2}
Suppose $G$ is a connected chordal graph that evaporates at time $t\ge 2$ with exception set $X$. Let $L = L_G(X)$, and suppose $X\cup L$ is a clique. Furthermore, suppose every connected component of $G\setminus(X\cup L)$ evaporates at time exactly $t-1$ in $G$ with exception set $X$. Suppose exactly one connected component $C$ of $G\setminus(X\cup L)$ sees all of $X\cup L$, and let $D$ be the set of vertices of all other components of $G\setminus(X\cup L)$. Let $G_2 = G[X\cup L\cup D]$. Then when $X$ is the exception set, we have the following: $G_2$ evaporates at time $t$, every component of $G_2\setminus(X\cup L)$ evaporates at time $t-1$ in $G_2$, and $L_{G_2}(X) = L$. Furthermore, $D$ is nonempty.
\end{lemma}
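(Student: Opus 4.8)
The plan is to establish all three claims about $G_2 = G[X\cup L\cup D]$ (it evaporates at time $t$, every component of $G_2\setminus(X\cup L)$ evaporates at time $t-1$ in $G_2$, and $L_{G_2}(X) = L$) together with the nonemptiness of $D$, by combining two previous lemmas with the set-cover structure from \cref{lemma:set_cover}. First I would observe that $D$ is nonempty: since $G$ has $t\ge 2$, we have $L_{t-1}\ne\emptyset$, so $G[L_{t-1}]$ has at least one connected component; by \cref{lemma:set_cover}, since we are in the ``exactly one all-seeing component'' case, case (1) applies, so $\mathcal{S}_G\setminus\{X\cup L_t\}$ is a set cover for $L_t = L$. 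If $D$ were empty, then $C$ would be the unique component of $G\setminus(X\cup L)$ and $\mathcal{S}_G$ would consist only of the set $X\cup L$ (plus possibly smaller sets from components of $G[L_{t-1}]$ entirely inside $C$ — but those have $\hat N(C')\subseteq C$, hence contribute nothing toward covering $L$); then $\mathcal{S}_G\setminus\{X\cup L\}$ could not cover the nonempty set $L$, a contradiction. (I would need to be a little careful here about the correspondence between components of $G\setminus(X\cup L)$ and components of $G[L_{t-1}]$, using \cref{lemma:neighborhood_2} to relate $\hat N$-neighborhoods to component neighborhoods; see the obstacle below.)

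Next, for the structural claims about $G_2$, the key point is that $G_2$ is exactly the graph obtained by gluing $G[X\cup L\cup A']$-type pieces — more precisely, I want to invoke \cref{lemma:commute_f_1} with $G$ as given. In the language of that lemma, with $t\ge 2$ and exception set $X$, the set $A$ of vertices in components evaporating at time exactly $t-1$ is all of $V(G)\setminus(X\cup L)$ here (by hypothesis every component of $G\setminus(X\cup L)$ evaporates at time exactly $t-1$), and $B=\emptyset$. That is not quite the decomposition I need. Instead, I would reverse the roles: view $G$ as glued from $G_2 = G[X\cup L\cup D]$ and $G[X\cup L\cup C]$ at the clique $X\cup L$, and apply the reasoning of \cref{lemma:commute_g_1}/\cref{lemma:commute_f_tilde_1}. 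Concretely, for each component $C'\subseteq D$, its neighborhood stays inside $X\cup L\cup C'$, so by \cref{lemma:commute_g} (applied with exception set $X\cup L$) the evaporation time of $C'$ is the same in $G_2$ as in $G$, namely $t-1$. Then I would argue, as in \cref{lemma:commute_f_1}(1), that none of the vertices of $L$ evaporate before time $t$ in $G_2$: every $u\in L$ fails to be simplicial in $G_{t-1}$ because it has two non-adjacent neighbors; I must check that at least one such non-adjacent pair survives into $G_2$. This is where the set cover is used — since $\mathcal{S}_G\setminus\{X\cup L\}$ covers $L$, every $u\in L$ lies in $\hat N(C')$ for some component $C'$ of $G[L_{t-1}]$ with $\hat N(C')\subsetneq X\cup L$; such $C'$ lives inside some component of $D$ (not inside $C$, since $\hat N\ne X\cup L$), so $u$ has a neighbor $v\in C'\subseteq D$ and a non-neighbor $w\in(X\cup L)\setminus\hat N(C')$ that is adjacent to $v$, and both survive in $(G_2)_{t-1}$. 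Hence $L$ does not evaporate early, and by the inductive argument of \cref{lemma:commute_g_1} the evaporation of $D$ in $G_2$ mimics that in $G$; just before time $t$, $(G_2)_t = X\cup L$ becomes a clique so $L$ evaporates exactly at time $t$, giving $L_{G_2}(X)=L$ and evaporation time $t$.

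The main obstacle I anticipate is the bookkeeping connecting the two notions of ``component'': components of $G\setminus(X\cup L)$ versus connected components of $G[L_{t-1}]$, and their $\hat N$-neighborhoods versus full neighborhoods into $X\cup L$. The phrase ``$C$ sees all of $X\cup L$'' refers to a component of $G\setminus(X\cup L)$, while \cref{lemma:set_cover} speaks of $\hat N(C')$ for components of $G[L_{t-1}]$; I need that a component $K$ of $G\setminus(X\cup L)$ sees all of $X\cup L$ if and only if some component of $G[L_{t-1}]$ inside $K$ has $\hat N = X\cup L$ — which follows from \cref{lemma:neighborhood_2} applied within $G[X\cup L\cup K]$ (the neighborhood of $K$ into $X\cup L$ equals the neighborhood of $L_{t-1}\cap K$, and the latter is the union of $\hat N$ over the components of $G[L_{t-1}]$ contained in $K$). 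Once this dictionary is in place, the ``exactly one all-seeing component'' hypothesis translates into exactly one component $K=C$ of $G\setminus(X\cup L)$ producing the full set $X\cup L$, all such full-neighborhood components of $G[L_{t-1}]$ being inside that one $C$, and case (2) of \cref{lemma:set_cover} being excluded — so case (1) holds and the set-cover argument goes through. Everything else is a routine adaptation of the induction in \cref{lemma:commute_g_1} and the monotonicity-of-evaporation-time observation used in \cref{lemma:commute_f_1,lemma:commute_f_2}.
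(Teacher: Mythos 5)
Your overall skeleton is the same as the paper's: translate the hypothesis into \cref{lemma:set_cover} to get that the proper neighborhoods in $\mathcal{S}_G\setminus\{X\cup L\}$ cover $L$, observe that these covering sets survive in $G_2$, run a ``first vertex of $L$ to evaporate early'' contradiction to show the evaporation behavior is preserved, and get $D\ne\emptyset$ because a nonempty $L$ needs a nonempty cover. The genuine gap is in the dictionary step that you yourself flag as the main obstacle: your union-of-$\hat N$ formulation is too weak. What is actually needed (and what the paper proves, by an argument similar to \cref{obs:conn} together with \cref{lemma:neighborhood_2} applied with exception set $X\cup L$) is that for each component $K$ of $G\setminus(X\cup L)$ the set $K\cap L_{t-1}$ is \emph{connected} — since $K$ evaporates by time $t-1$ and what remains of $K$ stays connected at every step — so the components of $G[L_{t-1}]$ are in bijection with the components of $G\setminus(X\cup L)$ and $\hat N(K\cap L_{t-1}) = N(K)\cap(X\cup L)$. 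With only ``$N(K)\cap(X\cup L)$ is the union of the $\hat N$'s of the components of $G[L_{t-1}]$ inside $K$,'' the forward direction of your claimed ``iff'' fails (several proper $\hat N$'s could union to $X\cup L$), and more importantly you cannot exclude case (2) of \cref{lemma:set_cover}: two full-$\hat N$ components of $G[L_{t-1}]$ could a priori both sit inside the single all-seeing $C$, in which case there is no set cover and the whole argument that $L$ survives in $G_2$ collapses. The same missing fact is used silently two more times: when you claim a component $C'$ with $\hat N(C')\subsetneq X\cup L$ must lie in $D$ ``since $\hat N\ne X\cup L$'' (a priori $C$ could contain proper-$\hat N$ components, and then the covering sets would not all be present in $G_2$), and in your $D=\emptyset$ argument, where the parenthetical ``those have $\hat N(C')\subseteq C$'' is false by definition, since $\hat N(C')\subseteq X\cup L$; a proper-$\hat N$ component inside $C$ would contribute to the cover. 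Once the one-component-per-$K$ correspondence is established, all of these steps become correct and your proof coincides with the paper's.

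Two smaller repairs. The sentence producing the non-simplicial witness is garbled: $w\in(X\cup L)\setminus\hat N(C')$ is a \emph{neighbor} of $u$ (because $X\cup L$ is a clique and $w\ne u$) and a non-neighbor of every $v\in C'$ (by \cref{obs:neighborhood}); it is the pair $v,w$ of neighbors of $u$ that is non-adjacent. Also, do not assert outright that $v$ ``survives in $(G_2)_{t-1}$,'' which presupposes the conclusion that the evaporation of $D$ in $G_2$ mimics that in $G$; instead argue as the paper does, taking $u\in L$ to be the \emph{first} vertex of $L$ to evaporate early, at some time $s<t$, so that up to just before time $s$ no vertex of $L$ has evaporated, the \cref{lemma:commute_g_1}-style induction applies, all of $L_{t-1}\cap D$ is still present, and then $v,w$ witness that $u$ is not simplicial at time $s$.
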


\begin{proof}
Let $L_1,\ldots,L_t$ be the evaporation sequence of $G$ with exception set $X$. By \cref{lemma:neighborhood_2} (with exception set $X\cup L$ rather than $X$), we have $N(C) = N(L_{t-1}\cap C)\cap(X\cup L)$ for every connected component $C'$ of $G\setminus(X\cup L)$. Furthermore, the connected components of $G\setminus(X\cup L)$ each intersected with $L_{t-1}$ are the same as the connected components of $G[L_{t-1}]$ by an argument similar to \cref{obs:conn}. Exactly one component of $G\setminus(X\cup L)$ sees all of $X\cup L$, so this means that exactly one component of $G[L_{t-1}]$ sees all of $X\cup L$. By \cref{lemma:set_cover}, this implies that $\mathcal{S}_G\setminus\{X\cup L\}$ is a set cover for $L$. We observe that $\mathcal{S}_G\setminus\{X\cup L\} = \mathcal{S}_{G_2}\setminus\{X\cup L\}$ since we only removed the component $C$ whose neighborhood is $\hat N(C) = X\cup L$ from $G$ when defining $G_2$. Therefore, $\mathcal{S}_{G_2}\setminus\{X\cup L\}$ is also a set cover for $L$. This means $D$ must be nonempty since $L$ is nonempty.

As long as none of the vertices of $L$ evaporate before time $t$ in $G_2$, then we can see by an argument similar to \cref{lemma:commute_g_1} that the evaporation behavior of $G_2$ is the same as it was for the corresponding vertices in $G$ (both with exception set $X$). Suppose for a contradiction that $u\in L$ is the first vertex in $L$ to evaporate before time $t$ in $G_2$. Suppose this happens at time $s$, where $s<t$. By an argument similar to \cref{lemma:commute_g_1}, since none of $L$ had evaporated before that point, all of $L_{t-1}$ still exists in $G_2$ just before time $s$. The vertex $u$ is contained in some neighborhood from $\mathcal{S}_{G_2}\setminus\{X\cup L\}$ since this is a set cover. Let $v$ be a vertex in $L_{t-1}$ from the component of $G_2\setminus(X\cup L)$ corresponding to that neighborhood. Since that neighborhood is a proper subset of $X\cup L$, there is also some vertex $w\in X\cup L$ adjacent to $u$ that does not belong to that neighborhood. Hence $v$ and $w$ are not adjacent, so $u$ is not simplicial at time $s$, which is a contradiction.

Therefore, the evaporation behavior of $G_2$ is the same as it was for the corresponding vertices in $G$, so $G_2$ evaporates at time $t$, every component of $G_2\setminus(X\cup L)$ evaporates at time $t-1$, and $L_{G_2}(X) = L$.
\end{proof}

\begin{lemma}
\label{lemma:commute_f_tilde_3}
Let $G_1$ and $G_2$ be connected chordal graphs such that $V(G_1)\cap V(G_2) = \hat X$, where $\hat X$ is a clique. Suppose $G_1\setminus\hat X$ has at least two connected components, both of which see all of $\hat X$. Suppose every connected component of $G_1\setminus\hat X$ (resp.\ $G_2\setminus\hat X$) evaporates at time exactly $t-1$ in $G_1$ (resp.\ $G_2$) with exception set $\hat X$, where $t\ge 2$. Let $X\subsetneq\hat X$. If we glue $G_1$ and $G_2$ together at $\hat X$, then the resulting graph $G$ evaporates at time $t$ with exception set $X$, every connected component of $G\setminus\hat X$ evaporates at time $t-1$ in $G$, and $L_G(X) = \hat X\setminus X$.
\end{lemma}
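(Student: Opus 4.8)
The plan is to understand the evaporation of $G$ with exception set $X$ by comparing it to the evaporation of $G$ with the larger exception set $\hat X$: I will show that the two processes coincide for their first $t-1$ steps, after which only the clique $\hat X$ remains and $\hat X\setminus X$ evaporates in one final step.

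First I would dispense with the structural setup. By \cref{lemma:gluing}, $G$ is chordal; and $G$ is connected, since $\hat X=V(G_1)\cap V(G_2)$ is nonempty (deleting $\emptyset$ from the connected graph $G_1$ cannot create two components) and $G_1$, $G_2$ are connected graphs both containing $\hat X$. Because there are no edges between $V(G_1)\setminus\hat X$ and $V(G_2)\setminus\hat X$, the connected components of $G\setminus\hat X$ are exactly those of $G_1\setminus\hat X$ together with those of $G_2\setminus\hat X$. Now I would apply \cref{lemma:commute_g} to $G$ with clique $\hat X$, taking the listed components to be \emph{all} components of $G_1\setminus\hat X$ (so that $G[\hat X\cup\bigcup_j C_j]=G_1$), and symmetrically for $G_2$; this yields that every component of $G\setminus\hat X$ has the same evaporation time in $G$ as in $G_1$ (resp.\ $G_2$), namely exactly $t-1$, all with exception set $\hat X$, and (examining the proof of \cref{lemma:commute_g_1}) that its evaporation is layer-by-layer unchanged. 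In particular $G$ evaporates at time $t-1$ with exception set $\hat X$; let $L_1,\dots,L_{t-1}$ be this evaporation sequence, so that $G\setminus(L_1\cup\dots\cup L_{t-1})=\hat X$.

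The main step is to prove, by induction on $i\in\{1,\dots,t-1\}$, that the state of $G$ just before step $i$ is the same whether the exception set is $X$ or $\hat X$, and that step $i$ then removes the same layer in both cases. The base case is immediate, and the inductive step reduces to showing that no vertex $u\in\hat X\setminus X$ is simplicial in the common current graph $G_i\coloneqq G\setminus(L_1\cup\dots\cup L_{i-1})$, for then the set of vertices simplicial in $G_i$ but not in $X$ equals the set of those not in $\hat X$. This is where the hypothesis on $G_1$ enters: pick two components $C_1,C_2$ of $G_1\setminus\hat X$ that see all of $\hat X$. Each satisfies $N_G(C_j)=\hat X$ (it sees all of $\hat X$ and, being a component of $G\setminus\hat X$, has no neighbors outside $\hat X$), so \cref{lemma:neighborhood_2}, applied to $G$ with exception set $\hat X$, gives that already the last layer $L_{t-1}\cap C_j$ sees all of $\hat X$ and hence contains a neighbor $v_j$ of $u$. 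Since $L_{t-1}\cap C_j\subseteq G_i$ for every $i\le t-1$, both $v_1,v_2$ survive in $G_i$; they differ from $u$ and from each other, and they are non-adjacent because they lie in distinct components of $G\setminus\hat X$. Thus $u$ has two non-adjacent neighbors in $G_i$ and is not simplicial there, completing the induction.

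Finally I would read off the conclusions. After $t-1$ steps of the exception-set-$X$ process we are left with $\hat X$, a clique, so every vertex of $\hat X$ is simplicial; since $X\subsetneq\hat X$, the nonempty set $\hat X\setminus X$ evaporates at step $t$, leaving exactly the exception set $X$. Hence $G$ evaporates at time exactly $t$ with exception set $X$ and $L_G(X)=\hat X\setminus X$. Moreover, since the first $t-1$ layers coincide with those of the exception-set-$\hat X$ evaporation, under which every component of $G\setminus\hat X$ evaporates at time exactly $t-1$, the same holds with exception set $X$. I expect the delicate point to be the inductive step above: one must not use an arbitrary neighbor of $u$ inside $C_j$ (which might have evaporated already) but rather the last-layer vertices $L_{t-1}\cap C_j$, and it is exactly \cref{lemma:neighborhood_2} that guarantees these surviving vertices still see all of $\hat X$.
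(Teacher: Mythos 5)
Your proof is correct and follows essentially the same route as the paper's: the heart of both arguments is that a vertex $u\in\hat X\setminus X$ cannot become simplicial before time $t$ because it retains two non-adjacent neighbors coming from the last evaporation layers of the two all-seeing components of $G_1\setminus\hat X$, after which the clique $\hat X\setminus X$ evaporates in one final step. Your version merely makes explicit what the paper leaves implicit (the layer-by-layer induction in the style of \cref{lemma:commute_g_1}, and the appeal to \cref{lemma:neighborhood_2} to guarantee that the surviving last-layer vertices still see all of $\hat X$), which is a presentational refinement rather than a different approach.
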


\begin{proof}
When we glue $G_1$ and $G_2$ together (and change the exception set from $\hat X$ to $X$), as long as it is still the case that none of the vertices of $\hat X$ evaporate before time $t$, then we can see by an argument similar to \cref{lemma:commute_g_1} that the evaporation behavior of these two graphs does not change. Suppose for a contradiction that $u\in\hat X\setminus X$ is the first vertex in $\hat X$ to evaporate before time $t$ in $G$. Suppose this happens at time $s$, where $s<t$. Let $L_1,\ldots,L_t$ be the evaporation sequence of $G_1$ with exception set $\hat X$. Since $u$ is the first such vertex to evaporate, all of $L_{t-1}$ still exists in $G$ just before time $s$. Let $v$ and $w$ be neighbors of $u$ from $L_{t-1}$ that belong to two distinct components of $G_1\setminus\hat X$. Since $v$ and $w$ are not adjacent, this means $u$ is not simplicial at time $s$, which is a contradiction. Therefore, the evaporation behavior of the vertices of $G_1\setminus\hat X$ and $G_2\setminus\hat X$ in $G$ is the same as in $G_1$ and $G_2$. At time $t$, all of $\hat X\setminus X$ evaporates since $\hat X$ is a clique.
\end{proof}

\begin{lemma}
\label{lemma:commute_f_p_tilde_1}
Suppose $G$ is a connected chordal graph that evaporates at time $t\ge 2$ with exception set $X$. Let $L = L_G(X)$, and suppose $X\cup L$ is a clique. Furthermore, suppose every connected component of $G\setminus(X\cup L)$ evaporates at time exactly $t-1$ in $G$ with exception set $X$. Let $C$ be a component of $G\setminus(X\cup L)$, and let $L' = N(C)\cap L$. Let $D$ be the set of vertices of all other components of $G\setminus(X\cup L)$, and let $G_2 = G[X\cup L\cup D]$. If $L'\subsetneq L$, then when $X\cup L'$ is the exception set, we have the following: $G_2$ evaporates at time $t$, every component of $G_2\setminus(X\cup L)$ evaporates at time $t-1$ in $G_2$, and $L_{G_2}(X\cup L') = L\setminus L'$; furthermore, $D$ is nonempty.
\end{lemma}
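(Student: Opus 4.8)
The plan is to compare the evaporation of $G_2$ with exception set $X\cup L'$ step by step against the evaporation $L_1,\dots,L_t$ of $G$ with exception set $X$, in the spirit of the proof of \cref{lemma:commute_f_tilde_2}. The observation that makes this clean is that \emph{no active vertex of $G_2$ is adjacent to $C$}: a vertex of $D$ lies in a component of $G\setminus(X\cup L)$ other than $C$, and a vertex of $L\setminus L'$ is not in $N(C)$ by the definition of $L'=N(C)\cap L$. Consequently, for every $v\in(L\setminus L')\cup D$ we have $N_G(v)\subseteq X\cup L\cup D=V(G_2)$, so $N_{G_2}(v)=N_G(v)$; and since the vertices of $C$ are never neighbors of such a $v$, deleting them (as the process on $G$ does at various steps) has no effect on $v$'s neighborhood. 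Before starting, note $D\ne\emptyset$: otherwise every $u\in L\setminus L'$ (nonempty, as $L'\subsetneq L$) has $N_G(u)\subseteq X\cup L$, a clique, so $u$ is simplicial in $G$ and hence $u\in L_1$, contradicting $u\in L_t$ with $t\ge2$.

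Next comes the core induction. Since the sets $L_i$ are pairwise disjoint, miss $X$, and satisfy $L_t=L$, we have $L_i\subseteq C\cup D$ for all $i\le t-1$. I would prove, by induction on $i\in[1,t]$, that the first $i-1$ steps of the evaporation of $G_2$ (with exception set $X\cup L'$) delete exactly $L_1\cap D,\dots,L_{i-1}\cap D$, so in particular no vertex of $L\setminus L'$ has yet evaporated. Granting the hypothesis, the state of $G_2$ just before step $i$ is $G_2^{(i)}\coloneqq G\bigl[X\cup L\cup(D\setminus(L_1\cup\cdots\cup L_{i-1}))\bigr]$, and by the observation above, $N_{G_2^{(i)}}(v)=N_{G_i}(v)$ for every remaining active vertex $v$, because the vertices deleted in $G$ but not in $G_2$ lie in $C$, which is disjoint from $N_G(v)$. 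Since a vertex outside $X$ is simplicial in $G_i$ exactly when it lies in $L_i$, the only active vertices simplicial in $G_2^{(i)}$ are those of $L_i\cap D$ when $i<t$ --- and no vertex of $L\setminus L'$ qualifies, since it lies in $L_t$, not $L_i$. Moreover $L_i\cap D\ne\emptyset$ for each $i\le t-1$: every component of $G\setminus(X\cup L)$ lying in $D$ evaporates at time exactly $t-1$, so $D\cap L_{t-1}\ne\emptyset$ still lies in $G_2^{(i)}$, whence $V(G_2^{(i)})\supsetneq X\cup L'$, and the chordal graph $G_2^{(i)}$ must have a simplicial vertex outside its exception set (as in \cref{sec:evaporation}), which can only be in $L_i\cap D$. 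This closes the induction.

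Finishing is then immediate: after step $t-1$ all of $D$ has evaporated, leaving $G_2^{(t)}=G[X\cup L]$, a clique, so at step $t$ precisely the active vertices $L\setminus L'$ evaporate. Hence $G_2$ evaporates at time exactly $t$ with exception set $X\cup L'$ and $L_{G_2}(X\cup L')=L\setminus L'$; and since step $i$ of $G_2$ deletes $L_i\cap D_j$ from each component $D_j$ of $D$, each $D_j$ evaporates at time exactly $t-1$ in $G_2$. Together with the nonemptiness of $D$ established above, this gives the lemma.

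The main obstacle I anticipate is the bookkeeping in the inductive step: one must carefully check that $N_{G_2^{(i)}}(v)=N_{G_i}(v)$ for all active $v$ --- i.e.\ that excising $C$ and the $C$-part of $L_1\cup\cdots\cup L_{i-1}$ really leaves these neighborhoods untouched --- and, more delicately, that the $G_2$-process cannot stall, i.e.\ $L_i\cap D\ne\emptyset$ for every $i\le t-1$, even though some $L_i$ may lie entirely inside $C$ in the original graph $G$. This non-stalling point is exactly where the hypothesis that every component of $G\setminus(X\cup L)$ evaporates at time exactly $t-1$ is used, together with the fact that a chordal graph always contains a simplicial vertex (\cref{lemma:simplicial}).
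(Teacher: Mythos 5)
Your proof is correct, but it takes a genuinely different route from the paper. The paper's proof goes through the set-cover machinery: it invokes \cref{lemma:set_cover} to argue that $\mathcal{S}_{G_2}\setminus\{X\cup L\}$ is a set cover for $L\setminus L'$ (which also yields $D\neq\emptyset$), and then derives a contradiction from a hypothetical first vertex of $L\setminus L'$ to evaporate early in $G_2$, by producing a neighbor inside a proper-subset-neighborhood component together with a non-neighbor in $X\cup L$. You instead exploit the single observation that no active vertex of $G_2$ --- i.e.\ no vertex of $(L\setminus L')\cup D$ --- has any neighbor in $C$ (for $D$ because components are pairwise non-adjacent, for $L\setminus L'$ because $L'=N(C)\cap L$), so these vertices have literally the same neighborhoods in $G$ and $G_2$; a step-by-step induction then shows the $G_2$-process with exception set $X\cup L'$ deletes exactly $L_i\cap D$ at step $i<t$, with non-stalling supplied by the well-definedness argument of \cref{sec:evaporation} applied to the clique exception set $X\cup L'$. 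This buys two things: it is more elementary, dispensing with \cref{lemma:set_cover} and the ``first early vertex'' contradiction entirely; and it proves the lemma under exactly its stated hypotheses, whereas the paper's sketch begins by asserting that $\hat N(C')\subsetneq X\cup L$ for every component $C'$ of $G[L_{t-1}]$, a property that is not among the hypotheses (it holds in the context where the lemma is applied, namely for graphs in $\widetilde F_p$, but can fail in general --- some component inside $D$ may see all of $X\cup L$), while your argument covers that case with no extra work. Your separate nonemptiness argument for $D$ (a vertex of $L\setminus L'$ would otherwise be simplicial in $G$ itself, contradicting $t\ge 2$) is also valid and simpler than deducing it from the set cover.
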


\begin{proof}
Let $L_1,\ldots,L_t$ be the evaporation sequence of $G$ with exception set $X$. By an argument similar to the first paragraph of the proof of \cref{lemma:commute_f_tilde_2}, we know $\hat N(C)$ is a proper subset of $X\cup L$ for every connected component $C$ of $G[L_{t-1}]$. Therefore, by \cref{lemma:set_cover}, $\mathcal{S}_G\setminus\{X\cup L\}$ is a set cover for $L$, so it follows that $\mathcal{S}_{G_2}\setminus\{X\cup L\}$ is a set cover for $L\setminus L'$. Since $L'\subsetneq L$, it is easy to see that $D$ is nonempty.

By an argument similar to the second paragraph of the proof of \cref{lemma:commute_f_tilde_2}, none of the vertices of $L\setminus L'$ evaporate before time $t$ in $G_2$. Therefore, the evaporation behavior of $G_2$ (with exception set $X\cup L'$) is the same as it was for the corresponding vertices in $G$ (with exception set $X$), so $G_2$ evaporates at time $t$, every component of $G_2\setminus(X\cup L)$ evaporates at time $t-1$, and $L_{G_2}(X\cup L') = L\setminus L'$.
\end{proof}

\begin{lemma}
\label{lemma:commute_f_p_tilde_2}
Let $G_1$ and $G_2$ be connected chordal graphs such that $V(G_1)\cap V(G_2) = \widetilde X$. Suppose $G_2$ contains a clique $\hat X\supseteq\widetilde X$. Furthermore, suppose $G_1\setminus \widetilde X$ has exactly one connected component, which sees all of $\widetilde X$ and evaporates at time exactly $t-1$ in $G_1$ with exception set $\widetilde X$, where $t\ge 2$. Let $L'$ be a subset of $\widetilde X$, and let $X = \hat X\setminus L'$. Let $G$ be the graph obtained by gluing together $G_1$ and $G_2$ at $\widetilde X$. Then we have the following:
\begin{enumerate}
    \item Suppose $G_2$ evaporates at time $t$ with exception set $\hat X$, and let $\hat L = L_{G_2}(\hat X)$. Suppose $\hat X\cup\hat L$ is a clique in $G_2$, and suppose every connected component of $G_2\setminus(\hat X\cup\hat L)$ evaporates at time exactly $t-1$ in $G_2$. Then $G$ evaporates at time $t$ with exception set $X$, every connected component of $G\setminus(\hat X\cup\hat L)$ evaporates at time $t-1$ in $G$, and $L_G(X) = L'\cup\hat L$.
    \item Suppose every connected component of $G_2\setminus\hat X$ evaporates at time exactly $t-1$ in $G_2$ with exception set $\hat X$, and suppose $\widetilde X\subsetneq\hat X$. Then $G$ evaporates at time $t$ with exception set $X$, every connected component of $G\setminus\hat X$ evaporates at time $t-1$ in $G$, and $L_G(X) = L'$.
\end{enumerate}
\end{lemma}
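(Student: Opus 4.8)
The plan is to follow the template of the earlier gluing lemmas (\cref{lemma:commute_f_2}, \cref{lemma:commute_f_tilde_3}): show that building $G$ by gluing $G_1$ and $G_2$ does not change how the vertices inherited from $G_1$ and $G_2$ evaporate, provided no vertex of the ``root'' $\hat X$ (or $\hat X\cup\hat L$ in case~(1)) is allowed to evaporate before time $t$; then read the three claimed statements off the evaporation sequences of $G_1$ and $G_2$. Write $C=V(G_1)\setminus\widetilde X$ for the unique component of $G_1\setminus\widetilde X$; since there are no edges between $C$ and $V(G_2)\setminus\widetilde X$, the components of $G\setminus\widetilde X$, hence also of $G\setminus\hat X$ (as $\widetilde X\subseteq\hat X$), are exactly $C$ together with the components of $G_2\setminus\hat X$.

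I would record two preliminary facts. First, from $L'\subseteq\widetilde X\subseteq\hat X$ and $X=\hat X\setminus L'$ we get $\widetilde X\setminus X=\hat X\setminus X=L'$ and $\hat L\cap X=\emptyset$, so the vertices of $\hat X\cup\hat L$ lying outside the exception set $X$ of $G$ are exactly $L'$ (case~2) or $L'\cup\hat L$ (case~1). Second, applying \cref{lemma:neighborhood} inside $G_1$ (exception set $\widetilde X$, and $G_1\setminus\widetilde X=C$ connected), the hypothesis that $C$ sees all of $\widetilde X$ gives $N_{G_1}(C)=\widetilde X=N_{G_1}(L_{G_1}(\widetilde X))\cap\widetilde X$; hence every vertex of $\widetilde X$ has a neighbor in $L_{G_1}(\widetilde X)\subseteq C$, and $L_{G_1}(\widetilde X)$ is nonempty and still present at step $t-1$ of the evaporation of $G_1$.

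The core is an induction on $s=0,1,\dots,t-1$ showing that the vertices that evaporate from $G$ (with exception set $X$) at step $s$ are exactly $L_s^{(1)}\cup L_s^{(2)}$, where $L_s^{(i)}$ is the set evaporating at step $s$ from $G_i$ (with exception set $\widetilde X$ for $i=1$, $\hat X$ for $i=2$); equivalently, through step $t-1$ no vertex of $\hat X$, and in case~(1) no vertex of $\hat L$, evaporates. The inductive step is a per-vertex check in the current graph $G^{(s)}$: (a) a surviving $v\in C$ satisfies $N_{G^{(s)}}(v)=N_{G_1^{(s)}}(v)$, so it is simplicial in $G^{(s)}$ iff in $G_1^{(s)}$, matching $G_1$; (b) a surviving $v\in V(G_2)\setminus\hat X$ satisfies $N_{G^{(s)}}(v)=N_{G_2^{(s)}}(v)$, so it matches $G_2$, using that the relevant components of $G_2$ finish exactly at step $t-1$ and (case~1) $\hat L$ does not move before step $t$; (c) for $v\in\hat X$: if $v\notin L'$ then $v\in X$ and cannot evaporate, and if $v\in L'$ then $v$ has a still-alive neighbor $v_1\in L_{G_1}(\widetilde X)\subseteq C$ (by the second preliminary fact and case~(a)) and a still-alive neighbor $w$ on the $G_2$-side --- in case~(1) any $w\in\hat L$ (nonempty since $G_2$ evaporates at time $t\ge 2$, with $v\sim w$ as $\hat X\cup\hat L$ is a clique), in case~(2) any $w\in\hat X\setminus\widetilde X$ (nonempty by the hypothesis $\widetilde X\subsetneq\hat X$, lying in $X$ hence alive, with $v\sim w$ as $\hat X$ is a clique) --- and $v_1$, $w$ are non-adjacent since $v_1\in C$ and $w\in V(G_2)\setminus\widetilde X$, so $v$ is not simplicial in $G^{(s)}$. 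I expect item~(c) to be the only real difficulty: it is the step requiring, for every $v\in L'$, two non-adjacent live neighbors (one in $C$, one on the $G_2$-side), and the hypotheses ``$C$ sees all of $\widetilde X$'' together with $t\ge 2$ (case~1) or $\widetilde X\subsetneq\hat X$ (case~2) are exactly what supply them.

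Finally, the induction shows that just before step $t$ the state of $G$ is $G[\hat X\cup\hat L]$ (case~1) or $G[\hat X]$ (case~2), all of $C$ and of $V(G_2)\setminus(\hat X\cup\hat L)$ (resp.\ $V(G_2)\setminus\hat X$) having evaporated; this remaining graph is a clique, being an induced subgraph of $G_2$ on a clique. Its non-exception vertices are $L'\cup\hat L$ (case~1) or $L'$ (case~2) by the first preliminary fact, and being simplicial they all evaporate at step $t$, after which only $X$ remains. Hence $G$ evaporates at time $t$ with exception set $X$, with $L_G(X)=L'\cup\hat L$ in case~(1) and $L_G(X)=L'$ in case~(2). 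Moreover, by (a)--(b) the component $C$ and each component of $G_2\setminus\hat X$ evaporate in $G$ at exactly the same time as in $G_1$, $G_2$ respectively --- namely at time $t-1$ --- and these are precisely the components of $G\setminus(\hat X\cup\hat L)$ (case~1) or $G\setminus\hat X$ (case~2), giving the last assertion.
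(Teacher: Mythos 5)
Your proposal is correct and is essentially the paper's own proof: the paper likewise reduces everything to showing that no vertex of $\hat X$ (hence of $L'$) can evaporate before time $t$, giving such a vertex a still-alive neighbor in $L_{G_1}(\widetilde X)$ via \cref{lemma:neighborhood} (using that the unique component of $G_1\setminus\widetilde X$ sees all of $\widetilde X$) together with a non-adjacent live witness in $(\hat X\cup\hat L)\setminus\widetilde X$ in case (1) or $\hat X\setminus\widetilde X$ in case (2), so your explicit induction versus the paper's first-failure contradiction, and your choice $w\in\hat L$, are cosmetic differences. The only slip is in your closing sentence for case (1), where the $G_2$-side components of $G\setminus(\hat X\cup\hat L)$ are the components of $G_2\setminus(\hat X\cup\hat L)$ rather than of $G_2\setminus\hat X$; the per-vertex timing you already established gives the stated conclusion regardless.
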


\begin{proof}
First, we prove statement (1). When we glue $G_1$ and $G_2$ together (and change the exception set to $X$), as long as it is still the case that none of the vertices of $\hat X$ evaporate before time $t$, then we can see by an argument similar to \cref{lemma:commute_g_1} that the evaporation behavior of these two graphs does not change. Suppose for a contradiction that $u\in\hat X\setminus X$ is the first vertex in $\hat X$ to evaporate before time $t$ in $G$. Suppose this happens at time $s$, where $s<t$. Let $L_1,\ldots,L_{t-1}$ be the evaporation sequence of $G_1$ with exception set $\widetilde X$. Let $v\in L_{t-1}$, and let $C = V(G_1)\setminus\widetilde X$. By \cref{lemma:neighborhood}, we know $\widetilde X = N(C) = N(L_{t-1})\cap\widetilde X$, so $u$ and $v$ are adjacent since $u\in\widetilde X$. Since $\widetilde X\subsetneq\hat X\cup\hat L$, there is some vertex $w\in\hat X\cup\hat L$ that is adjacent to $u$ but not $v$. Therefore, $u$ is not simplicial at time $s$, which is a contradiction. Therefore, the evaporation behavior of the vertices of $G_1\setminus\widetilde X$ and $G_2\setminus\hat X$ in $G$ is the same as in $G_1$ and $G_2$. At time $t$, all of $L'\cup\hat L$ evaporates since $\hat X\cup\hat L$ is a clique.

For statement (2), the proof is exactly the same except for the following differences. The sentence that begins ``Since $\widetilde X\subsetneq\hat X\cup\hat L$…'' should instead say, ``Since $\widetilde X\subsetneq\hat X$, there is some vertex $w\in\hat X$ that is adjacent to $u$ but not $v$.'' And we change the concluding sentence to the following: ``At time $t$, all of $L'$ evaporates since $\hat X$ is a clique.''
\end{proof}

\subsection{Proofs of recurrences}
\label{sec:recurrence_proofs}

As in the counting algorithm, all graphs in this section are $\omega$-colorable. We do not explicity mention this in each lemma, since $\omega$-colorablitity does not change the recurrences (it only affects the base cases).

By \cref{lemma:simplicial}, every induced subgraph of a chordal graph contains a simplicial vertex, so a chordal graph on $n$ vertices evaporates in time at most $n$ when the exception set is $X = \emptyset$. Therefore, the number of connected chordal graphs with vertex set $[n]$ is $$c(n) = \sum_{t=1}^n\tilde g_1(t,0,n).$$ Throughout this section, suppose $t,x,l,k,z$ are nonnegative integers in the domain of the relevant function, and suppose we are not yet at a base case. Let $G(t,x,k,z)$ be the set of graphs counted by $g(t,x,k,z)$, and similarly let $\widetilde G(t,x,k,z)$, $\widetilde G_p(t,x,k,z)$, $\widetilde G_1(t,x,k)$, $\widetilde G_{\ge 2}(t,x,k)$, $F(t,x,l,k)$, $\widetilde F(t,x,l,k)$, $\widetilde F_p(t,x,l,k)$, and $\widetilde F_p(t,x,l,k,z)$, be the sets of graphs counted by $\tilde g(t,x,k,z)$, $\tilde g_p(t,x,k,z)$, $\tilde g_1(t,x,k)$, $\tilde g_{\ge 2}(t,x,k)$, $f(t,x,l,k)$, $\tilde f(t,x,l,k)$, $\tilde f_p(t,x,l,k)$, and $\tilde f_p(t,x,l,k,z)$, respectively.

We are now ready to prove correctness of the recurrences. For the proofs of the following lemmas, let $X = [x]$ according to the current value of the argument $x$. We proceed in the order in which these functions were defined in \cref{defn:counters}, so that similar functions and proofs appear consecutively. (This is somewhat different from the ordering of the lemma numbers.)

\g*

\begin{proof}
To see that $g(t,x,k,z)$ is at most this summation, suppose $G\in G(t,x,k,z)$. Let $A$ be the set of vertices in components $C$ of $G\setminus X$ such that $C$ evaporates at time exactly $t$ in $G$ with exception set $X$, and let $B$ be the set of vertices of all other components of $G\setminus X$. Let $G_1 = G[X\cup A]$, $G_2 = G[X\cup B]$, and $k' = |A|$. Now relabel $G_1$ by applying $\phi(A,[x+1,x+k'])$ to the labels in $A$, and relabel $G_2$ by applying $\phi(B,[x+1,x+k-k'])$ to the labels in $B$. Note that $G_1$ and $G_2$ are connected since $G$ is connected. Since every component of $G\setminus X$ has at least one neighbor in $X\setminus[z]$, the same is true of $G_1\setminus X$ and $G_2\setminus X$. Therefore, by \cref{lemma:commute_g}, we have $G_1\in\widetilde G(t,x,k',z)$ and $G_2\in G(t-1,x,k-k',z)$.

We claim that this process (including the relabeling step) is injective, i.e., distinct graphs $G$ give rise to distinct triples $(G_1,G_2,A)$. Indeed, suppose that starting from $G\in G(t,x,k,z)$, we obtain $(G_1,G_2,A)$ along with the set $B$ of remaining components, and starting from a distinct graph $G'\in G(t,x,k,z)$, we obtain $(G_1',G_2',A')$ and the set $B'$. If $A\ne A'$, then we are done, so suppose $A = A'$. If $G[X\cup A]\ne G'[X\cup A']$, then after relabeling, $G[X\cup A]$ still differs from $G'[X\cup A']$ since $\phi(A,[x+1,x+k'])$ is a bijection. Otherwise, $G[X\cup B]\ne G'[X\cup B']$, in which case $G[X\cup B]$ still differs from $G'[X\cup B']$ even after relabeling. Therefore, $g(t,x,k,z)$ is at most the number of possible triples $(G_1,G_2,A)$, which is given by the summation above.

We now show that $g(t,x,k,z)$ is bounded below by the same summation. Suppose we are given $0\le k'\le k$, a graph $G_1\in\widetilde G(t,x,k',z)$, a graph $G_2\in G(t-1,x,k-k',z)$, and a subset $A\subseteq[x+1,x+k]$ of size $k'$. Let $B = [x+1,x+k]\setminus A$. We construct a graph $G\in G(t,x,k,z)$ as follows: Relabel $G_1$ by applying $\phi([x+1,x+k'],A)$ to the labels in $G_1\setminus X$, and relabel $G_2$ by applying $\phi([x+1,x+k-k'],B)$ to the labels in $G_2\setminus X$. By \cref{lemma:gluing}, since $X$ is a clique in both $G_1$ and $G_2$, we can glue $G_1$ and $G_2$ together at $X$ to obtain a chordal graph $G$. The resulting label set for $G$ is $[x+k]$, and $G$ is connected since $G_1$ and $G_2$ are connected. Furthermore, every component of $G\setminus X$ has at least one neighbor in $X\setminus[z]$ since the same is true of $G_1\setminus X$ and $G_2\setminus X$. By \cref{lemma:commute_g}, $G$ evaporates in time at most $t$, so we have $G\in G(t,x,k,z)$.

Suppose $(G_1,G_2,A)$ and $(G_1',G_2',A')$ are distinct triples of the above form that give rise to $G$ and $G'$, respectively. If $A\ne A'$, then by \cref{lemma:commute_g}, $G\ne G'$. By the same lemma, if $G_1\ne G_1'$ (resp.\ $G_2\ne G_2'$), then the components of $G\setminus X$ and $G'\setminus X$ that evaporate at time exactly $t$ (resp.\ less than $t$) will differ, which implies $G\ne G'$. Therefore, $g(t,x,k,z)$ is at least the summation above.
\end{proof}



In the above lemma, we showed that the process that builds $G_1$ and $G_2$ from $G$ (and vice versa) is injective, which ensures there is no double counting. In the lemmas that follow, we will build up similar injective maps in each direction. We only prove injectivity in detail in the proof of \cref{lemma:g} since the arguments for the following lemmas are similar.

\gTilde*

\begin{proof}
To see that $\tilde g(t,x,k,z)$ is at most this summation, suppose $G\in\widetilde G(t,x,k,z)$. Let $C$ be the component of $G\setminus X$ that contains the label $x+1$, and let $D$ be the set of vertices of all other components of $G\setminus X$. Let $X' = N(C)$, $G_1 = G[X'\cup C]$, $G_2 = G[X\cup D]$, $k' = |C|$, and $x' = |N(C)|$. Note that $X'\not\subseteq[z]$ since $C$ has at least one neighbor in $X\setminus[z]$. Now relabel $G_1$ by applying $\phi(X',[x'])$ to the labels in $X'$ and applying $\phi(C,[x'+1,x'+k'])$ to the labels in $C$. Relabel $G_2$ by applying $\phi(D,[x+1,x+k-k'])$ to the labels in $D$. Since every component of $G\setminus X$ has at least one neighbor in $X\setminus[z]$, the same is true of $G_2\setminus X$. Therefore, by \cref{lemma:commute_g}, we have $G_2\in\widetilde G(t,x,k-k',z)$. We also have $G_1\in\widetilde G_1(t,x',k')$ by an argument similar to \cref{lemma:commute_g}; the only difference here is that the exception set is $X'$ rather than $X$ since $C$ does not necessarily see all of $X$. (We chose $C$ to be the component of $G\setminus X$ that contains the lowest label rather than choosing an arbitrary component of $G\setminus X$ so that this process is injective.) Hence $\tilde g(t,x,k,z)$ is at most the summation above.

We now show that $\tilde g(t,x,k,z)$ is bounded below by this summation. Suppose we are given $1\le k'\le k$, $1\le x'\le x$, a graph $G_1\in\widetilde G_1(t,x',k')$, a graph $G_2\in\widetilde G(t,x,k-k',z)$, a subset $C\subseteq[x+1,x+k]$ of size $k'$ that contains $x+1$, and a subset $X'\subseteq X$ of size $x'$ such that $X'\not\subseteq[z]$. Let $D = [x+1,x+k]\setminus C$. We construct a graph $G\in\widetilde G(t,x,k,z)$ as follows: Relabel $G_1$ by applying $\phi([x'],X')$ to the labels in $[x']$ and applying $\phi([x'+1,x'+k'],C)$ to the labels in $G_1\setminus[x']$. Relabel $G_2$ by applying $\phi([x+1,x+k-k'],D)$ to the labels in $G_2\setminus X$. By \cref{lemma:gluing}, we can glue $G_1$ and $G_2$ together at $X'$ to obtain a chordal graph $G$. Every component of $G\setminus X$ has at least one neighbor in $X\setminus[z]$ since $G_2\setminus X$ has that property, and because $X'\not\subseteq[z]$. By an argument similar to \cref{lemma:commute_g}, $G$ evaporates at time exactly $t$, so we have $G\in\widetilde G(t,x,k,z)$. Therefore, $\tilde g(t,x,k,z)$ is at least the number of possible choices for $G_1$, $G_2$, $C$, and $X'$.
\end{proof}

\gTildeP*

\begin{proof}
The argument for $\tilde g_p$ is similar to the proof for $\tilde g$ (\cref{lemma:g_tilde}), except we do not allow $x' = x$ since no component of $G\setminus X$ sees all of $X$.
\end{proof}

\gTildeOne*

\begin{proof}
To see that $\tilde g_1(t,x,k)$ is at most this summation, suppose $G\in\widetilde G_1(t,x,k)$. Let $C = V(G)\setminus X$, $L = L_G(X)$, $A = C\setminus L$, and $l = |L|$. We know $N(C) = X$ by the definition of $\widetilde G_1(t,x,k)$, and by \cref{lemma:neighborhood}, this implies $N(L)\cap X = X$. Therefore, $X\cup L$ is a clique by \cref{obs:L_union_X}. Relabel $G$ by applying $\phi(L,[x+1,x+l])$ to $L$ and applying $\phi(A,[x+l+1,x+k])$ to $A$. We now have $G\in F(t,x,l,k-l)$ since $L_G(X)$ uses the label set $[x+1,x+l]$, and relabeling vertices does not change the evaporation time of the graph. Hence $\tilde g_1(t,x,k)$ is at most the summation above.

We now show that $\tilde g_1(t,x,k)$ is bounded below by this summation. Suppose we are given $1\le l\le k$, a graph $G\in F(t,x,l,k-l)$, and a subset $L\subseteq[x+1,x+k]$ of size $l$. Let $A = [x+1,x+k]\setminus L$. Now relabel $G$ by applying $\phi([x+1,x+l],L)$ to the labels in $[x+1,x+l]$ and applying $\phi([x+l+1,x+k],A)$ to the labels in $[x+l+1,x+k]$. Observe that $L$ sees all of $X$ since $X\cup L$ is a clique, which implies $G\in\widetilde G_1(t,x,k)$. Therefore, $\tilde g_1(t,x,k)$ is at least the number of possible choices for $G$ and $L$.
\end{proof}

\gTildeTwo*

\begin{proof}
To see that $\tilde g_{\ge 2}(t,x,k)$ is at most this summation, suppose $G\in\widetilde G_{\ge 2}(t,x,k)$. Let $C$ be the component of $G\setminus X$ that contains the lowest label not in $X$, and let $D$ be the set of vertices of all other components of $G\setminus X$. Let $G_1 = G[X\cup C]$, $G_2 = G[X\cup D]$, and $k' = |C|$. Now relabel $G_1$ by applying $\phi(C,[x+1,x+k'])$ to the labels in $C$, and relabel $G_2$ by applying $\phi(D,[x+1,x+k-k'])$ to the labels in $D$. By \cref{lemma:commute_g}, if $G\setminus X$ has exactly two connected components, then $G_2\in\widetilde G_1(t,x,k-k')$, and otherwise, $G_2\in\widetilde G_{\ge 2}(t,x,k-k')$. In either case, $G_1\in\widetilde G_1(t,x,k')$. Hence $\tilde g_{\ge 2}(t,x,k)$ is at most the summation above.

We now show that $\tilde g_{\ge 2}(t,x,k)$ is bounded below by this summation. For the first case, suppose we are given $1\le k'\le k$, a graph $G_1\in\widetilde G_1(t,x,k')$, a graph $G_2\in\widetilde G_1(t,x,k-k')$, and a subset $C\subseteq [x+1,x+k]$ of size $k'$ that contains $x+1$. Let $D = [x+1,x+k]\setminus C$. We construct a graph $G\in\widetilde G(t,x,k)$ as follows: Relabel $G_1$ by applying $\phi([x+1,x+k'],C)$ to the labels in $[x+1,x+k']$, and relabel $G_2$ by applying $\phi([x+1,x+k-k'],D)$ to the labels in $[x+1,x+k-k']$. By \cref{lemma:gluing}, we can glue $G_1$ and $G_2$ together at $X$ to obtain a chordal graph $G$. By \cref{lemma:commute_g}, we have $G\in\widetilde G_{\ge 2}(t,x,k)$. Similarly, if we begin with $G_1\in\widetilde G_1(t,x,k')$ and $G_2\in\widetilde G_{\ge 2}(t,x,k-k')$, then a similar argument shows that we can glue these together to obtain a graph $G\in\widetilde G_{\ge 2}(t,x,k)$. Therefore, $\tilde g_{\ge 2}(t,x,k)$ is at least the summation above.
\end{proof}

\f*

\begin{proof}
To see that $f(t,x,l,k)$ is at most this summation, suppose $G\in F(t,x,l,k)$. Let $L = L_G(X)$. Let $A$ be the set of vertices in components $C$ of $G\setminus(X\cup L)$ such that $C$ evaporates at time exactly $t-1$ in $G$ with exception set $X$, and let $B$ be the set of vertices of all other components of $G\setminus(X\cup L)$. Let $G_1 = G[X\cup L\cup A]$, $G_2 = G[X\cup L\cup B]$, and $k' = |A|$. Now relabel $G_1$ by applying $\phi(A,[x+l+1,x+l+k'])$ to the labels in $A$, and relabel $G_2$ by applying $\phi(B,[x+l+1,x+l+k-k'])$ to the labels in $B$. Note that $G_1$, $G_1\setminus X$, and $G_2$ are connected since $G\setminus X$ is connected and $X\cup L$ is a clique. Hence we have $G_1\in\widetilde F(t,x,l,k')$ by \cref{lemma:commute_f_1}-(1). Since $G\setminus X$ is connected, every component of $G_2\setminus(X\cup L)$ has a neighbor in $L$. Therefore, by \cref{lemma:commute_f_1}-(2), we have $G_2\in G(t-2,x+l,k-k',x)$, so $f(t,x,l,k)$ is at most the summation above.

We now show that $f(t,x,l,k)$ is bounded below by this summation. Suppose we are given $1\le k'\le k$, a graph $G_1\in\widetilde F(t,x,l,k')$, a graph $G_2\in G(t-2,x+l,k-k',x)$, and a subset $A\subseteq[x+l+1,x+l+k]$ of size $k'$. Let $B = [x+l+1,x+l+k]\setminus A$. We construct a graph $G\in G(t,x,k)$ as follows: Relabel $G_1$ by applying $\phi([x+l+1,x+l+k'],A)$ to the labels in $[x+l+1,x+l+k']$, and relabel $G_2$ by applying $\phi([x+l+1,x+l+k-k'],B)$ to the labels in $[x+l+1,x+l+k-k']$. By \cref{lemma:gluing}, we can glue $G_1$ and $G_2$ together at $[x+l]$ to obtain a chordal graph $G$. We know $G\setminus X$ is connected since $G_1\setminus X$ is connected and every component of $G_2\setminus([x+l])$ has a neighbor in $[x+1,x+l]$, so by \cref{lemma:commute_f_2}, we have $G\in F(t,x,l,k)$. Therefore, $f(t,x,l,k)$ is at least the number of possible choices for $G_1$, $G_2$, and $A$.
\end{proof}

\fTilde*

\begin{proof}
To see that $\tilde f(t,x,l,k)$ is at most this summation, suppose $G\in\widetilde F(t,x,l,k)$, and let $L = L_G(X)$. The graph $G$ falls into one of three cases: either zero, one, or at least two components of $G\setminus(X\cup L)$ see all of $X\cup L$. In the first case, we have $G\in\widetilde F_p(t,x,l,k)$. If $G$ falls into the second or third case, then let $A$ be the set of vertices in the components that see all of $X\cup L$, and let $B$ be the set of vertices of all other components of $G\setminus(X\cup L)$. Let $G_1 = G[X\cup L\cup A]$, $G_2 = G[X\cup L\cup B]$, and $k' = |A|$. Now relabel $G_1$ by applying $\phi(A,[x+l+1,x+l+k'])$ to the labels in $A$, and relabel $G_2$ by applying $\phi(B,[x+l+1,x+l+k-k'])$ to the labels in $B$. If $A$ consists of exactly one component of $G\setminus(X\cup L)$, then  by \cref{lemma:commute_f_tilde_1}, we have $G_1\in\widetilde G_1(t-1,x+l,k')$, and by \cref{lemma:commute_f_tilde_2}, we have $G_2\in\widetilde F_p(t,x,l,k-k')$. Since $G\setminus X$ is connected, every component of $G_2\setminus(X\cup L)$ has a neighbor in $L$. Therefore, if $A$ consists of at least two components, then by \cref{lemma:commute_f_tilde_1}, we have $G_1\in\widetilde G_{\ge 2}(t-1,x+l,k')$ and $G_2\in\widetilde G_p(t-1,x+l,k-k',x)$. Hence $\tilde f(t,x,l,k)$ is at most the summation above.

We now show that $\tilde f(t,x,l,k)$ is bounded below by this summation. First of all, every graph in $\widetilde F_p(t,x,l,k)$ belongs to $\widetilde F(t,x,l,k)$. For the second and third case, suppose we are given $1\le k'\le k$, a subset $A\subseteq[x+l+1,x+l+k]$ of size $k'$, and either a pair of graphs $G_1\in\widetilde G_1(t-1,x+l,k')$ and $G_2\in\widetilde F_p(t,x,l,k-k')$ or a pair of graphs $G_1\in\widetilde G_{\ge 2}(t-1,x+l,k')$ and $G_2\in\widetilde G_p(t-1,x+l,k-k',x)$. Let $B = [x+l+1,x+l+k]\setminus A$. We construct a graph $G\in\widetilde F(t,x,l,k)$ as follows: Relabel $G_1$ by applying $\phi([x+l+1,x+l+k'],A)$ to the labels in $[x+l+1,x+l+k']$, and relabel $G_2$ by applying $\phi([x+l+1,x+l+k-k'],B)$ to the labels in $[x+l+1,x+l+k-k']$. By \cref{lemma:gluing}, we can glue $G_1$ and $G_2$ together at $[x+l]$ to obtain a chordal graph $G$. If $G_1\in\widetilde G_1(t-1,x+l,k')$ and $G_2\in\widetilde F_p(t,x,l,k-k')$, then we know $G\setminus X$ is connected since $G_2\setminus X$ is connected and the one component of $G_1\setminus[x+l]$ sees all of $[x+l]$. In this case, by an argument similar to \cref{lemma:commute_f_2}, gluing together $G_1$ and $G_2$ does not change the evaporation behavior of either of those graphs, so we have $G\in\widetilde F(t,x,l,k)$. For the case when $G_1\in\widetilde G_{\ge 2}(t-1,x+l,k')$ and $G_2\in\widetilde G_p(t-1,x+l,k-k',x)$, we know $G\setminus X$ is connected since every component of $G_1\setminus[x+l]$ sees all of $[x+l]$ and every component of $G_2\setminus[x+l]$ has a neighbor in $[x+1,x+l]$. Hence we have $G\in\widetilde F(t,x,l,k)$ by \cref{lemma:commute_f_tilde_3}. Therefore, $\tilde f(t,x,l,k)$ is at least the summation above.
\end{proof}

\fTildeP*

\begin{proof}
When $z = x$ in $f_p(t,x,l,k,z)$, the definitions of $f_p(t,x,l,k)$ and $f_p(t,x,l,k,z)$ both require that $G\setminus X$ is connected, and they are otherwise identical. Hence $\tilde f_p(t,x,l,k) = \tilde f_p(t,x,l,k,x)$.
\end{proof}

\fTildePWithZ*

\begin{proof}
To see that $\tilde f_p(t,x,l,k,z)$ is at most this summation, suppose $G\in\widetilde F_p(t,x,l,k,z)$, and let $L = L_G(X)$. Let $C$ be the component of $G\setminus(X\cup L)$ that contains the lowest label not in $X\cup L$, and let $D$ be the set of vertices of all other components of $G\setminus(X\cup L)$. Let $X' = N(C)\cap X$ and $L' = N(C)\cap L$. Let $G_1 = G[X'\cup L'\cup C]$, $G_2 = G[X\cup L\cup D]$, $k' = |C|$, $x' = |X'|$, and $l' = |L'|$. Note that $X'\not\subseteq[z]$ if $l' = 0$ since $G\setminus[z]$ is connected. Now relabel $G_1$ by applying $\phi(X',[x'])$ to the labels in $X'$, applying $\phi(L',[x'+1,x'+l'])$ to the labels in $L'$, and applying $\phi(C,[x'+l'+1,x'+l'+k'])$ to the labels in $C$. Relabel $G_2$ by applying $\phi(D,[x+l+1,x+l+k-k'])$ to the labels in $D$. If $l'<l$, then we also relabel $G_2$ further by applying $\phi(L',[x+1,x+l'])$ to the labels in $L'$ and applying $\phi(L\setminus L',[x+l'+1,x+l])$ to the labels in $L\setminus L'$. By an argument similar to \cref{lemma:commute_f_tilde_1}, we have $G_1\in\widetilde G_1(t-1,x'+l',k')$; the only difference here is that the exception set is $X'\cup L'$ rather than $X\cup L$ since $C$ does not see all of $X\cup L$. We know $G_2\setminus[z]$ is connected since $G\setminus[z]$ is connected. If $l'<l$, then by \cref{lemma:commute_f_p_tilde_1}, we have $G_2\in\widetilde F_p(t,x+l',l-l',k-k',z)$. If $l' = l$, then by \cref{lemma:commute_f_tilde_1}, we have $G_2\in\widetilde G_p(t-1,x+l,k-k',z)$. Hence $\tilde f_p(t,x,l,k,z)$ is at most the summation above.

We now show that $\tilde f_p(t,x,l,k)$ is bounded below by this summation. Suppose we are given $1\le k'\le k$, $0\le x'\le x$, and $0\le l'\le l$ such that $0<x'+l'<x+l$, a graph $G_1\in\widetilde G_1(t-1,x'+l',k')$, a graph $G_2$ such that $G_2\in\widetilde F_p(t,x+l',l-l',k-k',z)$ if $l'<l$ and $G_2\in\widetilde G_p(t-1,x+l,k-k',z)$ otherwise, a subset $C\subseteq[x+l+1,x+l+k]$ of size $k'$ that contains $x+l+1$, a subset $X'\subseteq X$ of size $x'$, with the added requirement $X'\not\subseteq[z]$ if $l' = 0$, and a subset $L'\subseteq[x+1,x+l]$ of size $l'$. Let $D = [x+l+1,x+l+k]\setminus C$. We construct a graph $G\in\widetilde F_p(t,x,l,k,z)$ as follows: Relabel $G_1$ by applying $\phi([x'],X')$ to the labels in $[x']$, applying $\phi([x'+1,x'+l'],L')$ to the labels in $[x'+1,x'+l']$, and applying $\phi([x'+l'+1,x'+l'+k'],C)$ to the labels in $[x'+l'+1,x'+l'+k']$. Relabel $G_2$ by applying $\phi([x+l+1,x+l+k-k'],D)$ to the labels in $[x+l+1,x+l+k-k']$. If $l'<l$, then we also relabel $G_2$ further by applying $\phi([x+1,x+l'], L')$ to the labels in $[x+1,x+l']$ and applying $\phi([x+l'+1,x+l],[x+1,\ldots,x+l]\setminus L')$ to the labels in $[x+l'+1,x+l]$. By \cref{lemma:gluing}, we can glue $G_1$ and $G_2$ together at $X'\cup L'$ to obtain a chordal graph $G$. We know $G\setminus[z]$ is connected since $G_2\setminus[z]$ is connected and $X'\cup L'\not\subseteq[z]$. By \cref{lemma:commute_f_p_tilde_2}, we have $G\in\widetilde F_p(t,x,l,k,z)$. Statement (1) of that lemma deals with the case when $l'<l$, and statement (2) deals with the case when $l' = l$. When applying that lemma, we let $\widetilde X = X'\cup L'$. For statement (1), we let $\hat X = X\cup L'$; for statement (2), we let $\hat X = [x+l]$. Therefore, $\tilde f_p(t,x,l,k,z)$ is at least the number of possible choices for $G_1$, $G_2$, $C$, $X'$, and $L'$.
\end{proof}

\subsection{Wrapping up the proof of \cref{thm:conn_counting}}
\label{sec:wrap_up_proof}

Correctness of the counting algorithm for connected $\omega$-colorable chordal graphs follows immediately from the proofs of the recurrences above. All that remains to show is that the algorithm terminates and uses at most $O(n^7)$ arithmetic operations.

To see that the algorithm terminates, we observe that every time a cycle of recursive calls returns to the same function where the cycle began, either the value of~$t$ decreases or the number of vertices in the graph decreases. We can see this from \cref{fig:control_flow}: the only cycles where the value of~$t$ does not decrease are the black self-loops. For each these self-loops, the number of vertices in the graph decreases in each call to the function. For example, when $\tilde g$ calls itself in the recurrence for~$\tilde g$, the number of vertices in the graph decreases from $x+k$ to $x+k-k'$, for some $k'>0$. Similarly, the number of vertices in the graph decreases in the self-loops corresponding to $\tilde g_{\ge 2}$, $\tilde g_p$, and $\tilde f_p$.

To speed up the computation of the recurrence for $\tilde f_p(t,x,l,k,z)$ (\cref{lemma:f_tilde_p}), we observe that when $k'$ is fixed, $\tilde g_1(t-1,x'+l',k')$ is a common factor for all choices of $x'$ and $l'$ that have the same sum $x'+l'$. Reordering the terms of the sum and letting $r = x'+l'$, we obtain
\begin{alignat*}{3}
& && \tilde f_p(t,x,l,k,z) = && \\
& && \sum_{k'=1}^k\sum_{r=1}^{x+l-1}\binom{k-1}{k'-1}\tilde g_1(t-1,r,k')\sum_{l'=\max\{0,r-x\}}^{\min\{r,l\}}\binom{l}{l'} && \cdot
\begin{cases}
\binom{x}{r-l'} & \text{ if $l'>0$}  \\
\binom{x}{r-l'}-\binom{z}{r-l'} & \text{ otherwise}
\end{cases} \\
& && && \cdot
\begin{cases}
\tilde f_p(t,x+l',l-l',k-k',z) & \text{if $l'<l$} \\
\tilde g_p(t-1,x+l,k-k',z) & \text{otherwise}.
\end{cases}
\end{alignat*}
Now the inner sum only depends on $t$, $x$, $l$, $z$, $r$, and $k-k'$. Thus we define the helper function
\begin{align*}
h(t,x,l,z,r,k)\coloneqq\sum_{l'=\max\{0,r-x\}}^{\min\{r,l\}}\binom{l}{l'}&\cdot
\begin{cases}
\binom{x}{r-l'} & \text{ if $l'>0$}  \\
\binom{x}{r-l'}-\binom{z}{r-l'} & \text{ otherwise}
\end{cases} \\
&\cdot
\begin{cases}
\tilde f_p(t,x+l',l-l',k,z) & \text{ if $l'<l$} \\
\tilde g_p(t-1,x+l,k,z) & \text{ otherwise}.
\end{cases}
\end{align*}
The recurrence for $\tilde f_p$ can now be written as
\begin{align}\label{eqn:with_helper}
\tilde f_p(t,x,l,k,z) = \sum_{k'=1}^k\sum_{r=1}^{x+l-1}\binom{k-1}{k'-1}\tilde g_1(t-1,r,k')h(t,x,l,z,r,k-k').
\end{align}

Using \cref{eqn:with_helper}, it takes $O(n^7)$ arithmetic operations to compute all values of $\tilde f_p(t,x,l,k,z)$, since there are five arguments and now only a double summation. The cost of computing $h$ is the same, i.e., $O(n^7)$ arithmetic operations, since $h$ has six arguments and a single summation. Therefore, the overall running time of the counting algorithm is $O(n^7)$ arithmetic operations.

\section{Sampling labeled chordal graphs}
\label{sec:sampling}

Using the standard sampling-to-counting reduction of~\cite{JVV}, we can use the information stored in the dynamic-programming tables from our counting algorithm to sample a uniformly random chordal graph. This sampling algorithm is all that remains to prove \cref{thm:main}. Before sampling, assume we have already run the counting algorithm from \cref{sec:counting}, and thus we have constant-time access to the counter functions. In other words, given the name of a counter function and any particular values for its arguments, we can access the value of that function in constant time.

\begin{theorem}
\label{thm:sampling}
Suppose the counter functions have been precomputed and we have constant-time access to their values. We can sample uniformly at random from the following classes of chordal graphs, using $O(n^4)$ arithmetic operations for each sample:
\begin{enumerate}
    \item $\omega$-colorable chordal graphs with vertex set $[n]$
    \item $\omega$-colorable connected chordal graphs with vertex set $[n]$,
\end{enumerate}
where $n,\omega\in\N$ with $\omega\le n$.
\end{theorem}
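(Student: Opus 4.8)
The plan is to apply the sampling-to-counting paradigm of~\cite{JVV}, exploiting the fact that each recurrence of \cref{sec:recurrences} was proved not merely as a numerical identity but by exhibiting a \emph{bijection}: for the class $\mathcal C$ of graphs counted by a given counter function, the two directions (``$\le$'' and ``$\ge$'') of the corresponding proof in \cref{sec:recurrence_proofs} together establish a bijection between $\mathcal C$ and a disjoint union, over the terms of the summation, of sets of the form (choices of label subsets) $\times$ (graphs in a first sub-class) $\times$ (optionally, graphs in a second sub-class). To draw a uniform element of $\mathcal C$ I would therefore: (i) draw a term of the recurrence with probability proportional to its value --- each such value is a product of a binomial coefficient and one or two \emph{precomputed} counter values, so the full sum can be scanned on the fly; (ii) draw the associated label subset(s) uniformly at random; (iii) recurse to draw uniform graphs from the one or two sub-classes; and (iv) apply the construction from the ``$\ge$''-direction of the proof, i.e.\ relabel the pieces via the bijections $\phi(\cdot,\cdot)$ and glue or union them exactly as prescribed there. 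A base case of a counter function corresponds to a single-element class (the empty graph, or a clique on a prescribed label set), which is output directly. Because each of these bijections is, by construction, uniformity-preserving, a straightforward induction --- say on $t$ plus the number of vertices --- will show the output is \emph{exactly} uniform on $\mathcal C$.

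Concretely, for part (2) I would first sample the evaporation time: using $c(n)=\sum_{t=1}^{n}\tilde g_1(t,0,n)$, pick $t$ with probability $\tilde g_1(t,0,n)/c(n)$ and then sample a graph from the class $\widetilde G_1(t,0,n)$ by the recursive procedure above, threading through the recurrences of \cref{lemma:g1_tilde,lemma:g,lemma:g_tilde,lemma:f_tilde_p} and their companions; for the functions $\tilde f$ and $\tilde f_p$ the ``term'' that is sampled must also record which of the cases --- zero, one, or at least two all-seeing components --- occurs, mirroring the case split in those proofs. For part (1) I would do the same on top of \cref{lemma:disconn}: pick $k$ with probability $\binom{n-1}{k-1}c(k)a(n-k)/a(n)$, pick a $k$-subset $C\ni 1$ uniformly, recursively sample a connected chordal graph $G_1$ on $[k]$ via part (2) and a chordal graph $G_2$ on $[n-k]$ via part (1), relabel by $\phi([k],C)$ and $\phi([n-k],[n]\setminus C)$, and take the union (with the base case $a(0)=1$ returning the empty graph).

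For the running time, every quantity entering the recurrences --- the counter functions and the helper $h$ of \cref{sec:wrap_up_proof}, together with $c(\cdot)$ and $a(\cdot)$ --- is available in $O(1)$ time after the precomputation, so sampling one term of a recurrence costs only a single scan of its summation: this is $O(n^2)$ arithmetic operations if we use \cref{eqn:with_helper} in place of the triple sum for $\tilde f_p$, and $O(n^2)$ for every other recurrence as well, plus $O(n)$ to draw the random label subsets (including, where needed, a uniform $x'$-subset of $[x]$ not contained in $[z]$). What remains is to bound the number of recursive calls in a single sample, and here I would reuse the accounting behind the termination/running-time argument of \cref{sec:wrap_up_proof}: each call decreases $t$ or the number of vertices, and tracing the control flow of \cref{fig:control_flow} --- in which every ``component-peeling'' chain has length at most $n$ and consumes pairwise-disjoint parts of the vertex set --- yields $O(n^2)$ calls per sample, hence $O(n^4)$ arithmetic operations in total.

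I do not expect a conceptual obstacle: the substantive work was already done in proving the recurrences bijective, so correctness of the sampler is essentially immediate. The part that will take the most care is the last step --- the bookkeeping that the total number of recursive calls is $O(n^2)$ --- together with the purely mechanical but error-prone task of verifying that the sampler inverts each of the many relabeling steps in \cref{sec:recurrence_proofs} in precisely the right order.
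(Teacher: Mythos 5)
Your proposal is correct and follows essentially the same route as the paper: invoke the bijections established in the two directions of each recurrence proof, sample a term of each recurrence with probability proportional to its weight, draw the label subsets, recurse, and apply the gluing/relabeling construction, with the top-level reductions via $c(n)=\sum_t\tilde g_1(t,0,n)$ and \cref{lemma:disconn} exactly as in the paper's proof of \cref{thm:sampling}. The only difference is minor bookkeeping in the running-time analysis: the paper evaluates the triple sum for $\tilde f_p$ directly ($O(n^3)$ per call, $O(n)$ calls to that procedure), whereas you use the helper $h$ and an $O(n^2)$ bound on the total number of calls --- both yield the claimed $O(n^4)$ arithmetic operations per sample.
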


\begin{proof}
First, we show that we can sample uniformly at random from the following classes of chordal graphs: $G(t,x,k,z)$, $\widetilde G(t,x,k,z)$, $\widetilde G_p(t,x,k,z)$, $\widetilde G_1(t,x,k)$, $\widetilde G_{\ge 2}(t,x,k)$, $F(t,x,l,k)$, $\widetilde F(t,x,l,k)$, $\widetilde F_p(t,x,l,k)$, and $\widetilde F_p(t,x,l,k,z)$, where $t,x,l,k,z$ are any values in the domains of their respective functions. In the ``greater than or equal'' direction of the proof each recurrence in \cref{sec:recurrence_proofs}, we describe an injective map. For example, in \cref{lemma:g} we are given $0\le k'\le k$, a graph $G_1\in\widetilde G(t,x,k',z)$, a graph $G_2\in G(t-1,x,k-k',z)$, and a subset $A\subseteq[x+1,x+k]$ of size $k'$, and we describe how one can map this to a graph $G\in G(t,x,k,z)$. In each lemma, we also describe an injective map in the reverse direction (to prove the ``less than or equal'' direction). Therefore, each of these injective maps is in fact a bijection.

Consider any one of the recurrence lemmas proved in \cref{sec:recurrence_proofs}. Let $\psi$ be the injective map in the ``greater than or equal'' direction of the proof, let $\mathcal{C}_1$ be the domain of $\psi$ and let $\mathcal{C}_2$ be the target space. Since $\psi\colon\mathcal{C}_1\to\mathcal{C}_2$ is in fact a bijection, and our goal is to sample a uniformly random element of $\mathcal{C}_2$ (e.g. $G(t,x,k,z)$), it is sufficient to first sample a uniformly random element of $\mathcal{C}_1$, and then apply $\psi$.

The samplers for each of these graph classes call one another recursively/cyclically until reaching a base case. The base cases are the same as in the counting algorithm. Whenever we reach a base case, there is only one possible graph (since the counter function equals 1), so we simply output that graph. We never choose a base case where the counter function equals 0, since the probability of taking such a path in the algorithm would have been 0.

To sample a \emph{connected} chordal graph with vertex set $[n]$, we first choose a random value of the evaporation time, where each possible time $t\in[n]$ has probability proportional to the weight of the corresponding term in the summation for $c(n)$ (see section \cref{sec:recurrences}). Next, we sample a random graph with the chosen evaporation time $t$ using the sampler for $\widetilde G_1(t,0,n)$.

At the highest level, to sample a chordal graph with vertex set $[n]$, we use the standard sampling algorithm that is derived from the injective map in the ``greater than or equal'' direction of the proof \cref{lemma:disconn}.

One can also easily modify any of these samplers to only ouptut $\omega$-colorable chordal graphs. This does not explicity change the pseudocode at all (see \cref{sec:pseudocode}); we simply adjust the values of some of the base cases of the counting algorithm that are used by the sampling algorithm.

The correctness of each of these sampling procedures follows from the correctness of the counting algorithm. The sampling algorithm terminates by an argument similar to the proof of termination for the counting algorithm. For the sampling running time, the bottleneck is the time spent sampling from $\widetilde F_p(t,x,l,k,z)$. In this procedure, we compute the triple summation from the recurrence for $\widetilde F_p(t,x,l,k,z)$, and we call this procedure at most $O(n)$ times, so the overall cost is $O(n^4)$ arithmetic operations per sample.
\end{proof}


This completes the proof of \cref{thm:main}. The pseudocode for the sampling algorithm can be found in \cref{sec:pseudocode}.

\section{Approximate counting and sampling of labeled chordal graphs}
\label{sec:approx}

A \emph{split} graph is a graph whose vertex set can be partitioned into a clique and an independent set, with arbitrary edges between the two parts. It is easy to see that every split graph is chordal. On the other hand, a result by Bender, Richmond, and Wormald~\cite{bender1985almost} shows that a random $n$-vertex labeled chordal graph is a split graph with probability $1-o(1)$. There is a relatively simple algorithm that counts the number of labeled split graphs on $n$ vertices using $O(n^2)$ arithmetic operations~\cite{bina2015note}. In fact, we show that this algorithm can be modified to give an approximate answer using only $O(n\log^2(1/\eps))$ arithmetic operations. By combining this with our (exact) counting and sampling algorithms for labeled chordal graphs, we obtain efficient algorithms for approximate counting and approximate sampling of labeled chordal graphs. In particular, if $n$ is large enough (roughly $\log(1/\eps)$ or larger), then we simply run the split graph counting or sampling algorithm. Otherwise, we run the exact counting or uniform sampling algorithm for chordal graphs.

For $\eps>0$, we say a number $A$ is a $(1+\eps)$-approximation of a number $B$ if $B\le A\le B(1+\eps)$, and we say $A$ is a $(1-\eps)$-approximation of $B$ if $B(1-\eps)\le A\le B$. Similarly, we say $A$ is a $(1\pm\eps)$-approximation of $B$ if $B(1-\eps)\le A\le B(1+\eps)$.

\begin{theorem}
\label{thm:approx}
There is a deterministic algorithm that given a positive integer $n$ and $\eps>0$ as input, computes a $(1\pm\eps)$-approximation of the number of $n$-vertex labeled chordal graphs in $O(n^3\log{n}\log^7(1/\eps))$ time. Moreover, there is a randomized algorithm that given the same input, generates a random $n$-vertex labeled chordal graph according to a distribution whose total variation distance from the uniform distribution is at most $\eps$. The expected running time of this randomized algorithm is $O(n^3\log{n}\log^7(1/\eps))$.
\end{theorem}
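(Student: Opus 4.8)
The plan is to run a two-regime algorithm governed by a threshold $n_0 = n_0(\eps) = \Theta(\log(1/\eps))$: in the ``small $n$'' regime ($n < n_0$) we fall back on the exact algorithm of \cref{thm:main}, and in the ``large $n$'' regime ($n \ge n_0$) we exploit the Bender--Richmond--Wormald theorem and work only with split graphs. Write $C_n$ for the number of labeled chordal graphs on $n$ vertices and $S_n$ for the number of labeled split graphs on $n$ vertices; since every split graph is chordal, $S_n \le C_n$. The quantitative input we need is a sharpening of \cite{bender1985almost}: the fraction $(C_n - S_n)/C_n$ of labeled chordal graphs that are not split decays at least exponentially in $n$, so that there is an absolute constant $c$ with $(C_n - S_n)/C_n \le \eps$ whenever $n \ge n_0 := c\lceil\log(1/\eps)\rceil$. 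I expect establishing this exponential decay rate to be the main obstacle: \cite{bender1985almost} states only that the ratio tends to $1$, so one has to revisit the argument there --- bounding the number of chordal graphs containing a prescribed induced $2K_2$ (the only split obstruction that survives chordality) and comparing against the $2^{n^2/4 + O(n)}$ growth of $S_n$ --- to extract the decay rate.

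For counting: if $n < n_0$, invoke the exact counter of \cref{thm:main}, which uses $O(n^7)$ arithmetic operations on $O(n^2)$-bit integers, hence $O(n^9\log n)$ bit operations; this is $O(n^3\log n\cdot n^6) = O(n^3\log n\log^7(1/\eps))$ since $n^6 \le n_0^6 = O(\log^6(1/\eps))$, and the output is exact. If $n \ge n_0$, run the split-graph counter of \cite{bina2015note}, modified to carry only $O(\log(1/\eps) + \log n)$ bits of floating-point precision and to truncate the sum over clique sizes to the $O(\sqrt{\log(1/\eps)+\log n})$ terms near $n/2$ that dominate it; this yields a $(1\pm\eps/3)$-approximation $\hat S$ of $S_n$ using $O(n\log^2(1/\eps))$ arithmetic operations, well within the stated time budget. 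Choosing $n_0 := c\lceil\log(3/\eps)\rceil$ so that $(C_n - S_n)/C_n \le \eps/3$, we get $\hat S \le (1+\eps/3)S_n \le (1+\eps/3)C_n$ and $\hat S \ge (1-\eps/3)S_n \ge (1-\eps/3)^2 C_n \ge (1-\eps)C_n$, i.e.\ $\hat S$ is a $(1\pm\eps)$-approximation of $C_n$.

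For sampling: if $n < n_0$, run the exact uniform sampler of \cref{thm:main}; its $O(n^9\log n)$-bit-operation preprocessing and $O(n^4)$-arithmetic-operation sampling cost both fit the budget by the estimate above, and the output is exactly uniform on labeled chordal graphs. If $n \ge n_0$, output a uniformly random labeled split graph. For the two uniform measures $\mu_S$ (on split graphs) and $\mu_C$ (on chordal graphs) one computes $d_{TV}(\mu_S,\mu_C) = \sum_{G\ \text{split}}\left(\tfrac{1}{S_n} - \tfrac{1}{C_n}\right) = 1 - S_n/C_n = (C_n - S_n)/C_n \le \eps$, so $\mu_S$ is $\eps$-close to the target. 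To sample a uniform split graph we follow the canonical-partition scheme behind \cite{bina2015note}: a split graph has an essentially unique partition into a clique and an independent set, so it suffices to sample a canonical (clique, independent-set) pair with the correct weights --- pick the clique size $k$ with probability proportional to its count, pick the $k$-element clique vertex set uniformly, and flip an independent fair coin for each of the $k(n-k)$ potential edges between the two parts. Using the approximate counts $\hat S_k$ in place of the exact ones to choose $k$ perturbs the output distribution by an additional $O(\eps)$ in total variation, which (after rescaling $n_0$ by a constant) keeps the overall distance at most $\eps$; the procedure runs in $O(n^2)$ arithmetic operations after the $O(n\log^2(1/\eps))$-operation preprocessing.

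Beyond the decay-rate lemma, everything else is routine bookkeeping: checking that the truncation of the split-graph sum and the limited-precision arithmetic each contribute only $(1\pm\eps)$-type multiplicative errors, that the three error sources (Bender--Richmond--Wormald, truncation, rounding) compose to the claimed guarantee, and that the approximate choice of clique size costs only an additive $O(\eps)$ in total variation. The running-time bounds then follow by comparing, in each regime, the cost of the invoked subroutine against $n^3\log n\log^7(1/\eps)$.
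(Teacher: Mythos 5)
Your two-regime skeleton is essentially the paper's: a threshold $n_0 = \Theta(\log(1/\eps))$, the exact algorithm of \cref{thm:main} below it, split-graph approximation above it, and truncation of the split-graph sums to the $O(\mathrm{poly}\log(1/\eps))$ dominant terms near $c = n/2$ (cf.\ \cref{lemma:sum_over_q,lemma:sum_over_c}). One point you misjudge: the item you call the main obstacle --- extracting an exponential decay rate for the non-split fraction from \cite{bender1985almost} --- is not an obstacle. That reference already gives the quantitative bound that a random $n$-vertex labeled chordal graph is split with probability greater than $1-\beta^n$ for any $\beta>\sqrt{3}/2$ (quoted in the paper as \cref{prop:almost_every}), so a citation suffices; as written, though, your proposal leaves this central ingredient neither proved nor correctly cited.

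Where you genuinely diverge is the split-graph sampler, and this is where the real gap lies. Your scheme (pick a clique size $k$ with weight proportional to its term, pick the clique set uniformly, flip fair coins on the $k(n-k)$ cross pairs) outputs each split graph with probability proportional to its number of split partitions, not uniformly; if instead you weight by approximate counts $\hat S_k$ of a ``canonical'' clique size, then the coin flips can produce graphs whose canonical partition differs from the chosen one. Either way, the claimed $O(\eps)$ total-variation bound hinges on a quantitative version of ``essentially unique partition'': you must prove that the fraction of $n$-vertex split graphs with more than one split partition (equivalently, nonempty questioning set) is exponentially small in $n$, hence below $\eps$ once $n\ge n_0$. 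You assert this but never establish it, and it is exactly the content the paper supplies through the always-clique/always-independent/questioning decomposition (\cref{lemma:c_i_q} together with the counts in \cref{lemma:q_is_2,lemma:q_is_empty,lemma:q_is_1}), which in turn yields its rejection-based sampler (\cref{lemma:split_sampling}) with a clean correctness proof. The estimate you need is true --- comparing the $|Q|\ge 1$ counts against the $Q=\emptyset$ count gives a ratio of order $n2^{-n/2}$ --- so your simpler route is repairable, but as written both the sampling TV bound and the ``additional $O(\eps)$ from using $\hat S_k$'' step are unsupported. A secondary caveat on counting: the algorithm of \cite{bina2015note} corrects overcounting by subtraction, so to justify truncation and low-precision arithmetic you should note that the correction term is exponentially smaller than the main sum (the same uniqueness estimate again); the paper sidesteps this entirely by counting each split graph exactly once via the three-part decomposition, which is also what makes its counting formula directly usable for sampling.
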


In the $O(n^2)$ algorithm for counting split graphs that is described in~\cite{bina2015note}, there is initially a certain amount of overcounting because of the fact that some split graphs have more than one valid way of being partitioned into a clique and an independent set. To account for this, the algorithm in~\cite{bina2015note} first computes the total number of split graphs \emph{with overcounting}, and then it subtracts the value of the redundant terms to obtain the final answer. This gives an efficient counting algorithm, but this strategy does not lend itself as well to designing a sampling algorithm.

For \cref{thm:approx}, we wish to obtain not only an approximate \emph{counting} algorithm, but also an approximate \emph{sampling} algorithm. Therefore, we begin by designing a related, alternative counting algorithm for split graphs that partitions each split graph into three parts rather than two. In this three-part partition, the third set will consist of the vertices that sometimes belong to the clique and sometimes belong to the independent set (depending on the choice of the clique / independent set in the partition), which means each split graph will have a unique three-part partition.

For our purposes, it is sufficient to \emph{approximately} count split graphs, so the algorithm that we begin with is a $(1+\eps)$-approximate counting algorithm that uses $O(n^2)$ arithmetic operations when $\eps$ is fixed (see \cref{sec:n_squared}). In \cref{sec:faster}, we then speed this up by computing only the largest terms of the summation given in \cref{sec:n_squared}, to obtain an approximate counting algorithm for split graphs that uses only $O(n\log^2(1/\eps))$ arithmetic operations. In \cref{sec:split_sampling}, we design an approximate sampling algorithm for split graphs. Finally, in \cref{sec:approx_proof}, we describe the resulting approximate counting and sampling algorithms for \emph{chordal} graphs, which proves \cref{thm:approx}.

\subsection{Approximate counting of split graphs}
\label{sec:n_squared}

For convenience, we define the term \emph{partition} to allow empty parts. In other words, a partition of a set $S$ is a collection of disjoint sets whose union is $S$. We say a partition of the vertex set of a split graph $G$ is a \emph{split partition} if it partitions $G$ into a clique and an independent set; i.e., a split partition is a partition that demonstrates that $G$ is split. We denote a split partition that consists of the clique $C'$ and the independent set $I'$ by the ordered pair $(C',I')$.

\begin{definition}
Let $G$ be a split graph. The \emph{always-clique set} of $G$ is the subset of $V(G)$ consisting of vertices that belong to the clique in every split partition of $G$. The \emph{always-independent set} of $G$ is the set of vertices that belong to the independent set in every split partition of $G$. The \emph{questioning set} of $G$ is the set of vertices $v$ such that there exists a split partition that places $v$ in the clique as well as a split partition that places $v$ in the independent set.
\end{definition}

As an example, if $G$ is a complete graph, then all of $V(G)$ belongs to the questioning set. From now on, let $C$, $I$, and $Q$ denote the always-clique set, the always-independent set, and the questioning set, respectively, in a generic split graph.

\begin{observation}
\label{obs:universal_q}
For every split graph $G$, $C$ is a clique and $I$ is an independent set. Furthermore, every vertex in $Q$ is adjacent to every vertex in $C$ and is non-adjacent to every vertex in $I$.
\end{observation}

\begin{proof}
Clearly, $C$ is a clique and $I$ is an independent set, since $C\subseteq C'$ and $I\subseteq I'$ for every split partition $(C',I')$ of $G$. For the adjacencies of $Q$, let $v\in Q$. There exists some split partition $(C',I')$ of $G$ that places $v$ in the clique $C'$. Since $C\subseteq C'$, this shows that $v$ is adjacent to every vertex in $C$. Similarly, $v$ is non-adjacent to every vertex in $I$.
\end{proof}

In the proof of the next lemma, $P_3$ denotes a 3-vertex path, and $\overline{P_3}$ denotes the complement of a 3-vertex path. We say a graph is $P_3$-free (resp.\ $\overline{P_3}$-free) if it does not contain $P_3$ (resp.\ $\overline{P_3}$) as an induced subgraph.

\begin{lemma}
\label{lemma:clique_or_indep}
The questioning set $Q$ of a split graph $G$ is either a clique or an independent set.
\end{lemma}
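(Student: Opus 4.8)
The plan is to prove the contrapositive-style structural fact: if $Q$ is \emph{not} a clique, then $Q$ must be an independent set. Equivalently, we show that $Q$ cannot simultaneously contain a non-edge and an edge. The hint in the setup (mentioning $P_3$ and $\overline{P_3}$) suggests that the right way to organize the argument is: a graph is a split graph if and only if it is $\{C_4, C_5, 2K_2\}$-free (the standard characterization), or alternatively, reason directly about forbidden induced subgraphs among $\{P_3, \overline{P_3}\}$ restricted to $Q$. I would take the direct route. Suppose for contradiction that $Q$ contains two adjacent vertices $u_1, u_2$ and two non-adjacent vertices $w_1, w_2$ (these pairs need not be disjoint a priori, but one can quickly reduce to a small configuration on at most four vertices of $Q$).

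First I would handle the overlap cases. If $\{u_1,u_2\}$ and $\{w_1,w_2\}$ share a vertex, say $u_2 = w_1$, then $u_1, u_2=w_1, w_2$ form a $P_3$ induced in $Q$ (edge $u_1u_2$, non-edge $u_2 w_2$, and the status of $u_1 w_2$ gives either $P_3$ or $\overline{P_3}$). If they are disjoint, then on the four vertices $u_1,u_2,w_1,w_2$ we have one guaranteed edge and one guaranteed non-edge, and again among these four vertices one can locate an induced $P_3$ or $\overline{P_3}$ by a short case analysis on the remaining four adjacency pairs. So in all cases, $Q$ contains three vertices $a,b,c$ inducing either a $P_3$ or a $\overline{P_3}$.

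Now I would derive the contradiction using \cref{obs:universal_q}. Every vertex of $Q$ is adjacent to all of $C$ and non-adjacent to all of $I$, and $V(G) = C \cup I \cup Q$. Consider the case that $a,b,c$ induce a $P_3$, say with edges $ab$ and $bc$ and non-edge $ac$. Build a candidate split partition by placing $b$ in the clique side and $a, c$ in the independent side, together with all of $C$ on the clique side and all of $I$ on the independent side. Using \cref{obs:universal_q}, $b$ is adjacent to everything in $C$, and $a,c$ are non-adjacent to everything in $I$; also $a$ is adjacent to $b$ (fine, cross edge) and $c$ is adjacent to $b$. This shows $a$ (and $c$) can be placed in the independent set by \emph{some} valid split partition, which is consistent with $a,c \in Q$ — so I need to be more careful: the contradiction should instead come from the fact that $b\in Q$ forces the existence of a split partition placing $b$ in the \emph{independent} set, and then $a$ and $c$, being adjacent to $b$, must both go in the clique, forcing $ac$ to be an edge, contradicting the $P_3$. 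This is the crux of the argument, and the dual argument (using $\overline{P_3}$ and swapping the roles of clique/independent, i.e., passing to the complement) handles the other case.

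The main obstacle I anticipate is the bookkeeping in the reduction step: showing cleanly that "$Q$ contains an edge and a non-edge" implies "$Q$ contains an induced $P_3$ or $\overline{P_3}$" without an unwieldy case split, and then making sure the split-partition-modification argument is airtight — specifically, that when we take a split partition $(C',I')$ witnessing $b \in I'$ and argue about the neighbors of $b$ inside $Q$, we correctly use that $a, c \in Q \setminus \{b\}$ and that $N(b) \cap I' = \emptyset$ forces every neighbor of $b$ into $C'$. Everything else is immediate from \cref{obs:universal_q}. I would present the $P_3$ case in full and then say "the $\overline{P_3}$ case follows by the same argument applied to the complement graph $\overline{G}$, which is also split and has the same always-clique/always-independent/questioning sets with the roles of clique and independent set exchanged."
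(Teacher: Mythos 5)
Your proof is correct and follows essentially the same route as the paper: both arguments reduce to showing that $G[Q]$ contains no induced $P_3$ or $\overline{P_3}$, by taking a split partition that places the center vertex (which lies in $Q$) on the independent side (resp.\ the clique side, via complementation) and letting the partition constraints force the contradiction. The only cosmetic difference is how the forbidden triple is located: you extract an induced $P_3$ or $\overline{P_3}$ directly from an edge plus a non-edge in $Q$ by a short case analysis, whereas the paper first uses $P_3$-freeness to conclude the components of $G[Q]$ are cliques and then finds a $\overline{P_3}$ across two components.
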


\begin{proof}
We first observe that $G[Q]$ is $P_3$-free. To see this, suppose $G[Q]$ contains some induced path of $u,v,w$ of length 3. Since $v\in Q$, there exists some split partition $(C',I')$ of $G$ such that $v\in I'$. This means $u\in C'$ and $w\in I'$ since $(u,v)\in E(G)$ and $(u,w)\notin E(G)$. Therefore, $v$ and $w$ are adjacent and both belong to $I'$, which is a contradiction.

This shows that $G[Q]$ is $P_3$-free, so every connected component of $G[Q]$ is a clique. If $Q$ is empty or $G[Q]$ has only one component, we are done. Now we claim that if $G[Q]$ has at least two components, then $Q$ is an independent set. Suppose to the contrary that some component of $G[Q]$ contains an edge $(u,v)$, and let $w$ be a vertex in some other component. The vertices $u,w,v$ form an induced $\overline{P_3}$ in $G[Q]$. However, we can see that $G[Q]$ is $\overline{P_3}$-free by an argument similar to the previous paragraph with the roles of $C'$ and $I'$ reversed, so this is a contradiction.
\end{proof}

\begin{lemma}
\label{lemma:neighbor_exists}
In a split graph, if $Q$ is a clique, then every vertex in $C$ has a neighbor in $I$. If $Q$ is an independent set, then every vertex in $I$ has a non-neighbor in $C$.
\end{lemma}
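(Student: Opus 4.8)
The plan is to prove both statements by the same ``move one vertex'' trick, using a canonical split partition extracted from \cref{obs:universal_q} and the fact that $C$, $I$, $Q$ form a partition of $V(G)$ (so these three sets are pairwise disjoint). First I would record the two canonical split partitions. If $Q$ is a clique, then $C\cup Q$ is a clique: $C$ is a clique and $Q$ is a clique by hypothesis, and every vertex of $Q$ is adjacent to every vertex of $C$ by \cref{obs:universal_q}; since $I$ is an independent set, $(C\cup Q,\,I)$ is a split partition of $G$. Dually, if $Q$ is an independent set, then $I\cup Q$ is an independent set (every vertex of $Q$ is non-adjacent to every vertex of $I$ by \cref{obs:universal_q}), so $(C,\,I\cup Q)$ is a split partition of $G$.

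For the first statement, suppose $Q$ is a clique and, for contradiction, that some $v\in C$ has no neighbor in $I$. Starting from the split partition $(C\cup Q,\,I)$, move $v$ to the independent side to get the pair $\bigl((C\cup Q)\setminus\{v\},\ I\cup\{v\}\bigr)$. The first set is still a clique (a subset of $C\cup Q$), and the second set is still an independent set because $I$ is independent and $v$ has no neighbor in $I$. Hence this is a valid split partition of $G$ that places $v$ in the independent set. But $v\in C\cup Q$ also lies in the clique of the split partition $(C\cup Q,\,I)$, so $v$ belongs to the clique in one split partition and the independent set in another, i.e.\ $v\in Q$; this contradicts $v\in C$ and the disjointness of $C$ and $Q$. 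Therefore every vertex of $C$ has a neighbor in $I$.

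The second statement is the mirror image. Suppose $Q$ is an independent set and, for contradiction, that some $v\in I$ is adjacent to every vertex of $C$. Starting from $(C,\,I\cup Q)$, move $v$ to the clique side to get $\bigl(C\cup\{v\},\ (I\cup Q)\setminus\{v\}\bigr)$: the first set is a clique since $C$ is a clique and $v$ is adjacent to all of $C$, and the second set is an independent set as a subset of $I\cup Q$. So $v$ is in the independent set of one split partition and the clique of another, giving $v\in Q$ and contradicting $v\in I$. I expect the only real care needed is the bookkeeping with \cref{obs:universal_q} to guarantee the two canonical partitions are genuinely split partitions, and noting that the hypothesis ``$Q$ is a clique (resp.\ an independent set)'' is exactly what \cref{lemma:clique_or_indep} leaves open, so the two cases together are the ones that actually arise; there is no deeper obstacle.
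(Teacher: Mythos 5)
Your proof is correct and is essentially the paper's argument: the same one-vertex-move construction $(C\cup Q\setminus\{v\},\,I\cup\{v\})$ (and its mirror) yielding a split partition that contradicts $v$'s membership in the always-clique (resp.\ always-independent) set. You simply spell out the verification via \cref{obs:universal_q} and the symmetric case, which the paper leaves as "similar."
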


\begin{proof}
For the case when $Q$ is a clique, suppose there is some vertex $v\in C$ that is not adjacent to any vertex in $I$. Then $(C\cup Q\setminus\{v\},I\cup\{v\})$ is a split partition that places $v$ in the independent set, contradicting the fact that $v\in C$. Therefore, every vertex in $C$ has a neighbor in $I$. The proof for the case when $Q$ is an independent set is similar.
\end{proof}

For the case when $|Q|\ge 2$, it is easy to exactly compute the number of split graphs. To do so, we will count the number of split graphs that satisfy the properties described in the following lemma.

\begin{lemma}
\label{lemma:c_i_q}
Suppose $G$ is a graph, and suppose $\{\hat C,\hat I,\hat Q\}$ is a partition of $V(G)$ with the following properties:
\begin{enumerate}
    \item $\hat C$ is a clique and $\hat I$ is an independent set
    \item $\hat Q$ is a clique or independent set, and $|\hat Q|\ge 2$
    \item Every vertex in $\hat Q$ is adjacent to every vertex in $\hat C$ and is non-adjacent to every vertex in $\hat I$
    \item If $\hat Q$ is a clique, then every vertex in $\hat C$ has a neighbor in $\hat I$
    \item If $\hat Q$ is an independent set, then every vertex in $\hat I$ has a non-neighbor in $\hat C$.
\end{enumerate}
Then $G$ is a split graph, and furthermore, $\hat C$ is the always-clique set, $\hat I$ is the always-independent set, and $\hat Q$ is the questioning set of $G$.
\end{lemma}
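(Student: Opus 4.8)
The plan is to show that the given partition $\{\hat C,\hat I,\hat Q\}$ must coincide with the canonical three-part decomposition $\{C,I,Q\}$ of $G$. First I would verify that $G$ is split: this is immediate, since $\hat C\cup\hat Q$ is a clique (properties 1, 2, 3 together guarantee that every pair of vertices inside $\hat C\cup\hat Q$ is adjacent, using that $\hat Q$ is a clique if it has an internal edge --- and in the case $\hat Q$ is an independent set with $|\hat Q|\ge 2$, I instead take $\hat C$ as the clique side and $\hat I\cup\hat Q$ as the independent set), and correspondingly the complementary set is an independent set. So $(\hat C\cup\hat Q,\hat I)$ or $(\hat C,\hat I\cup\hat Q)$ is a split partition, depending on whether $\hat Q$ is a clique or an independent set; in either case at least one split partition exists.

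Next I would pin down $Q$. For the ``$\supseteq$'' direction ($\hat Q\subseteq Q$): if $\hat Q$ is a clique, then $(\hat C\cup\hat Q,\hat I)$ is a split partition placing every $v\in\hat Q$ in the clique; to place a given $v\in\hat Q$ in the independent set, I use property 4 together with the fact that $\hat Q$ is a clique of size $\ge 2$ --- move $v$ to the independent side and move nothing else, but one must check the result is still a split partition, i.e. that $v$ has no neighbor forced to be in the independent set. Actually the cleaner move: put $\hat C\cup(\hat Q\setminus\{v\})$ as the clique and $\hat I\cup\{v\}$ as the independent set; this works because $v$ is non-adjacent to all of $\hat I$ (property 3) and the remaining set $\hat C\cup(\hat Q\setminus\{v\})$ is still a clique. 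Hence every vertex of $\hat Q$ is in $Q$. The symmetric argument (swapping clique/independent, using property 5) handles the case $\hat Q$ is an independent set. For the ``$\subseteq$'' direction ($Q\subseteq\hat Q$): suppose $v\in\hat C$; I claim $v$ is in the clique of every split partition. In any split partition $(C',I')$, the set $\hat Q$ has size $\ge 2$, so if $\hat Q$ is a clique it can contribute at most one vertex to $I'$, meaning some $q\in\hat Q\cap C'$; since $v$ is adjacent to $q$... hmm, that alone does not force $v\in C'$. Instead I would argue: if $v\in\hat C$ lay in $I'$ for some split partition, then $v$ has a neighbor $w\in\hat I$ (property 4, when $\hat Q$ is a clique), and $w\in I'$ would be forced unless $w\in C'$; chasing this through, one derives that $\hat C\cup\hat Q$ cannot all fit on one side, contradicting that it is a clique of size $|\hat C|+|\hat Q|\ge |\hat C|+2$ while $I'$ is independent. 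The cleanest route is: $\hat C\cup\hat Q$ is a clique, so in any split partition at most one of its vertices lies in $I'$; since $|\hat Q|\ge 2$, at least one vertex of $\hat Q$ lies in $C'$, call it $q$. Then for $v\in\hat C$, using that $v$ has a neighbor $w\in\hat I$ (property 4) and $w\notin$ clique-with-$q$ reasoning... I will need to be careful here and do the bookkeeping of which vertex is the unique one allowed on the ``wrong'' side. Once $\hat C\subseteq C$ and (symmetrically) $\hat I\subseteq I$ are established, combined with $\hat Q\subseteq Q$ and the fact that $\{C,I,Q\}$ and $\{\hat C,\hat I,\hat Q\}$ are both partitions of $V(G)$, the three inclusions $\hat C\subseteq C$, $\hat I\subseteq I$, $\hat Q\subseteq Q$ together with $Q\subseteq\hat Q$ force equality of all three sets (since $C,I,Q$ are disjoint and cover $V(G)$, and likewise the hatted sets).

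The main obstacle I anticipate is the ``$\subseteq$'' direction, i.e. showing $\hat C\subseteq C$ and $\hat I\subseteq I$ (equivalently $C\subseteq\hat C$ etc.): it requires ruling out that a vertex placed by the hypotheses in $\hat C$ could nonetheless migrate to the independent set in some exotic split partition. The key leverage is that $|\hat Q|\ge 2$ and $\hat Q$ is homogeneous (clique or independent), so in the complement-graph picture $\hat C\cup\hat Q$ (or $\hat I\cup\hat Q$) is a clique too large to be split across the independent side of any split partition; properties 4 and 5 then supply the witnessing neighbor/non-neighbor that makes the migration impossible. I would organize this as: (i) in any split partition, the big clique $\hat C\cup\hat Q$ contributes at most one vertex to the independent set; (ii) by $|\hat Q|\ge 2$, that lone vertex, if it exists, can be taken to lie in $\hat Q$ after an exchange argument, so $\hat C\subseteq C'$; (iii) run the mirror argument on the complement for $\hat I$. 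This should make the whole proof short once the case split on ``$\hat Q$ clique vs.\ independent'' is set up cleanly.
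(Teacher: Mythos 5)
Your overall strategy is the same as the paper's (the two candidate split partitions, the exchange argument giving $\hat Q\subseteq Q$, then containments for $\hat C$ and $\hat I$ plus the partition property), and the $\hat Q\subseteq Q$ step is correct as written. But the hard step --- showing $\hat C\subseteq C$, i.e.\ that \emph{no} split partition $(C',I')$ can place a vertex $v\in\hat C$ in $I'$ --- is left genuinely open. Your ``cleanest route'' (ii) does not work as stated: an exchange argument that relocates the lone independent-side vertex of the clique $\hat C\cup\hat Q$ into $\hat Q$ only produces \emph{some other} split partition; it cannot rule out the existence of a partition whose lone such vertex lies in $\hat C$, which is exactly what must be excluded. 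Likewise the line ``$\hat C\cup\hat Q$ cannot all fit on one side, contradicting that it is a clique of size $\ge|\hat C|+2$'' is not a contradiction at all --- a large clique sitting entirely inside $C'$ is perfectly consistent with a split partition. You name the right ingredients (the neighbor $w\in\hat I$ from property 4, a vertex $q\in\hat Q$, the clique constraint on $C'$) but explicitly stop before assembling them.

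The missing two lines are: suppose $v\in\hat C\cap I'$; by property (4), $v$ has a neighbor $u\in\hat I$, and since $uv\in E(G)$ and $v\in I'$, we must have $u\in C'$. Now take any $q\in\hat Q$: by property (3), $q$ is adjacent to $v\in I'$, so $q\notin I'$, and $q$ is non-adjacent to $u\in C'$, so $q\notin C'$ --- contradiction. (No counting of $|\hat Q|\ge 2$ is needed here; that hypothesis is what makes the easier containment $\hat I\subseteq I$ work: since $\hat Q$ is a clique of size at least $2$, some $\hat q\in\hat Q$ lies in $C'$, and every $v\in\hat I$ is non-adjacent to $\hat q$, hence $v\in I'$.) Also be careful with your plan (iii) of ``mirroring on the complement'' to get $\hat I\subseteq I$ within the clique case: complementation turns the clique case into the independent-set case, so it only helps if you have already proved the $\hat C\subseteq C$ statement for \emph{both} cases; the direct argument just given is shorter. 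With these steps filled in, your proof coincides with the paper's.
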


We have already established that the ``converse'' of this lemma is true: If $G$ is a split graph with always-clique set, always-independent set, and questioning set equal to $C$, $I$, and $Q$, with $|Q|\ge 2$, then these three sets satisfy properties (1)-(5) of \cref{lemma:c_i_q} with $\hat C = C$, $\hat I = I$, and $\hat Q = Q$. This was proved in \cref{obs:universal_q,lemma:clique_or_indep,lemma:neighbor_exists}. Combined with \cref{lemma:c_i_q}, this tells us that in order to count split graphs with $|Q|\ge 2$, we simply need to count the number of split graphs (for each given $C$, $I$, and $Q$) that satisfy properties (1)-(5).

\begin{proof}[Proof of \cref{lemma:c_i_q}]
Either $(\hat C\cup\hat Q,\hat I)$ or $(\hat C,\hat I\cup\hat Q)$ is a split partition of $G$, so $G$ is a clearly a split graph. As usual, let $C$, $I$, and $Q$ denote the always-clique set, always-independent set, and questioning set of $G$. From now on, we assume $\hat Q$ is a clique. For the case when $\hat Q$ is an independent set, the argument is similar.

We first show $\hat Q\subseteq Q$. Let $v\in\hat Q$. The split partition $(\hat C\cup\hat Q,\hat I)$ places $v$ in the clique, and the split partition $(\hat C\cup\hat Q\setminus\{v\},\hat I\cup\{v\})$ places $v$ in the independent set. Therefore, $v\in Q$.

Next, we show $\hat C\subseteq C$. Suppose there is some $v\in\hat C\setminus C$. This means there is some split partition $(C',I')$ of $G$ such that $v\in I'$. By property (4), $v$ has some neighbor $u\in\hat I$, and we must have $u\in C'$. Now choose a vertex $q\in\hat Q\subseteq Q$. By property (3), we have $(v,q)\in E(G)$ and $(u,q)\notin E(G)$. This means $q$ cannot belong to $I'$ nor $C'$, which is a contradiction.

To see that $\hat I\subseteq I$, let $(C',I')$ be a split partition of $G$. Since $\hat Q$ is a clique and $|\hat Q|\ge 2$, there is at least one vertex $\hat q\in\hat Q$ that belongs to $C'$. For every vertex $v\in\hat I$, we know $v$ is not adjacent to $\hat q$, so $v\in I'$. Therefore, $\hat I\subseteq I$. Since $\hat C$, $\hat I$, and $\hat Q$ together cover all of $V(G)$, this shows that $C = \hat C$, $I = \hat I$, and $Q = \hat Q$.
\end{proof}

Putting this together, we have a formula that (exactly) counts split graphs with $|Q|\ge 2$.

\begin{lemma}
\label{lemma:q_is_2}
Let $n\in\N$. The number of split graphs on $n$ vertices with $|Q|\ge 2$ is given by the formula\footnote{We use the convention $0^0 = 1$, so the term where $q = n$ and $c = 0$ is equal to 1.} $$2\sum_{q=2}^n\sum_{c=0}^{n-q}\binom{n}{q}\binom{n-q}{c}(2^{n-c-q}-1)^c.$$
\end{lemma}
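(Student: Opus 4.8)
The plan is to count the split graphs on $[n]$ with $|Q|\ge 2$ by exploiting the structural characterization already in place. By \cref{lemma:c_i_q}, together with \cref{obs:universal_q}, \cref{lemma:clique_or_indep}, and \cref{lemma:neighbor_exists}, a graph $G$ with vertex set $[n]$ is a split graph with $|Q|\ge 2$ if and only if $[n]$ admits a partition $\{\hat C,\hat I,\hat Q\}$ satisfying properties (1)--(5) of \cref{lemma:c_i_q}; and in that case this partition is forced to be $\hat C=C$, $\hat I=I$, $\hat Q=Q$, hence unique. So I would count by summing, over all admissible choices of the partition, the number of graphs realizing it.

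First I would dispose of the factor of $2$. By \cref{lemma:clique_or_indep}, $G[Q]$ is either a clique or an independent set, and since $|Q|\ge 2$ these two alternatives are mutually exclusive. The complement map $G\mapsto\overline G$ sends split graphs to split graphs (if $(C',I')$ is a split partition of $G$ then $(I',C')$ is one of $\overline G$), and one checks that it preserves the questioning set ($Q(\overline G)=Q(G)$), interchanges the always-clique and always-independent sets, and turns $G[Q]$ from a clique into an independent set and vice versa. Being an involution, it restricts to a bijection between the split graphs on $[n]$ with $|Q|\ge 2$ in which $Q$ is a clique and those in which $Q$ is an independent set. Thus it suffices to count the first family and multiply by $2$.

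Next I would count the split graphs on $[n]$ with $|Q|\ge 2$ in which $\hat Q$ is a clique, organizing the sum by $q=|\hat Q|\ge 2$ and $c=|\hat C|$, and writing $i=|\hat I|=n-q-c$. There are $\binom{n}{q}$ ways to choose $\hat Q$ and then $\binom{n-q}{c}$ ways to choose $\hat C$, after which $\hat I$ is determined. For a fixed such partition, properties (1)--(3) of \cref{lemma:c_i_q} and the requirement that $\hat Q$ be a clique force the adjacency or non-adjacency of every pair of vertices, with the single exception of pairs having one endpoint in $\hat C$ and one in $\hat I$; property (5) is vacuous in this case. The only remaining freedom is constrained by property (4): for each of the $c$ vertices of $\hat C$ its neighborhood inside $\hat I$ must be an arbitrary \emph{nonempty} subset of $\hat I$, and these $c$ choices are independent, contributing $(2^i-1)^c=(2^{n-c-q}-1)^c$ graphs. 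Summing over $q$ from $2$ to $n$ and $c$ from $0$ to $n-q$, and multiplying by $2$, gives exactly the stated formula; the convention $0^0=1$ handles precisely the degenerate term $q=n$, $c=0$, which corresponds to $K_n$ (and to $\overline{K_n}$ after complementation).

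The substantive content is carried entirely by the already-established structural lemmas, so I do not expect any genuine obstacle; the one place demanding care is ruling out double counting. Concretely I would verify that each split graph with $|Q|\ge 2$ is produced exactly once, which relies on three things: the uniqueness of the partition $(C,I,Q)$ furnished by \cref{lemma:c_i_q}; the disjointness of the ``$Q$ is a clique'' and ``$Q$ is an independent set'' cases, which uses $|Q|\ge 2$; and a correct accounting of which edges among and between $\hat C,\hat I,\hat Q$ are forced by properties (1)--(3) versus free under property (4). Once this bookkeeping is nailed down, the formula is immediate.
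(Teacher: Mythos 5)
Your proposal is correct and matches the paper's proof: the paper likewise counts the case where $Q$ is a clique by choosing the label sets for $Q$ and $C$ and then a nonempty neighborhood in $I$ for each vertex of $C$, invoking \cref{obs:universal_q,lemma:clique_or_indep,lemma:neighbor_exists,lemma:c_i_q} to see that exactly the right graphs are counted, and doubles via the complementation bijection with the case where $Q$ is an independent set. Your additional care about uniqueness of the partition and disjointness of the two cases (using $|Q|\ge 2$) is just a more explicit rendering of the same argument.
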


\begin{proof}
By \cref{lemma:clique_or_indep}, $Q$ is either a clique or an independent set. First consider the case when $Q$ is a clique. Let $q$ and $c$ be integers such that $2\le q\le n$ and $0\le q\le n-q$. We wish to count the number of split graphs with $|Q| = q$, $|C| = c$, and $|I| = n-q-c$. There are $\binom{n}{q}\binom{n-q}{c}$ ways of assigning the labels $1,\ldots,n$ to the sets $C$, $I$, and $Q$, since once we have chosen the labels for $Q$ and $C$, the labels in $I$ are known. For each possible choice of label sets, there are $(2^{n-c-q}-1)^c$ split graphs of this form, since we must choose a nonempty subset of $I$ be to the neighborhood for each vertex in $C$. By \cref{obs:universal_q,lemma:clique_or_indep,lemma:neighbor_exists,lemma:c_i_q}, this counts exactly the correct set of graphs. Therefore, the number of split graphs on $n$ vertices in which $Q$ is a clique of size at least 2 is equal to $$\sum_{q=2}^n\sum_{c=0}^{n-q}\binom{n}{q}\binom{n-q}{c}(2^{n-c-q}-1)^c.$$

By symmetry, this is exactly equal to the number of split graphs on $n$ vertices in which $Q$ is an independent set of size at least 2 (the bijection is given by taking the complement of each graph). Therefore, the number of split graphs with $|Q|\ge 2$ is twice the above summation.
\end{proof}

We now move to the case when $|Q|\le 1$.

\begin{lemma}
\label{lemma:neighbor_exists_2}
Consider a split graph with $|Q|\le 1$. Every vertex in $C$ has a neighbor in $I$, and every vertex in $I$ has a non-neighbor in $C$.
\end{lemma}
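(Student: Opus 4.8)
The plan is to derive this lemma immediately from \cref{lemma:neighbor_exists}. The key observation is that a set of size at most one is vacuously both a clique and an independent set, since there are no pairs of distinct vertices in it to check. Hence, when $|Q|\le 1$, the set $Q$ simultaneously satisfies the hypothesis ``$Q$ is a clique'' and the hypothesis ``$Q$ is an independent set'' of \cref{lemma:neighbor_exists}.

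Given that, I would simply invoke \cref{lemma:neighbor_exists} twice. Applying its first conclusion (valid because $Q$ is a clique) yields that every vertex in $C$ has a neighbor in $I$. Applying its second conclusion (valid because $Q$ is an independent set) yields that every vertex in $I$ has a non-neighbor in $C$. Together these are exactly the two assertions of the lemma.

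There is essentially no obstacle here; the only point requiring a word of care is the degenerate-case convention that a set with at most one element counts as both a clique and an independent set. This convention is standard and is in any case consistent with the proof of \cref{lemma:neighbor_exists}: that proof only uses that $C\cup Q$ is a clique --- which holds by \cref{obs:universal_q} whenever $|Q|\le 1$, since then $Q$ is empty or a single vertex adjacent to all of $C$ --- and, symmetrically, that $I\cup Q$ is an independent set. If one prefers to avoid appealing to the convention, the contradiction argument of \cref{lemma:neighbor_exists} can be repeated verbatim: if $v\in C$ had no neighbor in $I$, then $(C\cup Q\setminus\{v\},\,I\cup\{v\})$ would be a split partition placing $v$ on the independent side, contradicting $v\in C$; the argument for a vertex of $I$ is symmetric.
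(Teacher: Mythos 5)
Your proposal is correct and matches the paper's proof, which likewise deduces the statement immediately from \cref{lemma:neighbor_exists} (the paper's proof is a one-line citation of that lemma, and your elaboration of why a set of size at most one is both a clique and an independent set is exactly the implicit justification). No gaps.
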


\begin{proof}
This follows immediately from \cref{lemma:neighbor_exists}.
\end{proof}

We also need a lemma that is analogous to \cref{lemma:c_i_q}.

\begin{lemma}
\label{lemma:c_i_q_2}
Suppose $G$ is a graph, and suppose $\{\hat C,\hat I,\hat Q\}$ is a partition of $V(G)$ with the following properties:
\begin{enumerate}
    \item $\hat C$ is a clique and $\hat I$ is an independent set
    \item $|\hat Q|\le 1$
    \item Every vertex in $\hat Q$ is adjacent to every vertex in $\hat C$ and is non-adjacent to every vertex in $\hat I$
    \item Every vertex in $\hat C$ has a neighbor in $\hat I$, and every vertex in $\hat I$ has a non-neighbor in $\hat C$.
\end{enumerate}
Then $G$ is a split graph, and furthermore, $\hat C$ is the always-clique set, $\hat I$ is the always-independent set, and $\hat Q$ is the questioning set of $G$.
\end{lemma}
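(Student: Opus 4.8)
The plan is to mirror the structure of the proof of \cref{lemma:c_i_q}, adapting it to the regime $|\hat Q|\le 1$. First I would observe that $G$ is a split graph: the partition $(\hat C\cup\hat Q,\hat I)$ places the (at most one) questioning vertex into the clique side, and since $\hat C$ is a clique, $\hat Q$ is a clique (trivially, as $|\hat Q|\le 1$), and every vertex of $\hat Q$ is adjacent to all of $\hat C$ by property (3), the set $\hat C\cup\hat Q$ is indeed a clique; $\hat I$ is an independent set by property (1). So $G$ is split, and as usual we let $C$, $I$, $Q$ denote its always-clique set, always-independent set, and questioning set. The goal is to show $\hat C = C$, $\hat I = I$, $\hat Q = Q$.

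Next I would establish the three inclusions, in an order chosen so that each step can use the previous ones. The key difference from \cref{lemma:c_i_q} is that we can no longer pick \emph{two} vertices of $\hat Q$, so the argument that forced $\hat I\subseteq I$ there must be replaced. I would proceed as follows. (i) $\hat C\subseteq C$: suppose $v\in\hat C\setminus C$, so some split partition $(C',I')$ has $v\in I'$. By property (4), $v$ has a neighbor $u\in\hat I$; since $(C',I')$ is a split partition and $v\in I'$, we need $u\in C'$. By property (4) again, $u\in\hat I$ has a non-neighbor $w\in\hat C$; since $u\in C'$ we need $w\in I'$. But then $v,w\in\hat C$ are adjacent (as $\hat C$ is a clique) and both lie in $I'$ — unless $v=w$; and $v\ne w$ since $v$ is adjacent to $u$ while $w$ is not. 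Contradiction, so $\hat C\subseteq C$. (ii) $\hat I\subseteq I$: symmetric, using that $\hat I$ is independent and property (4) in the other direction; suppose $v\in\hat I\setminus I$, get a split partition placing $v$ in the clique, use that $v$ has a non-neighbor in $\hat C$ which must then be on the independent side, and that this vertex has a neighbor in $\hat I$ which must be on the clique side, producing two adjacent vertices of $\hat I$ in $I'$, a contradiction. (iii) Finally, since $\{\hat C,\hat I,\hat Q\}$ and $\{C,I,Q\}$ are both partitions of $V(G)$ with $\hat C\subseteq C$ and $\hat I\subseteq I$, we get $Q\subseteq\hat Q$; and if $\hat Q=\{q\}$ is nonempty, then property (3) makes $q$ adjacent to all of $\hat C$ and non-adjacent to all of $\hat I$, so both $(\hat C\cup\{q\},\hat I)$ and $(\hat C,\hat I\cup\{q\})$ are valid split partitions, witnessing $q\in Q$; hence $\hat Q\subseteq Q$. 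Combining, $\hat C=C$, $\hat I=I$, $\hat Q=Q$.

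I expect the main obstacle to be step (i)/(ii): in \cref{lemma:c_i_q} one exploited $|\hat Q|\ge 2$ to pull an element of $\hat Q$ onto the clique side of an \emph{arbitrary} split partition, which is unavailable here. The replacement has to go purely through the ``neighbor in $\hat I$ / non-neighbor in $\hat C$'' conditions of property (4) (which are exactly the specialization of properties (4)--(5) of \cref{lemma:c_i_q} to the case $|\hat Q|\le 1$, where \emph{both} now hold regardless of whether $\hat Q$ is a clique or independent set), and one must be careful to check the degenerate coincidence $v=w$ does not occur. Once that is handled, the rest is bookkeeping with the two partitions. A brief remark worth including: the ``converse'' direction — that a genuine split graph $G$ with $|Q|\le 1$ satisfies properties (1)--(4) with $\hat C=C,\hat I=I,\hat Q=Q$ — follows from \cref{obs:universal_q} and \cref{lemma:neighbor_exists_2}, so together \cref{lemma:c_i_q_2} characterizes exactly which labeled graphs with a prescribed $(C,I,Q)$ decomposition (with $|Q|\le 1$) are split, enabling a counting formula for split graphs with $|Q|\le 1$ analogous to \cref{lemma:q_is_2}.
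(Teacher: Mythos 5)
Your proposal is correct and takes essentially the same route as the paper: establish splitness via $(\hat C\cup\hat Q,\hat I)$, get $\hat C\subseteq C$ by following property (4) twice (neighbor in $\hat I$ forced into $C'$, its non-neighbor in $\hat C$ forced into $I'$, contradicting that $\hat C$ is a clique), argue $\hat I\subseteq I$ symmetrically, and obtain $\hat Q\subseteq Q$ from the two split partitions $(\hat C\cup\hat Q,\hat I)$ and $(\hat C,\hat I\cup\hat Q)$ --- the paper does this last step first, but the reordering is immaterial, and your explicit check that $v\ne w$ is a detail the paper leaves implicit. One small slip in your sketch of the symmetric step: the contradiction there is two \emph{non-adjacent} vertices of $\hat I$ both landing in the clique side $C'$, not two adjacent vertices of $\hat I$ in $I'$; with that wording fixed the argument matches the paper's.
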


\begin{proof}
The graph $G$ is clearly a split graph since $(\hat C\cup\hat Q,\hat I)$ is a split partition of $G$. We also have the split partition $(\hat C,\hat I\cup\hat Q)$, so $\hat Q\subseteq Q$.

To see that $\hat C\subseteq C$, suppose there is some $v\in\hat C\setminus C$. This means there is some split partition $(C',I')$ of $G$ such that $v\in I'$. By property (4), $v$ has some neighbor $u\in\hat I$, and we must have $u\in C'$. And again by property (4), $u$ has some non-neighbor $w\in\hat C$, and we must have $w\in I'$. Since $v,w\in\hat C$, this means $v$ and $w$ are two adjacent vertices in $I'$, which is a contradiction. The proof that $\hat I\subseteq I$ is similar. Therefore, $C = \hat C$, $I = \hat I$, and $Q = \hat Q$.
\end{proof}

For the proof of the following lemma, a \emph{two-tone} graph is defined as a labeled graph in which, in addition to the vertex labels, each vertex is colored with one of two colors (cyan and indigo, for $C$ and $I$). A \emph{three-tone} graph is defined similarly, but with three colors (cyan, indigo, and white, for $C$, $I$, and $Q$).

Unlike the case when $|Q|\ge 2$, property (4) of \cref{lemma:c_i_q_2} now has two neighbor/non-neighbor conditions that need to be true simultaneously, so it is no longer trivial to exactly count the number of possibilities. However, we can compute a very close approximation.

\begin{lemma}
\label{lemma:q_is_empty}
If $n$ is sufficiently large, then the following formula gives a $\left(1+\frac{1}{(3/2)^{n/3}}\right)$-approximation of the number of split graphs on $n$ vertices with $Q = \emptyset$: $$\sum_{c=2}^{\lfloor\frac{n}{2}\rfloor}\binom{n}{c}(2^c-1)^{n-c}+\sum_{c=\lfloor\frac{n}{2}\rfloor+1}^{n-2}\binom{n}{c}(2^{n-c}-1)^c.$$
\end{lemma}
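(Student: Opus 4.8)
The plan is to show that each of the two sums in the formula is a term-by-term \emph{overcount} of the true number of split graphs on $n$ vertices with $Q=\emptyset$, and that the total overcounting error is exponentially smaller than the true count.

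First I would pin down the exact quantity being approximated. By \cref{lemma:c_i_q_2} together with \cref{obs:universal_q} and \cref{lemma:neighbor_exists_2}, split graphs with $Q=\emptyset$ are in bijection with partitions $\{\hat C,\hat I\}$ of $[n]$ in which $\hat C$ is the clique side, $\hat I$ the independent side, every vertex of $\hat C$ has a neighbor in $\hat I$, and every vertex of $\hat I$ has a non-neighbor in $\hat C$. A short argument shows $2\le |\hat C|\le n-2$: a clique side of size $\le 1$ cannot simultaneously contain a vertex with a neighbor in $\hat I$ and have every $\hat I$-vertex miss it, and symmetrically for $\hat I$. Writing $c=|\hat C|$ and letting $N(c)$ be the number of admissible bipartite edge-sets between fixed disjoint sets of sizes $c$ and $n-c$, the true count is $T:=\sum_{c=2}^{n-2}\binom{n}{c}N(c)$. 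I then record the two one-sided bounds $N(c)\le (2^c-1)^{n-c}$ (each $\hat I$-vertex may pick any proper subset of $\hat C$ as its neighborhood) and $N(c)\le (2^{n-c}-1)^c$ (each $\hat C$-vertex may pick any nonempty neighborhood in $\hat I$). These immediately show that the $c$-th term of the first sum is at least $\binom{n}{c}N(c)$ when $c\le\lfloor n/2\rfloor$, and likewise for the second sum when $c>\lfloor n/2\rfloor$, so the formula $F$ satisfies $F\ge T$; this is the easy half of the claimed approximation.

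The main work is the reverse inequality $F\le (1+(3/2)^{-n/3})T$, which I would prove by bounding the overcount term by term with a union bound. In the first sum, a pattern counted by $\binom{n}{c}(2^c-1)^{n-c}$ but not by $\binom{n}{c}N(c)$ has some $v\in\hat C$ with no neighbor in $\hat I$; for a fixed such $v$, every $\hat I$-vertex then has its neighborhood contained in $\hat C\setminus\{v\}$ ($2^{c-1}$ choices, automatically a proper subset of $\hat C$), so there are at most $c\cdot 2^{(c-1)(n-c)}$ such patterns. Hence the $c$-th overcount is at most $\binom{n}{c}\,c\,2^{(c-1)(n-c)}$, and dividing by the full term gives relative error at most $c\,(2^{c-1}/(2^c-1))^{n-c}\le c\,(2/3)^{n-c}\le n\,(2/3)^{n/2}$, using $2^{c-1}/(2^c-1)\le 2/3$ for $c\ge 2$ and $c\le\lfloor n/2\rfloor$. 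Applying the symmetric estimate with the roles of the clique and independent sides exchanged (equivalently $c\mapsto n-c$) yields the same relative bound $n\,(2/3)^{n/2}$ for each term of the second sum. Summing, $F-T\le n\,(2/3)^{n/2}\,F$, so $T\ge (1-n(2/3)^{n/2})F$; once $n$ is large enough that $n(2/3)^{n/2}<1/2$, this rearranges to $F\le (1+2n(2/3)^{n/2})T$, and since $2n(2/3)^{n/2}\le (2/3)^{n/3}=(3/2)^{-n/3}$ for all sufficiently large $n$, the lemma follows.

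The step I expect to require the most care is obtaining a relative error bound that is uniformly small over \emph{all} admissible $c$: the naive bound $c\,(2/3)^{n-c}$ degrades badly as $c\to n-2$ (it is only about $4/9$ there), which is precisely why the formula splits the range at $\lfloor n/2\rfloor$ and switches to the ``independent-side'' count $(2^{n-c}-1)^c$ on the large-$c$ half. I would set up the symmetry (taking graph complements, which swaps $\hat C$ and $\hat I$ and sends $c$ to $n-c$) so that each half only ever invokes the one-sided bound valid on its own range. A secondary point to state cleanly is the ``sufficiently large $n$'' hypothesis, which is needed both to make the index set $2\le c\le n-2$ nonempty and to justify the two rearrangements above; I would record an explicit threshold at the end of the proof.
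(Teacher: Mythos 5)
Your proof is correct and follows essentially the same route as the paper: for each $c$ you bound the overcount of the term relative to the true count of split graphs with $Q=\emptyset$ and $|C|=c$ by a union bound over the (at most $n$) vertices that could violate the neighbor/non-neighbor condition, obtaining the same ratio $\left(2^{c-1}/(2^c-1)\right)^{n-c}\le(2/3)^{n-c}$ and the same relative error $n(2/3)^{n/2}$, with the range split at $\lfloor n/2\rfloor$ handled by the same clique/independent-set symmetry. The paper phrases this probabilistically (a uniformly random two-tone graph from the relaxed family) rather than by directly counting bad configurations, but the estimates and the final bound $1/(1-n(2/3)^{n/2})\le 1+(3/2)^{-n/3}$ are identical in substance.
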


\begin{proof}
We begin by working with the first summation. Fix an integer $k$ such that $2\le k\le\frac{n}{2}$. Let $\mathcal T$ be the set of all two-tone graphs with $k$ cyan vertices and $n-k$ indigo vertices such that the cyan vertices form a clique, the indigo vertices form an independent set, and every indigo vertex has a non-neighbor among the cyan vertices. The term with $c = k$ in the above summation counts the number of two-tone graphs in $\mathcal T$.

By \cref{lemma:neighbor_exists_2,lemma:c_i_q_2}, the number of $n$-vertex split graphs with $|C|= k$ and $Q = \emptyset$ is equal to number of two-tone graphs in $\mathcal T$ where it happens that every cyan vertex has an indigo neighbor. Note that there are no split graphs with $|C|\in\{0,1\}$ and $Q = \emptyset$ since if all of $G$ is an independent set, then $Q = V(G)$. Now we just need to show that $|\mathcal T|$ is approximately the number of split graphs that we wish to count. Suppose we choose a graph $G'$ from $\mathcal T$ uniformly at random. If $u\in V(G')$ is a cyan vertex and $v\in V(G')$ is an indigo vertex, then the probability that $u$ and $v$ are adjacent in $G'$ is $\frac{2^{k-1}-1}{2^k-1}$. Therefore, the probability that a given cyan vertex $u$ does not have an indigo neighbor is $$\left(1-\frac{2^{k-1}-1}{2^k-1}\right)^{n-k}\le\left(\frac{2}{3}\right)^{n-k},$$ since $k\ge 2$. By a union bound, the probability that some cyan vertex $u$ does not have an indigo neighbor is at most $n(2/3)^{n-k}\le n(2/3)^{n/2}$ since $k\le\frac{n}{2}$.

Therefore, the value of the summation from $c = 2$ to $\lfloor\frac{n}{2}\rfloor$ is at most $\alpha$ times the number of split graphs with $Q = \emptyset$ and $|C|\le|I|$, where $\alpha$ is the following value:
\begin{align*}
    \alpha &= \frac{1}{1-n\left(\frac{2}{3}\right)^{n/2}} \\
    &= 1+\frac{n}{\left(\frac{3}{2}\right)^{n/2}-n} \\
    &\le 1+\frac{1}{\left(\frac{3}{2}\right)^{n/3}}.
\end{align*}
The last inequality holds when $n$ is sufficiently large.\footnote{A loose bound shows that this lemma holds for $n\ge 61$.}

Similarly, the value of the summation from $c = \lfloor\frac{n}{2}\rfloor+1$ to $n-2$ is at most $\alpha$ times the number of split graphs with $Q = \emptyset$ and $|C|>|I|$. The proof of this is similar, except $\mathcal T$ is defined as the set of all two-tone graphs with $k$ cyan vertices and $n-k$ indigo vertices such that the cyan vertices form a clique, the indigo vertices form an independent set, and \emph{every cyan vertex has a neighbor among the indigo vertices}. Therefore, the overall sum is an $\alpha$-approximation of the number of split graphs with $Q = \emptyset$.
\end{proof}

\begin{lemma}
\label{lemma:q_is_1}
If $n$ is sufficiently large, then the following formula gives a $\left(1+\frac{1}{(3/2)^{n/3}}\right)$-approximation of the number of split graphs on $n$ vertices with $|Q| = 1$: $$\sum_{c=2}^{\lfloor\frac{n-1}{2}\rfloor}n\binom{n-1}{c}(2^c-1)^{n-c-1}+\sum_{c=\lfloor\frac{n-1}{2}\rfloor+1}^{n-2}n\binom{n-1}{c}(2^{n-c-1}-1)^c.$$
\end{lemma}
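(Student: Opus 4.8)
The plan is to reduce this lemma to the $Q=\emptyset$ case already handled in \cref{lemma:q_is_empty}. Write $N_1(n)$ for the number of split graphs on $[n]$ with $|Q|=1$ and $N_0(m)$ for the number of split graphs on $[m]$ with $Q=\emptyset$. The first step is to prove the identity $N_1(n)=n\cdot N_0(n-1)$. Given a split graph $G$ on $[n]$ with $Q=\{v\}$, \cref{obs:universal_q} forces $v$ to be adjacent to all of $C$ and non-adjacent to all of $I$, so every edge at $v$ is determined by the partition $\{C,I,\{v\}\}$; deleting $v$ yields a graph $G'$ on the remaining $n-1$ vertices, and applying \cref{lemma:c_i_q_2} to $G'$ with $\hat C=C$, $\hat I=I$, $\hat Q=\emptyset$ (its property~(4) supplied by \cref{lemma:neighbor_exists_2} applied to $G$, the relevant neighbor and non-neighbor both lying in $C\cup I=V(G')$) shows that $G'$ is a split graph with $Q(G')=\emptyset$, always-clique set $C$, and always-independent set $I$. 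Conversely, from a split graph $G'$ on $[n-1]$ with $Q(G')=\emptyset$ and a choice of label $v\in[n]$, relabel $V(G')$ to $[n]\setminus\{v\}$ and attach $v$ adjacent to $C(G')$ and non-adjacent to $I(G')$; \cref{lemma:c_i_q_2} with $\hat Q=\{v\}$ (again using \cref{lemma:neighbor_exists_2} for $G'$ to get property~(4)) certifies $Q(G)=\{v\}$. These two maps are mutually inverse, so $N_1(n)=n\cdot N_0(n-1)$. This is essentially the same bookkeeping used to derive \cref{lemma:q_is_2} for $|Q|\ge 2$.

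Next I would apply \cref{lemma:q_is_empty} with $n-1$ in place of $n$, which is legitimate once $n-1$ (hence $n$) is sufficiently large: its formula, call it $S_0(n-1)$, satisfies $N_0(n-1)\le S_0(n-1)\le \alpha(n-1)\,N_0(n-1)$, where, reading off the proof of \cref{lemma:q_is_empty}, $\alpha(m)=\bigl(1-m(2/3)^{m/2}\bigr)^{-1}$. Multiplying by $n$ gives $N_1(n)\le n\,S_0(n-1)\le \alpha(n-1)\,N_1(n)$. A term-by-term comparison identifies $n\,S_0(n-1)$ with the displayed formula: the factor $n$ distributes over the two sums, $(2^c-1)^{(n-1)-c}=(2^c-1)^{n-c-1}$ and $(2^{(n-1)-c}-1)^c=(2^{n-c-1}-1)^c$, and the ranges are $2\le c\le\lfloor(n-1)/2\rfloor$ and $\lfloor(n-1)/2\rfloor+1\le c\le n-3$; the displayed formula additionally carries the single term $c=n-2$, equal to $n\binom{n-1}{1}(2^1-1)^{n-2}=n(n-1)$, which is positive and super-exponentially smaller than the rest, so it only strengthens the lower bound and is absorbed into the slack of the upper bound. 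It then remains to check $\alpha(n-1)\le 1+\frac{1}{(3/2)^{n/3}}$: since $(n-1)(2/3)^{(n-1)/2}\to 0$ much faster than $(2/3)^{n/3}$, we get $\alpha(n-1)-1=\frac{(n-1)(2/3)^{(n-1)/2}}{1-(n-1)(2/3)^{(n-1)/2}}\le\frac{1}{(3/2)^{n/3}}$ for all large $n$, which is exactly the bound asserted.

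The main obstacle is this last estimate. The reduction naturally produces an approximation factor carrying $(n-1)/2$ in the exponent, and bounding it by $1+(3/2)^{-n/3}$ cannot be done through the weak inequality quoted in the statement of \cref{lemma:q_is_empty}, since $(3/2)^{-(n-1)/3}>(3/2)^{-n/3}$; one must instead use the sharper bound $\alpha(n-1)=\bigl(1-(n-1)(2/3)^{(n-1)/2}\bigr)^{-1}$ extracted from the proof of \cref{lemma:q_is_empty}, together with a crude asymptotic estimate (in the spirit of the footnote there) to close the gap for $n$ large, and simultaneously to absorb the harmless $n(n-1)$ boundary term noted above. Everything else is routine: the identity $N_1(n)=n\cdot N_0(n-1)$ follows immediately from \cref{obs:universal_q,lemma:c_i_q_2,lemma:neighbor_exists_2}, and matching $n\,S_0(n-1)$ to the displayed formula is mechanical arithmetic.
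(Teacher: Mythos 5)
Your proposal is correct, but it proves the lemma by a genuinely different route than the paper. The paper re-runs, for $|Q|=1$, the same overcounting argument it used for $Q=\emptyset$: it defines three-tone graphs with one white vertex (adjacent to all cyan, no indigo), observes the extra factor of $n$ for the white label, and bounds the failure probability by $(n-1)(2/3)^{(n-1)/2}$ exactly as in \cref{lemma:q_is_empty} with $n-1$ in place of $n$. You instead prove the exact identity $N_1(n)=n\cdot N_0(n-1)$ via the delete/attach bijection on the unique questioning vertex (correctly certified by \cref{obs:universal_q,lemma:neighbor_exists_2,lemma:c_i_q_2}) and then pull \cref{lemma:q_is_empty} at $n-1$ through it. What your route buys is modularity (no second probabilistic computation) and a structural fact of independent interest; what it costs is exactly the two items you flag: (i) the black-box statement of \cref{lemma:q_is_empty} at $n-1$ only yields the factor $1+(3/2)^{-(n-1)/3}$, which is weaker than the claimed $1+(3/2)^{-n/3}$, so you must open up its proof and use $\alpha(n-1)=\bigl(1-(n-1)(2/3)^{(n-1)/2}\bigr)^{-1}$, which does decay fast enough; and (ii) the displayed formula carries the boundary term $c=n-2$, worth $n(n-1)$, which is absent from $n\,S_0(n-1)$. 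Your handling of (ii) is right and is worth noting: the class $c=n-2$ (i.e.\ $|I|=1$) contains no split graphs with $|Q|=1$ at all, since \cref{lemma:neighbor_exists_2} would force the lone vertex of $I$ to be adjacent to all of $C$ and simultaneously have a non-neighbor in $C$; so this term is pure overcount, absorbed into the multiplicative slack once one has a crude lower bound such as $N_1(n)\ge 2^{\Omega(n^2)}$ (the paper's own per-class union bound silently degenerates at this same term, where the failure probability is $1$, and is likewise saved only by the aggregate comparison). With those two estimates written out --- both routine and no less detailed than the paper's own ``sufficiently large $n$'' footnotes --- your argument is complete.
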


\begin{proof}
For the first summation, fix an integer $k$ such that $2\le k\le\frac{n-1}{2}$. Let $\mathcal T$ be the set of all three-tone graphs with $k$ cyan vertices, $n-k-1$ indigo vertices, and one white vertex, with the following properties: the cyan vertices form a clique, the indigo vertices form an independent set, the white vertex is adjacent to all of the cyan vertices and none of the indigo vertices, and every indigo vertex has a non-neighbor among the cyan vertices. The term with $c = k$ in the above summation counts the number of three-tone graphs in $\mathcal T$. There is now an additional factor of $n$ (compared to the previous lemma) since there are $n$ possible choices for the label of the white vertex.

By an argument similar to the previous lemma, the value of the summation from $c = 2$ to $\lfloor\frac{n-1}{2}\rfloor$ is at most $\alpha$ times the number of split graphs with $|Q| = 1$ and $|C|\le|I|$, where $\alpha$ is the following value: $$\alpha = \frac{1}{1-(n-1)\left(\frac{2}{3}\right)^{(n-1)/2}}.$$ This is the same value as in the previous lemma, except with $n-1$ in place of $n$. As before, this is bounded above\footnote{A loose bound shows that this lemma holds for $n\ge 65$.} by $1+\frac{1}{(3/2)^{n/3}}$, and the case when $|C|>|I|$ is similar.
\end{proof}

Suppose $n$ is large enough that \cref{lemma:q_is_empty,lemma:q_is_1} hold. Computing and adding together the summations from \cref{lemma:q_is_2,lemma:q_is_empty,lemma:q_is_1} gives a $\left(1+\frac{1}{(3/2)^{n/3}}\right)$-approximation for the number of split graphs on $n$ vertices. If $n\ge 3\log_{3/2}\frac{1}{\eps}$, then this is a $(1+\eps)$-approximation.

The summation from \cref{lemma:q_is_2} can be computed using $O(n^2)$ arithmetic operations by precomputing all binomial coefficients and precomputing all possible values of $(2^{n-c-q}-1)^c$. Similarly, the summations from \cref{lemma:q_is_empty,lemma:q_is_1} can also be computed using $O(n^2)$ operations. Therefore, we have a $(1+\eps)$-approximation algorithm for counting split graphs on $n$ vertices that uses $O(n^2)$ arithmetic operations, for $n\ge\max\{N_1,3\log_{3/2}\frac{1}{\eps}\}$, where $N_1$ is a number large enough that \cref{lemma:q_is_empty,lemma:q_is_1} hold for $n\ge N_1$.

This counting algorithm is already well-suited for designing a corresponding sampling algorithm. However, the running time still has some room for improvement. In the next section, we describe how to speed up this algorithm by only computing the terms that dominate the value of each summation, rather than computing all $\Theta(n^2)$ terms.

\subsection{Faster approximate counting of split graphs}
\label{sec:faster}

For the case when $|Q|\ge 2$, the term $(2^{n-c-q}-1)^c$ in \cref{lemma:q_is_2} is maximized when $q = 2$ and $c = \lfloor\frac{n-2}{2}\rfloor$. As it turns out, if we wish to compute a $(1-\eps)$-approximation of the number of such split graphs, it is sufficient to compute only $O(\log^2(1/\eps))$ terms of the double summation, near the values $q = 2$ and $c = \lfloor\frac{n-2}{2}\rfloor$. In the following lemma, we show that we only need to compute $O(\log(1/\eps))$ terms of the sum over $q$. Next, in \cref{lemma:sum_over_c}, we show that we only need to compute $O(\log(1/\eps))$ terms of the sum over $c$. 

In the next few lemmas we assume $|Q|<n$, since this will be useful in the proof of \cref{lemma:sum_over_c}. For the case when $|Q| = n$, we can easily count the number of split graphs. There are just two of them: $G$ is either a complete graph or an independent set. (\cref{lemma:sum_over_q} also holds for the number of $n$-vertex split graphs with $2\le|Q|\le n$, with the same proof, if we take the sum from $q = 2$ to $\min\{s,n\}$, but requiring $|Q|<n$ is more convenient for the later lemmas.)

\begin{lemma}
\label{lemma:sum_over_q}
Let $0<\eps<1$ be given. For sufficiently large $n$, the following formula gives a $(1-\eps)$-approximation of the number of split graphs on $n$ vertices with $2\le|Q|<n$: $$2\sum_{q=2}^{\min\{s,n-1\}}\sum_{c=0}^{n-q}\binom{n}{q}\binom{n-q}{c}(2^{n-c-q}-1)^c,$$ where $s = \lceil 10\log\frac{1}{\eps}+37\rceil$.
\end{lemma}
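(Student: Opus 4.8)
The plan is to show that discarding the terms with $q>s$ throws away at most an $\eps$-fraction of the count. First I would recall from the proof of \cref{lemma:q_is_2} that the number of split graphs on $n$ vertices with $2\le|Q|<n$ is exactly $2\sum_{q=2}^{n-1}T_q$, where $T_q\coloneqq\sum_{c=0}^{n-q}\binom{n}{q}\binom{n-q}{c}(2^{n-c-q}-1)^c$, and that the proposed formula is $2\sum_{q=2}^{\min\{s,n-1\}}T_q$. If $n-1\le s$ the two sums agree and there is nothing to prove, so assume $n-1>s$. Since the upper bound ``$\le$'' part of a $(1-\eps)$-approximation is immediate (we are summing a subset of nonnegative terms), it remains to prove $2\sum_{q=s+1}^{n-1}T_q\le\eps\cdot 2\sum_{q=2}^{n-1}T_q$, and because $\sum_{q=2}^{n-1}T_q\ge T_2$ it suffices to establish $\sum_{q=s+1}^{n-1}T_q\le\eps\,T_2$.

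The core is a crude but sufficient two-sided estimate on $T_q$. For the upper bound, AM--GM on the exponent gives $(2^{n-c-q}-1)^c\le 2^{(n-c-q)c}\le 2^{(n-q)^2/4}$, and $\sum_c\binom{n-q}{c}=2^{n-q}$, so $T_q\le\binom{n}{q}2^{(n-q)^2/4+(n-q)}\le n^{q}2^{(n-q)^2/4+(n-q)}$. For the lower bound, retaining only the term $c=\lfloor(n-2)/2\rfloor$ of $T_2$ and using $2^a-1\ge 2^{a-1}$ yields $T_2\ge 2^{(n-2)^2/4-n}$ for $n\ge 5$. Dividing, writing $q=2+d$, and expanding $(n-2-d)^2-(n-2)^2=-d(2n-d-4)$, one gets after simplification
$$\frac{T_{2+d}}{T_2}\;\le\;2^{(d+2)\log n\,-\,dn/4\,+\,2n}\qquad(1\le d\le n-3).$$

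The point where the constant $37$ enters is that these bounds are off by roughly a factor $2^{2n}$ from the truth (the lower bound on $T_2$ loses $\approx 2^{n}$ and the upper bound on $T_q$ gains $\approx 2^{n-q}$), so the displayed estimate is only useful once $d$ is a sufficiently large absolute constant; but $s$ is chosen so that every tail index satisfies $d\ge s-1\ge 36$, and this is already large enough for the superexponential decay $2^{-dn/4}$ to absorb both the polynomial factor $n^{d+2}=2^{(d+2)\log n}$ and the $2^{2n}$ slack. Concretely, for $n$ large enough that $\log n\le n/16$ (so $(d+2)\log n\le dn/8$ since $d\ge2$) and that the geometric series below converges nicely, the exponent above is at most $-dn/8+2n\le -\tfrac{5}{72}dn\le -d/8$ for all $d\ge 36$; hence $\sum_{q=s+1}^{n-1}T_q=\sum_{d\ge s-1}T_{2+d}\le T_2\sum_{d\ge s-1}2^{-5dn/72}\le 2T_2\,2^{-5(s-1)n/72}\le 2T_2\,2^{-5(s-1)/36}$. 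Finally, $s=\lceil 10\log(1/\eps)+37\rceil$ makes $2\cdot 2^{-5(s-1)/36}\le\eps$ (indeed $s\ge 7.2\log(1/\eps)+8.2$ already suffices), which gives $\sum_{q=s+1}^{n-1}T_q\le\eps\,T_2$ and completes the proof. The main obstacle is not conceptual --- the series telescopes geometrically once one has the estimate --- but the bookkeeping in this last step: one must track the generous slack carefully enough to verify that the specific constants $10$ and $37$ work, while noting that the ``sufficiently large $n$'' threshold is permitted to depend on $\eps$ (it must, since it must at least exceed $s$), consistent with the statement of the lemma.
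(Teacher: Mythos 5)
Your proof is correct and follows essentially the same route as the paper's: lower-bound the full sum by the dominant term at $q=2$, $c=\lfloor\frac{n-2}{2}\rfloor$, upper-bound each inner sum by roughly $2^{(n-q)^2/4+O(n)}$, use $\binom{n}{q}\le n^q$ together with the superexponential decay $2^{-\Theta(qn)}$ and a geometric series, and let the constants in $s$ absorb both the $2^{O(n)}$ slack and $\log\frac{1}{\eps}$. The only differences are bookkeeping choices (you drop the central binomial coefficient from the lower bound and pick up $\sum_c\binom{n-q}{c}=2^{n-q}$ in the upper bound, whereas the paper keeps $\binom{n-2}{\lfloor(n-2)/2\rfloor}$ on both sides and cancels it), and your constants check out.
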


\begin{proof}
The number of $n$-vertex split graphs with $2\le|Q|<n$ is equal to the formula from \cref{lemma:q_is_2}, except with the sum over $q$ going from 2 to $n-1$ rather than $n$. Now, in the current lemma, we have skipped the terms that go from $q = s+1$ to $n-1$. We need to show that their sum is at most $\eps$ times the total summation (from 2 to $n-1$).

For simplicity, in this proof we will not write the factor of 2 in front of the double summation, since including that factor does not change the multiplicative error. In the following inequality, we use the fact that $\lceil\frac{n-2}{2}\rceil\lfloor\frac{n-2}{2}\rfloor\ge\lceil\frac{n-2}{2}\rceil(\lfloor\frac{n-2}{2}\rfloor-1)+1$ since we can assume $\lceil\frac{n-2}{2}\rceil\ge 1$. The total value of the double summation is bounded below by the term with $q = 2$ and $c = \lfloor\frac{n-2}{2}\rfloor$:
\begin{align*}
    \binom{n}{2}\binom{n-2}{\lfloor\frac{n-2}{2}\rfloor}(2^{\lceil\frac{n-2}{2}\rceil}-1)^{\lfloor\frac{n-2}{2}\rfloor}&\ge\binom{n-2}{\lfloor\frac{n-2}{2}\rfloor}\left(2^{\lceil\frac{n-2}{2}\rceil\lfloor\frac{n-2}{2}\rfloor}-2^{\lceil\frac{n-2}{2}\rceil(\lfloor\frac{n-2}{2}\rfloor-1)}\right) \\
    &\ge\frac{1}{2}\binom{n-2}{\lfloor\frac{n-2}{2}\rfloor}2^{\lceil\frac{n-2}{2}\rceil\lfloor\frac{n-2}{2}\rfloor} \\
    &\ge\frac{1}{2}\binom{n-2}{\lfloor\frac{n-2}{2}\rfloor}2^{(\frac{n-3}{2})(\frac{n-2}{2})}.
\end{align*}

To prove a bound on the tail (the terms from $q = s+1$ to $n-1$), we first give a bound for the inner sum. Suppose $2\le q<n$ is fixed. As $c$ ranges from 0 to $\lfloor\frac{n-q}{2}\rfloor$, the product $c(n-q-c)$ increases by at least 1 whenever $c$ increases by 1. Therefore, $$\sum_{c=0}^{\lfloor\frac{n-q}{2}\rfloor}(2^{n-c-q}-1)^c\le\sum_{c=0}^{\lfloor\frac{n-q}{2}\rfloor}(2^{n-c-q})^c\le\sum_{i=0}^{\lfloor\frac{n-q}{2}\rfloor\lceil\frac{n-q}{2}\rceil} 2^i\le 2^{(n-q)^2/4+1}.$$ Similarly, the same bound holds if we take the same sum from $c = \lceil\frac{n-q}{2}\rceil$ to $n-q$, so $$\sum_{c=0}^{n-q}(2^{n-c-q}-1)^c\le 2^{(n-q)^2/4+2}.$$ Since $\binom{n-q}{c}$ is at most $\binom{n-2}{\lfloor\frac{n-2}{2}\rfloor}$ for all values of $c$ and $q$, this implies $$\sum_{c=0}^{n-q}\binom{n-q}{c}(2^{n-c-q}-1)^c\le\binom{n-2}{\lfloor\frac{n-2}{2}\rfloor}\cdot 2^{(n-q)^2/4+2}.$$ Therefore, we have the following bound on the tail:
\begin{align*}
    \sum_{q=s+1}^{n-1}\sum_{c=0}^{n-q}\binom{n}{q}\binom{n-q}{c}(2^{n-c-q}-1)^c&\le\sum_{q=s+1}^{n-1}\binom{n}{q}\binom{n-2}{\lfloor\frac{n-2}{2}\rfloor}\cdot2^{(n-q)^2/4+2} \\
    &\le\binom{n-2}{\lfloor\frac{n-2}{2}\rfloor}\cdot 2^{(\frac{n-3}{2})(\frac{n-2}{2})-1}\sum_{q=s+1}^{n-1}\binom{n}{q}2^{-qn/4+5n/4+3/2}
\end{align*}
since $q^2\le qn$. We just need to show $$\sum_{q=s+1}^{n-1}\binom{n}{q}2^{-qn/4+5n/4+3/2}\le\eps$$ to conclude that we have a $(1-\eps)$-approximation. Note that $\frac{n}{2^{n/4}}\le\frac{1}{2^{n/10}}$ for $n\ge 34$. Therefore, we can bound this sum as follows:
\begin{align*}
    \sum_{q=s+1}^{n-1}\binom{n}{q}2^{-qn/4+5n/4+3/2}&\le \sum_{q=s+1}^{n-1}n^q\cdot 2^{-qn/4+5n/4+3/2} \\
    &= 2^{5n/4+3/2}\sum_{q=s+1}^{n-1}\bigg(\frac{n}{2^{n/4}}\bigg)^q \\
    &\le 2^{5n/4+3/2}\sum_{q=s+1}^{n-1}\bigg(\frac{1}{2^{n/10}}\bigg)^q \\
    &\le 2^{5n/4+3/2}\frac{2}{2^{(s+1)n/10}} \\
    &= 2^{5n/4-(s+1)n/10+5/2} \\
    &\le\eps,
\end{align*}
where the last step follows from the definition of $s$ (see the next paragraph for more details). In the third-to-last step, we assumed $n\ge 10$ so that $2^{qn/10}$ increases by at least a factor of 2 each time $q$ increases by 1. 

The last step deserves more explanation since there are several steps contained in it. Starting from the definition of $s$, we have
\begin{align*}
    &s = \lceil 10\log\frac{1}{\eps}+37\rceil \\
    &\implies \log\frac{1}{\eps}+\frac{5}{2}\le-\frac{5}{4}+\frac{s}{10}+\frac{1}{10} \\
    &\implies \log\frac{1}{\eps}+\frac{5}{2}\le\left(-\frac{5}{4}+\frac{s}{10}+\frac{1}{10}\right)n \\
    &\implies \log\frac{1}{\eps}\le-\frac{5n}{4}+\frac{(s+1)n}{10}-\frac{5}{2}.
\end{align*}
In the second implication, where we use the fact that $1\le n$, we also need $-\frac{5}{4}+\frac{s}{10}+\frac{1}{10}\ge 0$ for this to work. This is true by the definition of $s$ (since $\eps\le 5$).

\end{proof}

\begin{lemma}
\label{lemma:sum_over_c}
Let $n,q\in\N$ with $2\le q<n$, and let $0<\eps<1$ be given. We have the following approximation: $$\sum_{c=\max\{0,t_1\}}^{\min\{t_2,n-q\}}\binom{n-q}{c}(2^{n-c-q}-1)^c\ge(1-\eps)\sum_{c=0}^{n-q}\binom{n-q}{c}(2^{n-c-q}-1)^c,$$ where $t_1 = \lfloor\frac{n-q}{2}\rfloor-\lceil\log\frac{1}{\eps}\rceil-2$ and $t_2 = \lceil\frac{n-q}{2}\rceil+\lceil\log\frac{1}{\eps}\rceil+2$.
\end{lemma}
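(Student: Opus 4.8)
The plan is to fix $q$, set $m := n - q \ge 1$ and $a_c := \binom{m}{c}(2^{m-c}-1)^c$, and show that the two ``tails'' $T_{\mathrm{lo}} := \sum_{0 \le c < t_1} a_c$ and $T_{\mathrm{hi}} := \sum_{t_2 < c \le m} a_c$ satisfy $T_{\mathrm{lo}} + T_{\mathrm{hi}} < \eps\sum_{c=0}^{m}a_c$; this is equivalent to the claimed inequality, since the left-hand side of the claimed inequality is precisely $\sum_{c=0}^m a_c - T_{\mathrm{lo}} - T_{\mathrm{hi}}$. Degenerate cases require no work: if $t_1 \le 0$ then $T_{\mathrm{lo}}$ is an empty sum, and if $t_2 \ge m$ then $T_{\mathrm{hi}}$ contributes only the term $a_m = 0$. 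The heart of the argument is that $a_c$, viewed as a function of $c$, is sharply peaked near $c^* := \lfloor m/2\rfloor$: the dominant factor $(2^{m-c}-1)^c$ behaves like $2^{c(m-c)}$, and $c(m-c)$ drops off quadratically away from $c^*$, so the tails decay faster than any geometric series.

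I would make this precise with two elementary bounds. For an upper bound, since $\binom{m}{c} \le \binom{m}{c^*}$ for every $c$ and $(2^{m-c}-1)^c \le 2^{c(m-c)}$, we have $a_c \le \binom{m}{c^*}\,2^{c(m-c)}$ for all $0\le c\le m$. For a lower bound on the full sum, I would keep just the central term: writing $(2^{\lceil m/2\rceil}-1)^{c^*} = 2^{c^*\lceil m/2\rceil}\bigl(1 - 2^{-\lceil m/2\rceil}\bigr)^{c^*}$ and applying Bernoulli's inequality together with the elementary fact $c^* \le 2^{\lceil m/2\rceil - 1}$ (checked directly for all $m \ge 1$, splitting on the parity of $m$), this term is at least $\tfrac12\binom{m}{c^*}2^{M}$ where $M := c^*(m-c^*)$. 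Dividing, $a_c/a_{c^*} \le 2\cdot 2^{\,c(m-c) - M}$ for every $c$. Finally, writing $c = c^* + d$ gives $c(m-c) - M = d\delta - d^2$ with $\delta := \lceil m/2\rceil - \lfloor m/2\rfloor \in \{0,1\}$; for indices in either tail one has $|d| \ge L+3$ with $L := \lceil\log\tfrac1\eps\rceil$, and parametrizing the tail terms by $k \ge 0$ makes this exponent $\le -(L+3)^2 - 2k$. Summing the resulting geometric series gives $T_{\mathrm{lo}}, T_{\mathrm{hi}} \le \tfrac83\,2^{-(L+3)^2}\,a_{c^*}$ each, and since $(L+3)^2 \ge \log\tfrac1\eps + 9$ the two tails together are at most $\tfrac{\eps}{96}\,a_{c^*} \le \tfrac{\eps}{96}\sum_{c=0}^m a_c < \eps\sum_{c=0}^m a_c$, as required.

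The only delicate step — and what I expect to be the main obstacle — is the lower bound on $a_{c^*}$. The naive estimate $(2^{\lceil m/2\rceil}-1)^{c^*} \ge 2^{(\lceil m/2\rceil - 1)c^*}$ throws away a factor of $2^{c^*}$, which is useless here because $c^*$ is of order $m$, far larger than $\log\tfrac1\eps$; one must instead factor out $2^{c^*\lceil m/2\rceil}$ exactly and absorb the correction $(1 - 2^{-\lceil m/2\rceil})^{c^*}$ into a mere constant via Bernoulli. Once this is in hand, the remaining pieces — the central-binomial bound, the geometric sums, the degenerate tails, and the vanishing $c = m$ term — are all routine.
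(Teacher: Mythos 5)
Your proof is correct and follows essentially the same route as the paper: both arguments lower-bound the full sum by half of the central term, i.e.\ by $\tfrac12\binom{m}{\lfloor m/2\rfloor}2^{\lfloor m/2\rfloor\lceil m/2\rceil}$ (the paper via the same Bernoulli-type estimate you use), and bound each skipped term by $\binom{m}{\lfloor m/2\rfloor}2^{c(m-c)}$, exploiting the quadratic drop of $c(m-c)$ away from the middle. The only cosmetic difference is the final summation of the tails: you sum a geometric series in the offset $k$ with ratio $1/4$, whereas the paper writes the exponent as $\lfloor\frac{m}{2}\rfloor\lceil\frac{m}{2}\rceil-\Delta^2$ and uses $2\sum_{\Delta\ge t}2^{-\Delta^2}\le\eps/2$ for $t\ge\sqrt{\log(8/\eps)}$, which yields the same conclusion with different constants.
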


\begin{proof}
As before, we show that the tail composed of terms we have skipped is at most $\eps$ times the total summation. In the following inequality, we use the fact that $\lceil\frac{n-q}{2}\rceil\lfloor\frac{n-q}{2}\rfloor\ge\lceil\frac{n-q}{2}\rceil(\lfloor\frac{n-q}{2}\rfloor-1)+1$ since $q<n$. The total value of the summation is bounded below by the term with $c = \lfloor\frac{n-q}{2}\rfloor$:
\begin{align*}
    \binom{n-q}{\lfloor\frac{n-q}{2}\rfloor}(2^{\lceil\frac{n-q}{2}\rceil}-1)^{\lfloor\frac{n-q}{2}\rfloor} &\ge\binom{n-q}{\lfloor\frac{n-q}{2}\rfloor}\left(2^{\lceil\frac{n-q}{2}\rceil\lfloor\frac{n-q}{2}\rfloor}-2^{\lceil\frac{n-q}{2}\rceil(\lfloor\frac{n-q}{2}\rfloor-1)}\right) \\
    &\ge\frac{1}{2}\binom{n-q}{\lfloor\frac{n-q}{2}\rfloor}2^{\lceil\frac{n-q}{2}\rceil\lfloor\frac{n-q}{2}\rfloor}.
\end{align*}

For the bound on the tail, we start with the upper end of the tail where $c>t_2$. We have
\begin{align*}
    \sum_{c=t_2+1}^{n-q}\binom{n-q}{c}(2^{n-c-q}-1)^c &= \sum_{\Delta=t_2+1-\lceil\frac{n-q}{2}\rceil}^{\lfloor\frac{n-q}{2}\rfloor}\binom{n-q}{\lceil\frac{n-q}{2}\rceil+\Delta}(2^{\lfloor\frac{n-q}{2}\rfloor-\Delta}-1)^{\lceil\frac{n-q}{2}\rceil+\Delta},
\end{align*}
where $\Delta = c-\lceil\frac{n-q}{2}\rceil$. We have $(\lfloor\frac{n-q}{2}\rfloor-\Delta)(\lceil\frac{n-q}{2}\rceil+\Delta)\le\lfloor\frac{n-q}{2}\rfloor\lceil\frac{n-q}{2}\rceil-\Delta^2$ since $\Delta\ge 0$. Therefore,
\begin{align*}
    \sum_{c=t_2+1}^{n-q}\binom{n-q}{c}(2^{n-c-q}-1)^c &\le\binom{n-q}{\lceil\frac{n-q}{2}\rceil}\sum_{\Delta=t_2+1-\lceil\frac{n-q}{2}\rceil}^{\lfloor\frac{n-q}{2}\rfloor}2^{(\lfloor\frac{n-q}{2}\rfloor-\Delta)(\lceil\frac{n-q}{2}\rceil+\Delta)} \\
    &\le\binom{n-q}{\lceil\frac{n-q}{2}\rceil}2^{\lfloor\frac{n-q}{2}\rfloor\lceil\frac{n-q}{2}\rceil}\sum_{\Delta=t_2+1-\lceil\frac{n-q}{2}\rceil}^{\lfloor\frac{n-q}{2}\rfloor}2^{-\Delta^2} \\
    &= \frac{1}{2}\binom{n-q}{\lceil\frac{n-q}{2}\rceil}2^{\lfloor\frac{n-q}{2}\rfloor\lceil\frac{n-q}{2}\rceil}\cdot 2\sum_{\Delta=t_2+1-\lceil\frac{n-q}{2}\rceil}^{\lfloor\frac{n-q}{2}\rfloor}2^{-\Delta^2}.
\end{align*}
If $t\ge\sqrt{\log\frac{8}{\eps}}$, then $$2\sum_{\Delta=t}^{\infty}2^{-\Delta^2}\le\frac{\eps}{2}.$$ When $t_2$ is defined as above, we have $t_2+1-\lceil\frac{n-q}{2}\rceil\ge\sqrt{\log\frac{8}{\eps}}$, so the value of the upper end of the tail is at most $\frac{\eps}{2}$ times the total summation (since $\frac{\eps}{8}\le 1$).

Similarly, to bound the lower end of the tail, the same argument holds if we let $\Delta = c-\lfloor\frac{n-q}{2}\rfloor$. In this case, we use the fact that $(\lceil\frac{n-q}{2}\rceil-\Delta)(\lfloor\frac{n-q}{2}\rfloor+\Delta)\le\lceil\frac{n-q}{2}\rceil\lfloor\frac{n-q}{2}\rfloor-\Delta^2$ since $\Delta\le 0$. (This is the inequality that we need since we are now using a floor rather than a ceiling in the definition of $\Delta$.) The lower end of the tail is at most $\frac{\eps}{2}$ times the total summation as long as $|t_1-1-\lfloor\frac{n-q}{2}\rfloor|\ge\sqrt{\log\frac{8}{\eps}}$, which is true by the definition of $t_1$. Thus the sum of both tails is at most $\frac{\eps}{2}+\frac{\eps}{2} = \eps$ times the total summation.
\end{proof}

Combining \cref{lemma:sum_over_q,lemma:sum_over_c} with $\frac{\eps}{2}$ in place of $\eps$ and using the fact that $(1-\frac{\eps}{2})^2\ge 1-\eps$, we immediately obtain the following:

\begin{lemma}
\label{lemma:sum_over_both}
Let $0<\eps<1$ be given. For sufficiently large $n$, the following formula gives a $(1-\eps)$-approximation of the number of split graphs on $n$ vertices with $2\le|Q|<n$: $$2\sum_{q=2}^{\min\{s,n-1\}}\sum_{c=\max\{0,t_1\}}^{\min\{t_2,n-q\}}\binom{n}{q}\binom{n-q}{c}(2^{n-c-q}-1)^c,$$ where $s = \lceil 10\log\frac{1}{\eps}+47\rceil$, $t_1 = \lfloor\frac{n-q}{2}\rfloor-\lceil\log\frac{1}{\eps}\rceil-3$, and $t_2 = \lceil\frac{n-q}{2}\rceil+\lceil\log\frac{1}{\eps}\rceil+3$.\qed
\end{lemma}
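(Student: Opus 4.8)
The plan is to obtain the statement as a routine composition of \cref{lemma:sum_over_q} and \cref{lemma:sum_over_c}, each invoked with tolerance $\eps/2$, and then to collapse the two resulting multiplicative errors into one via the elementary inequality $(1-\tfrac\eps2)^2 = 1-\eps+\tfrac{\eps^2}{4}\ge 1-\eps$. First I would introduce notation: let $S$ be the number of $n$-vertex split graphs with $2\le|Q|<n$, let
$$A_1 = 2\sum_{q=2}^{\min\{s,n-1\}}\sum_{c=0}^{n-q}\binom{n}{q}\binom{n-q}{c}(2^{n-c-q}-1)^c$$
be the sum with only the outer index truncated, and let $A_2$ be the doubly-truncated sum in the statement. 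By \cref{lemma:sum_over_q} applied with $\eps/2$, for all sufficiently large $n$ the quantity $A_1$ is a $(1-\tfrac\eps2)$-approximation of $S$, that is, $(1-\tfrac\eps2)S\le A_1\le S$; here one must check that the cutoff $\lceil 10\log\tfrac2\eps+37\rceil$ delivered by that lemma equals $s=\lceil 10\log\tfrac1\eps+47\rceil$, which holds because $\log\tfrac2\eps=1+\log\tfrac1\eps$ for the base-$2$ logarithm used throughout.

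Next, for each $q$ in the range $2\le q\le\min\{s,n-1\}$ --- and note that $\min\{s,n-1\}\le n-1$, so $q<n$ is automatic and the hypothesis of \cref{lemma:sum_over_c} is satisfied --- I would apply \cref{lemma:sum_over_c} with $\eps/2$ to the inner sum over $c$, obtaining
$$\sum_{c=\max\{0,t_1\}}^{\min\{t_2,n-q\}}\binom{n-q}{c}(2^{n-c-q}-1)^c\ \ge\ \Big(1-\tfrac\eps2\Big)\sum_{c=0}^{n-q}\binom{n-q}{c}(2^{n-c-q}-1)^c,$$
where the truncation bounds produced with parameter $\eps/2$ coincide with the $t_1,t_2$ in the statement since $\lceil\log\tfrac2\eps\rceil=1+\lceil\log\tfrac1\eps\rceil$; this lemma imposes no largeness assumption on $n$, hence introduces no new constraint. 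Multiplying this inequality by the nonnegative weight $\binom nq$, summing over $q$, and multiplying by $2$, while observing that deleting nonnegative terms can only shrink a sum, gives $(1-\tfrac\eps2)A_1\le A_2\le A_1$. Chaining the two two-sided estimates then yields $A_2\ge(1-\tfrac\eps2)^2S\ge(1-\eps)S$ together with $A_2\le A_1\le S$, which is exactly the claim that $A_2$ is a $(1-\eps)$-approximation of $S$.

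There is no genuinely hard step here; the only thing requiring care is the constant bookkeeping --- verifying that substituting $\eps/2$ into the cutoffs of \cref{lemma:sum_over_q,lemma:sum_over_c} reproduces precisely $s$, $t_1$, and $t_2$ as written, and tracking that the ``sufficiently large $n$'' hypothesis is inherited solely from \cref{lemma:sum_over_q}, with \cref{lemma:sum_over_c} being unconditional. One should also record at the outset that for $n$ large the outer range $2\le q\le\min\{s,n-1\}$ is nonempty, and that $\eps/2\in(0,1)$ so that both invoked lemmas apply.
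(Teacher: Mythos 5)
Your proposal is correct and matches the paper's own derivation: the paper obtains \cref{lemma:sum_over_both} precisely by invoking \cref{lemma:sum_over_q,lemma:sum_over_c} with $\eps/2$ in place of $\eps$ and using $(1-\frac{\eps}{2})^2\ge 1-\eps$. Your constant bookkeeping (base-$2$ logs turning $\lceil 10\log\frac{2}{\eps}+37\rceil$ into $s$ and shifting $t_1,t_2$ by one) and the observation that only \cref{lemma:sum_over_q} imposes the largeness condition on $n$ are both accurate.
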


For the case when $|Q|\le 1$, we can prove similar statements. As above, let $N_1$ be large enough that \cref{lemma:q_is_empty,lemma:q_is_1} hold for $n\ge N_1$.

\begin{lemma}
\label{lemma:q_is_empty_approx}
Let $0<\eps<1$ be given. If $n\ge\max\{N_1,3\log_{3/2}\frac{1}{\eps}\}$, then the following formula gives a $(1\pm\eps)$-approximation of the number of split graphs on $n$ vertices with $Q = \emptyset$: $$\sum_{c=\max\{2,t_1\}}^{\lfloor\frac{n}{2}\rfloor}\binom{n}{c}(2^c-1)^{n-c}+\sum_{c=\lfloor\frac{n}{2}\rfloor+1}^{\min\{t_2,n-2\}}\binom{n}{c}(2^{n-c}-1)^c,$$ where $t_1 = \lfloor\frac{n}{2}\rfloor-\lceil\log\frac{1}{\eps}\rceil-2$ and $t_2 = \lceil\frac{n}{2}\rceil+\lceil\log\frac{1}{\eps}\rceil+2$.
\end{lemma}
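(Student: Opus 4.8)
The plan is to track two independent sources of error: the approximation error already present in \cref{lemma:q_is_empty} (which replaces a count of split graphs by a count of two-tone/three-tone graphs via a union bound), and the new error from truncating each of the two summations to a window of width $O(\log\frac1\eps)$ about its peak. Write $S$ for the true number of $n$-vertex split graphs with $Q=\emptyset$, write $F$ for the full untruncated two-part summation of \cref{lemma:q_is_empty}, and write $T$ for the truncated sum in the statement. Since $n\ge N_1$, \cref{lemma:q_is_empty} gives $S\le F\le\bigl(1+(3/2)^{-n/3}\bigr)S$, and since $n\ge 3\log_{3/2}\frac1\eps$ we have $(3/2)^{-n/3}\le\eps$, so $F\le(1+\eps)S$. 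As $T$ is obtained from $F$ by deleting nonnegative terms, $T\le F\le(1+\eps)S$ is immediate, which is the upper half of the desired $(1\pm\eps)$-approximation. It remains to prove $T\ge(1-\eps)F$; combined with $F\ge S$ this gives $T\ge(1-\eps)S$.

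First I would lower-bound $F$ by a single peak term, exactly as in the proof of \cref{lemma:sum_over_c}: the $c=\lfloor n/2\rfloor$ term of the first summation, which is retained in $T$ since $t_1<\lfloor n/2\rfloor$ and $\lfloor n/2\rfloor\ge 2$, satisfies $M:=\binom{n}{\lfloor n/2\rfloor}(2^{\lfloor n/2\rfloor}-1)^{\lceil n/2\rceil}\ge\tfrac12\binom{n}{\lfloor n/2\rfloor}2^{\lfloor n/2\rfloor\lceil n/2\rceil}$, using $\lceil n/2\rceil\lfloor n/2\rfloor\ge\lceil n/2\rceil(\lfloor n/2\rfloor-1)+1$; hence $F\ge M$. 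Next I would bound the discarded terms, which form two blocks: the low end $2\le c<t_1$ of the first sum and the high end $t_2<c\le n-2$ of the second sum. The substitution $c\leftrightarrow n-c$ maps the second block onto a mirror image of the first, so it suffices to bound the first. For a deleted term write $c=\lfloor n/2\rfloor-\Delta$; by the definition of $t_1$ we have $\Delta\ge\lceil\log\frac1\eps\rceil+3$, and using $(2^c-1)^{n-c}\le 2^{c(n-c)}$, the inequality $c(n-c)\le\lfloor n/2\rfloor\lceil n/2\rceil-\Delta^2$, and $\binom{n}{c}\le\binom{n}{\lfloor n/2\rfloor}$, that term is at most $\binom{n}{\lfloor n/2\rfloor}2^{\lfloor n/2\rfloor\lceil n/2\rceil}2^{-\Delta^2}$. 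Summing over the block and invoking the estimate from the proof of \cref{lemma:sum_over_c}, namely $2\sum_{\Delta\ge t}2^{-\Delta^2}\le\eps/2$ for $t\ge\sqrt{\log\frac8\eps}$ (applicable since $\lceil\log\frac1\eps\rceil+3\ge1$ forces $(\lceil\log\frac1\eps\rceil+3)^2\ge\lceil\log\frac1\eps\rceil+3\ge\log\frac8\eps$), the first block sums to at most $\tfrac\eps4\binom{n}{\lfloor n/2\rfloor}2^{\lfloor n/2\rfloor\lceil n/2\rceil}\le\tfrac\eps2 M\le\tfrac\eps2 F$, and the second block is bounded the same way. Hence the total discarded mass is at most $\eps F$, so $T\ge F-\eps F=(1-\eps)F\ge(1-\eps)S$, which together with $T\le(1+\eps)S$ completes the proof.

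The main obstacle is bookkeeping rather than a new idea: one must verify that the peak term used to bound $F$ from below lies in the retained window in every case, that the offsets in $t_1$ and $t_2$ are large enough to guarantee $\Delta\ge\sqrt{\log\frac8\eps}$ on every deleted term so the Gaussian-type tail sum is controlled, and --- crucially --- that the two error sources fit together inside a single factor of $1\pm\eps$. The last point is exactly where the hypothesis $n\ge\max\{N_1,3\log_{3/2}\frac1\eps\}$ is used: $n\ge N_1$ makes \cref{lemma:q_is_empty} applicable, and $n\ge 3\log_{3/2}\frac1\eps$ makes its error term at most $\eps$, leaving the full budget $1-\eps$ on the lower side for the truncation error.
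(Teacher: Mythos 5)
Your proposal is correct and follows essentially the same route as the paper: it invokes \cref{lemma:q_is_empty} (with $n\ge\max\{N_1,3\log_{3/2}\frac{1}{\eps}\}$) for the $(1+\eps)$ side, lower-bounds the full sum by the peak term $c=\lfloor n/2\rfloor$, and controls both discarded tails by the Gaussian-type tail estimate from the proof of \cref{lemma:sum_over_c} with $q=0$, using $(2^c-1)^{n-c}\le 2^{c(n-c)}$ so both sums are handled symmetrically. You simply spell out the error-composition and the $\Delta\ge\lceil\log\frac{1}{\eps}\rceil+3\ge\sqrt{\log\frac{8}{\eps}}$ bookkeeping more explicitly than the paper does.
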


\begin{proof}
As we saw in the previous section, when $n\ge\max\{N_1,3\log_{3/2}\frac{1}{\eps}\}$, the summation in \cref{lemma:q_is_empty} is a $(1+\eps)$-approximation of the number of $n$-vertex split graphs with $Q = \emptyset$. Therefore, we just need to show that the summation in this lemma (\cref{lemma:q_is_empty_approx}) is a $(1-\eps)$-approximation of the summation from \cref{lemma:q_is_empty}.

In the following inequality, we use the fact that $\lfloor\frac{n}{2}\rfloor\lceil\frac{n}{2}\rceil\ge\lfloor\frac{n}{2}\rfloor(\lceil\frac{n}{2}\rceil-1)+1$ since $n\ge 2$. The total value of the summation in this lemma is bounded below by the term with $c = \lfloor\frac{n}{2}\rfloor$:
\begin{align*}
    \binom{n}{\lfloor\frac{n}{2}\rfloor}(2^{\lfloor\frac{n}{2}\rfloor}-1)^{\lceil\frac{n}{2}\rceil} &\ge\binom{n}{\lfloor\frac{n}{2}\rfloor}\left(2^{\lfloor\frac{n}{2}\rfloor\lceil\frac{n}{2}\rceil}-2^{\lfloor\frac{n}{2}\rfloor(\lceil\frac{n}{2}\rceil-1)}\right) \\
    &\ge\frac{1}{2}\binom{n}{\lfloor\frac{n}{2}\rfloor}2^{\lfloor\frac{n}{2}\rfloor\lceil\frac{n}{2}\rceil}.
\end{align*}

The upper end of the tail (where $c\ge\lfloor\frac{n}{2}\rfloor+1$) and the lower end of the tail (where $c\le\lfloor\frac{n}{2}\rfloor$) are both at most $\frac{\eps}{2}$ times the total summation by an argument similar to the proof of \cref{lemma:sum_over_c} with $q = 0$. Indeed, it does not make a difference that the lower end of the tail contains the term $(2^c-1)^{n-c}$ rather than $(2^{n-c}-1)^c$ since both of these are bounded above by $2^{c(n-c)}$. Therefore, the sum of both tails is at most $\eps$ times the total summation.
\end{proof}

\begin{lemma}
\label{lemma:q_is_1_approx}
Let $0<\eps<1$ be given. If $n\ge\max\{N_1,3\log_{3/2}\frac{1}{\eps}\}$, then the following formula gives a $(1\pm\eps)$-approximation of the number of split graphs on $n$ vertices with $|Q| = 1$: $$\sum_{c=\max\{2,t_1\}}^{\lfloor\frac{n-1}{2}\rfloor}n\binom{n-1}{c}(2^c-1)^{n-c-1}+\sum_{c=\lfloor\frac{n-1}{2}\rfloor+1}^{\min\{t_2,n-2\}}n\binom{n-1}{c}(2^{n-c-1}-1)^c,$$ where $t_1 = \lfloor\frac{n}{2}\rfloor-\lceil\log\frac{1}{\eps}\rceil-2$ and $t_2 = \lceil\frac{n}{2}\rceil+\lceil\log\frac{1}{\eps}\rceil+2$.
\end{lemma}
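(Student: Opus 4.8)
The plan is to follow the proof of \cref{lemma:q_is_empty_approx} essentially verbatim, with \cref{lemma:q_is_1} playing the role that \cref{lemma:q_is_empty} played there. First I would invoke \cref{lemma:q_is_1}: when $n\ge\max\{N_1,3\log_{3/2}\frac{1}{\eps}\}$, the full double summation
$$\sum_{c=2}^{\lfloor\frac{n-1}{2}\rfloor}n\binom{n-1}{c}(2^c-1)^{n-c-1}+\sum_{c=\lfloor\frac{n-1}{2}\rfloor+1}^{n-2}n\binom{n-1}{c}(2^{n-c-1}-1)^c$$
is already a $(1+\eps)$-approximation of the number of $n$-vertex split graphs with $|Q|=1$. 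So it suffices to show that the truncated summation in the statement of this lemma is a $(1-\eps)$-approximation of this full summation, i.e.\ that the terms dropped by restricting $c$ to $[\max\{2,t_1\},\min\{t_2,n-2\}]$ contribute at most $\eps$ times the full sum; since the truncated sum is then sandwiched between $(1-\eps)$ times the full sum and the full sum, which is itself between $1$ and $(1+\eps)$ times the true count, this yields a $(1\pm\eps)$-approximation, exactly as in \cref{lemma:q_is_empty_approx}.

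For the tail bound I would first pull the common factor $n$ out of every term, since it affects no multiplicative comparison, reducing to the summations $\sum_c\binom{n-1}{c}(2^c-1)^{n-c-1}$ and $\sum_c\binom{n-1}{c}(2^{n-c-1}-1)^c$. These have exactly the shape handled in \cref{lemma:sum_over_c} with $n-1$ in place of $n-q$ (and with $(2^c-1)^{n-c-1}$ replacing $(2^{n-c-1}-1)^c$ in the first sum, which is harmless since both are at most $2^{c(n-c-1)}$). I would then (i)~lower-bound the total by its central term at $c=\lfloor\frac{n-1}{2}\rfloor$, obtaining a quantity $\ge\frac12\binom{n-1}{\lfloor\frac{n-1}{2}\rfloor}2^{\lceil\frac{n-1}{2}\rceil\lfloor\frac{n-1}{2}\rfloor}$ just as in \cref{lemma:sum_over_c}; and (ii)~bound the upper and lower tails via the substitutions $\Delta=c-\lceil\frac{n-1}{2}\rceil$ and $\Delta=c-\lfloor\frac{n-1}{2}\rfloor$, using $c(n-c-1)\le\lceil\frac{n-1}{2}\rceil\lfloor\frac{n-1}{2}\rfloor-\Delta^2$, the crude estimate $\binom{n-1}{c}\le\binom{n-1}{\lfloor\frac{n-1}{2}\rfloor}$, and the Gaussian-type bound $2\sum_{\Delta\ge t}2^{-\Delta^2}\le\frac{\eps}{2}$, valid whenever $t\ge\sqrt{\log\frac{8}{\eps}}$. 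Dividing the tail bound by the lower bound on the total then shows each tail is at most $\frac{\eps}{2}$ times the total, hence both tails together at most $\eps$ times the total.

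The one place requiring care — the closest thing to an obstacle — is that the window is slightly off-centered: $t_1=\lfloor\frac{n}{2}\rfloor-\lceil\log\frac{1}{\eps}\rceil-2$ and $t_2=\lceil\frac{n}{2}\rceil+\lceil\log\frac{1}{\eps}\rceil+2$ are built around $\lfloor\frac n2\rfloor,\lceil\frac n2\rceil$, whereas the peak of the summation sits at $c=\lfloor\frac{n-1}{2}\rfloor$. I would note that $\lfloor\frac n2\rfloor$ and $\lceil\frac n2\rceil$ differ from $\lfloor\frac{n-1}{2}\rfloor$ and $\lceil\frac{n-1}{2}\rceil$ by at most $1$, so the window still extends at least $\lceil\log\frac1\eps\rceil+1$ past the peak on each side; since $\lceil\log\frac1\eps\rceil+1\ge\sqrt{\log\frac8\eps}$ for all $0<\eps<1$, the smallest omitted value of $|\Delta|$ still meets the threshold needed for the $2^{-\Delta^2}$ estimate, with the additive $+2$ supplying the slack. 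With this checked, the truncated summation is a $(1-\eps)$-approximation of the full one, and combining with \cref{lemma:q_is_1} gives the claimed $(1\pm\eps)$-approximation.
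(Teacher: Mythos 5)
Your proposal is correct and matches the paper's proof: invoke \cref{lemma:q_is_1}, drop the factor $n$, lower-bound the total by the central term $c=\lfloor\frac{n-1}{2}\rfloor$, and bound both tails by $\frac{\eps}{2}$ via the argument of \cref{lemma:sum_over_c} with $q=1$. Your explicit check that the window endpoints built from $\lfloor\frac{n}{2}\rfloor,\lceil\frac{n}{2}\rceil$ (rather than $\lfloor\frac{n-1}{2}\rfloor,\lceil\frac{n-1}{2}\rceil$) still clear the $\sqrt{\log\frac{8}{\eps}}$ threshold is a detail the paper leaves implicit, and you resolve it correctly.
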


\begin{proof}
As in the previous lemma, we just need to show that the summation in this lemma is a $(1-\eps)$-approximation of the summation from \cref{lemma:q_is_1}. For simplicity, we will not write the factor of $n$ in front of each summation, since including that factor does not change the multiplicative error. We know $\lfloor\frac{n-1}{2}\rfloor\lceil\frac{n-1}{2}\rceil\ge\lfloor\frac{n-1}{2}\rfloor(\lceil\frac{n-1}{2}\rceil-1)+1$ since $n\ge 3$. The total value of the summation in this lemma is bounded below by the term with $c = \lfloor\frac{n-1}{2}\rfloor$, which is at most $$\frac{1}{2}\binom{n-1}{\lfloor\frac{n-1}{2}\rfloor}2^{\lfloor\frac{n-1}{2}\rfloor\lceil\frac{n-1}{2}\rceil}$$ by an argument similar to the previous lemma with $n-1$ in place of $n$.

The upper end of the tail (where $c\ge\lfloor\frac{n-1}{2}\rfloor+1$) and the lower end of the tail (where $c\le\lfloor\frac{n-1}{2}\rfloor$) are both at most $\frac{\eps}{2}$ times the total summation by an argument similar to the proof of \cref{lemma:sum_over_c} with $q = 1$. Therefore, the sum of both tails is at most $\eps$ times the total summation.
\end{proof}

Let $N_2$ be large enough that \cref{lemma:sum_over_both} holds for $n\ge N_2$. We define the function $f$ as follows: $f(\eps) = \max\{N_1,N_2,3\log_{3/2}(1/\eps)\}$.



\begin{lemma}
\label{lemma:split_counting}
Given $0<\eps<1$ and $n\ge f(\eps)$, there is an algorithm that computes a $(1\pm\eps)$-approximation of the number of $n$-vertex labeled split graphs in $O(n^3\log{n}\log^2(1/\eps))$ time.
\end{lemma}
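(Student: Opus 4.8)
The plan is to classify the $n$-vertex labeled split graphs by the size of their questioning set $Q$, approximate each class separately using the summations established above, and add the results. Every split graph falls into exactly one of the disjoint cases $Q=\emptyset$, $|Q|=1$, $2\le|Q|<n$, and $|Q|=n$. For the last case, \cref{obs:universal_q,lemma:clique_or_indep} give that a split graph with $|Q|=n$ has $C=I=\emptyset$, so that $G=G[Q]$ is a clique or an independent set; hence for $n\ge 2$ there are exactly two such graphs (the complete graph and the empty graph). Thus the number of $n$-vertex split graphs equals $S_0+S_1+S_{\ge 2}+2$, where $S_0$, $S_1$, $S_{\ge 2}$ denote the numbers of $n$-vertex split graphs with $Q=\emptyset$, $|Q|=1$, and $2\le|Q|<n$, respectively.

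Since $n\ge f(\eps)=\max\{N_1,N_2,3\log_{3/2}(1/\eps)\}$, all three of \cref{lemma:sum_over_both,lemma:q_is_empty_approx,lemma:q_is_1_approx} apply. The algorithm would compute the double summation of \cref{lemma:sum_over_both} (a $(1-\eps)$-approximation of $S_{\ge 2}$), the summation of \cref{lemma:q_is_empty_approx} (a $(1\pm\eps)$-approximation of $S_0$), and the summation of \cref{lemma:q_is_1_approx} (a $(1\pm\eps)$-approximation of $S_1$), and output their sum plus $2$; call this output $A$. A $(1-\eps)$-approximation is in particular a $(1\pm\eps)$-approximation, and since all quantities involved are nonnegative and the term $2$ is exact, a sum of $(1\pm\eps)$-approximations of the summands is a $(1\pm\eps)$-approximation of the sum. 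Hence $A$ is a $(1\pm\eps)$-approximation of the number of $n$-vertex split graphs, which establishes correctness.

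For the running time, I would first note that $s=\Theta(\log(1/\eps))$ and that each window width $t_2-t_1$ in the three lemmas is $\Theta(\log(1/\eps))$, so the double sum of \cref{lemma:sum_over_both} has $O(\log^2(1/\eps))$ terms and the two single sums have $O(\log(1/\eps))$ terms each; thus $O(\log^2(1/\eps))$ terms are evaluated in total. Every binomial coefficient that appears has $O(n)$ bits and every power such as $(2^{n-c-q}-1)^c$ has $O(n^2)$ bits. I would precompute, for each of the $O(\log(1/\eps))$ relevant values of $N\in\{n-s,\dots,n-1,n\}$, the central binomial coefficient $\binom{N}{\lfloor N/2\rfloor}$ in $O(n^2\log n)$ time using $O(n)$ multiplications/divisions of an $O(n)$-bit running value by a small number, and then obtain the $O(\log(1/\eps))$ needed neighbors in each window via the recurrence $\binom{N}{k+1}=\binom{N}{k}(N-k)/(k+1)$. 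Each power $(2^a-1)^c$ with $a,c=O(n)$ is then computed by $O(n)$ successive multiplications of an $O(n^2)$-bit running product by the $O(n)$-bit number $2^a-1$ (or, faster, by $O(\log n)$ repeated squarings), each costing $O(n^2\log n)$ using a fast integer-multiplication algorithm~\cite{harvey2021integer}; so each power, and hence each term, is computed in $O(n^3\log n)$ time. Over the $O(\log^2(1/\eps))$ terms this gives the claimed bound $O(n^3\log n\log^2(1/\eps))$.

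The substantive work has already been carried out in \cref{lemma:sum_over_both,lemma:q_is_empty_approx,lemma:q_is_1_approx}, so the only points that require care here are (i) keeping the case analysis on $|Q|$ exhaustive — in particular not forgetting the two graphs with $|Q|=n$ — and (ii) the bit-complexity bookkeeping, since the powers appearing in the summations have $\Theta(n^2)$ bits and this is what drives the running time. The mildly delicate part will be (ii): making sure that binomial coefficients are obtained by incremental recurrences rather than recomputed from scratch per term, and that the cost of each large-integer power is charged correctly; neither is a serious obstacle.
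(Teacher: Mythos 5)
Your proposal is correct and follows essentially the same route as the paper: the algorithm is identical (add the summations from \cref{lemma:sum_over_both,lemma:q_is_empty_approx,lemma:q_is_1_approx} and the exact count of $2$ for $|Q|=n$), the correctness argument is the same, and your bit-complexity accounting, though organized slightly differently (per-term cost of $O(n^3\log n)$ versus the paper's $O(n\log^2(1/\eps))$ arithmetic operations on $O(n^2)$-bit numbers), yields the same $O(n^3\log n\log^2(1/\eps))$ bound.
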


\begin{proof}
\medskip
\textbf{Algorithm.} Let $s$ be the total obtained from adding together the summations from Lemmas~\ref{lemma:sum_over_both}, \ref{lemma:q_is_empty_approx}, and \ref{lemma:q_is_1_approx}. Return $s+2$.

\medskip
\textbf{Correctness.} There are exactly two split graphs with $|Q| = n$, so by \cref{lemma:sum_over_both,lemma:q_is_empty_approx,lemma:q_is_1_approx}, $s+2$ is a $(1\pm\eps)$-approximation of the number of $n$-vertex split graphs.

\medskip
\textbf{Running time.} Let $I_q\coloneqq \{2,\ldots,\min\{s,n-1\}\}$ and $I_c\coloneqq \{\max\{0,t_1\},\ldots,\min\{t_2,n-q\}\}$ be the intervals of values of $q$ and $c$ in the summation from \cref{lemma:sum_over_both}. For each $q\in I_q$, $c\in I_c$, we can compute $(2^{n-c-q}-1)^c$ using repeated squaring, which takes $O(\log n)$ arithmetic operations. Each of the binomial coefficients can be computed in $O(n)$ operations using the formula $\binom{a}{b} = \frac{a!}{b!(a-b)!}$. Therefore, we can compute the summation from \cref{lemma:sum_over_both} using $O(n|I_q||I_c|) = O(n\log^2(1/\eps))$ operations. Similarly, the summations from \cref{lemma:q_is_empty_approx,lemma:q_is_1_approx} can be computed using $O(n\log(1/\eps))$ operations. Thus the overall running time is $O(n\log^2(1/\eps))$ arithmetic operations. Since each number in this algorithm can be stored in $O(n^2)$ bits, this amounts to a running time of $O(n^3\log{n}\log^2(1/\eps))$ in the RAM model.
\end{proof}

Interestingly, the factor of $n$ in the running time of \cref{lemma:split_counting} only comes from computing the binomial coefficients. If one were to precompute the binomial coefficients, then the rest of the computation would only take $O(\log n\log^2(1/\eps))$ arithmetic operations.

\subsection{Approximate sampling of split graphs}
\label{sec:split_sampling}

The approximate counting algorithm from the previous section (\cref{lemma:split_counting}) can naturally be extended to give an approximate sampling algorithm that works for $n\ge f(\frac{\eps}{2})$. Recall that $I_q = \{2,\ldots,\min\{s,n-1\}\}$ and $I_c = \{\max\{0,t_1\},\ldots,\min\{t_2,n-q\}\}$. For the sampling algorithm, we consider choosing a random number in $[0,1)$ to be one operation.

\begin{lemma}
\label{lemma:split_sampling}
Given $0<\eps<1$ and $n\ge f(\frac{\eps}{2})$, there is an algorithm that samples a random $n$-vertex labeled split graph according to a distribution whose total variation distance from the uniform distribution is at most $\eps$. The expected running time of this algorithm is $O(n^3\log{n}\log^2(1/\eps))$.
\end{lemma}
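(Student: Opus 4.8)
The plan is to convert the approximate counting algorithm of \cref{lemma:split_counting} into a sampler via the usual counting-to-sampling correspondence, taking care of the single place where that counting is only approximate, namely the $|Q|\le 1$ terms. First I would run the counting procedure with accuracy parameter $\eps/2$ (this is exactly what the hypothesis $n\ge f(\eps/2)$ buys us, since $f(\eps/2)=\max\{N_1,N_2,3\log_{3/2}(2/\eps)\}$ is large enough for \cref{lemma:sum_over_both,lemma:q_is_empty_approx,lemma:q_is_1_approx} to hold and for $1/(3/2)^{n/3}\le\eps/2$), storing the partial sums of the three truncated summations. To draw a sample, first pick one of four cases --- $2\le|Q|<n$, $|Q|=n$, $|Q|=1$, $Q=\emptyset$ --- with probability proportional to the corresponding truncated sum (weight $2$ for $|Q|=n$).

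In the case $2\le|Q|<n$ I would choose $(q,c)$ together with the clique/independent-set status of $Q$ proportionally to $\binom{n}{q}\binom{n-q}{c}(2^{n-c-q}-1)^c$ over the truncated index set, then choose the label sets of $C,I,Q$ uniformly among the $\binom{n}{q}\binom{n-q}{c}$ possibilities, and finally choose for each vertex of $C$ a uniformly random nonempty subset of $I$ as its neighborhood (all other adjacencies being forced); by \cref{lemma:q_is_2} and the characterization in \cref{lemma:c_i_q} this produces a uniformly random split graph with $2\le|Q|<n$ whose parameters lie in the truncated range. The case $|Q|=n$ outputs a complete graph or an empty graph, each with probability $1/2$. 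For the cases $|Q|\le 1$ I would mirror the proofs of \cref{lemma:q_is_empty,lemma:q_is_1}: choose $c$ proportionally to the appropriate truncated term, choose the label sets, and then sample a uniformly random two-tone (resp.\ three-tone) graph from the class $\mathcal T$ appearing in those proofs, i.e.\ pick for each indigo vertex a uniformly random proper subset of the cyan vertices as its cyan-neighborhood. The subtlety is that $\mathcal T$ is strictly larger than the set of split graphs with the prescribed $(C,I,Q)$: the sampled element need not have every cyan vertex adjacent to some indigo vertex. I would handle this by \emph{rejection} --- if that property fails, discard and restart the whole procedure; otherwise, by \cref{lemma:c_i_q_2} the underlying graph is exactly a split graph whose always-clique, always-independent, and questioning sets are the chosen cyan, indigo, and white sets.

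It then remains to verify correctness and the running time. A short calculation shows that in a single attempt every split graph whose canonical parameters lie in all of the truncated ranges --- call this set $\mathcal R$ --- is output with exactly the same probability $1/W$, where $W$ is the sum of the four case weights, and that every non-rejected run outputs an element of $\mathcal R$; since attempts are i.i.d., the final output is uniform on $\mathcal R$. The rejection probability in one attempt is bounded by the quantities $n(2/3)^{n/2}$ and $(n-1)(2/3)^{(n-1)/2}$ already controlled in the proofs of \cref{lemma:q_is_empty,lemma:q_is_1}, so the expected number of restarts is $1+o(1)$. For the total variation bound, $d_{\mathrm{TV}}$ between the uniform distribution on $\mathcal R$ and the uniform distribution on all $n$-vertex split graphs equals $1-|\mathcal R|/(\#\text{split graphs})$, and $|\mathcal R|\ge(1-\eps)(\#\text{split graphs})$ follows by combining the $(1-\eps/2)$-type guarantees of \cref{lemma:sum_over_both,lemma:q_is_empty_approx,lemma:q_is_1_approx} (bounding the loss from truncation) with $1/(3/2)^{n/3}\le\eps/2$ (bounding the loss from the truncated two/three-tone sums overcounting split graphs, which is precisely what rejection removes). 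For the running time, each attempt recomputes the same truncated sums and partial sums as \cref{lemma:split_counting}, costing $O(n^3\log n\log^2(1/\eps))$ time in the RAM model, plus $O(n^2\log n)$ additional bit operations to compare a random $O(n^2)$-bit number against the stored partial sums, pick label sets, and write down the $O(n^2)$ edges; multiplying by the $1+o(1)$ expected number of restarts leaves the bound unchanged. The main obstacle in the write-up is the bookkeeping in the $|Q|\le 1$ cases: one must check both that the per-attempt output law is genuinely uniform on $\mathcal R$ (so rejection introduces no bias) and that the two sources of loss --- truncation and the $\mathcal T$-overcount --- together discard at most an $\eps$-fraction of all split graphs.
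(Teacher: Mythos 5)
Your proposal is correct and takes essentially the same approach as the paper: run the $\eps/2$-accurate truncated counting, choose the $|Q|$-case and parameters proportionally to the stored sums, sample within each case, and use rejection (property $P$) in the $|Q|\le 1$ cases, so that each attempt outputs a uniformly random split graph whose parameters lie in the truncated ranges, after which the total variation bound combines the truncation loss with the two/three-tone overcount exactly as in the paper's argument. The only cosmetic differences are that the paper restarts from the case-selection step rather than recomputing the counts, and bounds the per-attempt failure probability via $\eps'=\min\{\eps/2,1/3\}$ instead of your $n(2/3)^{n/2}$ bound --- neither affects correctness or the stated expected running time.
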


\begin{proof}
\medskip
\textbf{Algorithm.} First, we run the counting algorithm from \cref{lemma:split_counting} with input $(n,\eps')$, where $\eps' = \min\{\frac{\eps}{2},\frac{1}{3}\}$. Let $s+2$ be the output of this algorithm.

We randomly choose one of the following four cases --- $Q = \emptyset$, $|Q| = 1$, $2\le|Q|<n$, or $|Q| = n$ --- with probabilities given by the computed summations. For example, the probability with which we choose $Q = \emptyset$ is the value of the summation from \cref{lemma:q_is_empty_approx} divided by $s+2$, and the probability with which we choose $Q = n$ is $\frac{2}{s+2}$.

\underline{$|Q| = n$:}\hspace{0.5em} In this case, we return an $n$-vertex complete graph with probability $\frac{1}{2}$, and otherwise we return an $n$-vertex independent set.

\underline{$2\le|Q|<n$:}\hspace{0.5em} In this case, we choose a random pair $(q',c')\in I_q\times I_c$, where each pair $(q,c)$ has probability proportional to the weight of the $(q,c)$ term in the summation from \cref{lemma:sum_over_both}. We then sample a uniformly random $n$-vertex split graph with $|Q| = q'$ and $|C| = c'$ (by following the proof of \cref{lemma:q_is_2}), and we return this graph.

\underline{$Q = \emptyset$:}\hspace{0.5em} In this case, we choose a random value of $c$, say $c'$, from the values summed over in \cref{lemma:q_is_empty_approx}, where each value of $c$ has probability proportional to the weight of the corresponding term in the summation. We sample a two-tone graph $G$ uniformly at random from those with $c'$ cyan vertices, $n-c'$ indigo vertices, and the following properties: (1) the cyan vertices form a clique; (2) the indigo vertices form an independent set; and (3) if $c'\le\frac{n}{2}$, then every indigo vertex has a non-neighbor among the cyan vertices; otherwise, every cyan vertex has a neighbor among the indigo vertices. We then check whether every cyan vertex in $G$ has an indigo neighbor if $c'\le\frac{n}{2}$, and otherwise we check whether every indigo vertex has a cyan non-neighbor (call this property $P$). If $G$ satisfies property $P$, then we return $G$ (without colors). Otherwise, we go back to the second paragraph of the algorithm (to again choose between the four possible cases for $|Q|$).

\underline{$|Q| = 1$:}\hspace{0.5em} Similarly, in this case we choose a random value of $c$, say $c'$, from the values summed over in \cref{lemma:q_is_1_approx}, where each value of $c$ has probability proportional to the weight of the corresponding term in the summation. We sample a three-tone graph $G$ uniformly at random from those with $c'$ cyan vertices, $n-c'-1$ indigo vertices, one white vertex, and the following properties: (1) the cyan vertices form a clique; (2) the indigo vertices form an independent set; (3) the white vertex is adjacent to all of the cyan vertices and none of the indigo vertices; and (4) if $c'\le\frac{n-1}{2}$, then every indigo vertex has a non-neighbor among the cyan vertices; otherwise, every cyan vertex has a neighbor among the indigo vertices. We then check whether every cyan vertex in $G$ has an indigo neighbor if $c'\le\frac{n-1}{2}$, and otherwise we check whether every indigo vertex has a cyan non-neighbor (call this property $P$). If $G$ satisfies property $P$, then we return $G$ (without colors). Otherwise, we go back to the second paragraph of the algorithm.

\medskip
\textbf{Correctness.} Let $s'$ be the number of split graphs computed by the approximate counting algorithm of \cref{lemma:split_counting} (with the given error bound being $\eps'$), except we do not count the graphs that fail to satisfy property $P$ (so $s'$ is at most $s+2$). By \cref{lemma:split_counting}, $s+2$ is a $(1\pm\frac{\eps}{2})$-approximation of the number of $n$-vertex split graphs. The number of graphs that are counted by this lemma but do not satisfy property $P$ is at most $\frac{\eps}{2}$ times the number of $n$-vertex split graphs, so $s'$ is a $(1-\eps)$-approximation of the number of $n$-vertex split graphs.

The sampling algorithm never outputs a non-split graph or any graph that does not satisfy property $P$. Furthermore, the probabilities of all of the choices of $|Q|$ and $c = |C|$ are designed so that all output graphs are equally likely, since these probabilities are based on the summations from the counting algorithm. Therefore, for every split graph $G$, if the probability that we output $G$ is nonzero, then the probability that we output $G$ is in fact $1/s'$. (We know $s'\ne 0$ since $1-\eps>0$ and the number of $n$-vertex split graphs is nonzero for all $n$.) Since $s'$ is a $(1-\eps)$-approximation of the number of $n$-vertex split graphs, this implies that the total variation distance between the output distribution and the uniform distribution on split graphs is at most $\eps$.

\medskip
\textbf{Running time.} When we say one ``iteration'' of the algorithm, this refers to building a graph and then checking whether property $P$ holds for that graph. First, we show that the expected number of iterations is $O(1)$. If property $P$ does not hold for the current graph, then we call this a ``failure.'' Each time we check property $P$, the probability of failure is $\frac{s+2-s'}{s+2}$. Let $\hat s$ be the exact number of $n$-vertex split graphs. We have $s+2-s'\le\eps'\cdot\hat s\le\eps'(\frac{s+2}{1-\eps'})$ since $(1-\eps')\hat s\le s+2$, so the probability of failure is at most $\frac{\eps'}{1-\eps'}$. Therefore, the expected number of iterations is at most $$\frac{1}{1-\frac{\eps'}{1-\eps'}},$$ which is at most 2, since $\eps'\le\frac{1}{3}$.


Initially, the counting algorithm takes $O(n^3\log{n}\log^2(1/\eps))$ time. Now we just need to bound the running time of each iteration of the algorithm after that. An iteration with $|Q| = n$ takes $O(n^2)$ time. For the case when $2\le|Q|<n$, let $s_2$ be the value of the summation from \cref{lemma:sum_over_both}, and for each pair $(q,c)\in I_q\times I_c$, let $t_{(q,c)}$ denote the $(q,c)$ term in that summation. We can choose a random pair $(q',c')$ in the following way. First, we choose a random number $r\in[0,1)$. Next, we iterate over all pairs $(q,c)\in I_q\times I_c$. For each of these, we subtract $t_{(q,c)}/s_2$ from $r$ and then check whether $r<0$. When $r<0$ for the first time, we let $(q',c')$ be the current pair $(q,c)$. This process takes $O(n\log^2(1/\eps))$ arithmetic operations. At this point, we need to generate a random split graph with $|Q| = q'$ and $|C| = c'$. This can be done in $O(n^2)$ time in the following way. We assign random labels to the sets $C$, $I$, and $Q$ in $O(n)$ time. Next, for each vertex in $C$, we choose its neighborhood by repeatedly choosing a random subset of $I$ until that subset happens to be nonempty. The expected number of attempts for this is at most 2, and each attempt takes $O(n)$ time for a given vertex in $C$. Therefore, for one iteration in the case when $2\le|Q|<n$, the expected running time is $O(n\log^2(1/\eps))$ arithmetic operations plus $O(n^2)$ basic operations.
Similarly, for an iteration with $Q = \emptyset$ or $|Q| = 1$, the expected running time is $O(n\log(1/\eps))$ arithmetic operations plus $O(n^2)$ basic operations. This amounts to a running time of $O(n^3\log{n}\log^2(1/\eps)+n^2) = O(n^3\log{n}\log^2(1/\eps))$ in the RAM model.
\end{proof}



\subsection{Proof of \cref{thm:approx}: Approximate counting and sampling of chordal graphs}
\label{sec:approx_proof}

A random $n$-vertex chordal graph is a split graph with probability $1-o(1)$. More precisely, we have the following result:

\begin{proposition}[Bender et al.~\cite{bender1985almost}]
\label{prop:almost_every}
If $\beta>\sqrt 3/2$, $n$ is sufficiently large, and $G$ is a random $n$-vertex labeled chordal graph, then $$\Pr(\text{$G$ is a split graph})>1-\beta^n.$$
\end{proposition}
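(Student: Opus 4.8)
We follow the Bender--Richmond--Wormald argument \cite{bender1985almost}, sketching its structure rather than reproducing every estimate. The starting point is the forbidden-induced-subgraph characterization of split graphs: a graph is split if and only if it contains no induced $C_4$, $C_5$, or $2K_2$ (a pair of disjoint edges whose four endpoints span no other edge). A chordal graph has no induced cycle of length at least $4$, hence none isomorphic to $C_4$ or $C_5$, so a chordal graph fails to be split precisely when it contains an induced $2K_2$. Write $N_{\mathrm{ch}}$ and $N_{\mathrm{sp}}$ for the numbers of labeled chordal and labeled split graphs on $[n]$, and $N_{\mathrm{bad}}=N_{\mathrm{ch}}-N_{\mathrm{sp}}$ for the number of chordal graphs containing an induced $2K_2$. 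Then $\Pr(\text{$G$ is split})>1-\beta^n$ is exactly the statement $N_{\mathrm{bad}}<\beta^n N_{\mathrm{ch}}$; since every split graph is chordal, it suffices to prove $N_{\mathrm{bad}}<\beta^n N_{\mathrm{sp}}$ for all large $n$.

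For the denominator we only need a crude lower bound. Choosing any $\lfloor n/2\rfloor$-subset of $[n]$ to be the clique part, the rest to be the independent part, and an arbitrary bipartite graph between them, produces a split graph; this over-counts each split graph by at most its number of split partitions, which by \cref{obs:universal_q,lemma:clique_or_indep} is at most $|Q|+1\le n+1$ (every split partition keeps the always-clique set in the clique and the always-independent set in the independent part, and since $Q$ is a clique or an independent set it can move at most one questioning vertex to the other side). Hence $N_{\mathrm{sp}}\ge \frac{1}{n+1}\binom{n}{\lfloor n/2\rfloor}2^{\lfloor n/2\rfloor\lceil n/2\rceil}\ge 2^{n^2/4+n}/\mathrm{poly}(n)$.

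The substance of the proof --- and the step I expect to be the main obstacle --- is the matching upper bound $N_{\mathrm{bad}}\le \mathrm{poly}(n)\,(\sqrt3)^n\,2^{n^2/4}$, which together with the previous paragraph yields $N_{\mathrm{bad}}/N_{\mathrm{sp}}\le \mathrm{poly}(n)\,(\sqrt3/2)^n<\beta^n$ for large $n$, since $\beta$ is a fixed constant exceeding $\sqrt3/2$. The value $\sqrt3/2$ is explained by the dominant family of bad graphs, namely those that are ``split plus one extra edge'': a split partition $(C,I)$ with $|C|=\lfloor n/2\rfloor$ together with a single edge $vw$ inside $I$ whose neighbourhoods in $C$ are comparable under inclusion --- exactly the condition ruling out an induced $C_4$ through $v,w$ and two vertices of $C$, which one checks suffices for chordality. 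Counting these gives about $\binom{n}{\lfloor n/2\rfloor}\binom{\lceil n/2\rceil}{2}\cdot 3^{\lfloor n/2\rfloor}\cdot 2^{\lfloor n/2\rfloor(\lceil n/2\rceil-2)}=\mathrm{poly}(n)\cdot 2^n\cdot(\sqrt3)^n\cdot 2^{n^2/4-n}=\mathrm{poly}(n)\,(\sqrt3)^n\,2^{n^2/4}$, where the $3^{\lfloor n/2\rfloor}$ counts the three allowed column patterns $00,01,11$ on the two rows $v,w$. The hard part is proving that no other family of chordal graphs with an induced $2K_2$ is larger: one partitions the bad graphs by a canonical measure of ``distance from split'' and argues that each extra unit of excess costs a further factor of order $(\sqrt3/2)^n$, so the total is a geometric series dominated by its first term. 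Making this rigorous requires exploiting the fact that chordality tightly couples a maximum clique $C$, the bipartite attachment of $V\setminus C$ to $C$, and the induced-$P_3$ structure of $G[V\setminus C]$ (a ``spread-out'' remainder forces severely constrained bipartite neighbourhoods), so that clique sizes far from $n/2$ and remainders far from an independent set both contribute negligibly. This is precisely the analysis of \cite{bender1985almost}, which in fact establishes the sharper comparison $N_{\mathrm{ch}}/N_{\mathrm{sp}}=1+O\!\bigl(\mathrm{poly}(n)\,(\sqrt3/2)^n\bigr)$; alternatively, one may simply invoke that result verbatim.
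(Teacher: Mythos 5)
The paper does not prove \cref{prop:almost_every}: it is quoted directly from Bender, Richmond, and Wormald \cite{bender1985almost}, so your closing move---invoking that result verbatim---is exactly what the paper does, and is the only part of your write-up that is actually needed. The surrounding sketch is accurate as far as it can be checked: the forbidden-subgraph characterization reduces non-splitness of a chordal graph to an induced $2K_2$; your bound of $|Q|+1\le n+1$ split partitions (via \cref{obs:universal_q,lemma:clique_or_indep}) yields the stated lower bound $N_{\mathrm{sp}}\ge 2^{n^2/4+n}/\mathrm{poly}(n)$; and the ``split plus one nested edge'' family does account for the constant $\sqrt{3}/2$. However, the decisive estimate $N_{\mathrm{bad}}\le\mathrm{poly}(n)\,(\sqrt{3})^n\,2^{n^2/4}$ is asserted rather than proved---you acknowledge that ruling out all other families of non-split chordal graphs is the hard part and defer it to \cite{bender1985almost}---so as a self-contained argument the proposal has a gap exactly at the step that carries all the content, while as a citation-based justification it coincides with the paper's treatment.
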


Let $N_3$ be large enough that \cref{prop:almost_every} holds for $n\ge N_3$. As above, let $N_1$ and $N_2$ be large enough that \cref{lemma:q_is_empty,lemma:q_is_1} hold for $n\ge N_1$ and \cref{lemma:sum_over_both} holds for $n\ge N_2$. We define the function $g$ as follows: $$g(\eps) = \max\{N_1,N_2,N_3,\log_{10/9}(2/\eps),3\log_{3/2}(2/\eps)\}.$$ We are now ready to describe the approximate counting and sampling algorithms for \cref{thm:approx}.

\medskip
\textbf{Approximate counting algorithm.} We are given $n$ and $\eps$ as input. If $n<g(\eps)$, then let $z$ be the (exact) number of $n$-vertex chordal graphs computed by the counting algorithm from \cref{thm:main}. Otherwise, we run the approximate split graph counting algorithm from \cref{lemma:split_counting} with input $(n,\frac{\eps}{2})$ to obtain a $(1\pm\frac{\eps}{2})$-approximation of the number of $n$-vertex split graphs, and let $z$ be this value. Return $z$.

\medskip
\textbf{Correctness and running time analysis (counting).} To prove correctness, we need to check that we achieve the desired approximation ratio. This is clearly true if $n<g(\eps)$ since in this case we output the exact answer. Now suppose $n\ge g(\eps)$. By \cref{prop:almost_every} with $\beta = \frac{9}{10}$, the (exact) number of $n$-vertex split graphs is a $(1-\frac{\eps}{2})$-approximation of the number $n$-vertex of chordal graphs since $n\ge\max\{N_3,\log_{10/9}(2/\eps)\}$. The approximate counting algorithm from \cref{lemma:split_counting} computes a $(1\pm\frac{\eps}{2})$-approximation of the number of $n$-vertex split graphs since $n\ge\max\{N_1,N_2,3\log_{3/2}(2/\eps)\}$. Therefore, $z$ is a $(1\pm\eps)$-approximation of the number of chordal graphs.

For the running time, if $n<g(\eps)$, then the algorithm uses $O(\log^7(1/\eps))$ arithmetic operations, so in this case the running time is $O(n^2\log n\log^7(1/\eps))$. Otherwise, the running time is $O(n^3\log{n}\log^2(1/\eps))$ by \cref{lemma:split_counting}, so the overall running time is at most $O(n^3\log{n}\log^7(1/\eps))$.

\medskip
\textbf{Approximate sampling algorithm.} We are given $n$ and $\eps$ as input. If $n<g(\frac{\eps}{2})$, then let $G$ be a uniformly random chordal graph generated by the sampling algorithm from \cref{thm:main}. Otherwise, let $G$ be a random split graph generated by the sampling algorithm from \cref{lemma:split_sampling} with input $(n,\frac{\eps}{2})$. Return $G$.

\medskip
\textbf{Correctness and running time analysis (sampling).} We need to show that the total variation distance between the output distribution and the uniform distribution is at most $\eps$. This is clearly true if $n<g(\frac{\eps}{2})$ since in this case the total variation distance is 0. For the case when $n\ge g(\frac{\eps}{2})$, let us first compare the uniform distribution over $n$-vertex split graphs to the uniform distribution over $n$-vertex chordal graphs. The total variation distance between these two distributions is equal to the number of non-split $n$-vertex chordal graphs divided by the total number of $n$-vertex chordal graphs. By \cref{prop:almost_every}, this is at most $\frac{\eps}{2}$ since $n\ge\max\{N_3,\log_{10/9}(2/\eps)\}$. Since $n\ge\max\{N_1,N_2,3\log_{3/2}(4/\eps)\}$, the approximate sampling algorithm from \cref{lemma:split_sampling} samples a random $n$-vertex labeled split graph according to a distribution whose total variation distance from uniform (over split graphs) is at most $\frac{\eps}{2}$. Therefore, by the triangle inequality, the total variation distance between the output distribution and the uniform distribution over $n$-vertex chordal graphs is at most $\eps$.

For the running time, if $n<g(\frac{\eps}{2})$, then the algorithm has a worst-cast running time bound of $O(\log^7(1/\eps))$ arithmetic operations, i.e., $O(n^2\log n\log^7(1/\eps))$ time. Otherwise, the running time is $O(n^3\log{n}\log^2(1/\eps))$ by \cref{lemma:split_sampling}, so the overall running time is at most $O(n^3\log{n}\log^7(1/\eps))$.

\medskip
This completes the proof of \cref{thm:approx}.




\section{Conclusion}

Our main result is an algorithm that given $n$, computes the number of (labeled) chordal graphs on $n$ vertices using $O(n^7)$ arithmetic operations (i.e., in $O(n^9 \log n)$ time in the RAM model). This yields a sampling algorithm that generates a chordal graph on $n$ vertices uniformly at random. For the sampling algorithm, once we have run the counting algorithm as a preprocessing step, each sample can be obtained using $O(n^4)$ arithmetic operations. In addition, we present very efficient approximate counting and approximate sampling algorithms. The former computes a $(1\pm\eps)$-approximation of the number of $n$-vertex labeled chordal graphs, and the latter generates a random labeled chordal graph according to a distribution whose total variation distance from the uniform distribution is at most $\eps$. The approximate counting algorithm runs in $O(n^3\log{n}\log^7(1/\eps))$ time, and the approximate sampling algorithm runs in $O(n^3\log{n}\log^7(1/\eps))$ expected time (both in the RAM model).

A natural open problem is to design a substantially faster algorithm for exact counting or uniform sampling of chordal graphs. Our approximate counting algorithm is quite fast. However, one could argue that our approximate sampling algorithm is unsatisfactory in the sense that it can never output a non-split chordal graph. Therefore, another interesting open problem would be to ask for an approximate sampling algorithm with a stronger guarantee: namely, a sampling algorithm for which $(1-\eps)p\le\Pr(G)\le(1+\eps)p$ for every chordal graph $G$, where $p$ is the probability given by the uniform distribution over all chordal graphs and $\Pr(G)$ is the probability that the approximate sampling algorithm outputs the graph $G$. Two interesting approaches to consider for this are Markov Chain Monte Carlo (MCMC) algorithms and the Boltzmann sampling scheme used in~\cite{Fusy} for planar graphs.

Moving beyond chordal graphs, there are many interesting graph classes for which the problem of counting/sampling $n$-vertex labeled graphs in polynomial time appears to be open, including perfect graphs, weakly chordal graphs, strongly chordal graphs, and chordal bipartite graphs, as well as many graph classes characterized by a finite set of forbidden minors, subgraphs, or induced subgraphs. It is worth noting that for some well-known graph classes of this form, such as planar graphs, polynomial-time algorithms are known \cite{BGK,Fusy}.


\bibliography{references}

\newpage
\appendix

\section{Pseudocode for the sampling algorithm}
\label{sec:pseudocode}

The procedure \Call{Sample\_Chordal}{$n$} returns a uniformly random chordal graph with vertex set $[n]$, and \Call{Sample\_Conn\_Chordal}{$n$} returns a uniformly random \emph{connected} chordal graph with vertex set $[n]$. To sample from $G(t,x,k,z)$, etc., one calls \mbox{\Call{Sample\_$G$}{$t,x,k,z$}}, etc. Recall that $\phi(A,B)$ is the bijection defined in \cref{sec:preliminaries}, which maps the smallest element of $A$ to the smallest element of $B$, and so on.

\begin{algorithm}[H]
  \caption{Chordal graph sampler}
  \begin{algorithmic}[1]
    \Procedure{Sample\_Chordal}{$n$}
    \If{$n = 0$} \Comment{Base case}
        \State Let $G$ be the empty graph with vertex set $\emptyset$ and edge set $\emptyset$
        \State \textbf{return} $G$
    \EndIf
    \State
    \State Choose $k\in[n]$ at random such that $k = k'$ with probability
    \newline\hspace*{3em} $\binom{n-1}{k'-1}c(k')a(n-k')/a(n)$ for each $k'\in[n]$
    \State $G_1\gets$ \Call{Sample\_Conn\_Chordal}{$k$}
    \State $G_2\gets$ \Call{Sample\_Chordal}{$n-k$}
    \State Choose a uniformly random subset $C\subseteq[n]$ of size $k$ containing 1
    \State $D\gets[n]\setminus C$
    \State Relabel $G_1$ by applying $\phi([k],C)$ to its label set
    \State Relabel $G_2$ by applying $\phi([n-k],D)$ to its label set
    \State Let $G$ be the union of $G_1$ and $G_2$
    \State \textbf{return} $G$
    \EndProcedure
  \end{algorithmic}
\end{algorithm}

\begin{algorithm}[H]
  \begin{algorithmic}[1]
    \Procedure{Sample\_Conn\_Chordal}{$n$}
    \State Choose $t\in[n]$ at random such that $t = t'$ with probability
    \newline\hspace*{3em} $\tilde g_1(t',0,n)/c(n)$ for each $t'\in[n]$
    \State $G\gets$ \Call{Sample\_$\widetilde G_1$}{$t,0,n$}
    \State \textbf{return} $G$
    \EndProcedure
  \end{algorithmic}
\end{algorithm}

\begin{algorithm}[H]
  \begin{algorithmic}[1]
    \Procedure{Sample\_$\widetilde G_1$}{$t,x,k$}
    \State Choose $l\in[k]$ at random such that $l = l'$ with probability
    \newline\hspace*{3em} $\binom{k}{l'}f(t,x,l',k-l')/\tilde g_1(t,x,k)$ for each $l'\in[k]$
    \State $G\gets$ \Call{Sample\_$F$}{$t,x,l,k-l$}
    \State Choose a uniformly random subset $L\subseteq[x+1,x+k]$ of size $l$
    \State $A\gets[x+1,x+k]\setminus L$
    \State Relabel $G$ by applying $\phi([x+1,x+l],L)$ to the labels in $[x+1,x+l]$ and applying
    \newline\hspace*{3em} $\phi([x+l+1,x+k],A)$ to the labels in $[x+l+1,x+k]$
    \State \textbf{return} $G$
    \EndProcedure
  \end{algorithmic}
\end{algorithm}

\begin{algorithm}[H]
  \begin{algorithmic}[1]
    \Procedure{Glue}{$G_1,G_2,Y$}
    \State Glue $G_1$ and $G_2$ together at $Y$, and let $G$ be the resulting graph
    \State \textbf{return} $G$
    \EndProcedure
  \end{algorithmic}
\end{algorithm}

\begin{algorithm}[H]
  \begin{algorithmic}[1]
    \Procedure{Sample\_$F$}{$t,x,l,k$}
    \If{$t = 1$ and $k = 0$} \Comment{Base case}
        \State Let $G$ be a complete graph with vertex set $[x+l]$
        \State \textbf{return} $G$
    \EndIf
    \State
    \State Choose $k'\in[k]$ at random such that $k' = k''$ with probability
    \newline\hspace*{3em} $\binom{k}{k''}\tilde f(t,x,l,k'')g(t-2,x+l,k-k'',x)/f(t,x,l,k)$ for each $k''\in[k]$
    \State $G_1\gets$ \Call{Sample\_$\widetilde F$}{$t,x,l,k'$}
    \State $G_2\gets$ \Call{Sample\_$G$}{$t-2,x+l,k-k',x$}
    \State Choose a uniformly random subset $A\subseteq[x+l+1,x+l+k]$ of size $k'$
    \State $B\gets[x+l+1,x+l+k]\setminus A$
    \State Relabel $G_1$ by applying $\phi([x+l+1,x+l+k'],A)$ to the labels in $[x+l+1,x+l+k']$
    \State Relabel $G_2$ by applying $\phi([x+l+1,x+l+k-k'],B)$ to the labels in $[x+l+1,x+l+k-k']$
    \State $G\gets$ \Call{Glue}{$G_1,G_2,[x+l]$}
    \State \textbf{return} $G$
    \EndProcedure
  \end{algorithmic}
\end{algorithm}

\begin{algorithm}[H]
  \begin{algorithmic}[1]
    \Procedure{Sample\_$G$}{$t,x,k,z$}
    \If{$t = 0$ and $k = 0$} \Comment{Base case}
        \State Let $G$ be a complete graph with vertex set $[x]$
        \State \textbf{return} $G$
    \EndIf
    \State
    \State Choose $k'\in\{0,1,\ldots,k\}$ at random such that $k' = k''$ with probability
    \newline\hspace*{3em} $\binom{k}{k''}\tilde g(t,x,k'',z)g(t-1,x,k-k'',z)/g(t,x,k,z)$ for each $0\le k''\le k$
    \State $G_1\gets$ \Call{Sample\_$\widetilde G$}{$t,x,k',z$}
    \State $G_2\gets$ \Call{Sample\_$G$}{$t-1,x,k-k',z$}
    \State Choose a uniformly random subset $A\subseteq[x+1,x+k]$ of size $k'$
    \State $B\gets[x+1,x+k]\setminus A$
    \State Relabel $G_1$ by applying $\phi([x+1,x+k'],A)$ to the labels in $[x+1,x+k']$
    \State Relabel $G_2$ by applying $\phi([x+1,x+k-k'],B)$ to the labels in $[x+1,x+k-k']$
    \State $G\gets$ \Call{Glue}{$G_1,G_2,[x]$}
    \State \textbf{return} $G$
    \EndProcedure
  \end{algorithmic}
\end{algorithm}

\begin{algorithm}[H]
  \begin{algorithmic}[1]
    \Procedure{Sample\_$\widetilde G$}{$t,x,k,z$}
    \If{$k = 0$} \Comment{Base case}
        \State Let $G$ be a complete graph with vertex set $[x]$
        \State \textbf{return} $G$
    \EndIf
    \State
    \State Choose $k'\in[k]$, $x'\in[x]$ at random such that $(k',x') = (k'',x'')$ with probability
    \newline\hspace*{3em} $$\left(\binom{x}{x''}-\binom{z}{x''}\right)\binom{k-1}{k''-1}\tilde g_1(t,x'',k'')\tilde g(t,x,k-k'',z)/\tilde g(t,x,k,z)$$
    \newline\hspace*{3em} for each pair $(k'',x'')\in[k]\times[x]$
    \State $G_1\gets$ \Call{Sample\_$\widetilde G_1$}{$t,x',k'$}
    \State $G_2\gets$ \Call{Sample\_$\widetilde G$}{$t,x,k-k',z$}
    \State Choose a uniformly random subset $X'\subseteq[x]$ of size $x'$ such that $X'\not\subseteq[z]$
    \State Choose a uniformly random subset $C\subseteq[x+1,x+k]$ of size $k'$ containing $x+1$
    \State $D\gets[x+1,x+k]\setminus C$
    \State Relabel $G_1$ by applying $\phi([x'],X')$ to the labels in $[x']$ and applying
    \newline\hspace*{3em} $\phi([x'+1,x'+k'],C)$ to the labels in $[x'+1,x'+k']$
    \State Relabel $G_2$ by applying $\phi([x+1,x+k-k'],D)$ to the labels in $[x+1,x+k-k']$
    \State $G\gets$ \Call{Glue}{$G_1,G_2,X'$}
    \State \textbf{return} $G$
    \EndProcedure
  \end{algorithmic}
\end{algorithm}

\begin{algorithm}[H]
  \begin{algorithmic}[1]
    \Procedure{Sample\_$\widetilde F$}{$t,x,l,k$}
    \State $S_1\gets\tilde f_p(t,x,l,k)$
    \State $S_2\gets\sum_{k'=1}^k\binom{k}{k'}\tilde g_1(t-1,x+l,k')\tilde f_p(t,x,l,k-k')$
    \State $S_3\gets\sum_{k'=1}^k\binom{k}{k'}\tilde g_{\geq 2}(t-1,x+l,k')\tilde g_p(t-1,x+l,k-k',x)$
    \State Choose $i\in[\tilde f(t,x,l,k)]$ uniformly at random
    \If{$i\le S_1$}
        \State $G\gets$ \Call{Sample\_$\widetilde F_p$}{$t,x,l,k$}
        \State \textbf{return} $G$
    \ElsIf{$i\le S_1+S_2$}
        \State Choose $k'\in[k]$ at random such that $k' = k''$ with probability
        \newline\hspace*{5em} $\binom{k}{k''}\tilde g_1(t-1,x+l,k'')\tilde f_p(t,x,l,k-k'')/S_2$ for each $k''\in[k]$
        \State $G_1\gets$ \Call{Sample\_$\widetilde G_1$}{$t-1,x+l,k'$}
        \State $G_2\gets$ \Call{Sample\_$\widetilde F_p$}{$t,x,l,k-k'$}
    \Else
        \State Choose $k'\in[k]$ at random such that $k' = k''$ with probability
        \newline\hspace*{5em} $\binom{k}{k''}\tilde g_{\geq 2}(t-1,x+l,k'')\tilde g_p(t-1,x+l,k-k'',x)/S_3$ for each $k''\in[k]$
        \State $G_1\gets$ \Call{Sample\_$\widetilde G_{\ge 2}$}{$t-1,x+l,k'$}
        \State $G_2\gets$ \Call{Sample\_$\widetilde G_p$}{$t-1,x+l,k-k',x$}
    \EndIf
    \State Choose a uniformly random subset $A\subseteq[x+l+1,x+l+k]$ of size $k'$
    \State $B\gets[x+l+1,x+l+k]\setminus A$
    \State Relabel $G_1$ by applying $\phi([x+l+1,x+l+k'],A)$ to the labels in $[x+l+1,x+l+k']$
    \State Relabel $G_2$ by applying $\phi([x+l+1,x+l+k-k'],B)$ to the labels in $[x+l+1,x+l+k-k']$
    \State $G\gets$ \Call{Glue}{$G_1,G_2,[x+l]$}
    \State \textbf{return} $G$
    \EndProcedure
  \end{algorithmic}
\end{algorithm}

\begin{algorithm}[H]
  \begin{algorithmic}[1]
    \Procedure{Sample\_$\widetilde G_{\ge 2}$}{$t,x,k$}
    \State $S_1\gets\sum_{k'=1}^{k-1}\binom{k-1}{k'-1}\tilde g_1(t,x,k')\tilde g_1(t,x,k-k')$
    \State $S_2\gets\sum_{k'=1}^{k-1}\binom{k-1}{k'-1}\tilde g_1(t,x,k')\tilde g_{\geq 2}(t,x,k-k')$
    \State Choose $i\in[\tilde g_{\ge 2}(t,x,k)]$ uniformly at random
    \If{$i\le S_1$}
        \State Choose $k'\in[k]$ at random such that $k = k''$ with probability
        \newline\hspace*{5em} $\binom{k-1}{k'-1}\tilde g_1(t,x,k'')\tilde g_1(t,x,k-k'')/S_1$ for each $k''\in[k]$
        \State $G_1\gets$ \Call{Sample\_$\widetilde G_1$}{$t,x,k'$}
        \State $G_2\gets$ \Call{Sample\_$\widetilde G_1$}{$t,x,k-k'$}
    \Else
        \State Choose $k'\in[k]$ at random such that $k = k''$ with probability
        \newline\hspace*{5em} $\binom{k-1}{k'-1}\tilde g_1(t,x,k')\tilde g_{\geq 2}(t,x,k-k')/S_2$ for each $k''\in[k]$
        \State $G_1\gets$ \Call{Sample\_$\widetilde G_1$}{$t,x,k'$}
        \State $G_2\gets$ \Call{Sample\_$\widetilde G_{\ge 2}$}{$t,x,k-k'$}
    \EndIf
    \State Choose a uniformly random subset $C\subseteq [x+1,x+k]$ of size $k'$ containing $x+1$
    \State $D\gets[x+1,x+k]\setminus C$
    \State Relabel $G_1$ by applying $\phi([x+1,x+k'],C)$ to the labels in $[x+1,x+k']$
    \State Relabel $G_2$ by applying $\phi([x+1,x+k-k'],D)$ to the labels in $[x+1,x+k-k']$
    \State $G\gets$ \Call{Glue}{$G_1,G_2,[x]$}
    \State \textbf{return} $G$
    \EndProcedure
  \end{algorithmic}
\end{algorithm}

\begin{algorithm}[H]
  \begin{algorithmic}[1]
    \Procedure{Sample\_$\widetilde G_p$}{$t,x,k,z$}
    \If{$k = 0$} \Comment{Base case}
        \State Let $G$ be a complete graph with vertex set $[x]$
        \State \textbf{return} $G$
    \EndIf
    \State
    \State Choose $k'\in[k]$, $x'\in[x-1]$ at random such that $(k',x') = (k'',x'')$ with probability
    \newline\hspace*{3em} $$\left(\binom{x}{x''}-\binom{z}{x''}\right)\binom{k-1}{k''-1}\tilde g_1(t,x'',k'')\tilde g_p(t,x,k-k'',z)/\tilde g_p(t,x,k,z)$$
    \newline\hspace*{3em} for each pair $(k'',x'')\in[k]\times[x-1]$
    \State $G_1\gets$ \Call{Sample\_$\widetilde G_1$}{$t,x',k'$}
    \State $G_2\gets$ \Call{Sample\_$\widetilde G_p$}{$t,x,k-k',z$}
    \State Choose a uniformly random subset $X'\subseteq[x]$ of size $x'$ such that $X'\not\subseteq[z]$
    \State Choose a uniformly random subset $C\subseteq[x+1,x+k]$ of size $k'$ containing $x+1$
    \State $D\gets[x+1,x+k]\setminus C$
    \State Relabel $G_1$ by applying $\phi([x'],X')$ to the labels in $[x']$ and applying $\phi([x'+1,x'+k'],C)$
    \newline\hspace*{3em} to the labels in $[x'+1,x'+k']$
    \State Relabel $G_2$ by applying $\phi([x+1,x+k-k'],D)$ to the labels in $[x+1,x+k-k']$
    \State $G\gets$ \Call{Glue}{$G_1,G_2,X'$}
    \State \textbf{return} $G$
    \EndProcedure
  \end{algorithmic}
\end{algorithm}

\begin{algorithm}[H]
  \begin{algorithmic}[1]
    \Procedure{Sample\_$\widetilde F_p$}{$t,x,l,k$} \Comment{Four arguments, without $z$}
    \State \textbf{return} \Call{Sample\_$\widetilde F_p$}{$t,x,l,k,x$}
    \EndProcedure
  \end{algorithmic}
\end{algorithm}

\begin{algorithm}[H]
  \begin{algorithmic}[1]
    \Procedure{Sample\_$\widetilde F_p\_With\_Z$}{$t,x,l,k,z$} \Comment{Five arguments, including $z$}
    \State Choose $k'\in[k]$, $x'\in\{0,1,\ldots,x\}$, $l'\in\{0,1,\ldots,l\}$ at random such that
    \newline\hspace*{3em} $0<x'+l'<x+l$ and such that $(k',x',l') = (k'',x'',l'')$ with probability
    \begin{align*}
    &\binom{k-1}{k''-1}\binom{l}{l''}\tilde g_1(t-1,x''+l'',k'') \\
    &\cdot
    \begin{cases}
    \binom{x}{x''} & \text{ if $l''>0$}  \\
    \binom{x}{x''}-\binom{z}{x''} & \text{ otherwise}
    \end{cases}
    \,\,\cdot
    \begin{cases}
    \tilde f_p(t,x+l'',l-l'',k-k'',z) & \text{ if $l''<l$}  \\
    \tilde g_p(t-1,x+l'',k-k'',z) & \text{ otherwise}
    \end{cases} \\
    &\cdot 1/\tilde f_p(t,x,l,k,z)
    \end{align*}
    \hspace*{3em} for each possible triple $(k'',x'',l'')$
    \State $G_1\gets$ \Call{Sample\_$\widetilde G_1$}{$t-1,x'+l',k'$}
    \State $G_2\gets$ $l'<l$ ? \Call{Sample\_$\widetilde F_p$}{$t,x+l',l-l',k-k',z$} : \Call{Sample\_$\widetilde G_p$}{$t-1,x+l',k-k',z$}
    \If{$l'>0$}
        \State Choose a uniformly random subset $X'\subseteq X$ of size $x'$ containing $x+l+1$
    \Else
        \State Choose a uniformly random subset $X'\subseteq X$ of size $x'$ containing $x+l+1$
        \newline\hspace*{5em} such that $X'\not\subseteq[z]$
    \EndIf
    \State Choose a uniformly random subset $L'\subseteq[x+1,x+l]$ of size $l'$
    \State Choose a uniformly random subset $C\subseteq[x+l+1,x+l+k]$ of size $k'$ containing $x+l+1$
    \State $D\gets[x+l+1,x+l+k]\setminus C$
    \State Relabel $G_1$ by applying $\phi([x'],X')$ to the labels in $[x']$, applying $\phi([x'+1,x'+l'],L')$
    \newline\hspace*{3em} to the labels in $[x'+1,x'+l']$, and applying $\phi([x'+l'+1,x'+l'+k'],C)$ to the
    \newline\hspace*{3em} labels in $[x'+l'+1,x'+l'+k']$
    \State Relabel $G_2$ by applying $\phi([x+l+1,x+l+k-k'],D)$ to the labels in $[x+l+1,x+l+k-k']$
    \If{$l'<l$}
        \State Relabel $G_2$ by applying $\phi([x+1,x+l'], L')$ to the labels in $[x+1,x+l']$ and applying
        \newline\hspace*{5em}$ \phi([x+l'+1,x+l],[x+1,\ldots,x+l]\setminus L')$ to the labels in $[x+l'+1,x+l]$
    \EndIf
    \State $G\gets$ \Call{Glue}{$G_1,G_2,X'\cup L'$}
    \State \textbf{return} $G$
    \EndProcedure
  \end{algorithmic}
\end{algorithm}

\end{document}